\documentclass[letter,11pt]{article}

\usepackage[margin=1.2in]{geometry}
\usepackage{setspace}
\usepackage{graphicx} 
\usepackage[T1]{fontenc}
\usepackage[utf8]{inputenc}
\usepackage{color}
\usepackage{float}
\usepackage{amsbsy}
\usepackage{amstext}
\usepackage{amsmath,amssymb,amsthm}
\usepackage{bm} 
\usepackage{dsfont}
\usepackage{graphicx}
\usepackage{comment}
\usepackage{ulem}
\usepackage{subcaption}
\usepackage{microtype}

 \usepackage[unicode=true,pdfusetitle,
 bookmarks=true,bookmarksnumbered=true,bookmarksopen=true,
 breaklinks=true,pdfborder={0 0 0},pdfborderstyle=UseOutlines,backref=page,colorlinks=true]
 {hyperref}
 
\usepackage[noabbrev,nameinlink,capitalise]{cleveref}
\usepackage{booktabs} 
\usepackage[authoryear,sort,longnamesfirst]{natbib}
\usepackage{tikz}
\usepackage{threeparttable}
\usepackage{xparse}
\usepackage{array}
\usepackage{longtable}
\usepackage{algorithm2e} 
\usepackage{enumitem}
\usepackage{refcount}
\usepackage{etoc}

\theoremstyle{plain}
\newtheorem{assumption}{\protect\assumptionname}
\newtheorem{remark}{Remark}
        \let\oldremark\remark
        \let\endoldremark\endremark
        \renewenvironment{remark}[1][]{%
        \pushQED{$\blacksquare$}
        \oldremark[#1]\normalfont
        }{
        \popQED\endoldremark
    }
\theoremstyle{definition}
\newtheorem{definition}{\protect\definitionname}
\theoremstyle{plain}
\newtheorem{theorem}{\protect\theoremname}
\theoremstyle{plain}
\newtheorem{lemma}{\protect\lemmaname}
\theoremstyle{plain}
\newtheorem{corollary}{\protect\corollaryname}
\theoremstyle{plain}
\newtheorem{example}{\protect\examplename}
\theoremstyle{plain}

\crefname{table}{Table}{Tables}
\crefname{subtable}{Table}{Tables}
\crefname{part}{Part}{Parts}
\crefname{chapter}{Chapter}{Chapters}
\crefname{section}{Section}{Sections}
\crefname{subsection}{Section}{Sections}
\crefname{subsubsection}{Section}{Sections}
\crefname{appendix}{Appendix}{Appendices}
\crefname{theorem}{Theorem}{Theorems}
\crefname{lemma}{Lemma}{Lemmas}
\crefname{algorithm}{Algorithm}{Algorithms}
\crefname{listing}{Listing}{Listings}
\crefname{figure}{Figure}{Figures}
\crefname{assumption}{Assumption}{Assumptions}
\crefname{equation}{}{}
\crefformat{equation}{(#2#1#3)}
\crefmultiformat{equation}{(#2#1#3)}{ and~(#2#1#3)}{, (#2#1#3)}{ and~(#2#1#3)}

\renewcommand{\ref}[1]{\cref{#1}}

\providecommand{\assumptionname}{Assumption}
\providecommand{\corollaryname}{Corollary}
\providecommand{\definitionname}{Definition}
\providecommand{\lemmaname}{Lemma}

\providecommand{\theoremname}{Theorem}
\providecommand{\examplename}{Example}

\newcounter{example@save}

\usepackage{xcolor}
\definecolor{darkblue}{rgb}{0.0, 0.0, 0.55}
\definecolor{teal}{rgb}{0.0, 0.5, 0.5}
\hypersetup{linkcolor=teal, urlcolor=blue, citecolor=darkblue}
\renewcommand*{\backrefalt}[4]{%
    \ifcase #1 %
    \or
        [Cited on page #2]
    \else
        [Cited on pages #2]
    \fi}

\usetikzlibrary{shapes.geometric, arrows.meta,positioning,calc} 
\usetikzlibrary{overlay-beamer-styles,fit}
\tikzset{
  startstop/.style = {rectangle, rounded corners, draw, fill=gray!10, align=center, text width=4cm, minimum height=1cm, font=\small},
  decision/.style  = {diamond, draw, align=center, text width=4cm, inner sep=2pt, font=\small, fill=orange!10,aspect=2},
  process/.style   = {rectangle, draw, fill=blue!5, align=center, text width=4cm, minimum height=1cm, font=\small},
  arrow/.style     = {thick, -{Stealth}}
}

\NewDocumentCommand{\sumto}{ O{i} O{n} }{%
  \sum_{#1=1}^{#2}%
}
\NewDocumentCommand{\prodto}{ O{i} O{n} }{%
  \prod_{#1=1}^{#2}%
}
\NewDocumentCommand{\intto}{ O{i} O{n} }{%
  \int_{#1=1}^{#2}%
} 
\NewDocumentCommand{\limto}{ O{i} O{\infty} }{%
  \lim_{#1\to #2}%
}

\DeclareMathOperator{\lin}{span}

\DeclareMathOperator{\rank}{rank}

\DeclareMathOperator*{\argmin}{arg\,min}

\newcommand{\indep}{\mathrel{\perp\mspace{-10mu}\perp}} 

\newcommand{\Var}{\operatorname{Var}}

\renewcommand{\bar}[1]{\overline{#1}}
\renewcommand{\tilde}[1]{\widetilde{#1}}
\renewcommand{\hat}[1]{\widehat{#1}}

\definecolor{myblue}{rgb}{0.000,0.478,0.718}   
\definecolor{myorange}{rgb}{0.878,0.416,0.231} 

\DeclareFontFamily{U}{mathx}{}
\DeclareFontShape{U}{mathx}{m}{n}{<-> mathx10}{}
\DeclareSymbolFont{mathx}{U}{mathx}{m}{n}
\DeclareMathAccent{\widecheck}{0}{mathx}{"71}
\renewcommand{\check}[1]{\widecheck{#1}}

\newcommand{\so}{\operatorname{so}}
\newcommand{\ta}{\operatorname{ta}}
\newcommand{\sub}{\operatorname{sub}}

\shortcites{bajari2025hedonic}

\begin{document}
\etocdepthtag.toc{main}
\title{Econometrics with Pre-Trained Embeddings for~Unstructured~Data}
\vspace{-10pt}
\author{Yuya Shimizu\footnote{\href{mailto:}{yuya.shimizu@wisc.edu}, Department of Economics, University of Wisconsin-Madison.
I am grateful to Bruce Hansen, Jack Porter, and Xiaoxia Shi for their invaluable advice and encouragement. 
I thank Naoki Aizawa, Yong Cai, Harold Chiang, Junho Choi, Hugo Freeman, Woosik Gong, Lorenzo Magnolfi, Ashesh Rambachan, Kensuke Sakamoto, Jing Tao, Keyon Vafa, Kenneth West, Kohei Yata, and seminar participants at Wisconsin and ESIF-AIML for helpful comments and suggestions. This work is
supported by Richard E. Stockwell Dissertation Fellowship, Richard A. Meese Dissertation Fellowship, and
the summer research fellowship from the University of Wisconsin-Madison.}
}
\vspace{-10pt}
\date{This version: July 19, 2026\\
\href{https://yshimizu-econ.github.io/assets/pdf/transfer.pdf}{[Click here for the latest version]}
}


\maketitle 
\vspace{-20pt}

\begin{abstract}
    Unstructured data, such as images and text, are increasingly used in empirical economics. Since training machine-learning models on unstructured data is costly, economists often use off-the-shelf pre-trained deep learning models developed by computer scientists to extract embeddings, which are then used as covariates in target economic analyses.
	Despite the popularity of this practice, its theoretical foundations remain limited.
	There are two main difficulties. First, the pre-trained model is usually trained on a different dataset and for a different task. Consequently, it is unclear when such a model can be used reliably for the target task. Second, the embedding function is subject to an identification problem, which makes it difficult to analyze the estimation error of the embedding function and its effect on the target task.
	In this paper, we provide sufficient conditions to overcome these difficulties and derive the convergence rate of machine learning models with pre-trained embeddings.
	We illustrate the theory through double machine learning applications for estimating parameters of interest, such as partially linear regression with unstructured controls, price elasticity in demand estimation considering the product quality measured by images and text, missing data imputation with unstructured data, and the average treatment effect with unstructured confounders.
\end{abstract}

\textit{Keywords: transfer learning, double machine learning, deep learning, average treatment effect, demand estimation}

JEL codes: C45, C14, C55, C21

\newpage
\section{Introduction}
Unstructured data, such as images and text, are increasingly used in empirical economics.
In these applications, economists often use off-the-shelf pre-trained deep neural network models trained by computer scientists to extract embeddings and then use those embeddings as generated covariates in a downstream target task.\footnote{In the machine learning literature, embeddings are also commonly referred to as extracted features or learned representations.}
Embeddings transform unstructured data into structured numerical representations, typically vectors in a lower-dimensional latent space, that preserve relevant information from the original data.
From an econometric perspective, this is a two-step estimation problem (\citealp{murphy1985estimation,pagan1984econometric}).
This paper provides sufficient conditions under which this workflow yields valid inference in econometric analyses.

The structured dimensionality reduction can be substantial. A color image is typically represented by the intensity values of its pixels. For example, an image with resolution $224 \times 224$ and three RGB color channels has
\begin{equation*}
    \underbrace{224 \times 224}_{\text{pixels}}
    \times
    \underbrace{3}_{\text{RGB channels}}
    =
    150{,}528
\end{equation*}
raw input values.
Similarly, a text document is typically represented as a sequence of tokens drawn from a large vocabulary, corresponding to a vector with more than $30{,}000$ dimensions.
These raw vectors are ultra-high-dimensional and lack direct economic interpretation in the sense that individual pixel values or naive aggregations of token vectors generally do not correspond to economically meaningful variables.
This makes them difficult to use directly in standard empirical specifications, motivating the use of embeddings as lower-dimensional, structured representations of the same underlying information.

Pre-trained embeddings are increasingly used in empirical economics. In some applications, including embeddings can materially change estimates of parameters of interest, including their sign and statistical significance, relative to specifications that omit them.\footnote{
    For example, \cite{avivi2024patent} uses patent-text embeddings to control application quality when studying gender bias in patent examination.
    \cite{dube2020monopsony} estimates the elasticity of task vacancy-filling speed with respect to rewards, using task-description embeddings as controls.
    \cite{bajari2025hedonic}, \cite{bach2024adventures}, and \cite{compiani2025demand} use product text and image embeddings to control for product quality in hedonic regression and demand estimation.
    \cite{klaassen2024doublemldeep} study causal effects with multimodal data.
    Across these applications, the common assumption is that embeddings summarize economically relevant information in otherwise unstructured inputs and can be used as covariates to control or to impute missing variables.
    Other applications use product embeddings to measure similarity in product space (\citealp{magnolfi2025triplet,han2025copyright}) and satellite image embeddings to predict local poverty status (\citealp{jean2016combining}).
}
Since training such representation models from scratch typically requires an extremely large sample and substantial computational resources, economists often rely on pre-trained embeddings as structured summaries of raw inputs.

Despite the empirical popularity of this workflow, its theoretical foundations remain limited.
There are two main theoretical difficulties.
First, the pre-trained model is usually trained on a different dataset and for a different task than the economic application; thus, it is unclear whether the pre-trained representation is valid for the target task.
Second, the embedding function is usually not identified: for example, we can insert a matrix and its inverse between the last hidden layer of a deep neural network and the output layer without changing the output.

To address these difficulties, we first formalize the setup containing a source task for the computer scientist and a target task for the economist. The embedding function is estimated from the source task and then used in the target task. In practice, the economist can access the pre-trained embedding function from the computer scientist via standard deep learning libraries such as PyTorch, but it is difficult or impossible for the economist to observe the source sample used to train the embedding function.
Our theory allows for this practical setting by treating the source sample as unobserved for the economist.\footnote{
    In the machine learning literature, this is known as a transfer-learning setup in which knowledge learned in the source task is reused in the target task (\citealp{pan2009survey}).
}

We then show that two conditions are key to justifying the two-step workflow of using pre-trained embeddings in the target task.
First, we assume that the source task and the target task share the same approximately true embedding function by requiring that approximation errors of both tasks are small when a shared embedding function is used for both. This condition formalizes the common (implicit) assumption in the empirical literature that the pre-trained embedding function summarizes the relevant information in the unstructured data for the target task.
Second, we impose a transferability condition requiring that a neighborhood of the source task identified set is contained in the corresponding target task neighborhood.
If the embedding function is the last hidden layer of a deep neural network and the economist uses a linear model on top of it, the transferability condition requires that the coefficient of the target task lies in the linear span of the source-task coefficients associated with the embedding function in population.

To illustrate the importance of the transferability condition, we compare a partially linear model with unstructured controls under two designs in \cref{fig:sim_plm_hist}, where one design has the target nuisance lying in the span of the source tasks and the other design does not.
The in-span design produces an estimator centered near the true parameter. By contrast, the out-of-span design exhibits a clear shift, motivating a transferability condition that rules out such non-transferable directions. See \cref{sec:sim_details} for details of the simulation.

\begin{figure}[htbp]
	\centering
	\includegraphics[width=0.95\linewidth]{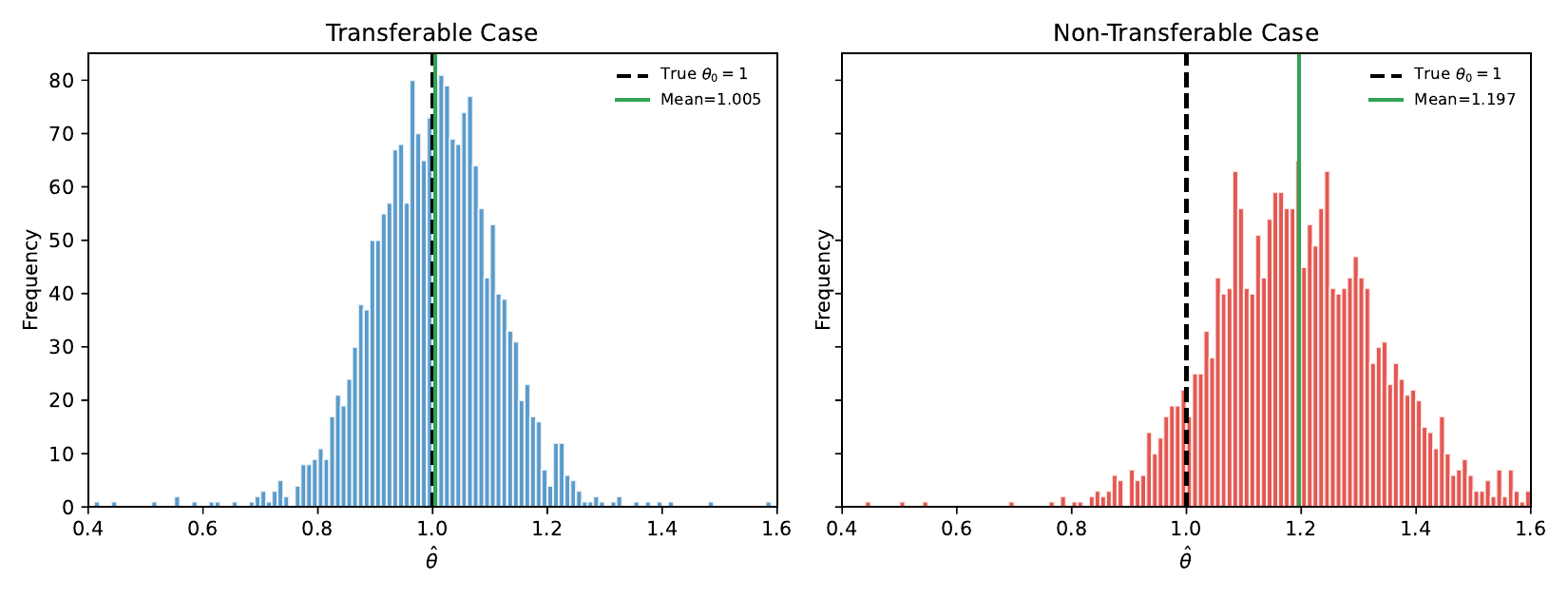}
	\caption{Histogram of $\hat{\theta}$ for the in-span and out-of-span designs.}
	\label{fig:sim_plm_hist}
\end{figure}

Under these conditions, we show that the convergence rate of transfer learning can be faster than $n^{-1/4}$, which is sufficient for valid inference in the double machine learning (DML) framework of \cite{chernozhukov2018double} for a size-$n$ target sample.
Our convergence rate accounts for randomness in both the source and target samples and depends jointly on the complexities and sample sizes of the source task and the target task.
The convergence rate in the target task is determined by the slower of the source-side and target-side rates.

Our theory also provides practical guidance for applied researchers. 
First, the transferability condition is easier to justify for later hidden layers of a deep neural network than for earlier hidden layers, because earlier hidden layers introduce more complex partial identification issues. Thus, the last hidden layer is preferable as the embedding function if the economist is otherwise indifferent between earlier and last hidden layers.
Second, even under the sufficient conditions we impose and when the last hidden layer is selected as the embedding function, the embedding function is still not point identified, and it is only identified up to an invertible linear transformation.
Partial identification of the embedding function makes variable selection, such as LASSO, fragile. Instead, we recommend using ridge regression or shallow neural networks for the target model since they are robust to invertible linear transformations of the embeddings.

\paragraph{Related Literature}
The literature most closely related to ours can be grouped into three strands.
First, our paper contributes to the growing literature on inference with variables generated from unstructured data.
\cite{fong2021machine}, \cite{angelopoulos2023prediction}, \cite{egami2023using}, \cite{zhang2026debiasing}, \cite{battaglia2025inference}, \cite{carlson2025unifying}, and \cite{ludwig2026large} provide a framework to correct bias arising from variables generated from unstructured data in the context of missing data.
They typically assume a missing completely at random (MCAR) condition or a known propensity score.\footnote{Except for \cite{battaglia2025inference}, these papers use a missing-variable-imputation framework to handle missing data in the context of \cite{chen2008semiparametric}.}
As discussed by \cite{carlson2025unifying}, the known propensity score assumption can be relaxed if we can estimate it with sufficiently fast convergence rates.
However, the convergence rate of propensity-score estimators based on unstructured data is not studied in these papers.
Beyond missing variable imputation, \cite{rambachan2024program}, \cite{modarressi2025causal}, and \cite{carlson2025making} study settings in which unstructured data are used to construct outcome measures.
Our paper departs from this literature by studying a general double machine learning setting in which generated embeddings enter the nuisance learning stage as inputs for both outcome imputation and propensity score estimation.

Second, several recent papers use pre-trained embeddings with theoretical justifications for downstream inference.
\cite{vafa2025estimating} study wage-gap decomposition with foundation models.
\cite{bach2024adventures} show that multimodal embeddings from text and images can improve demand estimation and elasticity analysis.
These two papers establish square-root-$n$ consistency for the parameter of interest under high-level conditions.
\cite{christensen2026unstructured} develop a bias correction in demand counterfactuals when product differentiation is proxied by finite-dimensional nuisance parameters reparameterized from the embedding function and economic parameters.\footnote{Such reparameterization is not available for the general DML setup we consider in this paper, which includes \cref{ex:partially_linear,ex:demand,ex:imputation,ex:ate}.}
While the scope of \cite{christensen2026unstructured} is different from ours, their theory takes convergence of the nuisance parameters as given, and our theory can complement their theory by providing sufficient conditions for the convergence in probability.
The closest paper to ours is \cite{schulte2025adjustment}. They study the estimation of the average treatment effect (ATE) with unstructured confounders and point out the identification issues of the embedding function.
However, they ignore source-task estimation error and assume transferability across tasks implicitly.
Our paper differs by deriving a general convergence rate theory under explicit transferability conditions and by incorporating source task estimation error. This source task error is essential for valid target task inference.

Third, our transferability condition is related to the transfer learning literature in machine learning.
\cite{tripuraneni2020theory} introduce task diversity as a key condition for learning a shared representation that transfers well to new tasks, and \cite{watkins2023optimistic} derive optimistic excess-risk bounds for multi-task representation learning under related diversity conditions.
The rate in \cite{watkins2023optimistic} is comparable to our rate, but their theory requires a realizability condition that the risk of the true model is zero, which is hard to satisfy in economic applications.
Our paper uses a similar task diversity condition to ensure that the source task representation is valid for the target task, but we provide an econometric characterization of the condition through an identification lens.
Moreover, we provide a $n^{-1/4}$ convergence rate for the estimated embedding function, which is directly applicable to econometric inference in the target task. In contrast, the machine learning literature typically studies excess-risk bounds for prediction and often obtains slower convergence rates for the estimated embedding function.

\paragraph{Notation}
For deterministic sequences $a_n$ and $b_n$, we write $a_n \lesssim b_n$ if there exists a constant $C>0$ such that $a_n \leq C b_n$ for large $n$, and we write $a_n \asymp b_n$ if $a_n \lesssim b_n$ and $b_n \lesssim a_n$.
We define the $L^r(P)$ norm as $\|a\|_{P,r}=\{\sum_{t=1}^T \int |a_t(v)|^r dP(v)\}^{1/r}$ for a function $a:\mathcal{V}\mapsto\mathbb{R}^T$, where $a_t$ is the $t$-th element of $a$, and $\|a\|_{\infty} = \sup_{v\in\mathcal{V}}|a(v)|$ for a function $a:\mathcal{V}\to\mathbb{R}$.
For a vector $v\in\mathbb{R}^T$, we define the $\ell^r$ norm as $\|v\|_r=\{\sum_{t=1}^T |v_t|^r\}^{1/r}$, where $v_t$ is the $t$-th element of $v$.
For a matrix $A\in\mathbb{R}^{T\times K}$, we denote the $t$-th largest singular value of $A$ by $\sigma_t(A)$, the largest singular value of $A$ by $\sigma_{\max}(A)=\sigma_1(A)$, and the smallest singular value of $A$ by $\sigma_{\min}(A)=\sigma_{\min\{T,K\}}(A)$.
We also define the smallest nonzero singular value of $A$ by $\sigma_{\min}^+(A)$.
We define the spectral norm (or operator norm) as $\|A\|_{\mathrm{sp}}=\sigma_{\max}(A)$, the Frobenius norm as $\|A\|_F=\sqrt{\sum_{t=1}^T\sum_{k=1}^K A_{t,k}^2}$, and the nuclear norm (or trace norm) as $\|A\|_*=\sum_{t=1}^{\min\{T,K\}}\sigma_t(A)$.
We denote the Moore-Penrose pseudo-inverse of a matrix $A\in\mathbb{R}^{T\times K}$ by $A^+\in\mathbb{R}^{K\times T}$.
For a square matrix $A\in\mathbb{R}^{T\times T}$, we denote the $t$-th largest eigenvalue of $A$ by $\mu_t(A)$, the largest eigenvalue of $A$ by $\mu_{\max}(A)=\mu_1(A)$, and the smallest eigenvalue of $A$ by $\mu_{\min}(A)=\mu_T(A)$ when all eigenvalues are real.

\section{Setup}
We first introduce the general DML setup, then four empirical applications, and finally formalize the transfer learning setup for the machine learning components.
In the following examples, we consider the setting where the economist has a sample of size $n$ for the target task and the computer scientist provides the pre-trained embeddings $\check{h}(\cdot)$ estimated independently from a different source task.
See \cref{sec:transfer_setup} for the detailed discussion of the embedding functions.

\subsection{Double Machine Learning (DML)}
\label{sec:dml}
Let $\psi(W_i;\theta,\gamma,\alpha)$ be a moment function, where $W_i$ is an observation from an i.i.d. sample, $\theta$ is the parameter of interest, and $(\gamma,\alpha)$ is an infinite-dimensional nuisance parameter vector.
We consider the moment condition:
\begin{equation}
    \mathbb{E}[\psi(W_i;\theta^*,\gamma^*,\alpha^*)]=0,
\end{equation}
where $\theta^*$ is the true value of $\theta$, and $(\gamma^*,\alpha^*)$ is the true value of $(\gamma,\alpha)$.
We assume that $\gamma^*$ and $\alpha^*$ are unknown functions of $Z_i$, which is a subvector of $W_i$ containing unstructured or ultra-high-dimensional variables such as images and text.
The remaining components of $W_i$ are low-dimensional and structured, such as an outcome variable and a treatment variable.

We assume that the computer scientist provides a pre-trained embedding function $\check{h}(z)$ estimated from a different source task.
In this setting, it is common for the economist to estimate the nuisance parameters $(\gamma^*,\alpha^*)$ using the pre-trained embeddings $\check{h}(Z_i)$ as regressors in the target stage models.
Our nuisance parameter estimators for $\gamma^*$ and $\alpha^*$ are $\hat{f}_{\gamma}\circ\check{h}(z)$ and $\hat{f}_{\alpha}\circ\check{h}(z)$, respectively, where $\hat{f}_{\gamma}$ and $\hat{f}_{\alpha}$ are the estimated target stage models using the pre-trained embeddings $\check{h}(Z_i)$ as regressors.
Here, $\hat{f}_{\gamma}\circ\check{h}(z)$ is the composite function of $\hat{f}_{\gamma}$ and $\check{h}(z)$, defined as $\hat{f}_{\gamma}\circ\check{h}(z)=\hat{f}_{\gamma}(\check{h}(z))$.
The DML estimator $\hat{\theta}$ is the solution of the sample moment equation
\begin{equation}
    \frac{1}{n}\sum_{i=1}^n \psi(W_i;\hat{\theta},\hat{f}_{\gamma}\circ\check{h},\hat{f}_{\alpha}\circ\check{h})=0.
\end{equation}

To use the DML theory, we typically verify three key conditions: (i) Neyman orthogonality with respect to $(\gamma,\alpha)$,\footnote {
    \cite{escanciano2026automatic} study locally robust estimation with generated regressors.
    While their theory suggests that the generated regressors affect inference on the parameter of interest in general, this effect of generated regressors does not arise in our setup.
	We discuss the relationship to their results in \cref{sec:escanciano}.
}
(ii) sample splitting, and (iii) the sufficiently fast convergence rate of the nuisance parameters:\footnote{
We implicitly assume that the Neyman orthogonal score function is smooth enough to use a second-order Taylor expansion. See also Assumption 3.2 in \cite{chernozhukov2018double} and the discussion after that.
}
\begin{equation}
	\|\gamma^*(Z)-\hat{f}_\gamma\circ \check{h}(Z)\|_{P,2}=o_P(n^{-1/4}) \quad \text{and} \quad \|\alpha^*(Z)-\hat{f}_\alpha\circ \check{h}(Z)\|_{P,2}=o_P(n^{-1/4}),
\end{equation}
where $\|\cdot\|_{P,2}$ is the $L^2(P)$ norm for the distribution of $W_i$, and the $o_P$ notation is with respect to the source and target samples used to estimate $(\hat{f}_\gamma,\hat{f}_\alpha)$ and $\check{h}$.

The first condition holds by construction of the score, and the second holds by the sample-splitting procedure. Because the source sample is independent of the target sample and is used only to estimate nuisance parameters, the sample-splitting condition holds once the target sample is split between nuisance estimation and parameter estimation (\cref{fig:split}).
\begin{figure}
    \centering
    \includegraphics[width=0.5\linewidth]{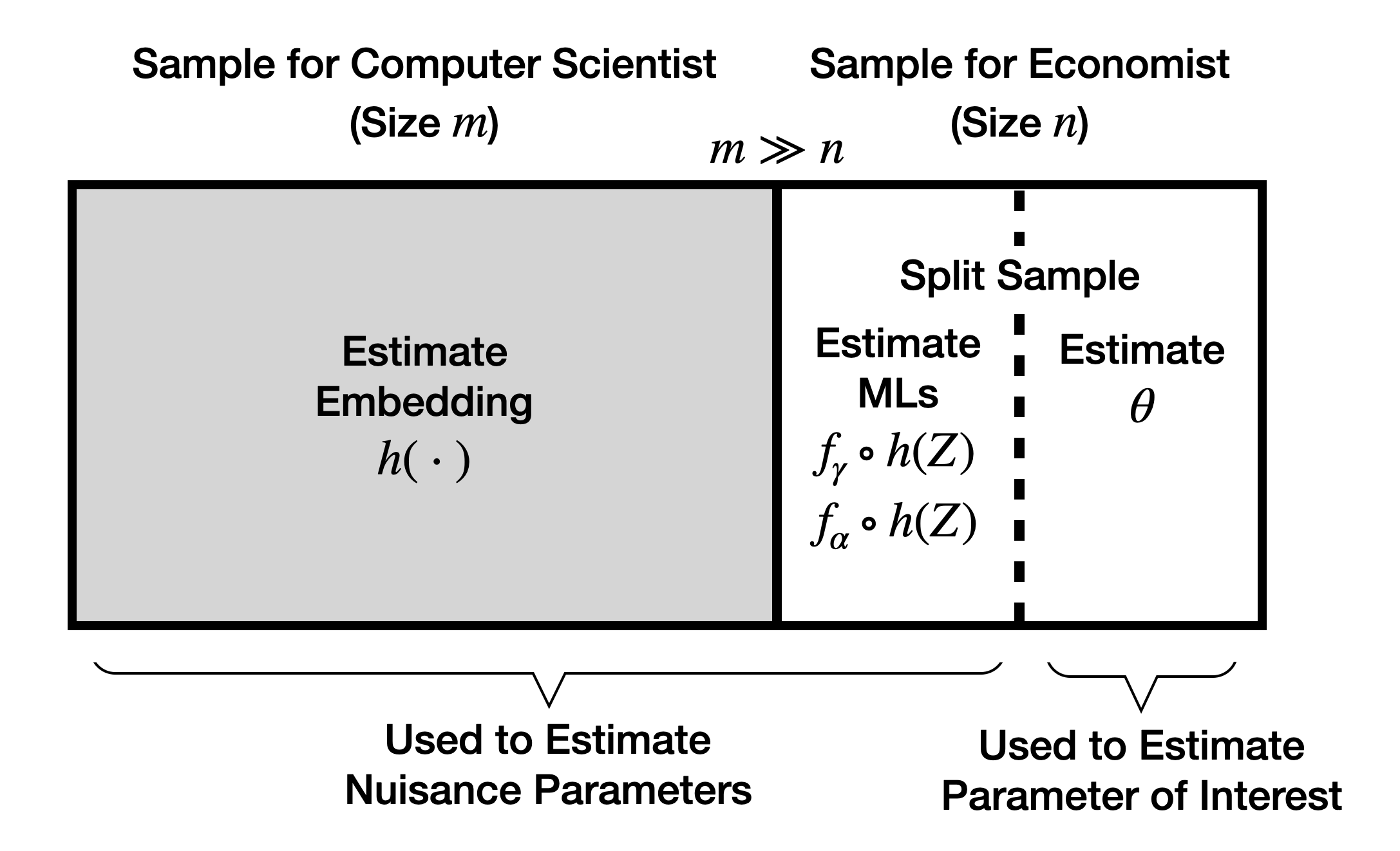}
    \caption{Sample Splitting with Transfer Learning}
    \label{fig:split}
\end{figure}
Alternatively, for some machine learning models, we can use the leave-one-out stability condition in \cite{chen2022debiased} to avoid sample splitting.
Since the remaining DML arguments are the same as in \cite{chernozhukov2018double}, we omit the details.
Under additional regularity conditions on the moment function in \cite{chernozhukov2018double}, the asymptotic normality and consistency of the asymptotic variance estimator are guaranteed. Thus, this paper focuses on the third condition: the convergence rate of the nuisance parameters.

\begin{remark}
    We implicitly assume that the two target stage machine learning models $f_{\gamma}$ and $f_{\alpha}$ use the same embedding function $h$ from the source task for notational simplicity.
    However, we can easily extend our theory to the case where two target stage machine learning models use different embeddings $h_{\gamma}$ and $h_{\alpha}$ from different source tasks.
\end{remark}

\subsection{Applications}
We include four empirical applications to illustrate the DML setup with pre-trained embeddings. Our theory can cover any DML model having the same structure as these examples.

\begin{example}[Partially Linear Model with Unstructured Controls.]
    \label{ex:partially_linear}
	We consider the partially linear model $Y_i = X_i\theta^* +\kappa^*(Z_i) + U_i$ with $\mathbb{E}[U_i \mid X_i,Z_i]=0$,
	where $Y_i$ is the outcome, $X_i$ is a scalar variable, $Z_i$ is a vector of unstructured or ultra-high-dimensional control variables such as images and text, $\kappa^*(\cdot)$ is an unknown nonparametric function, and $U_i$ is the error term.
    The parameter of interest is the coefficient $\theta^*$ of $X_i$.
    This is the classical partially linear setup (\citealp{robinson1988root}), now with controls represented by unstructured-data embeddings.
	We assume that an economist has a sample of size $n$, $\{(Y_i,X_i,Z_i)\}_{i=1}^n$, for the target task.
    The parameter $\theta^*$ is estimated by solving the Neyman orthogonal score $\mathbb{E}[\psi(W_i;\theta,\gamma^*,\alpha^*)]=0$,
    where $W_i=(Y_i,X_i,Z_i)$, $\psi(W_i;\theta,\gamma,\alpha)=\{(Y_i-\gamma(Z_i))-(X_i-\alpha(Z_i))\theta\}(X_i-\alpha(Z_i))$, $\gamma^*(Z_i)=\mathbb{E}[Y_i|Z_i]$, and $\alpha^*(Z_i)=\mathbb{E}[X_i|Z_i]$.\footnote{
	Alternatively, we can use the score $\psi(W_i;\theta,\kappa,\alpha)=(Y_i-X_i\theta-\kappa(Z_i))(X_i-\alpha(Z_i))$ with the same nuisance parameters. \cite{chernozhukov2018double} show that both scores are Neyman orthogonal.
    }
    The DML estimator is given by
    \begin{equation}
        \hat{\theta}=\left(\frac{1}{n}\sum_{i=1}^n \left(X_i-\hat{f}_{\alpha}\circ\check{h}(Z_i)\right)^2\right)^{-1}\left(\frac{1}{n}\sum_{i=1}^n \left(X_i-\hat{f}_{\alpha}\circ\check{h}(Z_i)\right)\left(Y_i-\hat{f}_{\gamma}\circ\check{h}(Z_i)\right)\right).
    \end{equation}
\end{example}

\begin{example}[Demand Estimation with Text and Image Data of Products]
    \label{ex:demand}
	Consider a variant of the demand estimation model in \cite{berry1994estimating}, motivated by the recent findings by \cite{bajari2025hedonic} and \cite{compiani2025demand} that text and image embeddings of products are informative about consumer choice. Let $\theta^*$ be the parameter of interest, which is the price elasticity.
	Let $Y_i$ be the logarithm of the relative market share of product $i$, $X_i$ be the price of product $i$, and $Z_i$ be the unstructured or ultra-high-dimensional control variables of product $i$, such as product images and descriptions.
    A simple demand estimation model is given by $Y_i = X_i\theta^* +\kappa^*(Z_i) + U_i$ and $\mathbb{E}[U_i \mid X_i,Z_i]=0$,
    where $\kappa^*(Z)$ is an unknown nonparametric function and $U_i$ is an error term.
	Under the exogenous price assumption, we can use the partially linear model as in \cref{ex:partially_linear} to estimate the price elasticity $\theta^*$.
    The exogenous price assumption can also be relaxed. Suppose that the price $X_i$ is endogenous due to the brand effect, and the economist observes an instrumental variable $V_i$ for the price $X_i$, such as a cost shifter.
	Consider the demand model given by $Y_i = X_i\theta^* +\gamma^*(Z_i) + U_i,$ $\mathbb{E}[U_i \mid Z_i,V_i]=0$,
    $V_i = \alpha^*(Z_i) + e_i$, and $\mathbb{E}[e_i \mid Z_i]=0$.
    The parameter of interest $\theta^*$ is estimated as the solution of the Neyman orthogonal moment equation
    $\mathbb{E}[\psi(W_i;\theta,\gamma^*,\alpha^*)]=0$,
    where $W_i=(Y_i,X_i,V_i,Z_i)$, $\psi(W_i;\theta,\gamma,\alpha)=(Y_i-X_i\theta-\gamma(Z_i))(V_i-\alpha(Z_i))$, and $\alpha^*(Z_i)=\mathbb{E}[V_i|Z_i]$.
\end{example}

\begin{example}[Imputation with Unstructured Data]
    \label{ex:imputation}
	Suppose that an economist observes a sample of size $n$, $\{(Y_i,X_i,D_i,Z_i)\}_{i=1}^n$, and is interested in the regression coefficient $\theta^*$, which is the solution of the moment $\mathbb{E}[(Y_i-X_i\theta)X_i]=0$, where $Y_i$ is the observed outcome and $X_i$ is a scalar covariate with missing values.
    $X_i$ is observed if $D_i=1$ and missing if $D_i=0$.
    The economist can use the unstructured or ultra-high-dimensional variables $Z_i$ to predict $X_i$.
	We assume that $X_i$ is missing at random, i.e., $X_i\indep D_i\mid Z_i$ (\citealp{rubin1976inference}).
    The parameter $\theta^*$ is estimated by solving the Neyman orthogonal moment equation
    $\mathbb{E}[\psi(W_i;\theta,\gamma^*,\alpha^*)]=0$,
    where $W_i=(Y_i,X_i,D_i,Z_i)$,
    $\psi(W_i;\theta,\gamma,\alpha)=(D_i/\alpha(Z_i))(Y_iX_i-X_i^2\theta - (\gamma_1(Z_i)-\gamma_2(Z_i)\theta)) + \gamma_1(Z_i)-\gamma_2(Z_i)\theta$,
	$\gamma^*_1(Z_i)=\mathbb{E}[X_iY_i|Z_i]$,
	$\gamma^*_2(Z_i)=\mathbb{E}[X_i^2|Z_i]$, $\gamma^*=(\gamma_1^*,\gamma_2^*)$ and
	$\alpha^*(Z_i)=P[D_i=1|Z_i]$ is the propensity score of observing $X_i$.
\end{example}

\begin{example}[Average Treatment Effect with Unstructured confounders]
    \label{ex:ate}
    Under the unconfoundedness assumption $Y_i\indep D_i\mid Z_i$, the average treatment effect (ATE) is identified by
    $\theta^*=\mathbb{E}[\gamma^*(1,Z_i)-\gamma^*(0,Z_i)]$,
	where $Y_i$ is the outcome, $D_i$ is the binary treatment, $Z_i$ denotes the unstructured or ultra-high-dimensional confounders, and $\gamma^*(d,Z_i)=\mathbb{E}[Y_i\mid D_i=d,Z_i]$ (\citealp{rosenbaum1983central}).
	The ATE is estimated using DML with the Neyman orthogonal moment equation
    $\psi(W_i;\theta,\gamma^*,\alpha^*)=0$,
    where $W_i=(Y_i,D_i,Z_i)$, $\psi(W_i;\theta,\gamma,\alpha)=(D_i(Y_i-\gamma(1,Z_i))/(\alpha(Z_i))-(1-D_i)(Y_i-\gamma(0,Z_i))/(1-\alpha(Z_i)))+(\gamma(1,Z_i)-\gamma(0,Z_i)-\theta)$, and $\alpha^*(Z_i)=P[D_i=1|Z_i]$ is the propensity score of receiving treatment.
\end{example}

\subsection{Transfer Learning}
\label{sec:transfer_setup}
In this subsection, we explain the transfer-learning setup for the DML estimator.
We explicitly consider the source task for the computer scientist, together with the target task for the economist.

Hereafter, let $P_{\so}$ and $P_{\ta}$ denote the source and target task distributions, and let $\mathbb{E}_{\so}$ and $\mathbb{E}_{\ta}$ denote the corresponding expectations.
Let $P$ and $\mathbb{E}$ denote the joint distribution of the source and target tasks and its expectation, respectively.
Thus, the expectations and probabilities used in the above subsections are under the target task $\mathbb{E}_{\ta}$ and $P_{\ta}$ except for the $o_P$ notation, which is with respect to both the source and target tasks.

We consider the setting where a source sample of size $m$ and a target sample of size $n$ share the same approximately true embedding function $h_m$ (\cref{def:approx_embedding}).
Our asymptotics let both the source sample size $m$ and the target sample size $n$ go to infinity, where the rate at which $m$ diverges may depend on $n$.
Formally, we treat the source sample size $m=m_n$ as a sequence indexed by the target sample size $n$ and let $m_n\to\infty$ as $n\to\infty$.
Suppose that these samples are generated by 
$(S_j,Z_j)\sim_{i.i.d.}P_{\so}$ for $j=1,\dots,m$ and $(Y_i,Z_i)\sim_{i.i.d.}P_{\ta}$ for $i=1,\dots,n$, where $(S_j,Z_j)$ and $(Y_i,Z_i)$ are independent and $Z_j$ and $Z_i$ have different marginal probability distributions.
Here, $S_j\in\mathcal{S}\subset\mathbb{R}$ is the outcome for the source task, $Y_i\in\mathcal{Y}\subset\mathbb{R}$ is the outcome for the target task, and $Z_j,Z_i\in\mathcal{Z}\subset\mathbb{R}^d$ denote unstructured or ultra-high-dimensional covariates.

\paragraph{Source Task for Computer Scientist}
We focus on the most empirically relevant setting where the source task outcome is multi-valued, i.e., $\mathcal{S}=\{1,2,\dots,T+1\}$ for some integer $T\geq1$.
Then, the source task is multi-class classification with $T+1$ classes and the computer scientist trains the pre-trained model $g\circ h(z):\mathcal{Z}\mapsto\mathbb{R}^{T}$ to provide the predicted log-odds for each class $t=1,\ldots,T$ against the base class $T+1$.
Let $h:\mathcal{Z}\mapsto\mathcal{V}\subset\mathbb{R}^{K}$ be an embedding function returning $h(z)\in\mathcal{V}$ for some integer $K\geq1$, and let
$g=(g_1,\ldots,g_T):\mathcal{V}\mapsto\mathbb{R}^{T}$ be a model for the source task, where $g_t(v)$ is the $t$-th element of $g(v)$ for $t=1,\ldots,T$. We normalize the score of the base class as $g_{T+1}(v)=0$ for every $v\in\mathcal{V}$.
Let $\mathcal{G}_m$ denote the function class of $g$, $\mathcal{G}_{t,m}$ denote the function class of $g_t$ for each $t=1,\ldots,T$, and $\mathcal{H}_m$ denote the function class of $h$ for the source task.
For example, in a deep neural network model, the model can be decomposed as $g\circ h$, where $h$ is some hidden layer of a deep neural network and $g$ is the remaining transformation from the hidden layer to the output layer.
For a given loss function $\ell_{\so}:\mathbb{R}^{T}\times\mathcal{S}\mapsto\mathbb{R}$, let the set of  population minimizers be the solution of the population risk minimization:\footnote{
    As in \cite{farrell2021deep}, we assume that minimizer(s) exist for both population and sample risk for simplicity. For the statistical convergence rate theorem, exact existence is not crucial as we can always replace the infimum by a function within a functional class with arbitrarily close risk.
}
\begin{equation}
    \mathcal{R}_m = \argmin_{g\in\mathcal{G}_m,h\in\mathcal{H}_m}\mathbb{E}_{\so}[\ell_{\so}(g\circ h(Z),S)].
\end{equation}
For a score vector $a_{\so}=(a_{\so,1},\ldots,a_{\so,T})'\in\mathbb{R}^T$, we similarly normalize the base-class score as $a_{\so,T+1}=0$. The most common loss function for multi-class classification is the multinomial logit loss:\footnote{This loss function is also common in the machine learning literature, where it is often called the cross-entropy loss with a softmax output layer.}
\begin{equation*}
	\ell_{\so}(a_{\so},S)
	= -\sum_{t=1}^{T}\mathds{1}\{S=t\}a_{\so,t}
	+\log\left(1+\sum_{t=1}^{T}\exp(a_{\so,t})\right).
\end{equation*}
The key observation here is that the source task is essentially trained on $T$ different tasks $g_1,\ldots,g_T$ with shared embeddings $h(Z)\in\mathbb{R}^K$ to classify labels with $T+1$ classes.

The computer scientist observes a sample of size $m$, $\{(S_j,Z_j)\}_{j=1}^m\sim P_{\so}^m$, for the source task and estimates the pre-trained model $(\check{g},\check{h})$ by some estimation method, such as empirical risk minimization:
\begin{equation}
    \label{eq:sample_minimizer_so}
    (\check{g},\check{h})\in\argmin_{g\in\mathcal{G}_m,h\in\mathcal{H}_m}\frac{1}{m}\sum_{j=1}^m\ell_{\so}(g\circ h(Z_j),S_j).
\end{equation}

\paragraph{Target Task for Economist}
Next, we consider the target task for the economist.
The economist will train the target model $f\circ h(z):\mathcal{Z}\mapsto\mathbb{R}$ to predict the target outcome $Y\in\mathcal{Y}$ using learned embeddings from the source task.
Let $f:\mathcal{V}\mapsto\mathbb{R}$ be the model for the target task, where $\mathcal{F}_n$ is the function class of $f$ for the target task.
Let $\ell_{\ta}:\mathbb{R}\times\mathcal{Y}\mapsto\mathbb{R}$ be the loss function for the target task such as the least squares loss for regression and the logistic loss for binary classification.
Given an embedding function $h_m$ from the source task, let $\mathcal{F}_n^{\sub}(h_m)$ denote the set of solutions to the population risk minimization problem:
\begin{equation}
	\mathcal{F}_n^{\sub}(h_m) = \argmin_{f\in\mathcal{F}_n^{\sub}}\mathbb{E}_{\ta}[\ell_{\ta}(f\circ h_m(Z),Y)],
\end{equation}
where $\mathcal{F}_n^{\sub}\subseteq\mathcal{F}_n$ is a subset of a functional class $\mathcal{F}_n$ for the target task, possibly subject to restrictions such as a sparsity restriction for LASSO in a linear model $\mathcal{F}_n$.
We often have $\mathcal{F}_n^{\sub}=\mathcal{F}_n$ without additional restrictions, for example for linear regression or shallow neural networks.

The economist observes a sample of size $n$, $\{Y_i,Z_i\}_{i=1}^{n}\sim P_{\ta}^n$, and knows the pre-trained model $\check{h}$ given by the computer scientist, but does not observe the source sample.
Since the embedding $\check{h}(z)$ is often high-dimensional, the economist can use various machine learning models for $f$ in the target task.
For example, the economist uses a pre-trained convolutional neural network (CNN) to extract high-dimensional embeddings $\check{h}(z)$ (e.g., embeddings of dimension 4{,}096 from a final pooling layer) from ultra-high-dimensional image data (e.g., pixel information with 150{,}528 dimensions).
The extracted embeddings are then used as regressors, for example, in a ridge regression of the outcome $Y_i$ on $\check{h}(Z_i)$.

Finally, the economist estimates the target model $\hat{f}$ by empirical risk minimization:\footnote{
		Throughout the paper, we impose suitable conditions ensuring that minimizers indexed by random variables exist and are measurable.
	}
\begin{equation}
    \label{eq:sample_minimizer_ta}
    \hat{f}\in\argmin_{f\in\mathcal{F}_n}\frac{1}{n}\sum_{i=1}^n\ell_{\ta}(f\circ \check{h}(Z_i),Y_i).
\end{equation}

In our theory, we allow small optimization errors for both estimators as shown in \cref{eq:sample_minimizer_ta,eq:sample_minimizer_so} in \cref{thm:target_convergence,thm:source_convergence}.

In summary, a computer scientist provides the pre-trained model $(\check{g},\check{h})$ estimated from the source sample $\{(S_j,Z_j)\}_{j=1}^m$, and an economist uses the embeddings $\check{h}(Z_i)$ as regressors to estimate the target function $\hat{f}$ in the target sample $\{(Y_i,Z_i)\}_{i=1}^n$.

Below, we present four examples of the transfer learning setup.
Further details of pre-trained deep learning models are provided in \cref{sec:pretrained_dl}.

\begin{example}[DNN Embeddings for Ultra-High-Dimensional Data]
    \label{ex:DNN}
	Consider the setting where the computer scientist trains a deep neural network (DNN) on the source task and the economist uses the learned embeddings as inputs to machine learning models in the target DML task.
	In this case, the embedding function $h$ is given by the last hidden layer of the DNN, and the pre-trained model $g$ is given by the output layer of the DNN (linear regression). Economists can use various machine learning models, such as shallow neural networks and ridge regression, for the function $f$ in the target task.
    The data $Z$ can be ultra-high-dimensional, and the DNN can extract high-dimensional embeddings of the data using a source task nonparametric regression problem, where the outcome $S$ is a continuous variable and the loss function $\ell_{\so}$ is the mean squared error loss.
\end{example}

An autoencoder is a special case of the DNN in \cref{ex:DNN}, where the source task is the nonparametric regression task with the same input and output, i.e., $S=Z$, and the loss function $\ell_{\so}$ is a distance function between the input and output, such as the mean squared error loss (\citealp{hinton2006reducing}).

\begin{example}[CNN Embeddings for Image Data]
	The computer scientist trains a CNN, such as VGG19 (\citealp{simonyan2015very}) or ResNet50 (\citealp{he2016deep}), on the source task\footnote{Pre-trained models for both CNNs are available in the torchvision package in PyTorch.}, and the economist uses the learned embeddings as inputs to machine learning models in the target DML task as in \cref{ex:DNN}.
	The embedding function $h$ is the last pooling layer of the CNN. The source head $g$ is the classifier attached to that embedding: a three-hidden-layer network with a softmax output in VGG19, or a one-hidden-layer network with a softmax output in ResNet50.
	The covariate $Z$ is an image, and the CNN can extract high-dimensional embeddings of dimension 4{,}096 for VGG19 and 1{,}000 for ResNet50 from the image using the nonparametric classification task in the source task, where the outcome $S$ is an image label (1{,}000 classes in ImageNet) and the loss function $\ell_{\so}$ is the cross-entropy loss (negative log likelihood).
\end{example}

\begin{example}[Transformer Embeddings for Text Data]
	The computer scientist trains a transformer model, such as BERT (\citealp{devlin2019bert}), on the source task\footnote{Pre-trained models for BERT are available in the Hugging Face Transformers package in PyTorch.}, and the economist uses the learned embeddings as inputs to machine learning models in the target DML task as in \cref{ex:DNN}.
	BERT is pre-trained on two tasks: masked language modeling and next sentence prediction.
	The embedding function $h(\texttt{S},\texttt{T})$ is given by the last output vectors of the transformer for the input sentence \texttt{S} and token \texttt{T}.
    The token is a masked token \texttt{[MASK]} in the masked language modeling task or the special token \texttt{[CLS]} in the next sentence prediction task.
    To separate the sentences, the special token \texttt{[SEP]} is inserted between the first sentence and the second sentence.
	The model $g_1$ is trained to predict the masked token using the masked language modeling task, and the model $g_2$ is trained to predict whether the second sentence follows the first sentence using the next sentence prediction task.
    They are simply linear classifiers on the embeddings $h(\texttt{S},\texttt{T})$.
	Both tasks use cross-entropy loss. The masked language modeling label is the original masked token (30{,}000-label classification), while the next-sentence prediction label is binary.
    For the target task, economists can use various machine learning models.
	The data $Z$ are text, and the transformer can extract high-dimensional embeddings of the text using source classification tasks.
	To extract a text-level embedding of dimension 1{,}024 for the BERT LARGE model, we can use either the output vector for the special token \texttt{[CLS]}, i.e., $h(\texttt{S},\texttt{[CLS]})$, or the average of the output vectors for all tokens in the text \texttt{S}, i.e., $|\texttt{S}|^{-1}\sum_{\texttt{T}\in \texttt{S}} h(\texttt{S},\texttt{T})$.\footnote{Whereas the original BERT paper (\citealp{devlin2019bert}) updates all parameters of the pre-trained model for the target task, we only consider the embeddings learned in the source task without updating the parameters of the pre-trained model. Both approaches are called fine-tuning in the machine learning literature. We use the term fine-tuning to mean updating all parameters of the pre-trained model.}
\end{example}

See \cref{sec:pretrained_dl} for details of the functional form of the pre-trained models in the examples above.

\begin{example}[Multimodal Embeddings for Image and Text Data]
    The economist can also use the multimodal embeddings learned by the computer scientist.
	For example, the economist can use images and text as unstructured inputs and extract embeddings separately using a pre-trained CNN and a pre-trained transformer.
\end{example}

\section{Transfer Learning Theory}
\label{sec:transfer_theory}
This and the next sections present our main theoretical results on transfer learning for DML.
For this purpose, we first introduce the true models for the source and target tasks.
Suppose that the source and target tasks admit true models satisfying
$$a_{\so}^*\in\argmin_{a_{\so}:\text{ square-integrable}}\mathbb{E}_{\so}[\ell_{\so}(a_{\so}(Z),S)]$$
 and
$$a_{\ta}^*\in\argmin_{a_{\ta}:\text{ square-integrable}}\mathbb{E}_{\ta}[\ell_{\ta}(a_{\ta}(Z),Y)],$$
respectively.
For example, suppose that the source loss function $\ell_{\so}$ is the multinomial logit loss, $P_{\so}(S=t\mid Z)>0$ for every $t=1,\ldots,T+1$, $P_{\so}$-a.s., and the corresponding log-odds functions are square-integrable under $P_{\so}$. Under the normalization $a_{\so,T+1}^*(Z)=0$ and suitable moment conditions, the true source model is almost surely unique and given by the log-odds of the class probabilities, $a_{\so,t}^*(Z) = \log(P_{\so}(S=t\mid Z) / P_{\so}(S=T+1\mid Z))$ for $t=1,\ldots,T$.
For the target task, if the loss is the squared loss $\ell_{\ta}(a_{\ta}(Z),Y)=(1/2)\times(Y - a_{\ta}(Z))^2$ and suitable moment conditions are satisfied, the true model is almost surely unique and given by the conditional expectation of the outcome as $a_{\ta}^*(Z)=\mathbb{E}_{\ta}[Y\mid Z]$, and if the loss is the binary logit loss $\ell_{\ta}(a_{\ta}(Z),Y)=-Y\log(\exp(a_{\ta}(Z))/(1+\exp(a_{\ta}(Z))))-(1-Y)\log(1/(1+\exp(a_{\ta}(Z))))$ and suitable moment conditions are satisfied, the true model is almost surely unique and given by the log-odds of the conditional probability as $a_{\ta}^*(Z)=\log(P_{\ta}(Y=1\mid Z) / P_{\ta}(Y=0\mid Z))$.
In DML applications, these two loss functions are the most common for the target task.

In this section, we develop a general convergence rate theory for the target model $\hat{f}\circ \check{h}$ by relating it to the source task performance of the pre-trained model $\check{g}\circ\check{h}$ relative to the true model $a_{\so}^*$.
In \cref{sec:key_assumptions}, we introduce two key concepts: an approximately true embedding function and task diversity.
In \cref{sec:convergence_rate_general}, we present the theorem on the convergence rate of the target model in general form.
\cref{sec:conditions_diversity} discusses sufficient and necessary conditions for task diversity.

\subsection{Key Concepts}
\label{sec:key_assumptions}
\subsubsection{Approximately True Embedding Function}
\label{sec:approx_embedding}
We first introduce the concept of an approximately true embedding function, which is a key definition for the transfer learning theory.
In general, $a^*_{\so}$ and $a^*_{\ta}$ need not be representable as $g_m \circ h_m$ and $f_n \circ h_m$.
This mismatch arises from restrictions on the source classes $\mathcal{G}_m,\mathcal{H}_m$ and the target class $\mathcal{F}_n^{\sub}$ even if there exists an embedding function that captures the common features of the source and target tasks well.
The following definition quantifies the approximation error for both source and target tasks.

\begin{definition}[Approximation Error]
    \label{def:approx_embedding}
    For sequences of functions $(g_m,h_m)\in\mathcal{R}_m$ and $f_n\in\mathcal{F}_n^{\sub}(h_m)$, we define the approximation errors for the source and target tasks as
    \begin{align}
        \|g_m \circ h_m - a_{\so}^*\|_{P_{\so},2}
        & =: \epsilon_{\so,m},\\
        \|f_n \circ h_m - a_{\ta}^*\|_{P_{\ta},2}
        & =: \epsilon_{\ta,n}.
    \end{align}
\end{definition}

\begin{remark}
    This definition allows $\epsilon_{\so,m}$ and $\epsilon_{\ta,n}$ to depend on $(g_m,h_m)$ and $(f_n,h_m)$, respectively, but we suppress this dependence in the notation for simplicity.
    If there exists a true embedding function $h^*$ such that $h^*\in\mathcal{H}_m$ eventually and there exist some measurable functions $g^*$ and $f^*$ such that
    $a_{\so}^*=g^*\circ h^*$, $a_{\ta}^*=f^*\circ h^*$, then we can interpret $h^*$ as the true embedding function and $g^*$ and $f^*$ as the true source and target models, respectively.
    Although it is common to assume the existence of the true embedding function in the transfer learning literature and set $\epsilon_{\so,m}$ and $\epsilon_{\ta,n}$ exactly to zero, we allow the approximate embedding function to cover more general cases where there is no true embedding function, but the source and target models can be well approximated by some embedding function.
\end{remark}

While we state it as a definition, our result requires that the approximation errors $\epsilon_{\so,m}$ and $\epsilon_{\ta,n}$ converge to zero as $n\to\infty$ for consistency of the DML estimator in \cref{thm:target_convergence}.
Decays of $\epsilon_{\so,m}$ and $\epsilon_{\ta,n}$ are determined by the richness of the function classes $\mathcal{G}_m$ and $\mathcal{F}_n^{\sub}$ and how well the embedding function $h_m$ captures the common features of the source and target tasks. When the function classes $\mathcal{G}_m$ and $\mathcal{F}_n^{\sub}$ are linear, the approximation errors $\epsilon_{\so,m}$ and $\epsilon_{\ta,n}$ are determined by the linear projection of $a_{\so}^*$ and $a_{\ta}^*$ onto the linear span of the embedding function $h_m$, which is similar to the approximation error in the series regression model (\citealp{newey1997convergence}) using some basis functions instead of the embedding function $h_m$.

\subsubsection{Task Diversity}
\label{sec:diversity}
Next, we introduce the concept of task diversity, which is another key definition.
We define task diversity in a way that makes its connection to identification explicit.
The reader can refer to \cref{sec:diversity_lit} for a detailed comparison with existing definitions of task diversity in the machine learning literature.
For functions $(g_m,h_m)\in\mathcal{R}_m$ and $f_n\in\mathcal{F}_n^{\sub}(h_m)$, define the following approximate identified sets of $h_m$ for $\rho\geq0$:
\begin{equation*}
    \mathcal{H}_{\so,m}(\rho):=\left\{h\in\mathcal{H}_m: \min_{g\in\mathcal{G}_{m}}\left\|g\circ h-g_m\circ h_m\right\|_{P_{\so},2}\leq \rho\right\}
\end{equation*}
and 
\begin{equation*}
    \mathcal{H}_{\ta,m}(\rho):=\left\{h\in\mathcal{H}_m: \min_{f\in\mathcal{F}_n^{\sub}}\left\|f\circ h-f_n\circ h_m\right\|_{P_{\ta},2} \leq \rho\right\},
\end{equation*}
where we assume that the minimum exists for both the source and target tasks for simplicity.
Note that $\mathcal{H}_{\so,m}(0)$ and $\mathcal{H}_{\ta,m}(0)$ are the identified sets of $h_m$ for the source task and the target task, respectively.
If $h_m$ is identified, these sets shrink to $\{h_m\}$ as $\rho\downarrow 0$. In general, however, $h_m$ need not be identified, and these sets can contain multiple elements. It is known that the hidden layers of the deep neural network are identified only up to some transformations, such as the permutation of the order of the hidden units, the scaling of the hidden units, and the rotation of the hidden units (\citealp{grigsby2023hidden}). See also \cref{ex:counter} for another example of partial identification of the embedding function for middle hidden layers of a deep neural network.

\begin{definition}[Task Diversity]
    \label{def:diversity}
    Fix functions $(g_m,h_m)\in\mathcal{R}_m$ and $f_n\in\mathcal{F}_n^{\sub}(h_m)$.
    For function classes $\mathcal{G}_m$, $\mathcal{H}_m$, and $\mathcal{F}_n^{\sub}$ and some $\rho_{\so,m}\geq0$ and $\rho_{\ta,n}\geq0$, we say that $g_m$ is {\it $(\rho_{\so,m},\rho_{\ta,n})$-diverse over $f_n$ with respect to $h_m$} if
	\begin{equation}    
        \label{eq:diversity}
        \mathcal{H}_{\so,m}(\rho_{\so,m}) \subseteq \mathcal{H}_{\ta,m}(\rho_{\ta,n}).
    \end{equation}
\end{definition}

Thus, the task diversity condition requires that if $h$ is in a near-identified set of the source task, then $h$ is also in a near-identified set of the target task, where the distances are measured by the $L^2(P_{\so})$- and $L^2(P_{\ta})$-norms, respectively.

The name ``task diversity'' comes from the intuition that if the $T$ source tasks are sufficiently diverse, then any embedding function $h$ close to the source task identified set should also be close to the target task identified set.
This intuition and sufficient and necessary conditions for task diversity are formalized in \cref{sec:conditions_diversity}.

Next, we introduce the concept of $\nu_n$-transferability, which is defined in terms of the task diversity condition.
\begin{definition}[$\nu_n$-Transferability]
    \label{def:transferability}
    Fix a sequence $\{\nu_n\}_{n=1}^{\infty}$ satisfying $\nu_n>0$ for every $n$, sequences of functions $(g_m,h_m)\in\mathcal{R}_m$, and $f_n\in\mathcal{F}_n^{\sub}(h_m)$.
    For sequences of function classes $\{\mathcal{G}_m\}_{m=1}^{\infty}$, $\{\mathcal{H}_m\}_{m=1}^{\infty}$, and $\{\mathcal{F}_n^{\sub}\}_{n=1}^{\infty}$, we say that $g_m$ is {\it $\nu_n$-transferable to $f_n$ with respect to $h_m$ at rate $\delta_m$} if for every sequence $\{\rho_{\so,m}\}_{m=1}^{\infty}$ such that $\rho_{\so,m}=O(\delta_m)$, the $(\rho_{\so,m},\rho_{\ta,n})$-diversity condition holds with $\rho_{\ta,n}=\rho_{\so,m}/\nu_n$ for all large enough $n$.
\end{definition}

$\nu_n$-transferability at rate $\delta_m$ requires that the $(\rho_{\so,m},\rho_{\ta,n})$-diversity condition holds for all source-side sequences $\rho_{\so,m}=O(\delta_m)$ with $\rho_{\ta,n}=\rho_{\so,m}/\nu_n$.
The quantity $\nu_n$ captures the identification strength of the target task informed by the source task.
Larger $\nu_n$ implies that the task $g_m$ is more transferable to $f_n$ since it implies a smaller target task near-identified set given the near-identified set of the source task.

We will relate the sequences $\rho_{\so,m}$ to the convergence rates of the source task introduced below.
Let $f_n^{\bullet}\in\argmin_{f\in\mathcal{F}_n^{\sub}} \|f\circ \check{h}-f_n \circ h_m\|_{P_{\ta},2}$ be the best approximation of the population target model $f_n\circ h_m$ by the estimated embedding function $\check{h}$.
Formally, $f_n^{\bullet}$ depends on $\check{h}$, but we suppress the dependence for notational simplicity.
\begin{assumption}[Convergence Rate of Models]
    \label{asm:convergence}\ 
    \begin{enumerate}[label=(\roman*)]
        \item (Source Model) There exists some sequence $\delta_{\so,m}\geq0$ such that
        \begin{equation}
            \label{eq:source_convergence}
            \left\|\check{g}\circ \check{h}-g_m\circ h_m\right\|_{P_{\so},2}=O_{P_{\so}}(\delta_{\so,m}).
        \end{equation}
		\item (Target Model given $\check{h}$) For every sequence of estimated embedding functions $\check{h}\in\mathcal{H}_m$ satisfying \cref{eq:source_convergence}, there exists some sequence $r_{\ta,n}\geq0$ such that
        \begin{equation}
            \label{eq:target_convergence}
            \left\|\hat{f}\circ \check{h}-f_n^{\bullet}\circ \check{h}\right\|_{P_{\ta},2}=O_{P}(r_{\ta,n}).
        \end{equation}
    \end{enumerate}
\end{assumption}

The first part of this assumption states that the pre-trained model $\check{g}\circ \check{h}$ converges to the population source model $g_m\circ h_m$ at some rate $\delta_{\so,m}$.
The second part states that the target model $\hat{f}\circ \check{h}$ converges to the population target model $f_n^{\bullet}\circ \check{h}$ at some rate $r_{\ta,n}$ given the estimated embedding function $\check{h}$.
Note that the $O_P$ notation in the second part is with respect to both the source and target samples since $\check{h}$ depends on the source sample.
This assumption is high-level, and we will provide more primitive sufficient conditions in \cref{sec:transfer_convergence} to avoid assuming this directly.

Now, we assume the $\nu_n$-transferability condition with the convergence rate of the source task $\delta_{\so,m}$.

\begin{assumption}[Transferability with Convergence Rates]
    \label{asm:transferability}
    For sequences of function classes $\{\mathcal{G}_m\}_{m=1}^{\infty}$, $\{\mathcal{H}_m\}_{m=1}^{\infty}$, and $\{\mathcal{F}_n^{\sub}\}_{n=1}^{\infty}$, there exist $(g_m,h_m)\in\mathcal{R}_m$ and $f_n\in\mathcal{F}_n^{\sub}(h_m)$ such that the task $g_m$ is $\nu_n$-transferable to $f_n$ with respect to $h_m$ at rate $\delta_{\so,m}$, where $\delta_{\so,m}$ is the convergence rate of the source task defined in \cref{asm:convergence}.
\end{assumption}

\cref{asm:transferability} and the quantity $\nu_n$ play a key role in determining the convergence rate of the target model $\hat{f}\circ \check{h}$ as we will see in the next subsection.

\subsection{Convergence Rate of Target Model}
\label{sec:convergence_rate_general}
Our main theorem on the convergence rate of the target model is as follows.
\begin{theorem}[Convergence Rate of Target Model]
    \label{thm:transfer_learning_general}
    Suppose that \cref{asm:convergence,asm:transferability} hold.
        Then, for every sequence of estimated embedding functions $\check{h}\in\mathcal{H}_m$ satisfying \cref{eq:source_convergence}, we have
    \begin{equation}
        \left\|\hat{f}\circ \check{h}-a^*_{\ta}\right\|_{P_{\ta},2}=O_{P}\left(r_{\ta,n}+\delta_{\so,m}/\nu_n+\epsilon_{\ta,n}\right).
    \end{equation}
\end{theorem}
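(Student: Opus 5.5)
The natural approach is a three-term triangle inequality, followed by bounding the middle term with the transferability condition. Write
\begin{equation*}
\|\hat{f}\circ\check{h}-a^*_{\ta}\|_{P_{\ta},2}\leq \|\hat{f}\circ\check{h}-f_n^{\bullet}\circ\check{h}\|_{P_{\ta},2}+\|f_n^{\bullet}\circ\check{h}-f_n\circ h_m\|_{P_{\ta},2}+\|f_n\circ h_m-a^*_{\ta}\|_{P_{\ta},2}.
\end{equation*}
Each term is then handled separately.
\begin{itemize}
\item The first term is $O_P(r_{\ta,n})$ directly by \cref{asm:convergence}(ii), since $\check{h}$ satisfies \cref{eq:source_convergence}.
\item The third term equals $\epsilon_{\ta,n}$ by \cref{def:approx_embedding}, for the pair $(g_m,h_m)$ and $f_n$ delivered by \cref{asm:transferability}.
\item The middle term, by definition of $f_n^{\bullet}$, equals $\min_{f\in\mathcal{F}_n^{\sub}}\|f\circ\check{h}-f_n\circ h_m\|_{P_{\ta},2}$. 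So it suffices to show that, with high probability, $\check{h}\in\mathcal{H}_{\ta,m}(\rho_{\ta,n})$ with $\rho_{\ta,n}=C\delta_{\so,m}/\nu_n$.
\end{itemize}

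For the middle term I work on a high-probability event. Fix $\varepsilon>0$. By \cref{asm:convergence}(i) there is a constant $C_\varepsilon$ and an $m_0$ such that, on an event $E_m$ with $P_{\so}(E_m)\geq1-\varepsilon$ for $m\geq m_0$,
\[
\|\check{g}\circ\check{h}-g_m\circ h_m\|_{P_{\so},2}\leq C_\varepsilon\delta_{\so,m}.
\]
Since $\check{g}\in\mathcal{G}_m$, on $E_m$ we get
\[
\min_{g\in\mathcal{G}_m}\|g\circ\check{h}-g_m\circ h_m\|_{P_{\so},2}\leq C_\varepsilon\delta_{\so,m},
\]
which means $\check{h}\in\mathcal{H}_{\so,m}(\rho_{\so,m})$ with the deterministic choice $\rho_{\so,m}:=C_\varepsilon\delta_{\so,m}=O(\delta_{\so,m})$.

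\cref{asm:transferability} applies to every such deterministic sequence. It gives, for all large $n$,
\[
\mathcal{H}_{\so,m}(\rho_{\so,m})\subseteq\mathcal{H}_{\ta,m}(\rho_{\so,m}/\nu_n).
\]
Hence on $E_m$ the middle term is at most $C_\varepsilon\delta_{\so,m}/\nu_n$, so it is $O_P(\delta_{\so,m}/\nu_n)$ with respect to the joint law. Because $\check{h}$ depends only on the source sample, which is independent of the target sample, the event $E_m$ has the same probability under the joint law $P$. Also, $m=m_n\to\infty$, so "large $m$" and "large $n$" can be synchronized.

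Finally, I combine the three pieces, using that a sum of $O_P$ terms is $O_P$ of the sum. The one step that needs care is the $O_P$-to-deterministic conversion: transferability is stated for deterministic sequences $\rho_{\so,m}=O(\delta_{\so,m})$, not random ones. I handle this by fixing $\varepsilon$, choosing the constant $C_\varepsilon$, and applying the definition to the sequence $C_\varepsilon\delta_{\so,m}$ for each $\varepsilon$ separately. That step is short, and I do not expect any other part of the argument to cause difficulty.
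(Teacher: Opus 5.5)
Your proposal is correct and matches the paper's proof essentially step for step. It uses the same three-term triangle inequality, \cref{asm:convergence}(ii) for the first term, and \cref{def:approx_embedding} for the third. For the middle term it uses the same device: fix $\varepsilon$, use $\check{g}\in\mathcal{G}_m$ to place $\check{h}$ in $\mathcal{H}_{\so,m}(M_\varepsilon\delta_{\so,m})$ with a deterministic radius, invoke transferability, and note that the source-only event has the same probability under the joint law $P$.
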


This theorem states that the convergence rate of the target model $\hat{f}\circ \check{h}$ to the true model $a_{\ta}^*$ is determined by three components: (i) the convergence rate $r_{\ta,n}$ of the target model given the estimated embedding function $\check{h}$, (ii) the convergence rate of the source task adjusted by the transferability strength $\nu_n$, and (iii) the approximation error $\epsilon_{\ta,n}$ due to an approximately true embedding function.
Thus, the rate of the target model is determined by the slowest of these three components.
The target-stage rate $r_{\ta,n}$ is achieved when transferability is strong, the source task converges sufficiently fast, and the approximation error is sufficiently small.

\begin{remark}
    Applied work using source task embeddings often ignores the estimation error of the pre-trained model and directly applies the DML theory for the target task by treating the estimated embedding function $\check{h}$ as the approximately true embedding function $h_m$.
    This theorem provides a theoretical justification for this common practice by showing that the estimation error of the pre-trained model can be negligible if the transferability is strong, the source task converges sufficiently fast, and the approximation error is sufficiently small.
    However, there are two important differences between the current practice and the theoretical result in this theorem.
    First, current practice often ignores the randomness of the estimated embedding function $\check{h}$, while our convergence rate is with respect to joint randomness under $P$, rather than target sample randomness under $P_{\ta}$ alone.
    Second, applied work often ignores nonidentification of $h_m$ and the associated task diversity condition. Without considering the nonidentification issue, the convergence rate can depend on a specific optimization algorithm used by the computer scientist to estimate the pre-trained model, which is not desirable for the economist. 
    The theorem addresses both issues by imposing transferability and by allowing any estimated embedding function $\check{h}$ in a source task near-identified set.\footnote{
        Our rate is pointwise with respect to the sequence of estimated embedding functions $\check{h}\in\mathcal{H}_m$ satisfying \cref{eq:source_convergence}. 
	Extending the result to a uniform convergence rate over the estimated set of embedding functions $h_m$ is an interesting direction for future research, but doing so would require additional assumptions and is beyond the scope of this paper.
    }
\end{remark}

The next theorem gives a useful necessary condition based on the source task identified set $\mathcal{H}_{\so,m}(0)$. We will revisit this necessary condition in \cref{sec:lasso} for the specific case of the LASSO.
\begin{theorem}[Necessary Condition for Convergence of Target Model]
    \label{thm:conv_nec}
	Fix sequences of functions $(g_m,h_m)\in\mathcal{R}_m$ and $f_n\in\mathcal{F}_n^{\sub}(h_m)$.
	Let $f_{n,h_m^\dagger}^{\bullet}\in\argmin_{f\in\mathcal{F}_n^{\sub}} \|f\circ h_m^\dagger-f_n\circ h_m\|_{P_{\ta},2}$ be the best approximation of the population target model $f_n\circ h_m$ by the embedding function $h_m^\dagger$.
	Let $\widehat f_{n,h_m^\dagger}\in\argmin_{f\in\mathcal{F}_n^{\sub}}\frac{1}{n}\sum_{i=1}^n\ell_{\ta}(f\circ h_m^\dagger(Z_i),Y_i)$ be the corresponding estimator of the target model.
    Suppose that, for every deterministic sequence
	$h_m^\dagger\in\mathcal{H}_{\so,m}(0)$,
	\begin{align*}
		\|\widehat f_{n,h_m^\dagger}\circ h_m^\dagger
		-f_{n,h_m^\dagger}^{\bullet}\circ h_m^\dagger\|_{P_{\ta},2}
		&=o_{P_{\ta}}(1),\\
		\|\widehat f_{n,h_m^\dagger}\circ h_m^\dagger-a_{\ta}^*\|_{P_{\ta},2}
		&=o_{P_{\ta}}(1).
	\end{align*}
    Then, $\mathcal{H}_{\so,m}(0)\subseteq \mathcal{H}_{\ta,m}(\rho_{\ta,n})$ for some sequence $\rho_{\ta,n}=o(1)$.
\end{theorem}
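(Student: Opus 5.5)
The plan is a triangle inequality followed by a uniformity argument. Fix any deterministic sequence $h_m^\dagger\in\mathcal{H}_{\so,m}(0)$. By definition of $f_{n,h_m^\dagger}^{\bullet}$ as a minimizer over $\mathcal{F}_n^{\sub}$,
\begin{equation*}
\min_{f\in\mathcal{F}_n^{\sub}}\|f\circ h_m^\dagger-f_n\circ h_m\|_{P_{\ta},2}
=\|f_{n,h_m^\dagger}^{\bullet}\circ h_m^\dagger-f_n\circ h_m\|_{P_{\ta},2}.
\end{equation*}
Since $f_n$ and $h_m$ are fixed by the statement, the quantity
\begin{equation*}
d_n(h_m^\dagger):=\|f_{n,h_m^\dagger}^{\bullet}\circ h_m^\dagger-f_n\circ h_m\|_{P_{\ta},2}
\end{equation*}
is deterministic. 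The goal is to show $\sup_{h\in\mathcal{H}_{\so,m}(0)}d_n(h)\to0$. We then set $\rho_{\ta,n}:=\sup_{h\in\mathcal{H}_{\so,m}(0)}d_n(h)$, which gives $\mathcal{H}_{\so,m}(0)\subseteq\mathcal{H}_{\ta,m}(\rho_{\ta,n})$ directly from the definition of $\mathcal{H}_{\ta,m}(\cdot)$.

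First I would prove pointwise convergence, $d_n(h_m^\dagger)\to0$ for each sequence. The triangle inequality gives
\begin{equation*}
d_n(h_m^\dagger)\le \|f_{n,h_m^\dagger}^{\bullet}\circ h_m^\dagger-\widehat f_{n,h_m^\dagger}\circ h_m^\dagger\|_{P_{\ta},2}+\|\widehat f_{n,h_m^\dagger}\circ h_m^\dagger-a_{\ta}^*\|_{P_{\ta},2}+\|a_{\ta}^*-f_n\circ h_m\|_{P_{\ta},2}.
\end{equation*}
The first two terms are $o_{P_{\ta}}(1)$ by the two hypotheses. The third term is $\epsilon_{\ta,n}$. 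To handle it, I would apply the hypotheses to the particular sequence $h_m^\dagger=h_m$, which lies in $\mathcal{H}_{\so,m}(0)$ because $g_m\circ h_m-g_m\circ h_m=0$. In that case $f_{n,h_m}^{\bullet}\circ h_m=f_n\circ h_m$ in $L^2(P_{\ta})$, so the same triangle inequality yields $\epsilon_{\ta,n}\le o_{P_{\ta}}(1)+o_{P_{\ta}}(1)$. Because $\epsilon_{\ta,n}$ is deterministic, this forces $\epsilon_{\ta,n}\to0$. The same reasoning applies to $d_n(h_m^\dagger)$: a deterministic sequence bounded by an $o_{P_{\ta}}(1)$ quantity must converge to zero. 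The justification is that if $d_n\ge\eta>0$ along a subsequence, then with probability tending to one the right-hand side would exceed $\eta/2$, which is a contradiction.

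Second, I would upgrade pointwise convergence to a uniform statement. This is the step I expect to be the main obstacle, because $\mathcal{H}_{\so,m}(0)$ changes with $m$ and the hypothesis is quantified only over sequences. I would argue by contradiction. Suppose $\sup_{h\in\mathcal{H}_{\so,m}(0)}d_n(h)\not\to0$. Then there exist $\eta>0$ and a subsequence $n_k$ along which some $h_{m_{n_k}}^{(k)}\in\mathcal{H}_{\so,m_{n_k}}(0)$ satisfies $d_{n_k}(h^{(k)})\ge\eta$; if the supremum is not attained, we take an element within $\eta/2$ of it. Off the subsequence, we set the sequence equal to $h_m$. This produces a single deterministic sequence in $\mathcal{H}_{\so,m}(0)$ whose $d_n$ does not converge to zero, contradicting the first step. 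If the supremum happens to be infinite for finitely many $n$, that is harmless, since the inclusion is only needed asymptotically. One could also simply define $\rho_{\ta,n}$ as the supremum for large $n$.

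Finally, I would note that the argument never uses properties of the loss beyond the definitions. Measurability of the chosen sequences is not an issue because they are deterministic.
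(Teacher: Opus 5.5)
Your proposal is correct and follows essentially the same argument as the paper. Like the paper, you use the choice $h_m^\dagger=h_m$ together with determinism to force $\epsilon_{\ta,n}\to0$, and you apply the three-term triangle inequality to get pointwise convergence of the best-approximation error. You then upgrade this to uniformity over $\mathcal{H}_{\so,m}(0)$ by a subsequence contradiction padded with $h_m$, and set $\rho_{\ta,n}$ equal to the supremum.
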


\begin{remark}
    The $\nu_n$-transferability condition in \cref{asm:transferability} is stronger than necessary for convergence of the target model for simplicity of presentation.
    A weaker sufficient condition is that $\min_{f\in\mathcal{F}_n^{\sub}}\left\|f\circ \check{h}-f_n\circ h_m\right\|_{P_{\ta},2}=O_{P}(\delta_{\so,m}/\nu_n)$ for every sequence of estimated embedding functions $\check{h}\in\mathcal{H}_m$ satisfying \cref{eq:source_convergence}.
    Indeed, this weaker condition is necessary in the following sense: if $\epsilon_{\ta,n}=O(\delta_{\so,m}/\nu_n)$, $r_{\ta,n}=O(\delta_{\so,m}/\nu_n)$, and
    $\left\|\hat{f}\circ \check{h}-a^*_{\ta}\right\|_{P_{\ta},2}=O_{P}\left(\delta_{\so,m}/\nu_n\right)$ for every sequence of estimated embedding functions $\check{h}\in\mathcal{H}_m$ satisfying \cref{eq:source_convergence}, then $\min_{f\in\mathcal{F}_n^{\sub}}\left\|f\circ \check{h}-f_n\circ h_m\right\|_{P_{\ta},2}=O_{P}(\delta_{\so,m}/\nu_n)$. The proof is similar to the one for \cref{thm:conv_nec}.
\end{remark}

\subsection{Closer Look at Task Diversity}
\label{sec:conditions_diversity}
In this subsection, we discuss sufficient and necessary conditions for task diversity in \cref{def:diversity}.
\subsubsection{Sufficient Condition}
We first assume that the density ratio of $Z$ in the source and target tasks is bounded away from zero.
\begin{assumption}[Prediction Bound for Source Model]
    \label{asm:support}
    There exists $\underline{w}_n>0$ such that $p_{\so}(z)/p_{\ta}(z)\geq\underline{w}_n^2$, $P_{\ta}$-a.s., where $p_{\so}(z)$ and $p_{\ta}(z)$ are the densities of $Z$ in the source and target samples with respect to some measures, respectively.
\end{assumption}

\cref{asm:support} holds only if the support of $Z$ in the target task is included in the support of $Z$ in the source task.
We allow $\underline{w}_n$ to decrease slowly to zero as $n$ increases, but not too quickly.
Although it is commonly assumed in the machine learning literature (e.g., \citealp{johansson2019support}), the density ratio condition is a strong assumption especially when $Z$ is unstructured data.
Relaxing this assumption is beyond the scope of this paper and is an important direction for future research.

\begin{theorem}[Sufficient Condition for Transferability on the Mean Squared Loss]
    \label{thm:suf_transfer}
    Suppose that there exists a Lipschitz function $\Gamma:\mathbb{R}^{T}\to\mathbb{R}$ with Lipschitz constant $L_{\Gamma}$ such that $\|f_n\circ h_m - \Gamma \circ g_m \circ h_m\|_{P_{\ta},2}\leq\varrho_n$ for some tolerance $\varrho_n\geq0$.
	For this $\Gamma$, we assume that for every $h\in\mathcal{H}_{\so,m}(\rho_{\so,m})$, there exists $f\in\mathcal{F}_n^{\sub}$ such that $\left\|f \circ h - \Gamma \circ g^{\bullet}_m \circ h\right\|_{P_{\ta},2} \leq\varrho_n$, where $g^{\bullet}_m\in\argmin_{g\in\mathcal{G}_{m}} \left\|g \circ h - g_m \circ h_m\right\|_{P_{\so},2}$.
    Then, under \cref{asm:support}, $g_m$ is $(\rho_{\so,m},\rho_{\ta,n})$-diverse over $f_n$ with respect to $h_m$ for every $\rho_{\so,m}\geq0$ and $\rho_{\ta,n}=L_{\Gamma}\rho_{\so,m}/\underline{w}_n+2\varrho_n$.
\end{theorem}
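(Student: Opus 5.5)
We fix an arbitrary $h\in\mathcal{H}_{\so,m}(\rho_{\so,m})$ and show that $h\in\mathcal{H}_{\ta,m}(\rho_{\ta,n})$. It suffices to exhibit one $f\in\mathcal{F}_n^{\sub}$ with $\|f\circ h-f_n\circ h_m\|_{P_{\ta},2}\leq L_{\Gamma}\rho_{\so,m}/\underline{w}_n+2\varrho_n$, because the minimum over $\mathcal{F}_n^{\sub}$ can only be smaller. The natural candidate is the $f$ supplied by the hypothesis for this $h$. Let $g^{\bullet}_m$ be the source-side best approximation associated with $h$. By definition of $\mathcal{H}_{\so,m}(\rho_{\so,m})$ and because the minimum is attained,
\begin{equation*}
\|g^{\bullet}_m\circ h-g_m\circ h_m\|_{P_{\so},2}\leq\rho_{\so,m}.
\end{equation*}

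We then decompose via the triangle inequality in $L^2(P_{\ta})$:
\begin{equation*}
\|f\circ h-f_n\circ h_m\|_{P_{\ta},2}\leq \|f\circ h-\Gamma\circ g^{\bullet}_m\circ h\|_{P_{\ta},2}+\|\Gamma\circ g^{\bullet}_m\circ h-\Gamma\circ g_m\circ h_m\|_{P_{\ta},2}+\|\Gamma\circ g_m\circ h_m-f_n\circ h_m\|_{P_{\ta},2}.
\end{equation*}
The first term is at most $\varrho_n$ by the assumption on $h$. The third term is at most $\varrho_n$ by the assumption on $\Gamma$. Together they give the $2\varrho_n$ part.

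For the middle term, we use the Lipschitz property pointwise: $|\Gamma(u)-\Gamma(u')|\leq L_{\Gamma}\|u-u'\|_2$ for $u,u'\in\mathbb{R}^T$, where we read Lipschitz with respect to the Euclidean norm so that it matches the paper's vector $L^2$ norm. This yields
\begin{equation*}
\|\Gamma\circ g^{\bullet}_m\circ h-\Gamma\circ g_m\circ h_m\|_{P_{\ta},2}\leq L_{\Gamma}\|g^{\bullet}_m\circ h-g_m\circ h_m\|_{P_{\ta},2}.
\end{equation*}
The remaining step is a change of measure from $P_{\ta}$ to $P_{\so}$ using \cref{asm:support}. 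For any nonnegative measurable $\phi$,
\begin{equation*}
\int\phi\,p_{\ta}\leq\underline{w}_n^{-2}\int\phi\,p_{\so}.
\end{equation*}
This holds because $p_{\ta}\leq p_{\so}/\underline{w}_n^2$ on the set where $p_{\ta}>0$, which is all that matters since that inequality holds $P_{\ta}$-a.s. Applying this with $\phi=\sum_t(g^{\bullet}_{m,t}\circ h-g_{m,t}\circ h_m)^2$ and taking square roots gives
\begin{equation*}
\|\cdot\|_{P_{\ta},2}\leq\underline{w}_n^{-1}\|\cdot\|_{P_{\so},2}\leq\rho_{\so,m}/\underline{w}_n.
\end{equation*}
Combining the three bounds gives the claimed $\rho_{\ta,n}$. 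Since $h$ was arbitrary, the inclusion in \cref{eq:diversity} follows.

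The main point requiring care is this change-of-measure step. It needs $p_{\so}$ and $p_{\ta}$ to be densities with respect to a common dominating measure, and the inequality need only hold on the target support. The rest is bookkeeping. A minor subtlety is that $g^{\bullet}_m$ depends on $h$; this is harmless because the hypothesis is stated for that same $g^{\bullet}_m$.
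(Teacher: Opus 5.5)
Your proposal is correct and follows essentially the same route as the paper: the same three-term triangle inequality through $\Gamma\circ g^{\bullet}_m\circ h$ and $\Gamma\circ g_m\circ h_m$, the pointwise Lipschitz bound on $\Gamma$, and the change of measure from $P_{\ta}$ to $P_{\so}$ via \cref{asm:support}. The only differences are cosmetic. You bound the particular $f\in\mathcal{F}_n^{\sub}$ supplied by the hypothesis, whereas the paper first minimizes over $\mathcal{F}_n^{\sub}$. You also write out the density-ratio step explicitly, including the restriction to the target support, which the paper states in a single line.
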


This theorem provides a sufficient condition for $(\rho_{\so,m},\rho_{\ta,n})$-diversity based on the existence of a Lipschitz function $\Gamma$ that relates the target model $f_n\circ h_m$ to the source model $g_m\circ h_m$.
The first condition requires that the target model can be approximated by applying the Lipschitz function $\Gamma$ to the source model with some small error $\varrho_n$.
The second condition requires that for every embedding function $h$ close to the source task identified set, there exists a target model $f$ that can approximate $\Gamma \circ g^{\bullet}_m \circ h$ with some small error $\varrho_n$.
Under these conditions and the density ratio condition in \cref{asm:support}, the task $g_m$ is $(\rho_{\so,m},L_{\Gamma}\rho_{\so,m}/\underline{w}_n+2\varrho_n)$-diverse over $f_n$ with respect to $h_m$ for every $\rho_{\so,m}\geq0$.
Note that transferability is stronger if the Lipschitz constant $L_{\Gamma}$ is small, the approximation error $\varrho_n$ is small, and the constant $\underline{w}_n$ related to the infimum of the density ratio is large.

In the extreme case $L_\Gamma=0$ (i.e., $\Gamma$ is constant), we have $(\rho_{\so,m},2\varrho_n)$-diversity.
Also, if $L_\Gamma$ diverges to infinity, the $(\rho_{\so,m},\rho_{\ta,n})$-diversity condition does not hold for every finite $\rho_{\ta,n}$.
In summary, a small $L_\Gamma$ leads to better transferability.

In the next example, we verify the sufficient condition in \cref{thm:suf_transfer} when both the computer scientist and the economist use linear models.

\begin{example}[Linear Model]
    \label{ex:transferability_linear}
    Consider the simple linear models for both $f_n$ and $g_m$, i.e., $f_n \circ h_m(z) = \alpha_{\ta,n} + \beta_{\ta,n}' h_m(z)$ and $g_m \circ h_m(z) = \alpha_{\so,m} + B_{\so,m} h_m(z)$, where $\alpha_{\ta,n}\in\mathbb{R}$, $\beta_{\ta,n}\in\mathbb{R}^{K}$, $\alpha_{\so,m}=(\alpha_{\so,m,1},\ldots,\alpha_{\so,m,T})'\in\mathbb{R}^{T}$, and $B_{\so,m}=(\beta_{\so,m,1},\ldots,\beta_{\so,m,T})'\in\mathbb{R}^{T\times K}$ with $\beta_{\so,m,t}\in\mathbb{R}^{K}$ for $t=1,\ldots,T$.
	Let $\mathcal{G}_m$ be the class of linear source heads $g_{\alpha,B}(v)=\alpha+Bv$ with $\alpha\in\mathbb{R}^{T}$ and $B\in\mathbb{R}^{T\times K}$ and $\mathcal{F}_n^{\sub}$ be the class of linear target heads $f_{\alpha,\beta}(v)=\alpha+\beta'v$ with $\alpha\in\mathbb{R}$ and $\beta\in\mathbb{R}^{K}$.
    If $B_{\so,m}$ has full column rank, i.e., $\operatorname{rank}(B_{\so,m})=K$, then
    \begin{equation*}
        \Gamma(x)=\alpha_{\ta,n}+\beta_{\ta,n}'B_{\so,m}^{+}(x-\alpha_{\so,m})
    \end{equation*}
    satisfies $f_n\circ h_m=\Gamma \circ g_m \circ h_m$ for any linear $f_n$, where $B_{\so,m}^{+}:= (B_{\so,m}'B_{\so,m})^{-1}B_{\so,m}'$ is the Moore-Penrose inverse of $B_{\so,m}$.

    On the other hand, if $\operatorname{rank}(B_{\so,m})<K$, $g_m$ is not diverse for all directions.
    An affine function $\Gamma$ with an intercept satisfying $f_n\circ h_m=\Gamma \circ g_m \circ h_m$ exists for every value of $h_m$ if and only if $\beta_{\ta,n}$ belongs to the row span of $B_{\so,m}$. That is, there exists $b\in\mathbb{R}^{T}$ such that $\beta_{\ta,n}'=b' B_{\so,m}$.\footnote{
			More generally, by allowing some approximation error $\varrho_n\geq0$, we can instead assume that there exists $b\in\mathbb{R}^{T}$ such that $\|(\beta_{\ta,n}'-b' B_{\so,m})h_m(Z)\|_{P_{\ta},2}\leq \varrho_n$. Choosing $\Gamma(x)=\alpha_{\ta,n}+b'(x-\alpha_{\so,m})$ gives Lipschitz constant $L_{\Gamma}=\|b\|_2$ and approximation error at most $\varrho_n$. In this case, $g_m$ is $(\rho_{\so,m},L_{\Gamma}\rho_{\so,m}/\underline{w}_n+2\varrho_n)$-diverse over $f_n$ with respect to $h_m$, provided that the functional class condition holds.
		Note that the approximation error $\varrho_n$ induced by $\Gamma$ is always zero in the full-column-rank case.
    }
		Then, we can choose $\Gamma(x)=\alpha_{\ta,n}+\beta_{\ta,n}'B_{\so,m}^{+}(x-\alpha_{\so,m})$ by the definition of the Moore-Penrose inverse $B_{\so,m}B_{\so,m}^+B_{\so,m}=B_{\so,m}$ and $\beta_{\ta,n}'=b' B_{\so,m}$.
	Since the intercept does not affect the Lipschitz constant, $L_{\Gamma}=\|\beta_{\ta,n}'B_{\so,m}^{+}\|_2 \leq \|\beta_{\ta,n}\|_2 /\sigma_{\min}^+(B_{\so,m})$ in both cases.\footnote{
        The inequality follows from \cref{lem:l2_ineq,lem:spectral_inverse}.
    }
	Thus, by \cref{thm:suf_transfer}, $g_m$ is $(\rho_{\so,m},L_{\Gamma}\rho_{\so,m}/\underline{w}_n)$-diverse over $f_n$ with respect to $h_m$ if, for every $h\in\mathcal{H}_{\so,m}(\rho_{\so,m})$ and every corresponding best linear source head $(\alpha_{\so,m}^{\bullet},B_{\so,m}^{\bullet})\in\argmin_{\alpha\in\mathbb{R}^{T},B\in\mathbb{R}^{T\times K}} \left\|\alpha+Bh-(\alpha_{\so,m}+B_{\so,m}h_m)\right\|_{P_{\so},2}$, the linear model
    \begin{equation*}
        v\mapsto \alpha_{\ta,n}+\beta_{\ta,n}'B_{\so,m}^{+}
        (\alpha_{\so,m}^{\bullet}-\alpha_{\so,m}+B_{\so,m}^{\bullet}v)
    \end{equation*}
    belongs to $\mathcal{F}_n^{\sub}$. In particular, the displayed linear model belongs to $\mathcal{F}_n^{\sub}$ if $\mathcal{F}_n^{\sub}$ contains all linear models.
\end{example}

\cref{ex:transferability_linear} shows that, when the target class contains all linear heads with intercepts, the row-span condition is sufficient for transferability with $\varrho_n=0$. Under the regularity conditions in \cref{thm:nec_transfer_linear_pre}, the same condition is also necessary for $(0,0)$-diversity. Intuitively, if the coefficients of the source tasks are informative enough to represent the coefficients of the target task, the transferability condition holds.
Thus, if we want to guarantee transferability uniformly over all target tasks for every possible $\beta_{\ta}$, this requires at least $K$ source tasks ($T\geq K$).
The next corollary states sufficient conditions for general Lipschitz target models, which include the linear target case as a special case.

\begin{corollary}[Sufficient Condition for Transferability for Linear Source Models]
    \label{cor:transferability_full_rank}
    Let $\mathcal{G}_m$ be the class of linear source heads $g_{\alpha,B}(v)=\alpha+Bv$ with $\alpha\in\mathbb{R}^{T}$ and $B\in\mathbb{R}^{T\times K}$, and suppose that $g_m \circ h_m = \alpha_{\so,m} + B_{\so,m} h_m$, where $\alpha_{\so,m}\in\mathbb{R}^{T}$ and $B_{\so,m}\in\mathbb{R}^{T\times K}$.
	Consider any Lipschitz target model class $\mathcal{F}_n^{\sub}$ with Lipschitz constant $L_{\mathcal{F}}$. Suppose that, for every $\rho_{\so,m}\geq0$, every $h\in\mathcal{H}_{\so,m}(\rho_{\so,m})$, and every corresponding best linear source head
    \begin{equation*}
        (\alpha_{\so,m}^{\bullet},B_{\so,m}^{\bullet})
        \in
        \argmin_{\alpha\in\mathbb{R}^{T},\,B\in\mathbb{R}^{T\times K}}
        \left\|\alpha+Bh-(\alpha_{\so,m}+B_{\so,m}h_m)\right\|_{P_{\so},2},
    \end{equation*}
    there exists $f\in\mathcal{F}_n^{\sub}$ such that
    \begin{equation*}
        f\circ h
        =
        f_n\left(B_{\so,m}^{+}
        (\alpha_{\so,m}^{\bullet}-\alpha_{\so,m}+B_{\so,m}^{\bullet}h)\right),
        \qquad P_{\ta}\text{-a.s.}
    \end{equation*}
    Assume \cref{asm:support} holds. If $B_{\so,m}$ has full column rank, then $g_m$ is $(\rho_{\so,m},\rho_{\ta,n})$-diverse over $f_n$ with respect to $h_m$ for every $\rho_{\so,m}\geq0$, where
    \begin{equation*}
        \rho_{\ta,n}
        =
        \frac{L_{\mathcal{F}}}
        {\underline{w}_n\sigma_{\min}(B_{\so,m})}\rho_{\so,m}.
    \end{equation*}
\end{corollary}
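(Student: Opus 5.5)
The plan is to apply \cref{thm:suf_transfer} with a carefully chosen Lipschitz map $\Gamma$ and with tolerance $\varrho_n=0$. Since $B_{\so,m}$ has full column rank, its Moore--Penrose inverse $B_{\so,m}^{+}=(B_{\so,m}'B_{\so,m})^{-1}B_{\so,m}'$ is a left inverse: $B_{\so,m}^{+}B_{\so,m}=I_K$. I define
\begin{equation*}
    \Gamma(x)=f_n\bigl(B_{\so,m}^{+}(x-\alpha_{\so,m})\bigr),\qquad x\in\mathbb{R}^T.
\end{equation*}

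\textbf{Step 1: the first hypothesis of \cref{thm:suf_transfer} holds with $\varrho_n=0$.} For every $z$,
\begin{equation*}
    \Gamma\circ g_m\circ h_m(z)=f_n\bigl(B_{\so,m}^{+}B_{\so,m}h_m(z)\bigr)=f_n\circ h_m(z).
\end{equation*}
Hence $\|f_n\circ h_m-\Gamma\circ g_m\circ h_m\|_{P_{\ta},2}=0$.

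\textbf{Step 2: Lipschitz constant of $\Gamma$.} Because $f_n\in\mathcal{F}_n^{\sub}$ is $L_{\mathcal{F}}$-Lipschitz,
\begin{equation*}
    |\Gamma(x)-\Gamma(x')|\le L_{\mathcal{F}}\|B_{\so,m}^{+}(x-x')\|_2\le L_{\mathcal{F}}\|B_{\so,m}^{+}\|_{\mathrm{sp}}\|x-x'\|_2 .
\end{equation*}
Under full column rank, $\|B_{\so,m}^{+}\|_{\mathrm{sp}}=1/\sigma_{\min}(B_{\so,m})$; this is \cref{lem:spectral_inverse}, and $\sigma_{\min}^+=\sigma_{\min}$ here. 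So $L_\Gamma=L_{\mathcal{F}}/\sigma_{\min}(B_{\so,m})$. The norm on $\mathbb{R}^T$ must be the Euclidean one, because the $L^2(P_{\so})$ norm of a vector-valued function sums squared components. That matches the form in which \cref{thm:suf_transfer} presumably uses $L_\Gamma$.

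\textbf{Step 3: the second hypothesis of \cref{thm:suf_transfer} holds with $\varrho_n=0$.} Fix $\rho_{\so,m}\ge0$ and $h\in\mathcal{H}_{\so,m}(\rho_{\so,m})$. Let $g_m^\bullet(v)=\alpha_{\so,m}^\bullet+B_{\so,m}^\bullet v$ be a best linear source head. Then
\begin{equation*}
    \Gamma\circ g_m^\bullet\circ h=f_n\bigl(B_{\so,m}^{+}(\alpha_{\so,m}^{\bullet}-\alpha_{\so,m}+B_{\so,m}^{\bullet}h)\bigr).
\end{equation*}
By the corollary's hypothesis, this equals $f\circ h$ $P_{\ta}$-a.s. for some $f\in\mathcal{F}_n^{\sub}$. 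Therefore $\|f\circ h-\Gamma\circ g_m^\bullet\circ h\|_{P_{\ta},2}=0$.

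One subtlety: the minimizer $g_m^\bullet$ may not be unique. Both the corollary's hypothesis and \cref{thm:suf_transfer} refer to "every corresponding" or "a" minimizer, so I will pick the same $g_m^\bullet$ in both. That requires reading \cref{thm:suf_transfer} as quantifying over an arbitrary element of the argmin, which I expect its proof permits.

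\textbf{Conclusion.} \cref{asm:support} is assumed, so \cref{thm:suf_transfer} gives $(\rho_{\so,m},\rho_{\ta,n})$-diversity with
\begin{equation*}
    \rho_{\ta,n}=L_\Gamma\rho_{\so,m}/\underline{w}_n+2\cdot 0=\frac{L_{\mathcal{F}}}{\underline{w}_n\sigma_{\min}(B_{\so,m})}\rho_{\so,m},
\end{equation*}
for every $\rho_{\so,m}\ge0$.

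\textbf{Expected obstacle.} The only delicate point is matching norm conventions for the Lipschitz constant of $\Gamma$, namely Euclidean on $\mathbb{R}^T$ and compatible with $\|\cdot\|_{P_{\so},2}$, together with the spectral norm bound for $B_{\so,m}^{+}$. Both are handled by Step 2 and \cref{lem:spectral_inverse}. The rest is direct verification of the hypotheses of \cref{thm:suf_transfer}.
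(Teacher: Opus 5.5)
Your proposal is correct and follows the paper's proof essentially step for step. The paper uses the same map $\Gamma(x)=f_n(B_{\so,m}^{+}(x-\alpha_{\so,m}))$, the left-inverse identity $B_{\so,m}^{+}B_{\so,m}=I_K$, the Lipschitz bound $L_{\Gamma}\leq L_{\mathcal{F}}\|B_{\so,m}^{+}\|_{\mathrm{sp}}=L_{\mathcal{F}}/\sigma_{\min}(B_{\so,m})$ via \cref{lem:l2_ineq,lem:spectral_inverse}, and an application of \cref{thm:suf_transfer} with $\varrho_n=0$. Your concern about non-unique minimizers does not cause a problem: the corollary's hypothesis is imposed for every best linear source head, so whichever minimizer \cref{thm:suf_transfer} uses is covered.
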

The corollary implies that, when the source task is linear and the target task is Lipschitz, the $\nu_n$-transferability condition holds if the source task coefficients are informative enough to represent the target task coefficients.
Moreover, the transferability measure $\nu_n = \underline{w}_n\sigma_{\min}(B_{\so,m})/L_{\mathcal{F}}$ is larger if the source task has a larger minimum singular value $\sigma_{\min}(B_{\so,m})$, the target task has a smaller Lipschitz constant $L_{\mathcal{F}}$, and the density ratio of $Z$ in the source and target tasks is bounded away from zero with a larger value of $\underline{w}_n$.

\subsubsection{Necessary Condition}
The following theorems provide necessary conditions for task diversity. We start with the linear case as in \cref{ex:transferability_linear}.
When $B_{\so,m}$ is rank deficient, there is a direction $v(Z)$ in the embedding space that is not captured by any of the source tasks, but it can be relevant for the target task.
Intuitively, if the source tasks are not diverse enough to cover all directions of the target task, transferability generally fails for the uncovered directions of the source task coefficients. See the next theorem for a formal statement.

\begin{theorem}[Necessary Condition for Linear Models]
    \label{thm:nec_transfer_linear_pre}
    Suppose that the population source and target heads are linear as in \cref{ex:transferability_linear}, so that $g_m\circ h_m=\alpha_{\so,m}+B_{\so,m}h_m$ and $f_n\circ h_m=\alpha_{\ta,n}+\beta_{\ta,n}'h_m$. Assume that $\mathcal{G}_m$ contains every linear source head $v\mapsto\alpha+Bv$, where $\alpha\in\mathbb{R}^{T}$ and $B\in\mathbb{R}^{T\times K}$, and that every $f\in\mathcal{F}_n^{\sub}$ is a linear target head $v\mapsto\alpha+\beta'v$, where $\alpha\in\mathbb{R}$ and $\beta\in\mathbb{R}^{K}$.
    Assume also that $\mathbb{E}_{\ta}[(1, h_m(Z)')' (1, h_m(Z)')]$ is positive definite and that, for every unit vector $v\in\mathbb{R}^{K}$, there exists some $c\in\mathbb{R}^{K}$ such that $(I_K - vv')h_m + c\in\mathcal{H}_m$.
    If $g_m$ is $(0,0)$-diverse over $f_n$ with respect to $h_m$,
    then there exists a vector $b\in\mathbb{R}^{T}$ such that $\beta_{\ta,n}' = b' B_{\so,m}$, and there exists an affine function $\Gamma(x)=\alpha_{\ta,n}+\beta_{\ta,n}'B_{\so,m}^+ (x-\alpha_{\so,m})$ such that $f_n \circ h_m(Z) = \Gamma \circ g_m \circ h_m(Z)$, $P_{\ta}$-a.s.
\end{theorem}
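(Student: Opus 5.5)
We argue by contraposition. Suppose $\beta_{\ta,n}$ does not lie in the row span of $B_{\so,m}$, i.e., $\beta_{\ta,n}\notin\operatorname{range}(B_{\so,m}')$. Since $\operatorname{range}(B_{\so,m}')^{\perp}=\ker(B_{\so,m})$, the projection of $\beta_{\ta,n}$ onto $\ker(B_{\so,m})$ is nonzero. Normalizing it gives a unit vector $v\in\ker(B_{\so,m})$ with $v'\beta_{\ta,n}\neq0$. With this $v$, we use the richness assumption on $\mathcal{H}_m$ to pick $c\in\mathbb{R}^K$ such that the perturbed embedding $h:=(I_K-vv')h_m+c$ lies in $\mathcal{H}_m$. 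The proof then has two steps: show $h\in\mathcal{H}_{\so,m}(0)$ but $h\notin\mathcal{H}_{\ta,m}(0)$, contradicting $(0,0)$-diversity.

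\emph{Source side.} Take the linear head $g(u)=\alpha_{\so,m}-B_{\so,m}c+B_{\so,m}u$, which belongs to $\mathcal{G}_m$. Because $B_{\so,m}v=0$,
\begin{equation*}
g\circ h=\alpha_{\so,m}+B_{\so,m}(I_K-vv')h_m=\alpha_{\so,m}+B_{\so,m}h_m=g_m\circ h_m
\end{equation*}
identically. Hence the minimum in the definition of $\mathcal{H}_{\so,m}(0)$ is zero, so $h\in\mathcal{H}_{\so,m}(0)$.

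\emph{Target side.} We must show that no linear head $(\alpha,\beta)$ gives $\alpha+\beta'h=\alpha_{\ta,n}+\beta_{\ta,n}'h_m$ $P_{\ta}$-a.s. Expanding, such an equality reads
\begin{equation*}
(\alpha+\beta'c-\alpha_{\ta,n})+\bigl(\beta'(I_K-vv')-\beta_{\ta,n}'\bigr)h_m(Z)=0\quad P_{\ta}\text{-a.s.}
\end{equation*}
Positive definiteness of $\mathbb{E}_{\ta}[(1,h_m')'(1,h_m')]$ means that $1$ and the coordinates of $h_m(Z)$ are linearly independent in $L^2(P_{\ta})$. So both coefficients must vanish, giving $\beta_{\ta,n}'=\beta'(I_K-vv')$. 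Multiplying on the right by $v$ yields $\beta_{\ta,n}'v=\beta'(v-v)=0$, which contradicts the choice of $v$. Since every $f\in\mathcal{F}_n^{\sub}$ is linear, $\min_f\|f\circ h-f_n\circ h_m\|_{P_{\ta},2}>0$; the minimum exists, as assumed in the paper. Thus $h\notin\mathcal{H}_{\ta,m}(0)$, which contradicts $\mathcal{H}_{\so,m}(0)\subseteq\mathcal{H}_{\ta,m}(0)$. Therefore there exists $b$ with $\beta_{\ta,n}'=b'B_{\so,m}$.

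\emph{Representation via $\Gamma$.} Finally, define $\Gamma(x)=\alpha_{\ta,n}+\beta_{\ta,n}'B_{\so,m}^{+}(x-\alpha_{\so,m})$. Then
\begin{equation*}
\Gamma\circ g_m\circ h_m=\alpha_{\ta,n}+b'B_{\so,m}B_{\so,m}^{+}B_{\so,m}h_m=\alpha_{\ta,n}+b'B_{\so,m}h_m=f_n\circ h_m,
\end{equation*}
using the Moore--Penrose identity $B_{\so,m}B_{\so,m}^{+}B_{\so,m}=B_{\so,m}$, as in \cref{ex:transferability_linear}. The one step that needs care is the target-side non-representability argument: the positive-definiteness assumption is exactly what lets us equate coefficients, and the intercept shift $c$ must be absorbed into $\alpha$. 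The rest is routine linear algebra.
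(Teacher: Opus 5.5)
Your proof is correct and follows essentially the same route as the paper. You use the same perturbed embedding $h=(I_K-vv')h_m+c$, with $v\in\ker(B_{\so,m})$ and $v'\beta_{\ta,n}\neq0$, and the same compensating source head $g(u)=\alpha_{\so,m}-B_{\so,m}c+B_{\so,m}u$. The differences are minor. On the target side the paper proves the quantitative lower bound $\mu_{\min}(\mathbb{E}_{\ta}[(1,h_m(Z)')'(1,h_m(Z)')])(v'\beta_{\ta,n})^2>0$, which works even without attainment of the minimum, whereas you equate coefficients via linear independence and invoke the paper's assumption that the minimum is attained. You also verify the $\Gamma$ representation explicitly, which the paper's proof leaves to \cref{ex:transferability_linear}.
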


The assumption $(I_K - vv')h_m + c\in\mathcal{H}_m$ is a technical condition that allows us to construct a function $h$ that attains the same source prediction as $h_m$ but a different target prediction from $h_m$.
For some functional classes $\mathcal{H}_m$ such as the class of ReLU activated hidden layer functions, $\mathcal{H}_m$ only contains nonnegative functions, but it satisfies this assumption by choosing $c$ to be sufficiently large if $\mathcal{H}_m$ is sufficiently rich.

The following theorem provides necessary conditions for transferability in a general model with a nonlinear source task under sufficiently rich function classes $\mathcal{H}_m$ and $\mathcal{G}_{m}$ so that the nonlinear analog of the construction $h$ in the proof of \cref{thm:nec_transfer_linear_pre} is possible.
Note that we do not put any restriction on the function class $\mathcal{F}_n^{\sub}$.

\begin{theorem}[Necessary Condition for Transferability in General Models]
	\label{thm:nec_transfer}
	Assume that
	$\mathcal{F}_n$ and $\mathcal{G}_{m,t}$ for $t=1,\ldots,T$ are classes of square-integrable functions with respect to $P_{\ta}$ and $P_{\so}$.
	We also assume that there exists $h_0\in\mathcal{H}_{\so,m}(0)$ such that $h_0 \neq h_m$.
	Then, if
	$g_m$ is $(0,0)$-diverse over $f_n$ with respect to $h_m$,
	there exists a measurable function $\Gamma_0:\mathbb{R}^{K}\to\mathbb{R}$ such that $f_n \circ h_m(Z) = \Gamma_0 \circ h_0(Z)$, $P_{\ta}$-a.s.

	In particular, suppose that
	\begin{equation*}
		\sigma(h_0(Z))
		\subseteq
		\overline{\sigma(g_m\circ h_m(Z))}^{P_{\ta}},
	\end{equation*}
	where the right-hand side is the $P_{\ta}$-completion defined in \cref{lem:doob_dynkin_completion}.
	Then there exists a measurable function $\Gamma:\mathbb{R}^{T}\to\mathbb{R}$ such that $f_n \circ h_m(Z) = \Gamma \circ g_m \circ h_m(Z)$, $P_{\ta}$-a.s.
\end{theorem}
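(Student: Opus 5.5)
The plan is to read off the first claim directly from the definition of $(0,0)$-diversity and then pass from $h_0$ to $g_m\circ h_m$ with a completion version of the Doob--Dynkin lemma.

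\emph{First claim.} Since $h_0\in\mathcal{H}_{\so,m}(0)$, $(0,0)$-diversity gives $h_0\in\mathcal{H}_{\ta,m}(0)$. By the definition of $\mathcal{H}_{\ta,m}(0)$, and because the minimum is assumed to be attained, there is some $f_0\in\mathcal{F}_n^{\sub}$ with
\begin{equation*}
\|f_0\circ h_0-f_n\circ h_m\|_{P_{\ta},2}=0 .
\end{equation*}
Hence $f_n\circ h_m(Z)=f_0\circ h_0(Z)$, $P_{\ta}$-a.s. We take $\Gamma_0:=f_0$, extended by zero outside $\mathcal{V}$ if needed. It is measurable because elements of $\mathcal{F}_n$ are square-integrable, hence measurable, functions. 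The hypothesis $h_0\neq h_m$ plays no role in the argument; it only signals that the conclusion has content beyond the trivial choice $h_0=h_m$.

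\emph{Second claim.} We combine the first claim with the $\sigma$-algebra inclusion.
\begin{enumerate}
\item The variable $X:=\Gamma_0\circ h_0(Z)$ is $\sigma(h_0(Z))$-measurable.
\item By assumption, $\sigma(h_0(Z))\subseteq\overline{\sigma(g_m\circ h_m(Z))}^{P_{\ta}}$, so $X$ is measurable with respect to this completion.
\item We invoke \cref{lem:doob_dynkin_completion}, the Doob--Dynkin lemma for completed $\sigma$-algebras. It yields a Borel $\Gamma:\mathbb{R}^T\to\mathbb{R}$ with $X=\Gamma(g_m\circ h_m(Z))$, $P_{\ta}$-a.s.
\item Chaining this with the a.s.\ equality from the first claim gives $f_n\circ h_m(Z)=\Gamma\circ g_m\circ h_m(Z)$, $P_{\ta}$-a.s.
\end{enumerate}

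\emph{Where care is needed.} The delicate step is the measure-theoretic passage in step 3. Measurability with respect to a completed $\sigma$-algebra gives a factorization only almost surely, not pointwise. If I had to prove that lemma myself, I would proceed as follows:
\begin{enumerate}
\item Approximate $X$ by simple functions.
\item Replace each completed set by an element of $\sigma(g_m\circ h_m(Z))$ that differs from it by a $P_{\ta}$-null set.
\item Write each such element as $\{g_m\circ h_m(Z)\in A\}$ with $A$ Borel.
\item Take a pointwise limsup of the resulting Borel functions of $g_m\circ h_m(Z)$. This limit is Borel and agrees with $X$ outside a countable union of null sets.
\end{enumerate}
The other point to check is that all objects are defined on the common support where $P_{\ta}$ lives, so that the $P_{\ta}$-a.s.\ qualifications compose correctly.
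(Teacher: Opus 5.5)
Your proof is correct, but your route to the first claim differs from the paper's.

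\textbf{First claim.} The paper argues by contradiction through the variance decomposition in \cref{lem:transfer_var}: it shows $\inf_{f\in\mathcal{F}_n^{\sub}}\|f\circ h_0-f_n\circ h_m\|_{P_{\ta},2}^2\geq\mathbb{E}_{\ta}[\operatorname{Var}_{\ta}(f_n\circ h_m(Z)\mid h_0(Z))]$. This conditional-variance term is strictly positive unless $f_n\circ h_m(Z)$ is a.s.\ a function of $h_0(Z)$, so $h_0\notin\mathcal{H}_{\ta,m}(0)$. You instead read the conclusion directly off the standing convention that the minimum defining $\mathcal{H}_{\ta,m}(0)$ is attained.

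Your version is shorter and yields slightly more. The factorizing map $\Gamma_0$ can be taken to be an element of $\mathcal{F}_n^{\sub}$ itself. You also use the square-integrability hypothesis only to get measurability of $f_0$.

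The paper's version needs only that the infimum is zero, so it survives if the attainment convention is dropped. In that case $\Gamma_0$ is the conditional-mean function $v\mapsto\mathbb{E}_{\ta}[f_n\circ h_m(Z)\mid h_0(Z)=v]$, and this is where square-integrability genuinely enters. The paper's version also isolates which component of the target approximation error $(0,0)$-diversity forces to vanish, which ties into the remark following \cref{lem:transfer_var}.

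\textbf{Second claim.} The two arguments are equivalent. The paper applies \cref{lem:doob_dynkin_completion} to the vector $V=h_0(Z)$ to get $h_0(Z)=\Gamma_1(g_m\circ h_m(Z))$ a.s., and then sets $\Gamma=\Gamma_0\circ\Gamma_1$. You apply the lemma to the scalar $\Gamma_0(h_0(Z))$, which is completion-measurable because $\sigma(\Gamma_0(h_0(Z)))\subseteq\sigma(h_0(Z))$. Your sketch of the completed Doob--Dynkin lemma matches the paper's citation-based proof: first replace the variable a.s.\ by a $\sigma(U)$-measurable version, then apply the ordinary Doob--Dynkin factorization.
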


The existence of a distinct $h_0$ in \cref{thm:nec_transfer} means that the population source model does not uniquely identify the embedding. Such source nonidentification commonly arises in deep neural networks, where different hidden representations can be paired with compensating source heads. As $(0,0)$-diversity requires $\mathcal{H}_{\so,m}(0)\subseteq\mathcal{H}_{\ta,m}(0)$, the existence of $\Gamma_0$ is needed to adapt the nonidentified embedding function.
The sigma-field inclusion condition is an additional information condition for
the target distribution. For the same $h_0$, it states that, up to $P_{\ta}$-null sets, the population source output $g_m\circ h_m$ contains all the information on $h_0(Z)$. Sufficiently rich classes $\mathcal{G}_m$ and $\mathcal{H}_m$ are needed for the construction of a source-equivalent $h_0$ satisfying this condition.
Under this additional condition, the existence of $\Gamma$ is necessary for $(0,0)$-diversity.

Under the head-class and regularity conditions in \cref{thm:nec_transfer_linear_pre,thm:nec_transfer}, the necessary conditions show that the corresponding sufficient condition is also necessary in these settings.
These theorems suggest that the sufficient condition in \cref{thm:suf_transfer} with $\varrho_n=0$, namely $f_n \circ h_m(Z) = \Gamma \circ g_m \circ h_m(Z)$, $P_{\ta}$-a.s., is also necessary for the transferability condition in these cases.
However, the next example shows that the transferability condition may not hold even if both $f_n$ and $g_m$ are linear but the pre-training model class $\mathcal{G}_m$ is nonlinear.
\begin{example}[Counterexample for Nonlinear Pre-training Model]
    \label{ex:counter}
    Suppose that $h_m(Z)\in\mathbb{R}_+$ and both tasks use the mean squared loss.
	Suppose that $\mathcal{F}_n^{\sub}$ is a set of linear models as in \cref{ex:transferability_linear}, but $\mathcal{G}_{m}$ includes any shallow neural networks with one hidden layer with ReLU activation of width 2.
	Suppose that $Z\sim\textrm{Unif}\ [-1,1]$, $h_m(Z)=|Z|$, $f_n(\cdot) = g_m(\cdot) = (\cdot)$ are identity maps and $T=1$.
	If we choose $h(Z)=Z+1$, then we have
    $$\min_{g\in\mathcal{G}_{m}}\left\|g \circ h(Z) - g_m \circ h_m(Z)\right\|_{P_{\so},2}^2=0,$$
	since $g\in\mathcal{G}_{m}$ can approximate the function $g_m \circ h_m(Z)=|Z|$, for example, by $g(x)=\operatorname{ReLU}(x-1)+\operatorname{ReLU}(-(x-1))$.
    On the other hand,
    $$\min_{f\in\mathcal{F}_n^{\sub}}\left\|f\circ h-f_n \circ h_m\right\|_{P,2}^2=\mathbb{E}[(1/2 - |Z|)^2]=1/12>0,$$
    since $f\in\mathcal{F}_n^{\sub}$ is linear and cannot approximate $f_n \circ h_m(Z)=|Z|$ perfectly.
    Thus, $g_m$ is not $(0,0)$-diverse over $f_n$ with respect to $h_m$.
\end{example}

From this section, we have seen that the transferability condition is substantially stronger for nonlinear source heads $\mathcal{G}_m$ than for linear source heads $\mathcal{G}_m$.
This is because the nonlinear source head $\mathcal{G}_m$ can approximate a much richer class of functions than the linear source head $\mathcal{G}_m$, which makes it more difficult to satisfy the diversity condition.
Therefore, if the economist is indifferent between linear and nonlinear source heads, the linear source head is preferable for guaranteeing the transferability condition.

\section{Convergence Rate of Transfer Learning under Lower-Level Conditions}
\label{sec:transfer_convergence}
\subsection{Low-dimensional Case}
\label{sec:low_dim}
In this section, we provide lower-level primitive conditions for the convergence rates for the transfer-learning problem without the high-level conditions of \cref{asm:convergence}.
We first introduce some notation.
For a function class $\mathcal{F}$, let $\mathsf{V}(\mathcal{F})$ be the VC dimension of the set of subgraphs of functions in $\mathcal{F}$.\footnote{The {\it subgraph} of a function $f: \mathcal{X} \rightarrow \mathbb{R}$ is the subset of $\mathcal{X} \times \mathbb{R}$ given by $\{(x,y): y<f(x)\}$. The VC dimension of a set of subgraphs is defined as the largest integer such that there exists a set of points with cardinality equal to the integer that can be shattered by the subgraphs. See \cite{van2023weak}, Section 2.6 for more details.}
To avoid technical complications concerning the measurability of the supremum, we assume that the functional classes $\mathcal{F}_n$, $\mathcal{F}_n^{\sub}$, and $\mathcal{G}_m\circ\mathcal{H}_m$ are pointwise measurable.\footnote{A measurable functional class $\mathcal{F}$ is {\it pointwise measurable} if it contains a countable subset $\mathcal{F}^0$ such that for every $f \in \mathcal{F}$ there exists a sequence $f_j\in\mathcal{F}^0$ with $f_j(v) \rightarrow f(v)$ for every $v$, where the convergence is with respect to the Euclidean norm. See \cite{van2023weak}, Section 2.3 for more details.}

We start with the convergence rate of the target model given the estimated embedding function $\check{h}$ from the source task.
Before we state the theorem on the convergence rate of the target model $\hat{f}\circ \check{h}$, we assume one more condition on the target loss function.
\begin{assumption}[Conditions on Target Task]
    \label{asm:loss_ta}\ 
    \begin{enumerate}
        \item For every sequence $\{\rho_{\so,m}\}_{m=1}^{\infty}$ such that $\rho_{\so,m}=O(\delta_{\so,m})$, every $f\in\mathcal{F}_n$, and every $h\in\mathcal{H}_{\so,m}(\rho_{\so,m})$, we have
            \begin{equation*}
                \left\|f\circ h - a_{\ta}^*\right\|_{P_{\ta},2}^2 \lesssim \mathbb{E}_{\ta}[\ell_{\ta}(f\circ h(Z),Y)] - \mathbb{E}_{\ta}[\ell_{\ta}(a_{\ta}^*(Z),Y)] \lesssim \left\|f\circ h - a_{\ta}^*\right\|_{P_{\ta},2}^2
            \end{equation*}
            for large enough $m$ and $n$.
        \item The loss function $\ell_{\ta}(\cdot,\cdot)$ is Lipschitz continuous in the first argument with some constant $C_{\ell,\ta}>0$.
    \end{enumerate}
\end{assumption}
This assumption requires that the excess risk of the target loss function is bounded above and below by the squared $L^2$-norm.
This assumption is standard in statistical learning theory and is imposed in many existing works (e.g., \citealp{farrell2021deep}).
\cref{lem:loss_property} verifies the excess-risk comparison and the required Lipschitz comparison over uniformly bounded candidate-score ranges for least squares and binary logistic loss.

\begin{theorem}[Convergence Rate of Target Model]
    \label{thm:target_convergence}
    Let
    \begin{equation}
        r_{\ta,n}=\sqrt{\frac{\mathsf{V}(\mathcal{F}_n)\log(n)}{n}}.
        \label{target_convergence_r}
    \end{equation}
	For every estimated embedding function $\check{h}$ satisfying \cref{asm:convergence} (i), let $\hat{f}\in\mathcal{F}_n$ be an estimated target model satisfying
    \begin{align}
        \frac{1}{n}\sum_{i=1}^{n}\ell_{\ta}(\hat{f}\circ \check{h}(Z_i),Y_i) &\leq \min_{f\in\mathcal{F}_n}\frac{1}{n}\sum_{i=1}^{n}\ell_{\ta}(f\circ \check{h}(Z_i),Y_i)\nonumber\\
        & \quad + O_P(r_{\ta,n}^2 + (\delta_{\so,m}/\nu_n)^2 +\epsilon_{\ta,n}^2).
        \label{eq:target_app_minimizer}
    \end{align}
    Suppose that $\|f\|_{\infty}\leq M_F<\infty$ for every $f\in\mathcal{F}_n$.
    Then, under \cref{asm:convergence} (i), \cref{asm:transferability,asm:support,asm:loss_ta}, we have
    \begin{equation}
        \left\|\hat{f}\circ \check{h}-a_{\ta}^*\right\|_{P_{\ta},2}=O_{P}\left(r_{\ta,n} + \delta_{\so,m}/\nu_n +\epsilon_{\ta,n}\right).
        \label{eq:target_convergence_rate}
    \end{equation}
\end{theorem}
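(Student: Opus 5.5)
The proof is a standard empirical-risk-minimization argument conditional on the source sample. Independence of the two samples lets us treat $\check{h}$ as fixed when analysing the target sample. The step specific to this paper is controlling the best-in-class approximation error through transferability.

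\textbf{Step 1 (approximation error given $\check{h}$).} By \cref{asm:convergence} (i), for any $\varepsilon>0$ there is $C$ with $P_{\so}$-probability at least $1-\varepsilon$ that
\begin{equation*}
\|\check{g}\circ\check{h}-g_m\circ h_m\|_{P_{\so},2}\leq C\delta_{\so,m}.
\end{equation*}
Since $\check g\in\mathcal{G}_m$, this event gives $\check{h}\in\mathcal{H}_{\so,m}(C\delta_{\so,m})$. By \cref{asm:transferability}, applied with $\rho_{\so,m}=C\delta_{\so,m}$, we then get $\check{h}\in\mathcal{H}_{\ta,m}(C\delta_{\so,m}/\nu_n)$ for large $n$. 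Hence
\begin{equation*}
\|f_n^{\bullet}\circ\check{h}-f_n\circ h_m\|_{P_{\ta},2}\leq C\delta_{\so,m}/\nu_n .
\end{equation*}
The triangle inequality with \cref{def:approx_embedding} then yields
\begin{equation*}
\|f_n^{\bullet}\circ\check{h}-a_{\ta}^*\|_{P_{\ta},2}\leq C\delta_{\so,m}/\nu_n+\epsilon_{\ta,n}=:\bar\epsilon .
\end{equation*}
\cref{asm:support} is not needed for this step. It enters only if we instead verify transferability through \cref{thm:suf_transfer}.

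\textbf{Step 2 (basic inequality).} Work conditionally on the source sample and on the event of Step 1, so $\check h\in\mathcal{H}_{\so,m}(C\delta_{\so,m})$ and \cref{asm:loss_ta}.1 applies to every $f\circ\check h$. Write
\begin{equation*}
\mathcal{E}(f)=\mathbb{E}_{\ta}[\ell_{\ta}(f\circ\check h(Z),Y)]-\mathbb{E}_{\ta}[\ell_{\ta}(a_{\ta}^*(Z),Y)]
\end{equation*}
for the excess risk, and $\mathcal{E}_n$ for its sample analogue. The near-minimizer property \cref{eq:target_app_minimizer}, compared against $f_n^{\bullet}\in\mathcal{F}_n^{\sub}\subseteq\mathcal{F}_n$, gives
\begin{equation*}
\mathcal{E}(\hat f)\leq \mathcal{E}(f_n^{\bullet})+(\mathcal{E}-\mathcal{E}_n)(\hat f)-(\mathcal{E}-\mathcal{E}_n)(f_n^{\bullet})+O_P(r_{\ta,n}^2+\bar\epsilon^2).
\end{equation*}
By \cref{asm:loss_ta}.1 the left side is at least $c\|\hat f\circ\check h-a_{\ta}^*\|^2$, and $\mathcal{E}(f_n^{\bullet})\lesssim\bar\epsilon^2$.

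\textbf{Step 3 (localized empirical process; the main obstacle).} We must bound the centered process
\begin{equation*}
(\mathbb{P}_n-P_{\ta})\bigl[\ell_{\ta}(f\circ\check h,Y)-\ell_{\ta}(f_n^{\bullet}\circ\check h,Y)\bigr]
\end{equation*}
uniformly over $f\in\mathcal{F}_n$ with $\|f\circ\check h-f_n^{\bullet}\circ\check h\|_{P_{\ta},2}\leq \delta$.

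\emph{Envelope and $L^2$ control.} By Lipschitzness of the loss (\cref{asm:loss_ta}.2) and boundedness $M_F$, the loss-difference class has envelope $2C_{\ell,\ta}M_F$ and $L^2$ radius at most $C_{\ell,\ta}\delta$.

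\emph{Entropy control.} Since $\check h$ is fixed conditionally, the class $\{f\circ\check h\}$ has subgraph VC dimension at most $\mathsf{V}(\mathcal{F}_n)$. Composing with a Lipschitz loss preserves a uniform entropy bound of order $\mathsf{V}(\mathcal{F}_n)\log(1/\epsilon)$. This contraction step needs care: I would use the covering-number version rather than VC dimension directly.

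\emph{Modulus.} A maximal inequality (e.g. van der Vaart and Wellner, Thm 2.14.1 or its local Bernstein form) gives modulus
\begin{equation*}
\phi_n(\delta)\lesssim\delta\sqrt{\mathsf V\log n/n}+\mathsf V\log n/n .
\end{equation*}
A peeling/fixed-point argument then gives $\|\hat f\circ\check h-f_n^{\bullet}\circ\check h\|=O_P(r_{\ta,n}+\bar\epsilon)$.

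The $\log n$ (rather than $\log(1/\delta)$) appears because we localize down to scale $r_{\ta,n}$. The $\bar\epsilon$ cross term is handled by absorbing $\delta\bar\epsilon\le\delta^2/4+\bar\epsilon^2$. The delicate points are that the localization is around $f_n^{\bullet}$ rather than $a_{\ta}^*$, and that all constants must be uniform in the realized $\check h$. Uniformity holds because the bounds depend on $\check h$ only through $\mathsf V(\mathcal{F}_n)$, $M_F$ and $C_{\ell,\ta}$.

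\textbf{Step 4.} Combine the bound on $\|\hat f\circ\check h-f_n^{\bullet}\circ\check h\|$ from Step 3 with Step 1 via the triangle inequality. The conditional $O_{P_{\ta}}$ bound holds uniformly over the high-probability source event. Integrating over the source sample, with the complement of that event having probability at most $\varepsilon$, converts it into the joint $O_P$ statement \cref{eq:target_convergence_rate}.
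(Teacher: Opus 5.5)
Your proposal is correct and follows essentially the paper's route. Transferability puts $\check h$ in a target near-identified set with high source probability. Curvature is moved from $a_{\ta}^*$ to $f_n^{\bullet}$ once $\delta\gtrsim\delta_{\so,m}/\nu_n+\epsilon_{\ta,n}$, which is the paper's \cref{lem:curvature_approx}. A Lipschitz-loss localized maximal inequality with VC uniform entropy, plus peeling, is then made uniform over $h\in\mathcal{H}_{\so,m}(M_\varepsilon\delta_{\so,m})$ by independence of the samples; these are the paper's \cref{lem:maximal_lipschitz,lem:rate_of_conv}.

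The only bookkeeping point is that the optimization error in \cref{eq:target_app_minimizer} is $O_P$ under the joint law. It should therefore be split off as a separate small-probability event, as the paper does, rather than handled conditionally on the source sample.
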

The theorem shows that, under standard conditions on the function class $\mathcal{F}_n$ and the target loss function, the convergence rate of the target model $\hat{f}\circ \check{h}$ to the true model $a_{\ta}^*$ is governed by the VC dimension of $\mathcal{F}_n$ and the target sample size $n$.

\begin{remark}[Fine-tuning]
    Our theory accommodates fine-tuning by allowing $\check{h}$ to differ from the source task estimator. This remains valid provided $\check{h}$ is estimated on a sample independent of the final parameter estimation sample and $\hat{f}\circ \check{h}$ converges to $f_n\circ h_m$ faster than $n^{-1/4}$.
	Thus, the economist can fine-tune the embedding function using the same sample to train $\hat{f}$, provided fine-tuning improves the target model convergence rate relative to the unfine-tuned embedding function, whose convergence rate is given by \cref{thm:target_convergence}.
	Since fine-tuning does not necessarily improve the accuracy of the target model (\citealp{kumar2022fine}, Table 1), it is important to check the convergence rate of the fine-tuned embedding function to ensure that the fine-tuning is not harmful to the convergence rate of the target model.
    For example, the economist can check by cross-validation whether the loss of $\hat{f}\circ \check{h}$ with fine-tuning is smaller than the loss without fine-tuning. The formal hypothesis testing procedure for this comparison is proposed by \cite{fava2025training}.
\end{remark}

Next, we provide a theorem on the convergence rate of the source model $\check{g}\circ \check{h}$ to the population source model $g_m\circ h_m$.
The rate of convergence for deep learning models is an active area of research and can be improved in future work. Importantly, our transfer learning rate of the target task uses the convergence rate of the source model as an input; thus, any improvement in the convergence rate of the source model can be directly translated into an improved convergence rate of the target model by our theory.

\begin{assumption}[Conditions on Source Task]
    \label{asm:loss_so}\ 
    \begin{enumerate}
        \item For every $g\in\mathcal{G}_m$ and $h\in\mathcal{H}_m$,
            \begin{equation}
                \tau_m^2\left\|g\circ h - a_{\so}^*\right\|_{P_{\so},2}^2 \lesssim \mathbb{E}_{\so}[\ell_{\so}(g\circ h(Z),S)] - \mathbb{E}_{\so}[\ell_{\so}(a_{\so}^*(Z),S)] \lesssim \left\|g\circ h - a_{\so}^*\right\|_{P_{\so},2}^2
            \end{equation}
            with some sequence $\tau_m\in(0,1]$ for large enough $m$ and $n$.
        \item The loss function $\ell_{\so}(\cdot,\cdot)$ is Lipschitz continuous in the first argument with some constant $C_{\ell,\so}>0$.
        \item There exists a sequence $L_{\mathcal{G},m}\geq1$ such that every $g\in\mathcal{G}_m$ is $L_{\mathcal{G},m}$-Lipschitz with respect to the Euclidean norm, uniformly over $\mathcal{G}_m$. That is, for every $v,\widetilde{v}\in\mathcal{V}$,
            \begin{equation*}
                \|g(v)-g(\widetilde{v})\|_2
                \leq L_{\mathcal{G},m}\|v-\widetilde{v}\|_2.
            \end{equation*}
    \end{enumerate}
\end{assumption}
The loss conditions in parts (i)-(ii) are satisfied with $\tau_m = T^{-1}$ if the loss function is the multinomial logistic loss and the Euclidean-norm bound $M$ in \cref{lem:loss_property_multilogit} holds uniformly in $m$ and $T$ over all $g\in\mathcal{G}_m$ and $h\in\mathcal{H}_m$ and for $a_{\so}^*$.
Part (iii) is a regularity condition on the source-head class rather than the loss function. It rules out discontinuous heads and allows the entropy of the composite class $\mathcal{G}_m\circ\mathcal{H}_m$ to be controlled by the coordinatewise entropies of $\mathcal{G}_m$ and $\mathcal{H}_m$.

The next theorem relates the convergence rate of the source model in \cref{asm:convergence} (i) to the VC dimensions of the function classes $\mathcal{H}_m$ and $\mathcal{G}_{m}$.
\begin{theorem}[Convergence Rate of Source Model]
    \label{thm:source_convergence}
    Let
    \begin{equation}
            r_{\so,m}=\sqrt{\frac{\sum_{k=1}^K \mathsf{V}(\mathcal{H}_{m,k}) \log\left(m L_{\mathcal{G},m}K\right)
            +\sum_{t=1}^T\mathsf{V}(\mathcal{G}_{m,t}) \log\left(m T\right)}{\tau_m^2 m}}.
        \label{eq:source_convergence_r}
    \end{equation}
	Let $\check{h}\in\mathcal{H}_m$ and $\check{g}\in\mathcal{G}_m$ be any estimated embedding function and source model estimator satisfying
    \begin{align}
        \frac{1}{m}\sum_{i=1}^m \ell_{\so}(\check{g}\circ \check{h}(Z_i),S_i) 
        & \leq \min_{g\in\mathcal{G}_m, h\in\mathcal{H}_m} \frac{1}{m}\sum_{i=1}^m \ell_{\so}(g\circ h(Z_i),S_i)\nonumber\\
        &\quad + O_{P_{\so}}(r_{\so,m}^2 + \epsilon_{\so,m}^2),
        \label{eq:source_app_minimizer}
    \end{align}
    Suppose that $\sup_{z\in\mathcal{Z}}\|h(z)\|_2\leq M_{H}<\infty$ for every $h\in\mathcal{H}_m$, and $\sup_{v\in\mathcal{V}}\|g(v)\|_2\leq M_{G}<\infty$ for every $g\in\mathcal{G}_m$.
    Then, under \cref{asm:loss_so}, we have
    \begin{align}
        \left\|\check{g}\circ \check{h}-g_m\circ h_m\right\|_{P_{\so},2}
        &=O_{P_{\so}}\left(r_{\so,m}/\tau_m + \epsilon_{\so,m}/\tau_m\right),\\
        \left\|\check{g}\circ \check{h}-a_{\so}^*\right\|_{P_{\so},2}
        &=O_{P_{\so}}\left(r_{\so,m}/\tau_m + \epsilon_{\so,m}/\tau_m\right).\label{eq:source_convergence_rate}
    \end{align}
\end{theorem}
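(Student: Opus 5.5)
This follows the standard empirical-risk-minimization argument (as in \cite{farrell2021deep}), applied to the composite class $\mathcal{G}_m\circ\mathcal{H}_m$. The plan has four steps: turn excess risk into an $L^2$ bound via \cref{asm:loss_so}, control the empirical process by a localized maximal inequality, bound the entropy of the composite class by the coordinatewise VC dimensions, and finish with a fixed-point argument.

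\textbf{Step 1: basic inequality.} Write $\mathcal{E}(g,h)=\mathbb{E}_{\so}[\ell_{\so}(g\circ h(Z),S)]-\mathbb{E}_{\so}[\ell_{\so}(a_{\so}^*(Z),S)]$. Let $\mathbb{P}_m$ be the empirical measure and $\mathbb{G}_m=\sqrt m(\mathbb{P}_m-P_{\so})$.

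Since $(g_m,h_m)\in\mathcal{G}_m\times\mathcal{H}_m$, the near-minimizer property \cref{eq:source_app_minimizer} gives $\mathbb{P}_m\ell_{\so}(\check g\circ\check h)\le \mathbb{P}_m\ell_{\so}(g_m\circ h_m)+O_{P_{\so}}(r_{\so,m}^2+\epsilon_{\so,m}^2)$. Rearranging,
\begin{equation*}
\mathcal{E}(\check g,\check h)\le \mathcal{E}(g_m,h_m)+m^{-1/2}\bigl|\mathbb{G}_m[\ell_{\so}(\check g\circ\check h)-\ell_{\so}(g_m\circ h_m)]\bigr|+O_{P_{\so}}(r_{\so,m}^2+\epsilon_{\so,m}^2).
\end{equation*}
By the upper bound in \cref{asm:loss_so}(i), $\mathcal{E}(g_m,h_m)\lesssim\epsilon_{\so,m}^2$. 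By the lower bound, $\mathcal{E}(\check g,\check h)\gtrsim\tau_m^2\|\check g\circ\check h-a_{\so}^*\|^2_{P_{\so},2}$.

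\textbf{Step 2: localized empirical process.} Consider the class of loss differences $\ell_{\so}(g\circ h)-\ell_{\so}(g_m\circ h_m)$ over the set $\|g\circ h-g_m\circ h_m\|_{P_{\so},2}\le\delta$. By \cref{asm:loss_so}(ii) these are $C_{\ell,\so}$-Lipschitz in the score. Their $L^2$ norm is therefore at most $C_{\ell,\so}\delta$, and their envelope is bounded by $2C_{\ell,\so}M_G$. A local maximal inequality for uniformly bounded VC-type classes (e.g.\ Theorem 2.14.1 of \cite{van2023weak}, or the Chernozhukov--Chetverikov--Kato bound) then gives
\begin{equation*}
\mathbb{E}\sup|\mathbb{G}_m(\cdot)|\lesssim \delta\sqrt{\mathcal{V}_m}+\mathcal{V}_m/\sqrt m,
\end{equation*}
where $\mathcal{V}_m\log(\cdot)$ stands for the log-covering-number exponent appearing in the numerator of $r_{\so,m}$.

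\textbf{Step 3: entropy of the composite class (main obstacle).} Here the coordinatewise VC dimensions must be combined into one entropy bound. Lipschitzness of $g$ (\cref{asm:loss_so}(iii)) gives
\begin{equation*}
\|g\circ h-\tilde g\circ\tilde h\|_2\le\|g\circ h-\tilde g\circ h\|_2+L_{\mathcal{G},m}\|h-\tilde h\|_2 .
\end{equation*}
So a uniform (sup-norm over the sample points) $\varepsilon/(2L_{\mathcal{G},m}\sqrt K)$-cover of each coordinate class $\mathcal{H}_{m,k}$, together with an $\varepsilon/(2\sqrt T)$-cover of each $\mathcal{G}_{m,t}$, yields an $\varepsilon$-cover of the composite class. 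One subtlety is that the cover of $\mathcal{G}_m$ must be taken with respect to the pushforward empirical measure of $h(Z_j)$. Because VC covering bounds are uniform over measures, this is harmless. Multiplying the VC covering numbers $(C/\varepsilon)^{c\mathsf{V}}$ gives
\begin{equation*}
\log N\lesssim\sum_k\mathsf{V}(\mathcal{H}_{m,k})\log(L_{\mathcal{G},m}K/\varepsilon)+\sum_t\mathsf{V}(\mathcal{G}_{m,t})\log(T/\varepsilon).
\end{equation*}
Taking $\varepsilon\asymp m^{-1}$ produces exactly the logarithmic factors in $r_{\so,m}$. The hard part is doing this cleanly: passing from the empirical $L^2$ cover to the uniform bound needed for the maximal inequality, and handling the vector-valued coordinates.

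\textbf{Step 4: peeling and conclusion.} Apply a peeling/fixed-point argument over shells $\delta\in(2^{j-1}\delta_0,2^j\delta_0]$. Combining the lower bound $\tau_m^2\delta^2$ with the stochastic term $m^{-1/2}\delta\sqrt{\mathcal{V}_m}$ makes the critical radius satisfy
\begin{equation*}
\tau_m^2\delta^2\asymp\delta\sqrt{\mathcal{V}_m/m}+\epsilon_{\so,m}^2+r_{\so,m}^2\tau_m^2 .
\end{equation*}
To localize around $g_m\circ h_m$ rather than $a_{\so}^*$, use the triangle inequality $\|g\circ h-a_{\so}^*\|\ge\|g\circ h-g_m\circ h_m\|-\epsilon_{\so,m}$. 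Solving the fixed-point relation gives $\|\check g\circ\check h-g_m\circ h_m\|_{P_{\so},2}=O_{P_{\so}}(r_{\so,m}/\tau_m+\epsilon_{\so,m}/\tau_m)$, since $\sqrt{\mathcal{V}_m/m}/\tau_m^2=r_{\so,m}/\tau_m$ and $\tau_m\le1$. A final triangle inequality, adding $\epsilon_{\so,m}\le\epsilon_{\so,m}/\tau_m$, yields the bound relative to $a_{\so}^*$.
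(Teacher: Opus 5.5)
Your proposal is correct and takes essentially the same route as the paper. The paper applies its peeling lemma (\cref{lem:rate_of_conv}) with curvature re-centred at $g_m\circ h_m$ via \cref{lem:curvature_approx}, bounds the modulus with the Lipschitz-loss maximal inequality (\cref{lem:maximal_lipschitz}) and the product/composition covering bounds (\cref{lem:entropy_property,lem:entropy_property_composite}) plus VC covering numbers, and finishes with a triangle inequality. One small fix is needed in Step 3: split $\|g\circ h-\tilde g\circ\tilde h\|_{Q,2}\le L_{\mathcal{G},m}\|h-\tilde h\|_{Q,2}+\|g-\tilde g\|_{Q_{\tilde h},2}$ so that each $\mathcal{G}_m$-cover is taken under $Q_{\tilde h}$ for one of the finitely many centers $\tilde h$; covering under $Q_h$ for an arbitrary $h$, as you wrote, does not give a finite cover, and uniformity of the VC bound over measures does not fix that on its own.
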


This theorem states that under some standard conditions on the function classes $\mathcal{H}_m$ and $\mathcal{G}_{m}$ and the source loss function, the convergence rate of the source model $\check{g}\circ \check{h}$ to the population model $g_m\circ h_m$ (or the true model $a_{\so}^*$) is determined by the coordinatewise VC dimensions of $\mathcal{H}_m$ and $\mathcal{G}_{m}$, the logarithmic factors induced by the embedding dimension $K$, the output dimension $T$, and the head Lipschitz constant $L_{\mathcal{G},m}$, the source sample size $m$, and the curvature term $\tau_m$ of the source loss function. 

\cref{thm:target_convergence,thm:source_convergence} together imply that, under the transferability assumption, the convergence rate of the target model $\hat{f}\circ \check{h}$ to the true model $a_{\ta}^*$ is governed mainly by five components: the ratio of the VC dimension of the target model class $\mathcal F_n$ to the target sample size $n$; the ratio of the source complexities for $\mathcal H_m$ and $\mathcal G_m$ to the source sample size $m$; the transferability rate $\nu_n$; the approximation error terms $\epsilon_{\ta,n}$ and $\epsilon_{\so,m}$; and the curvature term $\tau_m$ of the source loss function.
For example, $\mathsf{V}(\mathcal{F}_n)=O(K)$ if $\mathcal{F}_n$ is a class of linear models with $K$ regressors, and $\mathsf{V}(\mathcal{F}_n)=O(W_n L_n \log(W_n))$ if $\mathcal{F}_n$ is a class of DNNs with width $W_n$ and depth $L_n$ (\citealp{bartlett2019nearly}, Theorem 7).
In particular, the rate improves when both function class complexities are small relative to sample size and the transferability and approximation error terms are negligible.
If the complexity of the function classes is smaller than the square root of the sample sizes, and the transferability and approximation error terms are sufficiently small, the target model can achieve a convergence rate faster than $n^{-1/4}$.

The derived rate is faster than the one in \cite{tripuraneni2020theory} since we effectively use the curvature condition and the boundedness of the loss function.
Also, our rate is comparable to the result in \cite{watkins2023optimistic} although their fast rate assumes $\mathbb{E}_{\so}[\ell_{\so}(a_{\so}^*(Z),S)]=0$, which is a strong and unrealistic assumption in practice.
Without this assumption, their rate becomes slower than ours and comparable to the one in \cite{tripuraneni2020theory}.
Importantly, the slow rate in \cite{tripuraneni2020theory} is not enough to apply the DML framework since it is always slower than $n^{-1/4}$.

We can directly apply the above theorems to derive the convergence rates of transfer learning when the target sample size $n$ is large enough compared to its input dimension $K$.
This strategy is taken in, for example, \cite{bajari2025hedonic} and \cite{bach2024adventures}.
Since the VC dimension is well studied for many specific models such as linear models and DNNs (\citealp{bartlett2019nearly}), the above theorems can be applied to derive the convergence rates of transfer learning for these specific models under some conditions on the transferability and approximation errors.
For DNNs, the approximation error can be small if the true model is sufficiently smooth and the DNN has enough width and depth (\citealp{yarotsky2017error}).

\subsection{High-dimensional Case}
\label{sec:high_dim}
\subsubsection{Convergence of $h$ for linear source heads $g$}
\label{sec:convergence_h}
For the low-dimensional case, the VC dimension of the function class $\mathcal{F}_n$ is small, and \cref{thm:target_convergence} is directly applicable to derive the convergence rate of the target model $\hat{f}\circ \check{h}$ to the true model $a_{\ta}^*$.
An important feature of the VC dimension is that it is a pure functional complexity measure and does not depend on the distribution of the input data.
On the other hand, \cref{thm:target_convergence} is not directly applicable to the high-dimensional case since the VC dimension of $\mathcal{F}_n$ is large, and we need to assume some additional structure on the input data, i.e., embeddings, to derive the convergence rate of the target model.

In this section, we focus on the case in which the source model has a linear head as in \cref{ex:transferability_linear} and $h_m(Z)\in\mathbb{R}^K$ is the last hidden layer of the source model.
In this case, we have a convergence rate of the estimated embedding function $\check{h}$ to an approximately true embedding function $h_m$ up to a linear transformation.
Recall that the population source model is $g_m \circ h_m(Z) = \alpha_{\so,m} + B_{\so,m} h_m(Z)$ for some vector $\alpha_{\so,m}\in\mathbb{R}^T$, some matrix $B_{\so,m}\in\mathbb{R}^{T\times K}$, and an embedding function $h_m(Z)\in\mathbb{R}^K$.
The pre-trained source model is $\check{g}\circ \check{h}(Z) = \check{\alpha}_{\so} + \check{B}_{\so} \check{h}(Z)$ for some vector $\check{\alpha}_{\so}\in\mathbb{R}^T$, some matrix $\check{B}_{\so}\in\mathbb{R}^{T\times K}$, and an embedding function $\check{h}(Z)\in\mathbb{R}^{K}$ estimated from the source task.
Define a transformed embedding function $\check{h}^{\mathrm{proj}}$ as the projection of $\check{h}$ onto the row space of $\check{B}_{\so}$ as follows:
\begin{equation}
    \label{eq:h_id}
    \check{h}^{\mathrm{proj}} := \check{B}_{\so}^{+} \check{B}_{\so} \check{h}
\end{equation}
The following lemma gives an $L_2$ convergence rate for the projected embedding $\check{h}^{\mathrm{proj}}$ toward an affine transformation $\check{q}+\check{Q}h_m$ of $h_m$, where $\check{q} := \check{B}_{\so}^{+} (\alpha_{\so,m} - \check{\alpha}_{\so})$ is the shift term and $\check{Q} := \check{B}_{\so}^{+} B_{\so,m}$ is the linear transformation of $h_m$.

\begin{lemma}[Convergence of the projected embedding]
	\label{lem:convergence_h}
	Suppose that the population and pre-trained source heads are linear, as defined in \cref{ex:transferability_linear}, and that \cref{asm:convergence} (i) holds. If, for some deterministic sequence $\underline{\sigma}_m>0$,
	\begin{equation}
		\|\check{B}_{\so}^{+}\|_{\mathrm{sp}}
		=
		O_{P_{\so}}(\underline{\sigma}_m^{-1}),
		\label{eq:convergence_h_condition}
	\end{equation}
	then
	\begin{equation}
		\left\|\check{h}^{\mathrm{proj}}-(\check{q}+\check{Q}h_m)\right\|_{P_{\so},2}
		=
		O_{P_{\so}}(\delta_{\so,m}/\underline{\sigma}_m).
	\end{equation}
	In particular, the condition \cref{eq:convergence_h_condition} holds if \footnote{The event $\{\operatorname{rank}(\check{B}_{\so})\geq1\}$ is only used to ensure that the smallest singular value $\sigma_{\min}(\check{B}_{\so})$ is well defined.}
	\begin{equation*}
		P_{\so}\left(
		\operatorname{rank}(\check{B}_{\so})\geq1,
		\ \sigma_{\min}^+(\check{B}_{\so})\geq\underline{\sigma}_m
		\right)
		\rightarrow
		1.
	\end{equation*}
\end{lemma}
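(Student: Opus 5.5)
The identity we use is
\[
\check{h}^{\mathrm{proj}}-(\check{q}+\check{Q}h_m)=\check{B}_{\so}^{+}\bigl(\check{\alpha}_{\so}+\check{B}_{\so}\check{h}-\alpha_{\so,m}-B_{\so,m}h_m\bigr)=\check{B}_{\so}^{+}\bigl(\check{g}\circ\check{h}-g_m\circ h_m\bigr).
\]
It follows directly from the definitions of $\check{h}^{\mathrm{proj}}=\check{B}_{\so}^{+}\check{B}_{\so}\check{h}$, $\check{q}=\check{B}_{\so}^{+}(\alpha_{\so,m}-\check{\alpha}_{\so})$ and $\check{Q}=\check{B}_{\so}^{+}B_{\so,m}$: the two $\check{\alpha}_{\so}$ terms combine into $\check{q}$, and the $B_{\so,m}h_m$ term gives $\check{Q}h_m$. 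No pseudo-inverse identity is needed, because $\check{B}_{\so}^{+}$ is simply applied to the whole residual vector.

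Next we bound the $L^2(P_{\so})$ norm with the source sample held fixed. The matrix $\check{B}_{\so}^{+}$ depends only on the source sample and is constant in $Z$. Hence pointwise in $z$ we have $\|\check{B}_{\so}^{+}x(z)\|_2\le\|\check{B}_{\so}^{+}\|_{\mathrm{sp}}\|x(z)\|_2$. Squaring and integrating gives
\[
\|\check{h}^{\mathrm{proj}}-(\check{q}+\check{Q}h_m)\|_{P_{\so},2}\le\|\check{B}_{\so}^{+}\|_{\mathrm{sp}}\,\|\check{g}\circ\check{h}-g_m\circ h_m\|_{P_{\so},2}.
\]
This uses that the paper's $L^2$ norm of a vector-valued function is the square root of the integrated squared Euclidean norm. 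The product of an $O_{P_{\so}}(\underline{\sigma}_m^{-1})$ term and an $O_{P_{\so}}(\delta_{\so,m})$ term from \cref{asm:convergence} (i) is $O_{P_{\so}}(\delta_{\so,m}/\underline{\sigma}_m)$. To see this, for given $\varepsilon$ choose constants making each factor's exceedance probability at most $\varepsilon/2$, then apply a union bound.

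For the sufficient condition we use the standard fact that the nonzero singular values of $A^+$ are the reciprocals of the nonzero singular values of $A$. Therefore $\|\check{B}_{\so}^{+}\|_{\mathrm{sp}}=1/\sigma_{\min}^+(\check{B}_{\so})$ whenever $\operatorname{rank}(\check{B}_{\so})\ge1$; this is presumably \cref{lem:spectral_inverse}. On the event $\{\operatorname{rank}\ge1,\ \sigma_{\min}^+\ge\underline{\sigma}_m\}$, whose probability tends to one, we get $\|\check{B}_{\so}^{+}\|_{\mathrm{sp}}\le\underline{\sigma}_m^{-1}$, which gives \cref{eq:convergence_h_condition} with constant 1.

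The main obstacle is conceptual rather than technical. We must keep $\check{q}$ and $\check{Q}$ random, since they depend on $\check{B}_{\so}$, so that no convergence of $\check{B}_{\so}$ to $B_{\so,m}$ is ever required. With that choice, the argument reduces to the single operator-norm inequality above. We also need measurability of the event, which we take as given by the paper's standing conventions.
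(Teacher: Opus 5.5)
Your proof is correct and follows essentially the same route as the paper. Both write the error as $\check{B}_{\so}^{+}$ applied to the source residual $\check{\alpha}_{\so}+\check{B}_{\so}\check{h}-(\alpha_{\so,m}+B_{\so,m}h_m)$, bound it by the operator-norm inequality (\cref{lem:l2_ineq}) times the rate in \cref{asm:convergence} (i), and obtain the sufficient condition from $\|\check{B}_{\so}^{+}\|_{\mathrm{sp}}=1/\sigma_{\min}^{+}(\check{B}_{\so})$ (\cref{lem:spectral_inverse}).
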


If $\check{B}_{\so}$ has full column rank, $\check{B}_{\so}^{+} = (\check{B}_{\so}'\check{B}_{\so})^{-1}\check{B}_{\so}'$ is the left inverse of $\check{B}_{\so}$ and $\check{B}_{\so}^{+}\check{B}_{\so} = I_K$ is the identity matrix; thus, $\check{h}^{\mathrm{proj}} = \check{h}$.
On the other hand, we have $\check{B}_{\so}^{+}\check{B}_{\so} \neq I_K$ in general if $\check{B}_{\so}$ does not have full column rank.

We can interpret $\check{h}^{\mathrm{proj}}$ as the projection of $\check{h}$ onto the row space of $\check{B}_{\so}$ since $\check{B}_{\so}^+\check{B}_{\so} = P_{\check{B}_{\so}'} := \check{B}_{\so}'(\check{B}_{\so}\check{B}_{\so}')^+\check{B}_{\so}$ by \cite{harville2008matrix}, Corollary 20.3.8. 
Note that $\check{h}^{\mathrm{proj}}$ and $\check{h}$ return the same output for the source task since $\check{B}_{\so}\check{h}^{\mathrm{proj}}(z) = \check{B}_{\so}\check{h}(z)$. The projection preserves the source output while removing components of the estimated embedding in the null space of $\check{B}_{\so}$.\footnote{An indeterminacy can remain. For every constant vector $c\in\mathbb{R}^{K}$, the pairs $(\check{\alpha}_{\so},\check{h}^{\mathrm{proj}})$ and $(\check{\alpha}_{\so}+\check{B}_{\so}c,\check{h}^{\mathrm{proj}}-c)$ produce the same source output algebraically. However, this indeterminacy does not affect the convergence of the target model as long as $\|c\|_2$ is bounded and the target model is nonparametric or includes a slope coefficient.}

\cref{lem:convergence_h} implies that we can estimate $h_m$ only up to a shifted linear transformation represented by $\check{h}^{\mathrm{proj}}$.
The matrix $\check{Q}$ describes the linear map from an approximately true embedding function $h_m$ to the component of $\check{h}$ that is identified from the source task. 
In the special case in which $\check{\alpha}_{\so}=\alpha_{\so,m}$, $\check{B}_{\so} = B_{\so,m}A^{-1}$ and $\check{h} = A h_m$ for an invertible matrix $A\in\mathbb{R}^{K\times K}$, and $B_{\so,m}$ has full column rank, we have $\check{B}_{\so}\check{h} = B_{\so,m}A^{-1}A h_m = B_{\so,m}h_m$ and the exact equality $\check{h}^{\mathrm{proj}} = \check{Q} h_m$, where $\check{Q} = \check{B}_{\so}^{+} B_{\so,m} = A$. Since this holds for every invertible matrix, the embedding function $h_m$ is identified from the source task only up to an invertible linear transformation $A$ such that $A h_m\in\mathcal{H}_m$.
This special case is consistent with the discussion in \cite{schulte2025adjustment} that the embedding function is identified only up to an invertible linear transformation under the restrictive assumption that there is no estimation error for the embedding function.
More generally, the matrix $\check{Q}$ captures the identified set of the embedding function $h_m$ from the source task output. This extends the invertible reparameterization partial identification to settings in which $B_{\so,m}$ and $\check{B}_{\so}$ may be rank deficient.

\subsubsection{LASSO Target Model}
\label{sec:lasso}
We do not recommend using LASSO for the target model for the following reason.
As we have seen in \cref{sec:convergence_h}, $\mathcal{H}_{\so,m}(0)$ includes invertible linear transformations of an approximately true embedding function $h_m$.
\cref{thm:conv_nec} suggests that, to ensure $\|\hat{f}_{n,h_m^{\dagger}}\circ h_m^{\dagger} - a^*_{\ta}\|_{P_{\ta},2} = o_{P_{\ta}}(1)$ for any sequence $h_m^{\dagger}\in\mathcal{H}_{\so,m}(0)$, it is necessary that $\mathcal{H}_{\so,m}(0)\subseteq \mathcal{H}_{\ta,m}(\rho_{\ta,n})$ for some sequence $\rho_{\ta,n}=o(1)$ when the target model has small estimation and approximation errors.
That is, for every invertible matrix $Q\in\mathbb{R}^{K\times K}$ and every vector $q\in\mathbb{R}^{K}$ such that $q + Q h_m \in \mathcal{H}_m$, there must exist some $f\in\mathcal{F}_n^{\sub}$ such that $\|f_n\circ h_m - f\circ (q + Q h_m)\|_{P_{\ta},2}$ converges to zero as $n\rightarrow\infty$.
Note that linear regression, ridge regression, and DNNs satisfy this condition since they transform the input vector by a linear transformation.

Therefore, our theory recommends avoiding methods that rely on assumptions, such as sparsity, that are not preserved under the unidentified invertible linear transformations of the embedding function $h_m$ (e.g., \citealp{wainwright2019high}, Chapter 7).
Suppose the economist uses LASSO for the target model given the estimated embedding function $\check{h}$ from the source task.
To use LASSO, the existing theory typically requires an approximate-sparsity assumption.
On the other hand, since $\mathcal{H}_{\so,m}(0)$ includes invertible linear transformations of an approximately true embedding function $h_m$, Theorem 1 in \cite{kolesar2025fragility} implies that the approximate sparsity assumption is violated in general.
Thus, even if the economist can justify the approximate sparsity assumption for an embedding function $h_m$, the approximate sparsity assumption is hard to justify for other $h\in\mathcal{H}_{\so,m}(0)$.
Since we cannot identify the embedding function $h_m$ from the source task, any method relying on variable selection can be fragile.

\subsubsection{Ridge Target Model}
\label{sec:ridge_target}
\cref{sec:lasso} shows that approximate sparsity is not preserved under the linear reparameterizations left unidentified by the source task. We therefore study an $\ell_2$-penalized linear target model as a tractable alternative.

The scale of a linear source head is not identified separately from the scale of its embedding. Thus, the condition in \cref{eq:convergence_h_condition} is not invariant to observationally equivalent rescalings. Multiplying the embedding by a large constant and dividing the source-head matrix by the same constant can make its smallest positive singular value arbitrarily small.
This scale indeterminacy is not a problem for the low-dimensional case since the VC dimension is invariant to scaling. However, the scale indeterminacy is typically a problem for the high-dimensional case since we use the structure on the embeddings and the source head to derive the convergence rate of the target model.
The ridge theory below therefore imposes fixed singular value bounds on the source-head matrices used in the analysis.

Fix constants $C_{\alpha,\so}<\infty$ and $0<\underline{\sigma}\leq\bar{\sigma}<\infty$, and let $\Theta_{\so,m}$ be a deterministic, nonempty, compact subset of
\begin{equation*}
	\left\{
		(\alpha,B)\in\mathbb{R}^{T}\times\mathbb{R}^{T\times K}:
		\|\alpha\|_2\leq C_{\alpha,\so},
		\operatorname{rank}(B)\geq1,
		\|B\|_{\mathrm{sp}}\leq\bar{\sigma},
		\sigma_{\min}^+(B)\geq\underline{\sigma}
	\right\}.
\end{equation*}
We assume that the population source-head parameters $(\alpha_{\so,m},B_{\so,m})$ and the pre-trained source-head parameters $(\check{\alpha}_{\so},\check{B}_{\so})$ belong to $\Theta_{\so,m}$ with probability approaching one.
The economist can always obtain them through a transformation in \cref{sec:source_head_transformation} if they want to ensure this condition explicitly.

For each $(\alpha,B)\in\Theta_{\so,m}$, define
\begin{equation}
	\label{eq:normalized_linear_source_model}
	g_{\alpha,B}:v\mapsto\alpha+Bv.
\end{equation}
For this subsection, let $\mathcal{G}_m := \{g_{\alpha,B}:(\alpha,B)\in\mathbb{R}^{T}\times\mathbb{R}^{T\times K}\}$ be the class of linear source heads, so that $\{g_{\alpha,B}:(\alpha,B)\in\Theta_{\so,m}\}\subseteq\mathcal{G}_m$.
Thus, $\mathcal{G}_m$ may also contain linear heads with parameters outside $\Theta_{\so,m}$. For every $(\alpha,B)\in\Theta_{\so,m}$, \cref{lem:spectral_inverse} gives $\|B^+\|_{\mathrm{sp}}\leq\underline{\sigma}^{-1}$. 
Moreover, on the event that the chosen population and pre-trained source-head parameters belong to $\Theta_{\so,m}$, we have $\|\check{Q}\|_{\mathrm{sp}} = \|\check{B}_{\so}^+B_{\so,m}\|_{\mathrm{sp}}\leq\bar{\sigma}/\underline{\sigma}$.

Define the projected-embedding class used in the ridge analysis by
\begin{equation}
	\label{eq:H_m_proj}
	\mathcal{H}_m^{\mathrm{proj}}
	:=
	\left\{
	B^+Bh:
	h\in\mathcal{H}_m,
	(\alpha,B)\in\Theta_{\so,m}
		\text{ for some }\alpha\in\mathbb{R}^{T}
	\right\}.
\end{equation}
For every $\rho\geq0$, let $\mathcal{H}_{\so,m}^{\mathrm{proj}}(\rho):=\{h\in\mathcal{H}_m^{\mathrm{proj}}: \min_{g\in\mathcal{G}_{m}}\|g\circ h-g_m\circ h_m\|_{P_{\so},2}\leq \rho\}$.
Fix constants $C_{\alpha},C_{\beta}<\infty$, and let
\begin{equation*}
	\Theta_n
	:=
	\left\{
		(\alpha,\beta)\in\mathbb{R}\times\mathbb{R}^{K}:
		|\alpha|\leq C_{\alpha},
		\|\beta\|_2\leq C_{\beta}
	\right\}.
\end{equation*}

Let $\mathcal{F}_n = \mathcal{F}_n^{\sub} = \{f(v) = \alpha+\beta'v: (\alpha,\beta)\in\Theta_n\}$ and let $f_n(v) = \alpha_{\ta,n}+\beta_{\ta,n}'v\in\mathcal{F}_n^{\sub}(h_m)$ be the population target model.
For a deterministic sequence $\lambda_n>0$, define the ridge estimator on the projected embedding by
\begin{equation}
	\label{eq:ridge_estimator}
	\hat{f}_n^{\mathrm{ridge}}(v):=\hat{\alpha}_n^{\mathrm{ridge}}+\hat{\beta}_n^{\mathrm{ridge}'}v
	\in
	\argmin_{\alpha+\beta'v\in\mathcal{F}_n}
	\left\{
		\frac{1}{n}\sum_{i=1}^{n}\ell_{\ta}\left(\alpha + \beta'\check{h}^{\mathrm{proj}}(Z_i),Y_i\right)
		+
		\lambda_n^2\|\beta\|_2^2
	\right\}.
\end{equation}
We assume the following conditions on the target model and the source parameters.

\begin{assumption}[Conditions for Ridge Regression]
	\label{asm:ridge_target}
	\
	\begin{enumerate}[label=(\roman*)]
		\item (Source Parameter and Embedding Restrictions)
		The population source parameters satisfy $(\alpha_{\so,m},B_{\so,m})\in\Theta_{\so,m}$ and $h_m\in\mathcal{H}_m$.
		The pre-trained source parameters satisfy $P_{\so}((\check{\alpha}_{\so},\check{B}_{\so})\in\Theta_{\so,m}, \check{h}\in\mathcal{H}_m) \rightarrow 1$.
		Consequently, $\check{h}^{\mathrm{proj}}\in\mathcal{H}_m^{\mathrm{proj}}$ with probability approaching one.
		For every $m$, there exists a deterministic constant $C_m<\infty$, possibly diverging with $n$, such that
		$\max\{\mathbb{E}_{\so}[\|h_m(Z)\|_2^2],\sup_{h\in\mathcal{H}_m^{\mathrm{proj}}}\mathbb{E}_{\so}[\|h(Z)\|_2^2]\} \leq C_m$.

		\item (Target-Slope Representation for Transferability)
		There exist a sequence $b_n\in\mathbb{R}^{T}$ and a constant $C_b<\infty$ such that $\sup_n\|b_n\|_2\leq C_b$ and $\beta_{\ta,n}'=b_n'B_{\so,m}$, where $B_{\so,m}$ denotes the population source head.

		\item (Target Parameter Restrictions)
		The target parameter bounds satisfy $\sup_n|\alpha_{\ta,n}|+2C_bC_{\alpha,\so}\leq C_{\alpha}$ and $C_b\bar{\sigma} \leq C_{\beta}$.
	\end{enumerate}
\end{assumption}

For any embedding $h$, let
\begin{equation*}
	\Sigma_{h,n}
	:=
	\mathbb{E}_{\ta}\left[(h(Z)-\mathbb{E}_{\ta}[h(Z)])(h(Z)-\mathbb{E}_{\ta}[h(Z)])'\right].
\end{equation*}
In particular, $\Sigma_{h_m,n}$ denotes the covariance matrix of the population embedding.
Define the effective degrees of freedom of the ridge regression based on the embedding $h$ by
\begin{equation}
	\label{eq:effective_degrees_of_freedom}
	\mathcal{N}_{h,n}(\lambda)
	:=
	\operatorname{tr}\left(\Sigma_{h,n}(\Sigma_{h,n}+\lambda^2I_K)^{-1}\right)
	=
	\sum_{j=1}^K\frac{\mu_j(\Sigma_{h,n})}{\mu_j(\Sigma_{h,n})+\lambda^2},
\end{equation}
where $\mu_1(\Sigma_{h,n})\geq\cdots\geq\mu_K(\Sigma_{h,n})\geq0$ are the eigenvalues of $\Sigma_{h,n}$.

\begin{theorem}[Convergence Rate of Ridge Target Model]
	\label{thm:ridge_target_convergence}
	Suppose that \cref{asm:convergence} (i) and \cref{asm:support,asm:ridge_target} hold. We also assume \cref{asm:loss_ta} holds with $\mathcal{H}_{\so,m}(\rho_{\so,m})$ replaced by $\mathcal{H}_{\so,m}^{\mathrm{proj}}(\rho_{\so,m})$. Let $\lambda_n>0$ be deterministic and let $\hat{f}_n^{\mathrm{ridge}}\circ\check{h}^{\mathrm{proj}}$ be defined in \cref{eq:ridge_estimator}. Then,
	\begin{align}
		&\left\|\hat{f}_n^{\mathrm{ridge}}\circ\check{h}^{\mathrm{proj}}-a_{\ta}^*\right\|_{P_{\ta},2}\nonumber\\
		=&
		O_P\left(\sqrt{\frac{1+\mathcal{N}_{h_m,n}(\lambda_n)}{n}}+\lambda_n
		+\frac{\delta_{\so,m}/\underline{w}_n+(\delta_{\so,m}/\underline{w}_n)^{1/2}\operatorname{tr}(\Sigma_{h_m,n})^{1/4}}{\sqrt{n}\lambda_n}
		+\frac{\delta_{\so,m}}{\underline{w}_n}+\epsilon_{\ta,n}
		\right).
	\end{align}
\end{theorem}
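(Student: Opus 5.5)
We would follow the structure of \cref{thm:transfer_learning_general}. By the triangle inequality, split the error into three terms. The first is the estimation error of the ridge fit relative to the best linear approximation on the projected embedding, $\|\hat{f}_n^{\mathrm{ridge}}\circ\check{h}^{\mathrm{proj}}-f^{\bullet}\circ\check{h}^{\mathrm{proj}}\|_{P_{\ta},2}$. The second is the transfer error $\|f^{\bullet}\circ\check{h}^{\mathrm{proj}}-f_n\circ h_m\|_{P_{\ta},2}$. The third is the approximation error $\epsilon_{\ta,n}$. Here $f^{\bullet}$ is a specific linear head built from the source quantities. Throughout we work on the event that $(\check{\alpha}_{\so},\check{B}_{\so})\in\Theta_{\so,m}$ and $\check{h}\in\mathcal{H}_m$, which has probability approaching one by \cref{asm:ridge_target}(i). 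We condition on the source sample, so that $\check{h}^{\mathrm{proj}}$ is a fixed function independent of the target sample.

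For the transfer term, we use \cref{asm:ridge_target}(ii) to write $\beta_{\ta,n}'h_m=b_n'B_{\so,m}h_m$. We take the explicit candidate
\[
\tilde{\beta}'=b_n'\check{B}_{\so},\qquad \tilde{\alpha}=\alpha_{\ta,n}+b_n'(\check{\alpha}_{\so}-\alpha_{\so,m}).
\]
Since $\check{B}_{\so}\check{h}^{\mathrm{proj}}=\check{B}_{\so}\check{h}$, the candidate satisfies $\tilde{\alpha}+\tilde{\beta}'\check{h}^{\mathrm{proj}}-f_n\circ h_m=b_n'(\check{g}\circ\check{h}-g_m\circ h_m)$. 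The parameter bounds in \cref{asm:ridge_target}(iii) ensure $(\tilde\alpha,\tilde\beta)\in\Theta_n$, because $\|\tilde\beta\|_2\le C_b\bar\sigma$ and $|\tilde\alpha|\le\sup_n|\alpha_{\ta,n}|+2C_bC_{\alpha,\so}$. Changing measure with \cref{asm:support} then gives
\[
\|\cdot\|_{P_{\ta},2}\le C_b\underline{w}_n^{-1}\|\check{g}\circ\check{h}-g_m\circ h_m\|_{P_{\so},2}=O_P(\delta_{\so,m}/\underline{w}_n)
\]
by \cref{asm:convergence}(i). This yields the term $\delta_{\so,m}/\underline{w}_n$. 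It also shows that $\check{h}^{\mathrm{proj}}\in\mathcal{H}^{\mathrm{proj}}_{\so,m}(\rho_{\so,m})$ with $\rho_{\so,m}=O(\delta_{\so,m})$, so the modified \cref{asm:loss_ta} applies to $\check{h}^{\mathrm{proj}}$.

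For the ridge term, we use the basic inequality from the penalized objective, comparing $\hat f_n^{\mathrm{ridge}}$ against the candidate $(\tilde\alpha,\tilde\beta)$. The curvature lower bound of \cref{asm:loss_ta} converts excess risk into the squared $L^2(P_{\ta})$ error relative to $a^*_{\ta}$. The penalty contributes $\lambda_n^2\|\tilde\beta\|_2^2\lesssim\lambda_n^2$, which gives the $\lambda_n$ term. The empirical-process term is a linear functional in $\beta$ and is controlled by a Cauchy--Schwarz step in the ridge geometry:
\[
|(\mathbb{P}_n-\mathbb{P})[\xi\,(\beta-\tilde\beta)'(\check{h}^{\mathrm{proj}}-\mathbb{E}\check{h}^{\mathrm{proj}})]|\le\|(\Sigma_{\check h}+\lambda_n^2I)^{1/2}(\beta-\tilde\beta)\|_2\,\|(\Sigma_{\check h}+\lambda_n^2I)^{-1/2}(\mathbb{P}_n-\mathbb{P})[\xi\,\check h^{\mathrm{proj}}]\|_2.
\]
The second factor has second moment of order $\mathcal{N}_{\check{h},n}(\lambda_n)/n$, where the bound $|\xi|\lesssim1$ comes from the Lipschitz property of the loss. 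The intercept adds the constant $1/n$. Solving the resulting quadratic inequality gives $\sqrt{(1+\mathcal{N}_{\check h,n}(\lambda_n))/n}+\lambda_n$, up to the transfer and approximation terms.

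The main obstacle is replacing $\mathcal{N}_{\check h,n}$ by $\mathcal{N}_{h_m,n}$, which produces the unusual middle term. We would write $\check{h}^{\mathrm{proj}}=\check q+\check Q h_m+\Delta$ with $\|\Delta\|_{P_{\ta},2}=O_P(\delta_{\so,m}/\underline{w}_n)$. This follows from \cref{lem:convergence_h}, using $\|\check B_{\so}^+\|_{\mathrm{sp}}\le\underline\sigma^{-1}$, together with the density-ratio bound. Rather than perturbing the trace directly, we would split the empirical-process term into the $\check Q h_m$ part and the $\Delta$ part. For the $\check Q h_m$ part, the relevant degrees of freedom are those of $\check Q\Sigma_{h_m,n}\check Q'$. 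Since $\|\check Q\|_{\mathrm{sp}}\le\bar\sigma/\underline\sigma$, this is bounded by a constant multiple of $\mathcal{N}_{h_m,n}(\lambda_n)$ after rescaling $\lambda_n$ by constants. For the $\Delta$ part, we bound crudely by $\|\beta-\tilde\beta\|_2\,\|(\mathbb{P}_n-\mathbb{P})[\xi\Delta]\|_2\lesssim\|\beta-\tilde\beta\|_2\,\|\Delta\|_{P_{\ta},2}/\sqrt n$. We then use $\|\beta-\tilde\beta\|_2\le\lambda_n^{-1}\|(\Sigma+\lambda_n^2I)^{1/2}(\beta-\tilde\beta)\|_2$. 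The cross term between $\Delta$ and $\check Q h_m$ in the covariance is bounded via Cauchy--Schwarz by $(\|\Delta\|\,\operatorname{tr}(\Sigma_{h_m,n})^{1/2})^{1/2}$. Together these give the term $(\delta/\underline w_n+(\delta/\underline w_n)^{1/2}\operatorname{tr}(\Sigma_{h_m,n})^{1/4})/(\sqrt n\lambda_n)$. Finally, \cref{asm:loss_ta} and the triangle inequality with $\epsilon_{\ta,n}$ assemble the stated bound.
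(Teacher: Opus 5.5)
Your decomposition into (I)+(II)+(III) and your transfer-term argument are correct. The candidate $\tilde\beta'=b_n'\check{B}_{\so}$, $\tilde\alpha=\alpha_{\ta,n}+b_n'(\check{\alpha}_{\so}-\alpha_{\so,m})$ is a simpler substitute for the paper's transported head $f^{\bullet}_{n,h}$, which is built from a selected source minimizer over $\Theta_{\so,m}$. It lies in $\Theta_n$ by \cref{asm:ridge_target} (iii) and gives (II) $=O_P(\delta_{\so,m}/\underline{w}_n)$ directly.

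\textbf{The gap: the ridge term.} You write the centered loss increment as a linear functional $(\mathbb{P}_n-P)[\xi(\beta-\tilde\beta)'(\check{h}^{\mathrm{proj}}-\mathbb{E}_{\ta}\check{h}^{\mathrm{proj}})]$ with a multiplier $\xi$ that does not depend on $\beta$. You then apply Cauchy--Schwarz pathwise at the random $\hat\beta$ and solve a quadratic inequality. This step fails for the losses the theorem covers:
\begin{itemize}
\item Under \cref{asm:loss_ta} the loss is only Lipschitz with two-sided excess-risk curvature. For the logistic loss, $\ell_{\ta}(f\circ h,Y)-\ell_{\ta}(\tilde f\circ h,Y)$ is nonlinear in $(\alpha,\beta)$. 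A mean-value multiplier then depends on $\beta$, so it cannot be pulled outside the supremum.
\item Even for squared loss there is the quadratic piece $-\tfrac12(\mathbb{P}_n-P)[(f-\tilde f)^2]$. Once it is included, the population terms cancel and you are left with an empirical-norm inequality. Closing it needs an empirical-versus-population covariance comparison in the ridge geometry, which you do not supply.
\end{itemize}
The missing step is the paper's. Symmetrize and apply the contraction inequality (\cref{lem:contraction}) to reduce the loss-increment process to the Rademacher process $\mathfrak{R}_n(\Delta\alpha+\Delta\beta'h)$; your ellipsoid Cauchy--Schwarz applies legitimately there. Contraction only bounds an expected supremum over a fixed localized set $\{d_{n,h}(f,f^{\bullet}_{n,h})<\delta,\ \lambda_n\|\beta\|_2<\delta\}$, not a pathwise self-normalized quantity. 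So the resulting modulus $\phi_n(\delta)$ must then be turned into a rate by a localization/peeling argument such as \cref{lem:rate_of_conv_regu}, with curvature verified around $f^{\bullet}_{n,h}$. That localization constraint is also what justifies your bound $\|\Delta\beta\|_2\lesssim(\delta+\lambda_n)/\lambda_n$.

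\textbf{Secondary issue: the middle term.} Your bookkeeping does not deliver the stated additive form. You split $\mathfrak{R}_n(\check{h}^{\mathrm{proj}})$ into a $\check{Q}h_m$ part and a $\Delta$ part and apply Cauchy--Schwarz in the $\check{Q}\Sigma_{h_m,n}\check{Q}'+\lambda_n^2I_K$ geometry. But localization controls $\Delta\beta'\Sigma_{\check{h}^{\mathrm{proj}},n}\Delta\beta$, not $\Delta\beta'\check{Q}\Sigma_{h_m,n}\check{Q}'\Delta\beta$. Converting one into the other through the cross-covariance produces a factor that multiplies $\mathcal{N}_{h_m,n}(\lambda_n)^{1/2}$, so you get a product rather than a sum. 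The paper avoids this as follows:
\begin{itemize}
\item It stays in the $\Sigma_{h,n}+\lambda_n^2I_K$ geometry, where $\mathbb{E}\|(\Sigma_{h,n}+\lambda_n^2I_K)^{-1/2}\mathfrak{R}_n(h-\mathbb{E}_{\ta}h)\|_2\leq\mathcal{N}_{h,n}(\lambda_n)^{1/2}$.
\item It then perturbs the trace directly. Writing $\Sigma_{h,n}=Q_h\Sigma_{h_m,n}Q_h'+\Delta_{h,n}$, \cref{lem:effective_dof,lem:transformation_dof} give $\mathcal{N}_{h,n}(\lambda_n)\lesssim\mathcal{N}_{h_m,n}(\lambda_n)+\|\Delta_{h,n}\|_*/\lambda_n^2$.
\item The nuclear-norm bound $\|\Delta_{h,n}\|_*\lesssim(\delta_{\so,m}/\underline{w}_n)^2+\operatorname{tr}(\Sigma_{h_m,n})^{1/2}\delta_{\so,m}/\underline{w}_n$ is exactly your cross term, now entering additively.
\end{itemize}
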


The first two terms in the above rate are the estimation errors of ridge regression, and the last two terms are the transferability and approximation errors.
The middle term arises from the difference between $\mathcal{N}_{\check{h}^{\mathrm{proj}},n}(\lambda_n)$ and $\mathcal{N}_{h_m,n}(\lambda_n)$.

For a concrete rate, suppose that, uniformly over $n$ and $j=1,\ldots,K$,
\begin{equation*}
	\mu_j(\Sigma_{h_m,n})
	\leq
	C_{\mu}j^{-2a}
	\qquad
	\text{for constants }C_{\mu}<\infty\text{ and }a>1/2,
\end{equation*}
and $\delta_{\so,m}/\underline{w}_n+\epsilon_{\ta,n}=O(n^{-a/(2a+1)})$. Then $\mathcal{N}_{h_m,n}(\lambda)\lesssim\lambda^{-1/a}$. Choosing $\lambda_n\asymp n^{-a/(2a+1)}$ gives
\begin{equation*}
	\left\|\hat{f}_n^{\mathrm{ridge}}\circ\check{h}^{\mathrm{proj}}-a_{\ta}^*\right\|_{P_{\ta},2}
	=
	O_P\left(
	\max\left\{n^{-a/(2a+1)},n^{-(a+1)/(2(2a+1))}\right\}
	\right).
\end{equation*}
If $a>1/2$, the above rate is $o_P(n^{-1/4})$, which is sufficient for the DML framework.
Thus, ridge is a recommended regularized alternative when the target model is linear and the embedding is nearly identified up to a well-conditioned linear distortion.

In practice, the tuning parameter $\lambda_n$ for ridge is typically selected by $K$-fold cross-validation that directly targets prediction risk; see, for example, \citet{hastie2022surprises} for a recent detailed discussion of cross-validation for ridge regression, including in over-parameterized settings.

A natural question is whether a similar convergence rate can be achieved by neural network models with regularization such as random dropout or $\ell_2$ regularization.\footnote{
	Another line of research in deep learning theory on convergence rates makes use of functional shape assumptions on the function of interest (\citealp{kohler2021rate,bhattacharya2024deep}) or data structure assumptions such as approximate manifold structure (\citealp{jiao2023deep,schulte2025adjustment}). However, their theory relies on the VC-dimension bound derived by \cite{bartlett2019nearly}, which is always larger than the dimension of the raw input data. Thus, the convergence rate is not informative unless the dimension of the raw input data is treated as fixed relative to the sample size.
}
We conjecture that the answer is positive.
For linear regression, dropout can be interpreted as a form of ridge regularization (\citealp{srivastava2014dropout}, Section 9.1).
\cite{wei2020implicit} demonstrate both theoretically and empirically that dropout can work as regularization for deep neural networks.
However, even when $h_m$ is directly observed, the convergence rate of the regularized neural network estimator is not well understood in the literature at the level of generality of \cite{farrell2021deep}, to the best of our knowledge. Establishing the convergence rate of regularized neural network estimators for the target model in our setting is an important direction for future research.

\bibliographystyle{ecta}
\bibliography{listb}

@article{tripuraneni2020theory,
  title={On the theory of transfer learning: The importance of task diversity},
  author={Tripuraneni, Nilesh and Jordan, Michael and Jin, Chi},
  journal={Advances in Neural Information Processing Systems},
  volume={33},
  pages={7852--7862},
  year={2020}
}

@article{watkins2023optimistic,
  title={Optimistic rates for multi-task representation learning},
  author={Watkins, Austin and Ullah, Enayat and Nguyen-Tang, Thanh and Arora, Raman},
  journal={Advances in Neural Information Processing Systems},
  volume={36},
  pages={2207--2251},
  year={2023}
}

@article{angelopoulos2023prediction,
  title={Prediction-powered inference},
  author={Angelopoulos, Anastasios N and Bates, Stephen and Fannjiang, Clara and Jordan, Michael I and Zrnic, Tijana},
  journal={Science},
  volume={382},
  number={6671},
  pages={669--674},
  year={2023},
  publisher={American Association for the Advancement of Science}
}

@article{carlson2025unifying,
  title={A Unifying Framework for Robust and Efficient Inference with Unstructured Data},
  author={Carlson, Jacob and Dell, Melissa},
  journal={arXiv preprint arXiv:2505.00282},
  year={2025}
}

@article{battaglia2025inference,
  title={Inference for Regression with Variables Generated by {AI} or Machine Learning},
  author={Battaglia, Laura and Christensen, Timothy and Hansen, Stephen and Sacher, Szymon},
  journal={arXiv preprint arXiv:2402.15585},
  year={2025}
}

@article{zhang2026debiasing,
  title={Debiasing {ML}-or {AI}-generated regressors in partially linear models},
  author={Zhang, Jingwen and Xue, Wendao and Yu, Yifan and Tan, Yong},
  journal={Information Systems Research},
  year={2026},
  publisher={INFORMS}
}

@article{rambachan2024program,
  title={Program Evaluation with Remotely Sensed Outcomes},
  author={Rambachan, Ashesh and Singh, Rahul and Viviano, Davide},
  journal={arXiv preprint arXiv:2411.10959},
  year={2024}
}

@article{chernozhukov2018double,
  title={Double/debiased machine learning for treatment and structural parameters},
  author={Chernozhukov, Victor and Chetverikov, Denis and Demirer, Mert and Duflo, Esther and Hansen, Christian and Newey, Whitney and Robins, James},
  journal={The Econometrics Journal},
  pages={C1--C68},
  year={2018},
  publisher={JSTOR}
}

@article{jean2016combining,
  title={Combining satellite imagery and machine learning to predict poverty},
  author={Jean, Neal and Burke, Marshall and Xie, Michael and Alampay Davis, W Matthew and Lobell, David B and Ermon, Stefano},
  journal={Science},
  volume={353},
  number={6301},
  pages={790--794},
  year={2016},
  publisher={American Association for the Advancement of Science}
}

@article{compiani2025demand,
  title={Demand estimation with text and image data},
  author={Compiani, Giovanni and Morozov, Ilya and Seiler, Stephan},
  journal={arXiv preprint arXiv:2503.20711},
  year={2025}
}

@article{bajari2025hedonic,
  title = {Hedonic prices and quality adjusted price indices powered by AI},
  author = {P. Bajari and Z. Cen and V. Chernozhukov and M. Manukonda and S. Vijaykumar and J. Wang and R. Huerta and J. Li and L. Leng and G. Monokroussos and S. Wang},
  journal = {Journal of Econometrics},
  volume = {251},
  pages = {106052},
  year = {2025},
  issn = {0304-4076},
  publisher = {Elsevier}
}

@article{farrell2021deep,
  title={Deep neural networks for estimation and inference},
  author={Farrell, Max H and Liang, Tengyuan and Misra, Sanjog},
  journal={Econometrica},
  volume={89},
  number={1},
  pages={181--213},
  year={2021},
  publisher={Wiley Online Library}
}

@book{van2023weak,
  title={Weak Convergence and Empirical Processes: With Applications to Statistics},
  author={van der Vaart, AW and Wellner, Jon A},
  year={2023},
  publisher={Springer Nature}
}

@book{wainwright2019high,
  title={High-dimensional statistics: A non-asymptotic viewpoint},
  author={Wainwright, Martin J},
  volume={48},
  year={2019},
  publisher={Cambridge University Press}
}

@article{bartlett2019nearly,
  title={Nearly-tight VC-dimension and pseudodimension bounds for piecewise linear neural networks},
  author={Bartlett, Peter L and Harvey, Nick and Liaw, Christopher and Mehrabian, Abbas},
  journal={Journal of Machine Learning Research},
  volume={20},
  number={63},
  pages={1--17},
  year={2019}
}

@article{schulte2025adjustment,
  title={Adjustment for confounding using pre-trained representations},
  author={Schulte, Rickmer and R{\"u}gamer, David and Nagler, Thomas},
  journal={arXiv preprint arXiv:2506.14329},
  year={2025}
}

@article{klaassen2024doublemldeep,
  title={Doublemldeep: Estimation of causal effects with multimodal data},
  author={Klaassen, Sven and Teichert-Kluge, Jan and Bach, Philipp and Chernozhukov, Victor and Spindler, Martin and Vijaykumar, Suhas},
  journal={arXiv preprint arXiv:2402.01785},
  year={2024}
}

@article{avivi2024patent,
  title={Are Patent Examiners Gender Neutral},
  author={Avivi, Hadar},
  journal={Unpublished manuscript},
  year={2025}
}

@article{vafa2025estimating,
  title={Estimating wage disparities using foundation models},
  author={Vafa, Keyon and Athey, Susan and Blei, David M},
  journal={Proceedings of the National Academy of Sciences},
  volume={122},
  number={22},
  pages={e2427298122},
  year={2025},
  publisher={National Academy of Sciences}
}

@article{berry1994estimating,
  title={Estimating discrete-choice models of product differentiation},
  author={Berry, Steven T},
  journal={The RAND Journal of Economics},
  pages={242--262},
  year={1994},
  publisher={JSTOR}
}

@inproceedings{devlin2019bert,
  title={Bert: Pre-training of deep bidirectional transformers for language understanding},
  author={Devlin, Jacob and Chang, Ming-Wei and Lee, Kenton and Toutanova, Kristina},
  booktitle={Proceedings of the 2019 conference of the North American chapter of the association for computational linguistics: human language technologies, volume 1 (long and short papers)},
  pages={4171--4186},
  year={2019}
}

@inproceedings{simonyan2015very,
  title = {Very deep convolutional networks for large-scale image recognition},
  author = {Simonyan, Karen and Zisserman, Andrew},
  booktitle = {3rd International Conference on Learning Representations, {ICLR} 2015},
  year = {2015}
}

@inproceedings{he2016deep,
  title={Deep residual learning for image recognition},
  author={He, Kaiming and Zhang, Xiangyu and Ren, Shaoqing and Sun, Jian},
  booktitle={Proceedings of the IEEE Conference on Computer Vision and Pattern Recognition},
  pages={770--778},
  year={2016}
}

@article{xu2021representation,
  title={Representation learning beyond linear prediction functions},
  author={Xu, Ziping and Tewari, Ambuj},
  journal={Advances in Neural Information Processing Systems},
  volume={34},
  pages={4792--4804},
  year={2021}
}

@inproceedings{grigsby2023hidden,
  title={Hidden symmetries of ReLU networks},
  author={Grigsby, Elisenda and Lindsey, Kathryn and Rolnick, David},
  booktitle={International Conference on Machine Learning},
  pages={11734--11760},
  year={2023},
  organization={PMLR}
}

@book{dudley2002real,
  title={Real analysis and probability},
  author={Dudley, Richard M},
  year={2002},
  publisher={Cambridge University Press}
}

@book{folland1999real,
  title={Real Analysis: Modern Techniques and Their Applications},
  author={Folland, Gerald B},
  edition={2},
  year={1999},
  publisher={John Wiley \& Sons}
}

@book{harville2008matrix,
  title={Matrix Algebra From a Statistician's Perspective},
  author={Harville, David A},
  year={2008},
  publisher={Springer Science \& Business Media}
}

@article{bach2024adventures,
  title={Adventures in demand analysis using AI},
  author={Bach, Philipp and Chernozhukov, Victor and Klaassen, Sven and Spindler, Martin and Teichert-Kluge, Jan and Vijaykumar, Suhas},
  journal={arXiv preprint arXiv:2501.00382},
  year={2024}
}

@article{kolesar2025fragility,
  title={The fragility of sparsity},
  author={Koles{\'a}r, Michal and M{\"u}ller, Ulrich K and Roelsgaard, Sebastian T},
  journal={arXiv preprint arXiv:2311.02299},
  year={2026}
}

@article{yarotsky2017error,
  title={Error bounds for approximations with deep ReLU networks},
  author={Yarotsky, Dmitry},
  journal={Neural Networks},
  volume={94},
  pages={103--114},
  year={2017},
  publisher={Elsevier}
}

@article{christensen2026unstructured,
  title={From Unstructured Data to Demand Counterfactuals: Theory and Practice},
  author={Christensen, Timothy and Compiani, Giovanni},
  journal={arXiv preprint arXiv:2601.05374},
  year={2026}
}

@article{escanciano2026automatic,
  title={Automatic Locally Robust {GMM} with Machine-Learning-Generated Regressors},
  author={Escanciano, Juan Carlos and P{\'e}rez-Izquierdo, Telmo},
  journal={arXiv preprint arXiv:2301.10643v4},
  year={2026}
}

@article{han2025copyright,
  title={Copyright and Competition: Estimating Supply and Demand with Unstructured Data},
  author={Han, Sukjin and Lee, Kyungho},
  journal={arXiv preprint arXiv:2501.16120},
  year={2025}
}

@article{kumar2022fine,
  title={Fine-tuning can distort pretrained features and underperform out-of-distribution},
  author={Kumar, Ananya and Raghunathan, Aditi and Jones, Robbie and Ma, Tengyu and Liang, Percy},
  journal={arXiv preprint arXiv:2202.10054},
  year={2022}
}

@article{chen2022debiased,
  title={Debiased machine learning without sample-splitting for stable estimators},
  author={Chen, Qizhao and Syrgkanis, Vasilis and Austern, Morgane},
  journal={Advances in Neural Information Processing Systems},
  volume={35},
  pages={3096--3109},
  year={2022}
}

@article{hastie2022surprises,
  title={Surprises in high-dimensional ridgeless least squares interpolation},
  author={Hastie, Trevor and Montanari, Andrea and Rosset, Saharon and Tibshirani, Ryan J},
  journal={Annals of Statistics},
  volume={50},
  number={2},
  pages={949--986},
  year={2022}
}

@inproceedings{johansson2019support,
  title={Support and invertibility in domain-invariant representations},
  author={Johansson, Fredrik D and Sontag, David and Ranganath, Rajesh},
  booktitle={The 22nd International Conference on Artificial Intelligence and Statistics},
  pages={527--536},
  year={2019},
  organization={PMLR}
}

@book{stewart1990matrix,
  address = {Boston},
  author = {Stewart, Gilbert W. and Sun, Ji-guang},
  publisher = {Academic Press},
  series = {Computer Science and Scientific Computing},
  title = {Matrix Perturbation Theory},
  year = {1990}
}

@book{horn1994topics,
  title={Topics in matrix analysis},
  author={Horn, Roger and Johnson, Charles R.},
  year={1994},
  publisher={Cambridge University Press Cambridge, UK}
}

@book{ledoux1991probability,
  title={Probability in Banach Spaces: isoperimetry and processes},
  author={Ledoux, Michel and Talagrand, Michel},
  volume={23},
  year={1991},
  publisher={Springer}
}

@article{fava2025training,
  title={Training and testing with multiple splits: A central limit theorem for split-sample estimators},
  author={Fava, Bruno},
  journal={arXiv preprint arXiv:2511.04957},
  year={2025}
}

@article{ludwig2026large,
  title={Large language models: An applied econometric framework},
  author={Ludwig, Jens and Mullainathan, Sendhil and Rambachan, Ashesh},
  journal={Annual Review of Economics},
  volume={18},
  year={2026}
}

@article{pagan1984econometric,
  title={Econometric issues in the analysis of regressions with generated regressors},
  author={Pagan, Adrian},
  journal={International Economic Review},
  pages={221--247},
  year={1984},
  publisher={JSTOR}
}

@article{hahn2013asymptotic,
  title={Asymptotic variance of semiparametric estimators with generated regressors},
  author={Hahn, Jinyong and Ridder, Geert},
  journal={Econometrica},
  volume={81},
  number={1},
  pages={315--340},
  year={2013},
  publisher={Wiley Online Library}
}

@article{bhattacharya2024deep,
  title={Deep neural networks for nonparametric interaction models with diverging dimension},
  author={Bhattacharya, Sohom and Fan, Jianqing and Mukherjee, Debarghya},
  journal={Annals of Statistics},
  volume={52},
  number={6},
  pages={2738--2766},
  year={2024},
  publisher={Institute of Mathematical Statistics}
}

@article{srivastava2014dropout,
  title={Dropout: a simple way to prevent neural networks from overfitting},
  author={Srivastava, Nitish and Hinton, Geoffrey and Krizhevsky, Alex and Sutskever, Ilya and Salakhutdinov, Ruslan},
  journal={Journal of Machine Learning Research},
  volume={15},
  number={1},
  pages={1929--1958},
  year={2014},
  publisher={JMLR. org}
}

@inproceedings{wei2020implicit,
  title={The implicit and explicit regularization effects of dropout},
  author={Wei, Colin and Kakade, Sham and Ma, Tengyu},
  booktitle={International Conference on Machine Learning},
  pages={10181--10192},
  year={2020},
  organization={PMLR}
}

@article{kohler2021rate,
  title={On the rate of convergence of fully connected deep neural network regression estimates},
  author={Kohler, Michael and Langer, Sophie},
  journal={Annals of Statistics},
  volume={49},
  number={4},
  pages={2231--2249},
  year={2021},
  publisher={JSTOR}
}

@article{jiao2023deep,
  title={Deep nonparametric regression on approximate manifolds: Nonasymptotic error bounds with polynomial prefactors},
  author={Jiao, Yuling and Shen, Guohao and Lin, Yuanyuan and Huang, Jian},
  journal={Annals of Statistics},
  volume={51},
  number={2},
  pages={691--716},
  year={2023},
  publisher={Institute of Mathematical Statistics}
}

@article{modarressi2025causal,
  title={Causal inference on outcomes learned from text},
  author={Modarressi, Iman and Spiess, Jann and Venugopal, Amar},
  journal={arXiv preprint arXiv:2503.00725},
  year={2025}
}

@article{fong2021machine,
  title={Machine learning predictions as regression covariates},
  author={Fong, Christian and Tyler, Matthew},
  journal={Political Analysis},
  volume={29},
  number={4},
  pages={467--484},
  year={2021},
  publisher={Cambridge University Press}
}

@article{egami2023using,
  title={Using imperfect surrogates for downstream inference: Design-based supervised learning for social science applications of large language models},
  author={Egami, Naoki and Hinck, Musashi and Stewart, Brandon and Wei, Hanying},
  journal={Advances in Neural Information Processing Systems},
  volume={36},
  pages={68589--68601},
  year={2023}
}

@article{carlson2025making,
  title={Making Interpretable Discoveries from Unstructured Data: A High-Dimensional Multiple Hypothesis Testing Approach},
  author={Carlson, Jacob},
  journal={arXiv preprint arXiv:2511.01680},
  year={2025}
}

@article{chen2008semiparametric,
  title={Semiparametric Efficiency in GMM Models with Auxiliary Data},
  author={Chen, Xiaohong and Hong, Han and Tarozzi, Alessandro},
  journal={Annals of Statistics},
  pages={808--843},
  year={2008},
  publisher={JSTOR}
}

@article{robinson1988root,
  title={Root-N-consistent semiparametric regression},
  author={Robinson, Peter M},
  journal={Econometrica},
  pages={931--954},
  year={1988},
  publisher={JSTOR}
}

@article{rubin1976inference,
  title={Inference and missing data},
  author={Rubin, Donald B},
  journal={Biometrika},
  volume={63},
  number={3},
  pages={581--592},
  year={1976},
  publisher={Oxford University Press}
}

@article{rosenbaum1983central,
  title={The central role of the propensity score in observational studies for causal effects},
  author={Rosenbaum, Paul R and Rubin, Donald B},
  journal={Biometrika},
  volume={70},
  number={1},
  pages={41--55},
  year={1983},
  publisher={Oxford University Press}
}

@article{pan2009survey,
  title={A survey on transfer learning},
  author={Pan, Sinno Jialin and Yang, Qiang},
  journal={IEEE Transactions on Knowledge and Data Engineering},
  volume={22},
  number={10},
  pages={1345--1359},
  year={2010},
  publisher={IEEE}
}

@article{hinton2006reducing,
  title={Reducing the dimensionality of data with neural networks},
  author={Hinton, Geoffrey E and Salakhutdinov, Ruslan R},
  journal={Science},
  volume={313},
  number={5786},
  pages={504--507},
  year={2006},
  publisher={American Association for the Advancement of Science}
}

@article{mammen2016semiparametric,
  title={Semiparametric estimation with generated covariates},
  author={Mammen, Enno and Rothe, Christoph and Schienle, Melanie},
  journal={Econometric Theory},
  volume={32},
  number={5},
  pages={1140--1177},
  year={2016},
  publisher={Cambridge University Press}
}

@article{murphy1985estimation,
  title={Estimation and Inference in Two-Step Econometric Models},
  author={Murphy, Kevin M and Topel, Robert H},
  journal={Journal of Business \& Economic Statistics},
  volume={3},
  number={4},
  pages={370--379},
  year={1985},
  publisher={Taylor \& Francis}
}

@article{dube2020monopsony,
  title={Monopsony in online labor markets},
  author={Dube, Arindrajit and Jacobs, Jeff and Naidu, Suresh and Suri, Siddharth},
  journal={American Economic Review: Insights},
  volume={2},
  number={1},
  pages={33--46},
  year={2020},
  publisher={American Economic Association 2014 Broadway, Suite 305, Nashville, TN 37203}
}

@article{magnolfi2025triplet,
  title={Triplet embeddings for demand estimation},
  author={Magnolfi, Lorenzo and McClure, Jonathon and Sorensen, Alan},
  journal={American Economic Journal: Microeconomics},
  volume={17},
  number={1},
  pages={282--307},
  year={2025},
  publisher={American Economic Association 2014 Broadway, Suite 305, Nashville, TN 37203-2425}
}

@article{newey1997convergence,
  title={Convergence rates and asymptotic normality for series estimators},
  author={Newey, Whitney K},
  journal={Journal of econometrics},
  volume={79},
  number={1},
  pages={147--168},
  year={1997},
  publisher={Elsevier}
}

\newpage
\appendix
\etocdepthtag.toc{appendix}

\begin{center}
{\LARGE Supplement to ``Econometrics with Pre-trained Embeddings for~Unstructured~Data''\par}
\vspace{0.5em}
{\large Yuya Shimizu\par}
{\large University of Wisconsin-Madison\par}
\end{center}

\begingroup
  \small
  \setlength{\parskip}{0pt}
  \setlength{\parindent}{0pt}
  \makeatletter
  \renewcommand{\@dotsep}{1.5}
  \renewcommand{\l@section}{\@dottedtocline{1}{0em}{2.2em}}
  \renewcommand{\l@subsection}{\@dottedtocline{2}{2.2em}{3.0em}}
  \makeatother
  \etocsettagdepth{main}{none}
  \etocsettagdepth{appendix}{2}
  \etocsettocstyle{%
    \subsection*{Contents}%
    \par\noindent\hrule\par
    \vspace{0.4\baselineskip}%
  }{%
    \vspace{0.4\baselineskip}%
    \par\noindent\hrule\par
  }%
  \tableofcontents
\endgroup

\bigskip

This supplementary appendix contains proofs of the results in the main text as well as auxiliary results.

\paragraph{Additional Notation}
In the supplementary appendix, we use the following notation in addition to the notation introduced in the main text.
For deterministic sequences $a_n$ and $b_n$ and for a scalar $\varepsilon$, we write $a_n \lesssim_\varepsilon b_n$ if there exists a constant $C_\varepsilon>0$, depending only on $\varepsilon$, such that $a_n \leq C_\varepsilon b_n$ for large $n$. 
In the proofs, $C_{\varepsilon,1}, C_{\varepsilon,2}, \dots$ denote positive constants that depend only on $\varepsilon$, while $C_1, C_2, \dots$ denote positive absolute constants. These constants may denote different values in different proof sections.
$P^*$ and $\mathbb{E}^*$ denote the outer probability and outer expectation, respectively.
For two symmetric matrices $A\in\mathbb{R}^{K\times K}$ and $\tilde{A}\in\mathbb{R}^{K\times K}$, we write $A\preceq\tilde{A}$ if $\tilde{A}-A$ is positive semi-definite.

\section{Related Results in the Literature}
\label{sec:related_lit}
\subsection{Transfer Learning in the Machine Learning Literature}
\label{sec:diversity_lit}

While a version of the diversity condition in \cref{def:diversity} is considered in the machine learning literature on the excess-risk scale (e.g., \citealp{tripuraneni2020theory}; \citealp{xu2021representation}), we characterize the diversity condition by a novel relationship between the diversity condition and the set inclusion condition on the near-identified sets of the embedding function.
Although the machine learning literature provides some sufficient conditions for task diversity and rate bounds for transfer learning, this paper provides results that are more general and tailored to economic applications in the following three aspects.

First, our sufficient condition in \cref{thm:suf_transfer} is more flexible than the existing results. The existing results focus on the full-rank coefficient matrix $B_{\so,m}$ for the linear source head, but our sufficient condition does not require the full-rank condition on $B_{\so,m}$ and allows for rank deficiency of $B_{\so,m}$ as long as the row span of $B_{\so,m}$ includes the target coefficient $\beta_{\ta,n}$.
In addition, our sufficient condition in \cref{thm:suf_transfer} is more flexible than the existing results as it allows different source and target covariate distributions and more general relationships between source and target models, including Lipschitz transformations.

Second, the necessity results in \cref{thm:nec_transfer_linear_pre,thm:nec_transfer} are new to the best of our knowledge. The existing literature does not provide necessity results for the diversity condition, and our theorems show, under their respective head-class and regularity conditions, that our sufficient condition is also necessary for the linear head case and the rich source head case, respectively.

Finally, our rate of convergence of the target model is faster than the existing rates in the literature.
Importantly, existing rates in the literature are not fast enough to guarantee the $o_P(n^{-1/4})$ rate required for the DML framework without the realizability assumption, which requires the risks for both the source and target tasks to converge to zero.
The realizability assumption is too strong for many economic applications since it requires a very small error variance for nonparametric regressions.
For example, Theorem 3 of \cite{tripuraneni2020theory} shows that the excess risk of the target model is bounded by a rate always slower than $O_P(1/\sqrt{n})$, which implies a rate slower than $O_P(n^{-1/4})$ for the target model itself under the convex loss assumption in \cref{asm:loss_ta} (i).
Theorems 1 and 2 of \cite{watkins2023optimistic} derive excess risk bounds for the target model under the realizability and non-realizability assumptions, respectively, but the rates are not fast enough to guarantee the $o_P(n^{-1/4})$ rate for the target model without the realizability assumption.
On the other hand, our target model convergence rate matches the fast rate of \cite{watkins2023optimistic} under realizability, but does not require realizability.

\subsection{Relations to \cite{escanciano2026automatic}}
\label{sec:escanciano}
\cite{escanciano2026automatic} study locally robust estimation with generated regressors. While their theory suggests that the generated regressors affect inference on the parameter of interest in general, this effect of generated regressors does not arise in our setup.
Return to the setup in \cref{sec:dml}, where we have a target parameter $\theta^*$ and nuisance functions $\eta^*=(\gamma^*,\alpha^*)$. The goal is to estimate $\theta^*$ using the estimated nuisance functions $\hat{\eta} = (\hat{f}_{\gamma},\hat{f}_{\alpha}) \circ \check{h}$ while accounting for the estimation error in the embeddings.

To clarify the relationship, we first consider the case with no approximation error in the target task $\epsilon_{\ta,n} = 0$, and we next consider the case with approximation error.
For exposition, we also focus on a partially linear model (\cref{ex:partially_linear}), but the same logic works for the other applications with closed-form Neyman orthogonal scores and nuisance functions taking conditional expectation forms, including \cref{ex:demand,ex:imputation,ex:ate}.

With no approximation error $\epsilon_{\ta,n} = 0$, the population nuisance function in the target task is $\eta^*=f_{\eta,n}\circ h_m$.
The partially linear model in \cref{ex:partially_linear} has $\eta^*(Z_i)=(\gamma^*(Z_i), \alpha^*(Z_i))=(\mathbb{E}_{\ta}[Y_i|Z_i], \mathbb{E}_{\ta}[X_i|Z_i])$ and $f_{\eta,n}\circ h_m(Z_i)=(f_{\gamma,n}\circ h_m(Z_i), f_{\alpha,n}\circ h_m(Z_i)) = (\mathbb{E}_{\ta}[Y_i|h_m(Z_i)], \mathbb{E}_{\ta}[X_i|h_m(Z_i)])$.
Thus, our setup considers the case where $\mathbb{E}_{\ta}[Y_i|Z_i]=\mathbb{E}_{\ta}[Y_i|h_m(Z_i)]$ and $\mathbb{E}_{\ta}[X_i|Z_i]=\mathbb{E}_{\ta}[X_i|h_m(Z_i)]$.
On the other hand, \cite{escanciano2026automatic} consider a more general setup allowing $\mathbb{E}_{\ta}[Y_i|Z_i]\neq\mathbb{E}_{\ta}[Y_i|h_m(Z_i)]=f_{\gamma,n}\circ h_m(Z_i)$ and $\mathbb{E}_{\ta}[X_i|Z_i]\neq\mathbb{E}_{\ta}[X_i|h_m(Z_i)]=f_{\alpha,n}\circ h_m(Z_i)$.
In our setting, $h_m(Z_i)$ captures sufficient statistics of $Z_i$ for the target task, making the Neyman-orthogonal score insensitive to small perturbations in both $f_{\eta}$ and $h$.
In the more general setting considered by \cite{escanciano2026automatic}, the same Neyman-orthogonal score can be sensitive to perturbations in generated regressors $h(Z_i)$.

More concretely, recall that the Neyman orthogonality condition is defined as follows:
\begin{definition}[Neyman Orthogonality]
    A moment function $\psi(W; \theta, \eta)$ is {\it Neyman orthogonal} at $(\theta^*, \eta^*)$ if
    \begin{equation*}
        \left.\frac{\partial}{\partial r} \mathbb{E}_{\ta}[\psi(W; \theta^*, \eta^* + r(\eta - \eta^*))]\right|_{r=0} = 0
    \end{equation*}
    for all $\eta$ in some neighborhood of $\eta^*$.
\end{definition}
Consider a Neyman-orthogonal score for the partially linear model:
$\psi(W_i;\theta,\gamma,\alpha)=\{(Y_i-\gamma(Z_i))-(X_i-\alpha(Z_i))\theta\}(X_i-\alpha(Z_i))$.
Under the regularity conditions for differentiability,
\begin{align*}
    &\left.\frac{\partial}{\partial r} \mathbb{E}_{\ta}[\psi(W; \theta^*, f_{\eta,n} \circ h_m + r(f_{\eta} \circ h - f_{\eta,n} \circ h_m))]\right|_{r=0}\\
    & \quad = \mathbb{E}_{\ta}[\{(f_{\gamma,n} \circ h_m(Z) - f_{\gamma}\circ h(Z)) - (f_{\alpha,n} \circ h_m(Z) - f_{\alpha}\circ h(Z))\theta^*\}(X-f_{\alpha,n} \circ h_m(Z))]\\
    & \qquad + \mathbb{E}_{\ta}[\{(Y - f_{\gamma,n} \circ h_m(Z))-(X-f_{\alpha,n} \circ h_m(Z))\theta^*\}(f_{\alpha,n} \circ h_m(Z) - f_{\alpha} \circ h(Z))].
\end{align*}
By the law of iterated expectations, the derivative equals zero in our setup by taking conditional expectation given $Z$.
Without these conditional-mean sufficiency restrictions, orthogonality with respect to the second-step regressions does not generally eliminate the direct evaluation effect induced by perturbing the generated regressor. This is the effect addressed by the first-step correction in \cite{escanciano2026automatic}.\footnote{
    While \cite{escanciano2026automatic} propose debiasing estimators for lower-dimensional generated regressors, debiasing for high-dimensional generated regressors is not studied in the literature.
}
Our setup is the DML analog of the index restriction in the literature on semiparametric models (\citealp{mammen2016semiparametric,hahn2013asymptotic}).

When the embedding sufficiency is approximate ($\epsilon_{\ta,n}>0$), Neyman orthogonality remains defined at the true nuisance vector $\eta^*$, rather than at its approximation $f_{\eta,n}\circ h_m$. Approximation and source-task estimation errors enter through the total errors of the composite nuisance estimators. Standard DML inference therefore remains valid when these total errors satisfy the required product-rate and sample-splitting conditions. In this sense, the two approaches are complementary. Our analysis derives source-to-target convergence rates for independently pre-trained embeddings, whereas \cite{escanciano2026automatic} construct influence-function corrections for direct first-step effects that are not absorbed by orthogonality of the composite nuisance.

\section{Pre-trained Deep Learning}
\label{sec:pretrained_dl}
\subsection{Fully Connected Feedforward Neural Networks (FNNs)}
Let us consider a fully connected feedforward neural network with $L_{\mathrm{NN}}$ layers. The input layer is denoted as layer $0$, and the output layer is layer $L_{\mathrm{NN}}$. Each layer $\ell$ has $H_{\ell}$ hidden units (or neurons), and the output dimension is $T$. The weights connecting layer $\ell-1$ to layer $\ell$ are represented by a weight matrix $W^{(\ell)} \in \mathbb{R}^{H_{\ell} \times H_{\ell-1}}$, where $H_0 = d$ and $H_{L_{\mathrm{NN}}} = T$. The constant terms (``biases'') for layer $\ell$ are represented by a vector $b^{(\ell)} \in \mathbb{R}^{H_{\ell}}$.
Let $\sigma(\cdot)$ denote an element-wise activation function, such as the ReLU $\sigma(a)=\max\{a,0\}$. To avoid confusion with the target functions $a_{\so}^*$ and $a_{\ta}^*$ used elsewhere in the paper, write the layer states as
\begin{equation*}
	u^{(0)}(Z)=Z
\end{equation*}
and, for $\ell=1,\ldots,L_{\mathrm{NN}}-1$,
\begin{equation*}
	u^{(\ell)}(Z)
	=
	\sigma\left(W^{(\ell)}u^{(\ell-1)}(Z)+b^{(\ell)}\right).
\end{equation*}
If the economist uses the last hidden layer as the embedding, the embedding and source head in the notation of the main text are
\begin{equation*}
	h_m(Z):=u^{(L_{\mathrm{NN}}-1)}(Z)\in\mathbb{R}^{H_{L_{\mathrm{NN}}-1}}
\end{equation*}
and
\begin{equation*}
	g_m(v):=W^{(L_{\mathrm{NN}})}v+b^{(L_{\mathrm{NN}})}\in\mathbb{R}^{T}.
\end{equation*}
Our theory is also applicable to other choices of the embeddings.
For example, the economist may use the second-to-last layer as the embedding, in which case
\begin{equation*}
	h_m(Z):=u^{(L_{\mathrm{NN}}-2)}(Z)\in\mathbb{R}^{H_{L_{\mathrm{NN}}-2}}
\end{equation*}
and
\begin{equation*}
    g_m(v):=W^{(L_{\mathrm{NN}})}\sigma\left(W^{(L_{\mathrm{NN}}-1)}v+b^{(L_{\mathrm{NN}}-1)}\right)+b^{(L_{\mathrm{NN}})}\in\mathbb{R}^{T}.
\end{equation*}

\subsection{Convolutional Neural Networks (CNNs)}
CNNs are designed for grid-structured inputs such as images. Let $Z\in\mathbb{R}^{H_{\mathrm{img}}\times W_{\mathrm{img}}\times C_{\mathrm{img}}}$ denote an image with height $H_{\mathrm{img}}$, width $W_{\mathrm{img}}$, and $C_{\mathrm{img}}$ channels (e.g., RGB color scale). A CNN replaces dense linear maps by local convolutional filters and pooling operations, so that early layers detect local patterns and deeper layers aggregate them into more abstract features. Denote by
\begin{equation*}
	h_m(Z)\in\mathbb{R}^{K}
\end{equation*}
the output of the last pooling layer or penultimate layer. The remaining fully connected classification layer defines the source-task head $g_m$. Therefore, a pre-trained CNN, such as VGG19 (\citealp{simonyan2015very}) or ResNet50 (\citealp{he2016deep}), again has the form $g_m\circ h_m$. In applications, economists typically extract $\check{h}(Z_i)$ from the final pooling layer and use these learned embeddings as regressors in the target task.
Alternatively, the economist may use the last hidden layer as the embedding, in which case the source head $g_m$ includes the final activation function and the final classification layer.
Since the latter choice of embedding requires weaker assumptions for transferability, we recommend using the last hidden layer as the embedding in practice if the economist is indifferent between the two choices.

\subsection{Transformer (BERT)}
BERT is a bidirectional transformer encoder pre-trained on large text corpora (\citealp{devlin2019bert}). Its original pre-training combines masked language modeling (MLM) and next sentence prediction (NSP). This structure is useful here because it makes the shared-embedding structure explicit. A packed BERT input has the form
\begin{equation*}
	Z=
	(\texttt{[CLS]},\ \text{span A},\ \texttt{[SEP]},\ \text{span B},\ \texttt{[SEP]}),
\end{equation*}
where the two spans may represent either a genuine neighboring pair or an artificially constructed pair. BERT tokenizes text into WordPiece units with a vocabulary size of $T\approx 30{,}000$. If $p$ denotes a token position in the packed sequence, the input to the encoder is
\begin{equation*}
	u_m^{(0)}(Z,p)
	=
	e_{\mathrm{tok}}(z_p)+e_{\mathrm{seg}}(s_p)+e_{\mathrm{pos}}(p),
\end{equation*}
where $e_{\mathrm{tok}}$, $e_{\mathrm{seg}}$, and $e_{\mathrm{pos}}$ denote token, segment, and 
position embeddings, respectively. 
Let $L_{\mathrm{tr}}$ denote the number of transformer blocks. Let $Z=(z_1,\ldots,z_M)$ denote a tokenized sequence of length $M$. For each block $\ell$ and token position $p$, let $u_m^{(\ell)}(Z,p)\in\mathbb{R}^{K}$ denote the hidden state at position $p$ after block $\ell$ in the fitted source encoder.
The transformer encoder maps $\{u_m^{(0)}(Z,p)\}_{p=1}^{M}$ into contextualized states $\{u_m^{(L_{\mathrm{tr}})}(Z,p)\}_{p=1}^{M}$.

For MLM, let $\mathcal{M}(Z)\subset\{1,\ldots,M\}$ denote the masked positions and let $Z^{\mathrm{corr}}$ be the corrupted sequence used as input. The MLM head is a shallow neural network classifier
\begin{equation*}
	g_{\mathrm{MLM},m}:\mathbb{R}^{K}\to\mathbb{R}^{T},
\end{equation*}
applied to each contextualized token state $u_m^{(L_{\mathrm{tr}})}(Z^{\mathrm{corr}},p)$ for $p\in\mathcal{M}(Z)$. Let
\begin{equation*}
	S^{\mathrm{BERT}}
	=
	\left(\{S_p\}_{p\in\mathcal{M}(Z)},S_{\mathrm{NSP}}\right)
\end{equation*}
collect the MLM and NSP labels for one sequence. If $S_p$ denotes the original token at a masked position $p$, the MLM loss for one sequence is
\begin{equation*}
	\ell_{\mathrm{MLM}}(Z,S^{\mathrm{BERT}})
	=
	-\sum_{p\in\mathcal{M}(Z)}\log\Pr(S_p\mid Z^{\mathrm{corr}},p).
\end{equation*}
For NSP, the head is a binary classifier
\begin{equation*}
	g_{\mathrm{NSP},m}:\mathbb{R}^{K}\to\mathbb{R}^{2},
\end{equation*}
applied to the \texttt{[CLS]} state $u_m^{(L_{\mathrm{tr}})}(Z,\texttt{[CLS]})$. If $S_{\mathrm{NSP}}\in\{0,1\}$ indicates whether span B truly follows span A, and
\begin{equation*}
	\pi_{\mathrm{NSP}}(Z)
	=
	\texttt{softmax}(g_{\mathrm{NSP},m}(u_m^{(L_{\mathrm{tr}})}(Z,\texttt{[CLS]}))),
\end{equation*}
then the NSP loss is
\begin{equation*}
	\ell_{\mathrm{NSP}}(Z,S^{\mathrm{BERT}})
	=
	-S_{\mathrm{NSP}}\log \pi_{\mathrm{NSP},1}(Z)
	-(1-S_{\mathrm{NSP}})\log \pi_{\mathrm{NSP},0}(Z).
\end{equation*}
The overall BERT pre-training loss is therefore
\begin{equation*}
	\ell_{\mathrm{BERT}}(Z,S^{\mathrm{BERT}})
	=
	\ell_{\mathrm{MLM}}(Z,S^{\mathrm{BERT}})+\ell_{\mathrm{NSP}}(Z,S^{\mathrm{BERT}}).
\end{equation*}
From the perspective of our theory, the important feature is that both pre-training tasks share the same encoder. The contextualized states $\{u_m^{(L_{\mathrm{tr}})}(Z,p)\}_{p=1}^{M}$ provide the common embedding, while the MLM and NSP prediction layers are task-specific heads. Hence BERT can be viewed as a pre-trained model with a shared embedding map $h_m$ and a vector-valued source task $g_m$. In downstream econometric applications, a common choice is either the \texttt{[CLS]} embedding
\begin{equation*}
	\check{h}(Z)=u_m^{(L_{\mathrm{tr}})}(Z,\texttt{[CLS]})
\end{equation*}
or the pooled token embedding
\begin{equation*}
	\check{h}(Z)=\frac{1}{M}\sum_{p=1}^{M}u_m^{(L_{\mathrm{tr}})}(Z,p).
\end{equation*}
The economist then treats $\check{h}(Z)$ as the generated regressor in the target sample and estimates the downstream model $f\circ\check{h}$.\footnote{The original BERT paper fine-tunes the full network for the target task. In this paper, unless stated otherwise, we treat the pre-trained encoder as fixed and analyze the target estimator conditional on using the learned embedding $\check{h}$.}
Alternatively, as with CNNs, the economist may use the average of the last hidden layer outputs for the MLM task as the embedding, in which case the source head $g_m$ includes the final activation function and the final classification layer.
Our recommendation is to use the last hidden layer as the embedding:
\begin{equation*}
    \check{h}(Z)=\frac{1}{M}\sum_{p=1}^{M}u_{\mathrm{MLM}} \circ u_m^{(L_{\mathrm{tr}})}(Z,p),
\end{equation*}
where $u_{\mathrm{MLM}}$ denotes the transformation from $u_m^{(L_{\mathrm{tr}})}(Z,p)$ to the MLM head input.

Across fully connected networks, CNNs, and transformers, the same structure recurs: a shared embedding map $h_m$ is learned in the source task, a task-specific head $g_m$ is attached to this embedding to predict source labels, and the estimated embedding $\check{h}$ is carried to the target sample. This common decomposition is what makes the transfer-learning framework in the main text directly applicable to standard pre-trained deep learning models.

\section{Transferability Condition for Deep Learning Models}
\label{sec:transferability_examples}
In this section, we examine the transferability condition for CNNs and transformers.
This appendix explains how the sufficient conditions in \cref{thm:suf_transfer,cor:transferability_full_rank} translate into common pre-trained deep learning models. The main message is that transferability depends on how informative the source-task head is about the learned embedding. When the last source head is linear, its slope matrix has enough rank, and the target class satisfies the required functional class condition, the sufficient conditions follow from \cref{cor:transferability_full_rank}. When the slope matrix is low-rank or the head is nonlinear, transferability can hold only for target tasks that depend on the embedding through the part preserved by the source outputs.

\subsection{Transferability Condition for Pre-trained CNN}
For a standard CNN classifier, the last source head is linear:
\begin{equation*}
	g_m(v)=W_{\mathrm{cnn},m}v+b_{\mathrm{cnn},m},
\end{equation*}
where $v\in\mathbb{R}^{K}$ is the last pooling or penultimate embedding, $W_{\mathrm{cnn},m}\in\mathbb{R}^{T\times K}$, and $b_{\mathrm{cnn},m}\in\mathbb{R}^{T}$. This includes the empirically common case in which the computer scientist trains the CNN on a multi-class image-classification task and the economist uses the frozen final pooling layer as the downstream embedding.

Suppose that the target model class $\mathcal{F}_n=\mathcal{F}_n^{\sub}$ is Lipschitz with constant $L_{\mathcal{F}}$ and that \cref{asm:support} holds. If $W_{\mathrm{cnn},m}$ has full column rank, then
\begin{equation*}
	W_{\mathrm{cnn},m}^{+}
	=
	(W_{\mathrm{cnn},m}'W_{\mathrm{cnn},m})^{-1}W_{\mathrm{cnn},m}'
\end{equation*}
is well defined, and for every target model $f_n$ we can define
\begin{equation*}
	\Gamma(x)
	=
	f_n\left(W_{\mathrm{cnn},m}^{+}(x-b_{\mathrm{cnn},m})\right).
\end{equation*}
Then
\begin{equation*}
	\Gamma\circ g_m\circ h_m(Z)
	=
	f_n\left(W_{\mathrm{cnn},m}^{+}(W_{\mathrm{cnn},m}h_m(Z)+b_{\mathrm{cnn},m}-b_{\mathrm{cnn},m})\right)
	=
	f_n\circ h_m(Z),
\end{equation*}
and the Lipschitz constant of $\Gamma$ is bounded by
\begin{equation*}
	L_{\Gamma}
	\leq
	L_{\mathcal{F}}\|W_{\mathrm{cnn},m}^{+}\|_{\mathrm{sp}}.
\end{equation*}
Therefore, by \cref{cor:transferability_full_rank}, the CNN source task is $(\rho_{\so,m},\rho_{\ta,n})$-diverse over $f_n$ with
\begin{equation*}
	\rho_{\ta,n}
	=
	\frac{L_{\mathcal{F}}}{\sigma_{\min}^+(W_{\mathrm{cnn},m})\underline{w}_n}\rho_{\so,m},
\end{equation*}
provided that the functional class requirement in \cref{cor:transferability_full_rank} is satisfied. Specifically, for every candidate embedding $h$ and corresponding best linear source head $b_{\mathrm{cnn},m}^{\bullet}+W_{\mathrm{cnn},m}^{\bullet}(\cdot)$, the target class must contain the function
\begin{equation*}
    u\mapsto
    f_n\left(W_{\mathrm{cnn},m}^{+}
    (b_{\mathrm{cnn},m}^{\bullet}-b_{\mathrm{cnn},m}
    +W_{\mathrm{cnn},m}^{\bullet}u)\right).
\end{equation*}
The bias difference accounts for translations of the candidate embedding that can be absorbed by the source intercept. This functional class condition holds, for example, when the target class is closed under the displayed linear transformations with intercept shifts. Transferability is stronger when the smallest singular value of $W_{\mathrm{cnn},m}$ is bounded away from zero, because then $\|W_{\mathrm{cnn},m}^{+}\|_{\mathrm{sp}}$ is small.

If $W_{\mathrm{cnn},m}$ is rank deficient, then universal transferability fails in general. When the target subclass consists of linear heads and the other conditions of \cref{thm:nec_transfer_linear_pre} hold, $(0,0)$-diversity can hold only if the target coefficient belongs to the row span of $W_{\mathrm{cnn},m}$. More generally, \cref{thm:nec_transfer} implies that transferability requires the existence of a measurable map $\Gamma$ such that
\begin{equation*}
	f_n\circ h_m(Z)=\Gamma\circ g_m\circ h_m(Z),
\end{equation*}
$P_{\ta}$-a.s. Hence, a CNN with a wide hidden embedding and relatively few source labels cannot be expected to transfer to arbitrary downstream targets; it transfers only to targets that depend on the embedding through the information preserved by the source classifier.

\subsection{Transferability Condition for Pre-trained Transformer}
For BERT-style transformers, the relevant source tasks depend on how the downstream embedding is constructed and on the fact that pre-training is carried out on a corrupted sequence $\tilde{Z}=C(Z,R)$ rather than on the original sequence $Z$. Two common choices are the \texttt{[CLS]} embedding and the mean-pooled token embedding described in \cref{sec:pretrained_dl}. The main distinction is that the next sentence prediction task is a single sequence-level task, whereas masked language modeling generates many token-level source tasks.

\paragraph{\texttt{[CLS]} embedding.}
Suppose that the economist uses
\begin{equation*}
	h_m(Z)=u_m^{(L_{\mathrm{tr}})}(Z,\texttt{[CLS]}).
\end{equation*}
If the only relevant source task is next sentence prediction, then the source head is linear of the form
\begin{equation*}
	g_{\mathrm{NSP},m}(v)=W_{\mathrm{NSP},m}v+b_{\mathrm{NSP},m},
\end{equation*}
where $W_{\mathrm{NSP},m}\in\mathbb{R}^{2\times K}$. The full-rank sufficient condition in \cref{cor:transferability_full_rank} can then hold only if $\operatorname{rank}(W_{\mathrm{NSP},m})=K$, which requires $K\leq 2$. For empirically relevant transformer embeddings such as BERT, $K$ is much larger than $2$. Therefore, next sentence prediction alone cannot generically identify a high-dimensional \texttt{[CLS]} embedding. Transferability based only on this binary source task can hold only for target functions that depend on $h_m(Z)$ through the low-dimensional information preserved by $g_{\mathrm{NSP},m}\circ h_m(Z)$.

\paragraph{Mean-pooled token embedding.}
The more favorable case is when the economist uses a pooled embedding built from the token-level states relevant for masked language modeling. The BERT architecture may insert a small transform layer before the final vocabulary projection. Let
\begin{equation*}
	u_m(Z,p)
	=
	u_{\mathrm{MLM},m}\circ u_m^{(L_{\mathrm{tr}})}(Z,p)\in\mathbb{R}^{K}
\end{equation*}
denote the token state after this optional MLM-specific transform. If the economist uses the mean-pooled token embedding, the relevant target-side representation is
\begin{equation*}
	h_m(Z)
	=
	\frac{1}{M}\sum_{p=1}^{M}u_m(Z,p).
\end{equation*}
The MLM head then takes the linear form
\begin{equation*}
	g_{\mathrm{MLM},m}(v)
	=
	W_{\mathrm{MLM},m}v+b_{\mathrm{MLM},m},
\end{equation*}
where $W_{\mathrm{MLM},m}\in\mathbb{R}^{T\times K}$. In BERT, this matrix is often tied to the input token-embedding matrix, but this does not change the linear form of the final projection. Because the same vocabulary projection is applied to each token position, averaging the token-level logits on the corrupted source input yields
\begin{equation*}
	\frac{1}{M}\sum_{p=1}^{M}g_{\mathrm{MLM},m}(u_m(\tilde{Z},p))
	=
	W_{\mathrm{MLM},m}h_m(\tilde{Z})+b_{\mathrm{MLM},m},
\end{equation*}
where
\begin{equation*}
	h_m(\tilde{Z})
	=
	\frac{1}{M}\sum_{p=1}^{M}u_m(\tilde{Z},p).
\end{equation*}
Thus, for the pooled representation induced by the corrupted sequence, the relevant source head is
\begin{equation*}
	\bar{g}_{\mathrm{MLM},m}(v)
	=
	W_{\mathrm{MLM},m}v+b_{\mathrm{MLM},m}.
\end{equation*}
If $W_{\mathrm{MLM},m}$ has full column rank, then the full-rank sufficient condition in \cref{cor:transferability_full_rank} applies to $\bar{g}_{\mathrm{MLM},m}$ on the source side. In particular, for every Lipschitz target model $f_n$ in the function class, there exists
\begin{equation*}
	\Gamma(x)
	=
	f_n\left(W_{\mathrm{MLM},m}^{+}(x-b_{\mathrm{MLM},m})\right)
\end{equation*}
such that
\begin{equation*}
		f_n\circ h_m(\tilde{Z})
		=
		\Gamma\circ \bar{g}_{\mathrm{MLM},m}\circ h_m(\tilde{Z}),
\end{equation*}
provided that the target class satisfies the functional class condition in \cref{cor:transferability_full_rank}. In particular, for every candidate embedding $h$ and corresponding best linear source head $b_{\mathrm{MLM},m}^{\bullet}+W_{\mathrm{MLM},m}^{\bullet}(\cdot)$, the target class must contain
\begin{equation*}
    u\mapsto
    f_n\left(W_{\mathrm{MLM},m}^{+}
    (b_{\mathrm{MLM},m}^{\bullet}-b_{\mathrm{MLM},m}
    +W_{\mathrm{MLM},m}^{\bullet}u)\right).
\end{equation*}
Under this functional class condition,
\begin{equation*}
		\rho_{\ta,n}
		=
	\frac{L_{\mathcal{F}}}{\sigma_{\min}^+(W_{\mathrm{MLM},m})\underline{w}_n}\rho_{\so,m}.
\end{equation*}
This condition is substantially more plausible than in the \texttt{[CLS]}-NSP case because the MLM task has a much larger output dimension, and thus the final vocabulary projection can carry richer information about the pooled embedding. The optional dense-GELU-layer-normalization block in the BERT prediction head is not an obstacle here, because it can be absorbed into the definition of the embedding map $u_m$. If $W_{\mathrm{MLM},m}$ is rank deficient, universal transferability again fails in general. In that case, only target functions that depend on the pooled embedding through the information preserved by the row span of $W_{\mathrm{MLM},m}$ can be expected to transfer.

One difficulty with transformers is that BERT pre-training is carried out on a random corruption of the clean sequence rather than on the clean sequence itself. Let $\mathcal{I}_{\mathrm{pred}}\subseteq \{1,\ldots,M\}$ denote the set of token positions eligible for masked language modeling, excluding positions occupied by the special tokens \texttt{[CLS]}, \texttt{[SEP]}, and \texttt{[PAD]}. Draw the entries of a mask vector $R=(R_1,\ldots,R_M)$ independently across positions $p\in \mathcal{I}_{\mathrm{pred}}$. For these positions, let $R_p=1$ with probability $\pi$ and $R_p=0$ otherwise, and set $R_p=0$ for $p\notin \mathcal{I}_{\mathrm{pred}}$. In BERT, $\pi=0.15$. Let $\mathcal{U}(R)=\{p\in \mathcal{I}_{\mathrm{pred}}: R_p=1\}$ denote the set of masked positions. The corrupted sequence $\tilde{Z}=C(Z,R)$ is then defined coordinatewise by
\begin{equation*}
	\tilde{Z}_p
	=
	\begin{cases}
		\texttt{[MASK]}, & p\in \mathcal{U}(R) \text{ with probability } 0.8,\\
		Z_p^{\mathrm{rnd}}, & p\in \mathcal{U}(R) \text{ with probability } 0.1,\\
		Z_p, & p\in \mathcal{U}(R) \text{ with probability } 0.1,\\
		Z_p, & p\notin \mathcal{U}(R),
	\end{cases}
\end{equation*}
where $Z_p^{\mathrm{rnd}}$ is drawn uniformly from the BERT vocabulary. For a fixed clean sequence and a masked position, averaging the source loss over this corruption distribution yields the masked conditional objective used in pre-training. Therefore, the observed MLM loss is an unbiased Monte Carlo approximation to that population criterion. The 80/10/10 rule also prevents the source task from depending only on the literal \texttt{[MASK]} token and thereby mitigates the train-target mismatch between masked source inputs and unmasked target inputs.

In our framework, the source task therefore identifies the representation through $h_m(\tilde{Z})$ under the joint law induced by the clean source sequence and the corruption draw, whereas the economist uses $h_m(Z)$ in the target sample. A concrete route to a positive $\underline{w}_n$ comes from the fact that the clean sequence is observed with positive probability under the BERT corruption scheme. Indeed, each eligible token is left unchanged with probability $1-0.9\pi$, so
\begin{equation*}
	p_M
	=
	\Pr(\tilde{Z}=Z\mid Z)
	\geq
	(1-0.9\pi)^M
	>
	0.
\end{equation*}
On the event $\{\tilde{Z}=Z\}$, the pooled embedding is unchanged. Therefore, if the uncorrupted source distribution of $Z$ already satisfies the density ratio condition in \cref{asm:support}, then the corrupted-source comparison above also holds with a constant $\underline{w}_n$ equal to the clean-sample constant multiplied by $p_M^{1/2}$. This bound is likely to be too conservative because it uses only the event that no effective corruption occurs. The mean-pooling structure suggests that a larger constant may be available under additional stability conditions on the transformer states, since many tokens remain unchanged and the average embedding can still preserve the target-relevant variation even when some positions are corrupted. Thus, the corruption operator need not remove the components of the pooled embedding that matter for the downstream econometric task, and when this comparison holds, the sufficient transferability arguments above continue to apply with the corrupted source covariate $\tilde{Z}$ in place of the original unmasked sequence.

\section{Source Head Transformation for Ridge Regression}
\label{sec:source_head_transformation}
For any $B\in\mathbb{R}^{T\times K}$, define
\begin{align}
	R_B
	&:=
	(B'B)^{1/2}+I_K-B^+B,\nonumber\\
	B^{\mathrm{N}}
	&:=
	BR_B^{-1},\nonumber\\
	h_B^{\mathrm{N}}
	&:=
	R_Bh.
	\label{eq:source_transformation}
\end{align}
Under these linear transformations of the source head and embedding function, we obtain the same source output $g_{\alpha,B}\circ h = g_{\alpha,B^{\mathrm{N}}}\circ h_B^{\mathrm{N}}$ and the desirable properties $B^{\mathrm{N}}(B^{\mathrm{N}})'B^{\mathrm{N}}=B^{\mathrm{N}}$ and $(B^{\mathrm{N}})^+= (B^{\mathrm{N}})'$.\footnote{
If $\operatorname{rank}(B)=r\geq1$ and $B=U_BD_BV_B'$ is the compact singular-value decomposition, where $D_B\in\mathbb{R}^{r\times r}$ contains the positive singular values, then
\begin{align*}
	R_B
	&=
	V_BD_BV_B'+I_K-V_BV_B',\\
	R_B^{-1}
	&=
	V_BD_B^{-1}V_B'+I_K-V_BV_B',\\
	B^{\mathrm{N}}
	&=
	U_BV_B'.
\end{align*}
Thus, $R_B$ is invertible even when $B$ is rank deficient. The transformation preserves the composite source prediction and satisfies
\begin{align*}
	g_{\alpha,B^{\mathrm{N}}}\circ h_B^{\mathrm{N}}
	&=
	g_{\alpha,B}\circ h,\\
	B^{\mathrm{N}}(B^{\mathrm{N}})'B^{\mathrm{N}}
	&=
	B^{\mathrm{N}},\\
	(B^{\mathrm{N}})^+
	&=
	(B^{\mathrm{N}})'.
\end{align*}
Every positive singular value of $B^{\mathrm{N}}$ therefore equals one. 
}
Note that these transformations are simple and only require the source head matrix as well as the estimated embedding function $\check{h}$ from the source task. The source head matrix is typically available to the economist in the pre-trained source model.
After this transformation, if $\rank(B)\geq1$, then the transformed source head $B^{\mathrm{N}}$ satisfies $\|B^{\mathrm{N}}\|_{\mathrm{sp}}\leq\bar{\sigma}$ and $\sigma_{\min}^+(B^{\mathrm{N}})\geq\underline{\sigma}$ with $\bar{\sigma}=\underline{\sigma}=1$.

After the transformation, we can take the projected estimated embedding as $\check{h}^{\mathrm{proj}} := (\check{B}_{\so}^{\mathrm{N}})^+ \check{B}_{\so}^{\mathrm{N}} \check{h}^{\mathrm{N}}$. Thus, every occurrence of $\check{h}^{\mathrm{proj}}$ in the ridge estimator refers to this projection. We also form $\check{q}$ and $\check{Q}$ after transformation. If the transformation is used, the population and pre-trained composite source predictions, and hence the convergence rate in \cref{asm:convergence} (i), remain unchanged.

The transformation also preserves target-slope compatibility. Because $R_B$ is symmetric, if $\beta'=b'B$ before transformation, then $\beta^{\mathrm{N}}:=R_B^{-1}\beta$ satisfies
\begin{align*}
	\beta^{\mathrm{N}\prime}h_B^{\mathrm{N}}
	&=
	\beta'h,\\
	\beta^{\mathrm{N}\prime}
	&=
	b'B^{\mathrm{N}}.
\end{align*}
Thus, the transformation changes neither the target score nor the norm of the linking coefficient $b$.
When the transformation is applied to the population source matrix, we reuse $\beta_{\ta,n}$ for this transformed target slope.

\section{Technical Lemmas}
\begin{lemma}[Transferability Decomposition]
    \label{lem:transfer_var}
    Suppose that $\mathcal{F}_n^{\sub}$ is a class of square-integrable functions and that $\mathcal{H}_m$ is a class of measurable functions. For every $f_n\in L^2(P)$ and every $h\in\mathcal{H}_m$,
    \begin{align*}
        &\inf_{f\in\mathcal{F}_n^{\sub}}\|f \circ h - f_n \circ h_m\|_{P,2}^2\\
        &\quad=\mathbb{E}\left[\operatorname{Var}\left(f_n \circ h_m(Z)\mid h(Z)\right)\right]
        +\inf_{f\in\mathcal{F}_n^{\sub}}\mathbb{E}\left[\left(\mathbb{E}\left[f_n \circ h_m(Z)\mid h(Z)\right]-f\circ h(Z)\right)^2\right].
    \end{align*}
\end{lemma}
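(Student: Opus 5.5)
This is a conditional-variance (Pythagorean) decomposition carried out pointwise in $f$, followed by taking the infimum. Fix $h\in\mathcal{H}_m$ and write $U:=f_n\circ h_m(Z)$ and $M:=\mathbb{E}[U\mid h(Z)]$. Since $f_n\circ h_m$ is square-integrable by assumption, $M$ exists as an element of $L^2(P)$ and is $\sigma(h(Z))$-measurable. By Doob--Dynkin, $M=\phi(h(Z))$ for some measurable $\phi$.

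For any $f\in\mathcal{F}_n^{\sub}$, the random variable $f\circ h(Z)$ is $\sigma(h(Z))$-measurable. Square-integrability follows because $f\in\mathcal{F}_n^{\sub}$, which I read as $f\circ h\in L^2(P)$; if it is not, both sides are $+\infty$ for that $f$ and the identity is trivial. I then expand
\begin{equation*}
\|f\circ h-U\|_{P,2}^2=\mathbb{E}[(U-M)^2]+2\,\mathbb{E}[(U-M)(M-f\circ h(Z))]+\mathbb{E}[(M-f\circ h(Z))^2].
\end{equation*}
The cross term vanishes by the tower property. Conditioning on $h(Z)$, the factor $M-f\circ h(Z)$ is measurable, and $\mathbb{E}[U-M\mid h(Z)]=0$; Cauchy--Schwarz guarantees integrability of the product. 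The first term equals $\mathbb{E}[\operatorname{Var}(U\mid h(Z))]$ by the definition of conditional variance, $\operatorname{Var}(U\mid h(Z))=\mathbb{E}[(U-M)^2\mid h(Z)]$, combined with iterated expectations.

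The first term does not depend on $f$. Taking $\inf_{f\in\mathcal{F}_n^{\sub}}$ on both sides therefore moves the infimum onto the last term only, which gives exactly the stated identity. This holds for infima in $[0,\infty]$, without needing the infimum to be attained.

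The only delicate point is measure-theoretic bookkeeping: the expectation is under $P$, with $Z$ drawn from the relevant distribution. One must confirm that $f\circ h(Z)$ is $\sigma(h(Z))$-measurable, which requires $f$ to be Borel and $h$ measurable, and that $U\in L^2$ so that the conditional expectation and the orthogonality argument apply. Once this is in place, the argument is purely algebraic.
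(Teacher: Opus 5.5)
Your proposal is correct and follows essentially the same route as the paper. Both add and subtract $\mathbb{E}[f_n\circ h_m(Z)\mid h(Z)]$, eliminate the cross term by iterated expectations, identify the first term as $\mathbb{E}[\operatorname{Var}(f_n\circ h_m(Z)\mid h(Z))]$, and pull this $f$-free term out of the infimum; your integrability bookkeeping (Cauchy--Schwarz for the cross term, the $+\infty$ case, infima in $[0,\infty]$) fills in details the paper leaves implicit.
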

\begin{remark}
    The first term on the right-hand side represents the variation of $f_n \circ h_m(Z)$ that cannot be explained by $h(Z)$; the second term represents the best approximation of the conditional expectation of $f_n \circ h_m(Z)$ given $h(Z)$ by functions in $\mathcal{F}_n^{\sub}$.
	Our sufficient condition for transferability in \cref{thm:suf_transfer} ensures that both terms are small when $h\in\mathcal{H}_{\so,m}(\rho_{\so,m})$.
	Our necessary conditions in \cref{thm:nec_transfer,thm:nec_transfer_linear_pre} make the first term equal to zero for some $h\in\mathcal{H}_{\so,m}(0)$.
    \cref{ex:counter} illustrates the importance of the second term as it can be positive even when the first term is zero, which leads to failure of transferability.
\end{remark}

\begin{lemma}[Doob-Dynkin Factorization under Completion]
	\label{lem:doob_dynkin_completion}
	Let $(\Omega,\mathcal{A},P)$ be a probability space, let $U:\Omega\to\mathcal{U}$ be a random element taking values in a measurable space, and let $V:\Omega\to\mathbb{R}^{q}$ be a $q$-dimensional random vector. Let $\overline{\sigma(U)}^{P}$ be the $P$-completion of $\sigma(U)$, which is defined by
    $\overline{\sigma(U)}^{P}=\{A\cup N:A\in\sigma(U),N\subset B \text{ for some } B\in\sigma(U) \text{ with } P(B)=0\}$.
    If
	\begin{equation*}
		\sigma(V)\subseteq\overline{\sigma(U)}^{P},
	\end{equation*}
	then there exists a measurable function $\Gamma:\mathcal{U}\to\mathbb{R}^{q}$ such that $V=\Gamma\circ U$, $P$-a.s.
\end{lemma}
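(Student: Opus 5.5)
This is a Doob–Dynkin factorization statement, and the plan is to reduce it to the classical version by modifying $V$ on a null set. The classical Doob–Dynkin lemma says: if $V$ is $\sigma(U)$-measurable with values in $\mathbb{R}^q$, then $V=\Gamma\circ U$ for some measurable $\Gamma:\mathcal{U}\to\mathbb{R}^q$. Since $\sigma(V)$ is only contained in the completion, the strategy is to replace $V$ by a $\sigma(U)$-measurable version $\tilde V$ with $\tilde V=V$ $P$-a.s., and then apply the classical lemma to $\tilde V$.

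\textbf{Step 1 (indicators).} Work first with indicators of events in the completion. Take $A^\circ\in\overline{\sigma(U)}^{P}$ and write it as $A^\circ=A\cup N$ with $A\in\sigma(U)$ and $N\subset B$, where $B\in\sigma(U)$ and $P(B)=0$. Then $\mathds{1}_{A^\circ}$ and $\mathds{1}_{A}$ differ only on $B$, so they agree $P$-a.s.

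\textbf{Step 2 (general $V$).} Extend to $V$ coordinatewise. For each coordinate $V_k$, choose simple functions $V_k^{(j)}$ built from events $\{V_k\in I\}\in\sigma(V)\subseteq\overline{\sigma(U)}^P$ with $V_k^{(j)}\to V_k$ pointwise, for example dyadic truncations. By Step 1, each $V_k^{(j)}$ has a $\sigma(U)$-measurable simple version $W_k^{(j)}$ that agrees with it off a null set $B_{k,j}\in\sigma(U)$. Let $B^*=\bigcup_{k,j}B_{k,j}$; this is a countable union, so $B^*\in\sigma(U)$ and $P(B^*)=0$. Define
\[
\tilde V_k=\limsup_j W_k^{(j)}\,\mathds{1}_{(B^*)^c},
\]
with the convention that $\tilde V_k=0$ wherever the limsup is infinite. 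Then $\tilde V_k$ is $\sigma(U)$-measurable, and off $B^*$ it equals $\lim_j V_k^{(j)}=V_k$. Hence $\tilde V=V$ $P$-a.s.

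\textbf{Step 3 (factorization).} Apply the classical Doob–Dynkin lemma to $\tilde V$ to obtain a measurable $\Gamma$ with $\tilde V=\Gamma\circ U$ everywhere. Combined with Step 2, this gives $V=\Gamma\circ U$ $P$-a.s.

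\textbf{Main obstacle.} The only delicate point is the bookkeeping in Step 2. All the null sets must be chosen inside $\sigma(U)$, so that the modified limit stays $\sigma(U)$-measurable, and the approximating family must be countable so that the union of exceptional sets is still null. The remaining steps are routine.
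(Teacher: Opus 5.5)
Your proposal is correct and follows the same route as the paper: replace $V$ by a $\sigma(U)$-measurable version $\widetilde{V}$ that agrees with it $P$-a.s., then apply the classical Doob--Dynkin lemma to $\widetilde{V}$. The only difference is that you prove the version step by hand, via simple-function approximation with all exceptional null sets kept inside $\sigma(U)$. The paper instead cites Folland (Proposition 2.12) for that step and Dudley (Theorem 4.2.8) for the factorization, applied coordinate by coordinate.
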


\begin{lemma}[Properties of Loss Functions, \citealp{farrell2021deep}, Lemma 8]
	\label{lem:loss_property}
	Let $P$ denote the distribution of $Z$. In each setup below, suppose that
	$\|f\|_{\infty},\|a\|_{\infty},\|f^*\|_{\infty}\leq M$ for some constant $M>0$ and all candidate functions $f$ and $a$.
	For each setup, there exist constants $c_L>0$, $c_U>0$, and $C_\ell>0$ such that
	\begin{equation*}
		c_L\|f-f^*\|_{P,2}^2
		\leq
		\mathbb{E}[\ell(f(Z),Y)]-\mathbb{E}[\ell(f^*(Z),Y)]
		\leq
		c_U\|f-f^*\|_{P,2}^2
	\end{equation*}
	and
	\begin{equation*}
		|\ell(f(z),y)-\ell(a(z),y)|
		\leq
		C_\ell|f(z)-a(z)|
	\end{equation*}
	for all candidate functions $f$ and $a$, every $z\in\mathcal{Z}$, and every $y$ in the range of $Y$.
	\begin{enumerate}
		\renewcommand{\labelenumi}{(\alph{enumi})}
		\item \textbf{Least squares.}
		Suppose that $|Y|\leq M$ almost surely, $f^*(Z)=\mathbb{E}[Y\mid Z]$, and $\ell(u,y)=(1/2)(y-u)^2$.
		Then, we can take $c_L=c_U=1/2$ and $C_\ell=2M$.

		\item \textbf{Binary logistic regression.}
		Suppose that $Y\in\{0,1\}$ almost surely, $P(Y=1\mid Z)>0$ almost surely, and $P(Y=0\mid Z)>0$ almost surely. Let
		\begin{equation*}
			f^*(Z)=\log\left(\frac{P(Y=1\mid Z)}{P(Y=0\mid Z)}\right)
		\end{equation*}
		and $\ell(u,y)=-yu+\log(1+\exp(u))$.
		Then, we can take $c_L=1/(2(\exp(M)+\exp(-M)+2))$, $c_U=1/8$, and $C_\ell=1$.
	\end{enumerate}
\end{lemma}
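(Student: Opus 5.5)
The plan is to verify the two displays separately in each setup, working conditionally on $Z=z$ and then integrating over $P$. The lower/upper excess-risk bounds come from an exact or second-order expansion of the conditional risk $u\mapsto \mathbb{E}[\ell(u,Y)\mid Z=z]$ around its minimizer $f^*(z)$. The Lipschitz bound comes from bounding the derivative of $u\mapsto\ell(u,y)$ on $[-M,M]$. The standing sup-norm bound by $M$ on $f$, $a$ and $f^*$ keeps every evaluation point, including all intermediate points of a Taylor expansion, inside $[-M,M]$, because that interval is convex.

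\textbf{Least squares.} Write $(Y-f)^2=(Y-f^*)^2+2(Y-f^*)(f^*-f)+(f^*-f)^2$. Taking expectations and conditioning on $Z$ kills the cross term, since $\mathbb{E}[Y-f^*(Z)\mid Z]=0$. Hence the excess risk equals exactly $(1/2)\|f-f^*\|_{P,2}^2$, which gives $c_L=c_U=1/2$. For the Lipschitz bound we have
\[
|\ell(u,y)-\ell(v,y)|=\tfrac12|u-v|\,|2y-u-v|,
\]
and $|2y-u-v|\leq 4M$ because $|y|,|u|,|v|\leq M$. This yields $C_\ell=2M$.

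\textbf{Binary logistic.} Fix $z$ and let $p=P(Y=1\mid Z=z)$. The conditional risk is $L_z(u)=-pu+\log(1+e^u)$. Its derivative is $L_z'(u)=-p+\Lambda(u)$ with $\Lambda(u)=e^u/(1+e^u)$, and $L_z'$ vanishes at $u=f^*(z)=\log(p/(1-p))$. The second derivative is
\[
L_z''(u)=\Lambda(u)(1-\Lambda(u))=\frac{1}{e^u+e^{-u}+2},
\]
which is at most $1/4$ everywhere. On $[-M,M]$ it is at least $1/(e^M+e^{-M}+2)$, since $e^u+e^{-u}$ is increasing in $|u|$. Taylor's theorem with integral remainder gives
\[
L_z(u)-L_z(u^*)=(u-u^*)^2\int_0^1(1-t)L_z''(u^*+t(u-u^*))\,dt .
\]
Applying this with $u=f(z)$ and $u^*=f^*(z)$, both in $[-M,M]$, the remainder integral lies between $\tfrac12\cdot 1/(e^M+e^{-M}+2)$ and $\tfrac12\cdot\tfrac14$. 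Integrating over $z$ yields $c_L=1/(2(e^M+e^{-M}+2))$ and $c_U=1/8$. For the Lipschitz part, $\partial_u\ell(u,y)=-y+\Lambda(u)\in[-1,1]$ for $y\in\{0,1\}$, so the mean value theorem gives $C_\ell=1$.

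The only delicate step is the logistic lower bound. There one must use the boundedness of both $f$ and $f^*$ so that the curvature is bounded away from zero along the whole segment. The positivity assumptions on $P(Y=1\mid Z)$ and $P(Y=0\mid Z)$ ensure $f^*$ is finite, and the assumption $\|f^*\|_\infty\le M$ supplies the uniform bound. Everything else is direct computation.
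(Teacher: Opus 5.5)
Your proof is correct and takes essentially the same route as the paper: the paper gives no argument of its own beyond citing Farrell, Liang, and Misra (2021, Lemma 8), whose proof is the exact decomposition for least squares and a second-order Taylor expansion of the conditional logistic risk, with curvature $1/(e^{u}+e^{-u}+2)$ lying between $1/(e^{M}+e^{-M}+2)$ and $1/4$ on $[-M,M]$. Your derivative bounds likewise give the stated Lipschitz constants $2M$ and $1$.
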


\begin{lemma}[Properties of Loss Functions for Multinomial Logit]
	\label{lem:loss_property_multilogit}
	Consider the $T+1$-class classification problem with the multinomial-logit negative log-likelihood loss.
	Let $P$ denote the distribution of $Z$. Suppose that $Y\in\{1,\ldots,T+1\}$ almost surely and that $P(Y=t\mid Z)>0$ almost surely for every $t=1,\ldots,T+1$.
	Take class $T+1$ as the baseline category and let $f^*:\mathcal{Z}\to\mathbb{R}^T$ be a measurable version of the population log-odds vector $f^*=(f_1^*,\ldots,f_T^*)'$ defined by
	\begin{equation*}
		f_t^*(Z)
		=
		\log\left(\frac{P(Y=t\mid Z)}{P(Y=T+1\mid Z)}\right),
		\qquad t=1,\ldots,T,\quad P\text{-a.s.},
	\end{equation*}
	and define the multinomial logistic loss by
	\begin{equation*}
		\ell(u,y)
		=
		-\sum_{t=1}^T\mathds{1}\{y=t\}u_t
		+\log\left(1+\sum_{t=1}^T\exp(u_t)\right),
		\qquad u\in\mathbb{R}^T.
	\end{equation*}
	Suppose that, for some constant $M>0$ and all candidate functions $f$ and $a$,
	\begin{equation*}
		\sup_{z\in\mathcal{Z}}\|f(z)\|_2,
		\sup_{z\in\mathcal{Z}}\|a(z)\|_2,
		\sup_{z\in\mathcal{Z}}\|f^*(z)\|_2
		\leq
		M.
	\end{equation*}
	Then,
	\begin{equation*}
		c_L
		=
		\frac{1}{2(1+T\exp(M))^2},
		\qquad
		c_U
		=
		\frac{\exp(M)}{2(1+(T-1)\exp(-M)+\exp(M))},
		\qquad
		C_\ell
		=
		\sqrt{2}
	\end{equation*}
	satisfy
	\begin{equation*}
		c_L\|f-f^*\|_{P,2}^2
		\leq
		\mathbb{E}[\ell(f(Z),Y)]-\mathbb{E}[\ell(f^*(Z),Y)]
		\leq
		c_U\|f-f^*\|_{P,2}^2
	\end{equation*}
	and
	\begin{equation*}
		|\ell(f(z),y)-\ell(a(z),y)|
		\leq
		C_\ell\|f(z)-a(z)\|_2
	\end{equation*}
	for all candidate functions $f$ and $a$, every $z\in\mathcal{Z}$, and every $y\in\{1,\ldots,T+1\}$.
\end{lemma}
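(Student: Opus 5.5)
We prove \cref{lem:loss_property_multilogit} by a second-order Taylor expansion of the population risk around $f^*(Z)$, conditional on $Z$, together with explicit eigenvalue bounds on the softmax Hessian over the ball $\{u:\|u\|_2\le M\}$.

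\emph{Setup and first-order term.} Write $p_t(u)=\exp(u_t)/(1+\sum_{s=1}^T\exp(u_s))$ for $t=1,\ldots,T$ and $p_{T+1}(u)=1/(1+\sum_s\exp(u_s))$. Let $\pi_t(Z)=P(Y=t\mid Z)$. The conditional risk is
\[
L_Z(u)=\mathbb{E}[\ell(u,Y)\mid Z]=-\sum_{t=1}^T\pi_t(Z)u_t+\log\Bigl(1+\sum_{t=1}^T\exp(u_t)\Bigr).
\]
Its gradient is $\nabla L_Z(u)=p(u)-\pi(Z)$, restricted to the first $T$ coordinates. Its Hessian is $\nabla^2L_Z(u)=\operatorname{diag}(p(u))-p(u)p(u)'$, and it does not depend on $Z$. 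Since $p(f^*(Z))=\pi(Z)$ by the definition of the log-odds, the gradient vanishes at $f^*(Z)$.

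\emph{Integral remainder.} Put $u_s=f^*(Z)+s(f(Z)-f^*(Z))$. By Taylor's theorem with integral remainder,
\[
L_Z(f(Z))-L_Z(f^*(Z))=\int_0^1(1-s)\,\Delta'\nabla^2L_Z(u_s)\Delta\,ds,\qquad \Delta=f(Z)-f^*(Z).
\]
By convexity of the Euclidean ball, $\|u_s\|_2\le M$ for every $s$. Hence it suffices to bound $\mu_{\min}$ and $\mu_{\max}$ of $H(u)=\operatorname{diag}(p)-pp'$ uniformly over $\|u\|_2\le M$. Doing so gives pointwise bounds with factor $1/2$, and taking expectations over $Z$ yields the sandwich inequality.

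\emph{Eigenvalue bounds.} This is the main obstacle. For the lower bound, write
\[
v'Hv=\sum_t p_tv_t^2-\Bigl(\sum_t p_tv_t\Bigr)^2.
\]
Set $\bar p=\sum_{t\le T}p_t=1-p_{T+1}$. By Cauchy--Schwarz, $(\sum_t p_tv_t)^2\le \bar p\sum_t p_tv_t^2$. Therefore
\[
v'Hv\ge p_{T+1}\sum_t p_tv_t^2\ge p_{T+1}\min_t p_t\,\|v\|_2^2 .
\]
With $|u_t|\le M$ we get $p_{T+1}\ge 1/(1+T e^M)$ and $p_t\ge e^{-M}/(1+Te^{M})$. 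This yields a bound of order $e^{-M}/(1+Te^M)^2$. To reach the stated $c_L=1/(2(1+Te^M)^2)$, I would instead try the bound $v'Hv\ge p_{T+1}\sum_tp_tv_t^2$ combined with a sharper argument. If that fails, I would check whether the stated constant needs an extra $e^{-M}$. I expect the constant bookkeeping, rather than the structure of the argument, to be delicate here.

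For the upper bound, $v'Hv\le\sum_tp_tv_t^2\le\max_tp_t\|v\|_2^2$. Under $\|u\|_2\le M$, the quantity $\max_t p_t$ is maximized with one coordinate equal to $M$ and the rest pushed down, giving something like $e^M/(1+e^M+(T-1)e^{-M})$. Combined with the factor $1/2$, this matches the stated $c_U$. I would verify the extremal configuration by monotonicity of $p_t$ in $u_t$ and in the other coordinates.

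\emph{Lipschitz claim.} The function $u\mapsto\ell(u,y)$ has gradient $p(u)-e_y$, where $e_{T+1}=0$. Its norm satisfies $\|p-e_y\|_2^2\le\sum_t p_t^2+1\le 2$. The mean value theorem then gives $C_\ell=\sqrt2$.
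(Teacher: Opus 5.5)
Your Taylor-expansion skeleton, the upper bound, and the Lipschitz step are sound. The Lipschitz argument is essentially the paper's; for $c_L,c_U$ the paper passes from the Euclidean bound to $|f_t|,|f_t^*|\le M$ and cites Lemma~9 of \cite{farrell2021deep}. The genuine gap is the lower constant: you never show $\mu_{\min}(H(u))\ge(1+Te^M)^{-2}$.

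The bound you do prove, $p_{T+1}\min_{t\le T}p_t$, is too weak in substance, not just in bookkeeping. Take $T=2$ and $u=(-M/\sqrt2,M/\sqrt2)$, which lies in the ball. There it equals $e^{-M/\sqrt2}/(1+e^{-M/\sqrt2}+e^{M/\sqrt2})^2$, which is of order $e^{-3M/\sqrt2}$. The target $(1+2e^M)^{-2}$ is of order $e^{-2M}$, so for large $M$ your bound falls below it. The stated constant is nevertheless correct, and no extra $e^{-M}$ is needed. What you lose is the cross-term structure, which you discard when you apply Cauchy--Schwarz and then lower every $p_t$ to $\min_t p_t$.

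The missing step is to keep the centering. Note that $v'H(u)v$ is the variance under $p$ of the variable equal to $v_t$ on class $t\le T$ and to $0$ on class $T+1$. Hence $v'Hv=\min_{c\in\mathbb{R}}\{\sum_{t\le T}p_t(v_t-c)^2+p_{T+1}c^2\}$. With $\pi:=\min_{t\le T}p_t$ and $q:=p_{T+1}$,
\[
v'Hv\ \ge\ \min_{c\in\mathbb{R}}\bigl\{\pi\|v-c\mathbf{1}_T\|_2^2+qc^2\bigr\}
=\pi\|v\|_2^2-\frac{\pi^2(\mathbf{1}_T'v)^2}{\pi T+q}
\ \ge\ \frac{\pi q}{\pi T+q}\,\|v\|_2^2 .
\]
The map $\pi\mapsto\pi q/(\pi T+q)$ is increasing. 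With $Z=1+\sum_s e^{u_s}$ we have $\pi\ge e^{-M}/Z$, $q=1/Z$, and $Z\le 1+Te^M$. These give
\[
\mu_{\min}(H(u))\ \ge\ \frac{1}{Z(T+e^M)}\ \ge\ \frac{1}{(1+Te^M)(T+e^M)}\ \ge\ \frac{1}{(1+Te^M)^2}.
\]
The last inequality holds because $(1+Te^M)-(T+e^M)=(T-1)(e^M-1)\ge0$. The factor $\int_0^1(1-s)\,ds=1/2$ then gives exactly $c_L$.

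A smaller point on $c_U$: the configuration you describe, with one $u_t=M$ and the others pushed to $-M$, is not in the ball $\|u\|_2\le M$. Your number is still a valid bound, because the ball lies in the cube $[-M,M]^T$. On the cube, coordinatewise monotonicity gives $\max_t p_t=e^M/(1+e^M+(T-1)e^{-M})$. So argue over the cube, as the paper does.
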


\begin{remark}
	For fixed $M>0$, $c_L\asymp T^{-2}$ as $T\to\infty$, and this order cannot generally be improved. To see this, take $Z$ to be deterministic and $f^*(z)=\mathbf{0}_T\in\mathbb{R}^T$, so that $P(Y=t\mid Z)=1/(T+1)$ for $t=1,\ldots,T+1$. Let $\delta_T=MT^{-1/2}$ and choose $f(z)=\delta_T\mathbf{1}_T$, which satisfies $\|f-f^*\|_{P,2}^2=M^2$ and $\sup_{z\in\mathcal{Z}}\|f(z)\|_2=M$. Then,
	\begin{align*}
		\mathbb{E}[\ell(f(Z),Y)]-\mathbb{E}[\ell(f^*(Z),Y)]
		&=
		-\frac{T\delta_T}{T+1}
		+\log(1+T\exp(\delta_T))
		-\log(1+T)\\
		&=
		\frac{\delta_T}{T+1}
		+\log\left(1+\frac{\exp(-\delta_T)-1}{T+1}\right)\\
		&=
		\frac{\delta_T+\exp(-\delta_T)-1}{T+1}
		+o(T^{-2})\\
		&=
		\frac{M^2}{2T(T+1)}
		+o(T^{-2})
		\asymp
		T^{-2},
	\end{align*}
	where the third equality uses $\log(1+x)=x+O(x^2)$ and the fourth equality uses $\exp(-x)=1-x+x^2/2+o(x^2)$ as $x\to0$.
\end{remark}

For a norm $\|\cdot\|$ and $\varepsilon>0$, let $N\left(\varepsilon, \mathcal{F}, \|\cdot\|\right)$ denote the covering number of $\mathcal{F}$, i.e., the minimum number of balls of radius $\varepsilon$ with respect to the norm $\|\cdot\|$ needed to cover the set $\mathcal{F}$.
The following two lemmas are also proved as part of the proof of Theorem 7 in \cite{tripuraneni2020theory}, but we state them here for completeness and clarity.

\begin{lemma}[Properties of Products of Covering Numbers]
    \label{lem:entropy_property}
	Let $\mathcal{H}$ be a measurable class of functions with $K$-dimensional outputs. Suppose that $\mathcal{H}_k$ is the $k$-th coordinate projection of $\mathcal{H}$, i.e., $\mathcal{H}_k=\{h_k: h=(h_1,\ldots,h_K)\in\mathcal{H}\}$. Then, we have
    $$
    N\left(\varepsilon, \mathcal{H}, L^2(Q)\right) \leq \prod_{k=1}^K N\left(\varepsilon / K, \mathcal{H}_k, L^2(Q)\right).
    $$
\end{lemma}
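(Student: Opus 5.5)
The plan is to build a cover of $\mathcal{H}$ directly out of covers of its coordinate classes, and then bound the $L^2(Q)$ distance of a product-grid element coordinate by coordinate. For each $k=1,\ldots,K$, let $N_k:=N(\varepsilon/K,\mathcal{H}_k,L^2(Q))$ and take a minimal $\varepsilon/K$-cover $\{c_{k,1},\ldots,c_{k,N_k}\}$ of $\mathcal{H}_k$. If some $N_k=\infty$, the right-hand side is infinite and there is nothing to prove, so assume every $N_k<\infty$. Form the product grid
\begin{equation*}
\mathcal{C}:=\left\{(c_{1,j_1},\ldots,c_{K,j_K}): 1\leq j_k\leq N_k,\ k=1,\ldots,K\right\}.
\end{equation*}
Its cardinality is $\prod_{k=1}^K N_k$, which is exactly the right-hand side.

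Fix any $h=(h_1,\ldots,h_K)\in\mathcal{H}$. Each coordinate $h_k$ lies in $\mathcal{H}_k$, so we can choose $j_k$ with $\|h_k-c_{k,j_k}\|_{Q,2}\leq\varepsilon/K$. Let $c=(c_{1,j_1},\ldots,c_{K,j_K})\in\mathcal{C}$. By the paper's definition of the vector-valued $L^2(Q)$ norm,
\begin{equation*}
\|h-c\|_{Q,2}=\Bigl(\sum_{k=1}^K\|h_k-c_{k,j_k}\|_{Q,2}^2\Bigr)^{1/2}\leq\sum_{k=1}^K\|h_k-c_{k,j_k}\|_{Q,2}\leq K\cdot\varepsilon/K=\varepsilon .
\end{equation*}
(One could even get $\sqrt{K}\,\varepsilon/K$ here, but the cruder triangle-inequality bound suffices for the statement.) Hence $\mathcal{C}$ is an $\varepsilon$-cover of $\mathcal{H}$.

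The only subtlety, and the main point to check, is whether covering numbers require the centers to lie inside the class. The paper defines $N$ as the minimum number of balls of radius $\varepsilon$ needed, without requiring the centers to belong to $\mathcal{H}$. The elements of $\mathcal{C}$ may not belong to $\mathcal{H}$, but they are legitimate ball centers under this definition, so $N(\varepsilon,\mathcal{H},L^2(Q))\leq|\mathcal{C}|=\prod_{k=1}^K N_k$. If an internal cover were required instead, I would pass from radius $\varepsilon/2$ to $\varepsilon$ in the usual way, at the cost of constants. That is not needed under the paper's stated definition.
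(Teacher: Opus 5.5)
Your proof is correct and is essentially the paper's argument: you form a product grid of $\varepsilon/K$-covers of the coordinate classes and combine the coordinatewise errors through the triangle inequality, which gives an $\varepsilon$-cover of cardinality $\prod_{k=1}^K N_k$. You also point out that the grid points need not lie in $\mathcal{H}$, so the bound relies on covering numbers allowing external centers; the paper's proof uses this without comment.
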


\begin{lemma}[Properties of Composition of Covering Numbers]
    \label{lem:entropy_property_composite}
    Let $\mathcal{F}$ and $\mathcal{H}$ be measurable classes of functions. Suppose that $\mathcal{F}$ is Lipschitz with constant $L_\mathcal{F}$. Then, we have
    $$
    N\left(\varepsilon, \mathcal{F} \circ \mathcal{H}, L^2(Q)\right) \leq N\left(\varepsilon / (2L_\mathcal{F}), \mathcal{H}, L^2(Q)\right) \cdot \sup_{h'\in\mathcal{H}}N\left(\varepsilon /2, \mathcal{F}, L^2(Q_{h'})\right),
    $$
	where $Q_{h'}$ is the measure such that $Q_{h'}(A)=Q(h'^{-1}(A))$ for any measurable set $A$.
\end{lemma}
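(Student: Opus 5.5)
The plan is to first cover the vector-valued class $\mathcal{F}\circ\mathcal{H}$ by covering $\mathcal{H}$, and then, separately for each centre of that cover, to cover $\mathcal{F}$ under the pushforward measure attached to that centre. The two cover sizes then multiply.

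\textbf{Step 1: cover $\mathcal{H}$.} Let $\{h^{(1)},\ldots,h^{(N_1)}\}$ be a minimal $\varepsilon/(2L_{\mathcal{F}})$-cover of $\mathcal{H}$ in $L^2(Q)$, so $N_1=N(\varepsilon/(2L_{\mathcal{F}}),\mathcal{H},L^2(Q))$. For $K$-dimensional outputs, the norm is the vector $L^2(Q)$ norm defined in the Notation paragraph.

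\textbf{Step 2: cover $\mathcal{F}$ under each pushforward.} For each $j$, let $\{f^{(j,1)},\ldots,f^{(j,N_{2,j})}\}$ be a minimal $\varepsilon/2$-cover of $\mathcal{F}$ in $L^2(Q_{h^{(j)}})$. Then $N_{2,j}\le \sup_{h'\in\mathcal{H}}N(\varepsilon/2,\mathcal{F},L^2(Q_{h'}))$. The candidate cover of $\mathcal{F}\circ\mathcal{H}$ is $\{f^{(j,k)}\circ h^{(j)}\}$, which has at most $N_1\sup_{h'}N(\varepsilon/2,\mathcal{F},L^2(Q_{h'}))$ elements.

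\textbf{Step 3: verify the cover.} Fix $f\circ h$. Choose $j$ with $\|h-h^{(j)}\|_{Q,2}\le\varepsilon/(2L_{\mathcal{F}})$. Then choose $k$ with $\|f-f^{(j,k)}\|_{L^2(Q_{h^{(j)}})}\le\varepsilon/2$. By the triangle inequality,
\begin{equation*}
\|f\circ h-f^{(j,k)}\circ h^{(j)}\|_{Q,2}\le \|f\circ h-f\circ h^{(j)}\|_{Q,2}+\|f\circ h^{(j)}-f^{(j,k)}\circ h^{(j)}\|_{Q,2}.
\end{equation*}

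The first term is handled pointwise with the Lipschitz property:
\begin{equation*}
\|f(h(z))-f(h^{(j)}(z))\|_2\le L_{\mathcal{F}}\|h(z)-h^{(j)}(z)\|_2 .
\end{equation*}
Squaring and integrating gives $\|f\circ h-f\circ h^{(j)}\|_{Q,2}\le L_{\mathcal{F}}\|h-h^{(j)}\|_{Q,2}\le\varepsilon/2$. Here I use that the vector $L^2(Q)$ norm, as defined in the Notation paragraph, satisfies
\begin{equation*}
\|a\|_{Q,2}^2=\int\|a(z)\|_2^2\,dQ(z).
\end{equation*}
The second term equals $\|f-f^{(j,k)}\|_{L^2(Q_{h^{(j)}})}$ by the change-of-variables formula for the pushforward measure $Q_{h^{(j)}}=Q\circ (h^{(j)})^{-1}$, so it is at most $\varepsilon/2$. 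The two bounds sum to $\varepsilon$.

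\textbf{Main obstacle.} The delicate point is technical rather than conceptual. The cover of $\mathcal{F}$ must be taken under the pushforward of the specific centre $h^{(j)}$, not of the arbitrary $h$. That is why the bound involves a supremum over $h'$ and why the Lipschitz step is applied with the same $f$ on both sides. One should also make sure the centres can be taken inside $\mathcal{H}$ and $\mathcal{F}$; if external centres are allowed, I would either accept this or note that internal covers cost at most a factor of $2$ in radius, which is absorbed by the convention. Measurability of the pushforward norms follows from the measurability of the classes assumed in the statement.
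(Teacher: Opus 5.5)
Your proposal is correct and matches the paper's proof step for step. The paper also takes an $\varepsilon/(2L_{\mathcal{F}})$-cover $\{h_j\}$ of $\mathcal{H}$ in $L^2(Q)$, then an $\varepsilon/2$-cover of $\mathcal{F}$ in $L^2(Q_{h_j})$ for each centre, and combines them by the triangle inequality with the Lipschitz bound applied to the same $f$, giving at most $N(\varepsilon/(2L_{\mathcal{F}}),\mathcal{H},L^2(Q))\sup_{h'\in\mathcal{H}}N(\varepsilon/2,\mathcal{F},L^2(Q_{h'}))$ elements.

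The paper tacitly takes the $\mathcal{H}$-centres inside $\mathcal{H}$, which is the point you flag; only those centres need to be internal, since the $\mathcal{F}$-centres never enter the Lipschitz step. One correction to your remark: passing from external to internal centres would change the stated radii (to $\varepsilon/(4L_{\mathcal{F}})$ and $\varepsilon/4$), not be absorbed by the convention. This is harmless, however, for the lemma's downstream use in the entropy-integral bounds.
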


\begin{definition}[Uniform Entropy Integral]
    \label{def:entropy}
	For a class $\mathcal{F}$ of measurable functions with measurable envelope function $F$, define the {\it uniform entropy integral} by
    $$
	J\left(\delta, \mathcal{F} | F, L_2\right)=\sup _Q \int_0^\delta \sqrt{1+\log N\left(\varepsilon\|F\|_{Q, 2}, \mathcal{F}, L_2(Q)\right)} d \varepsilon,
    $$
    where the supremum is taken over all discrete probability measures $Q$ with $\|F\|_{Q, 2}>0$.
\end{definition}

Importantly, the uniform entropy integral depends only on the function class and its domain, not on a particular probability measure.

\begin{lemma}[Localized Maximal Inequality for Lipschitz Loss]
	\label{lem:maximal_lipschitz}
	Let $\mathcal{F}_n$ be a pointwise measurable class of measurable functions mapping into $\mathbb{R}^T$ with $\sup_v\|f(v)\|_2\leq M <\infty$ for every $f\in\mathcal{F}_n$, let $\mathcal{H}_m$ be a class of measurable functions, and let $h\mapsto f_{n,h}^* \in \mathcal{F}_n$ be a mapping.
	Let $m_{f,h}(z,y)=-\ell(f\circ h(z),y)$,
	$\mathbb{M}_n(f,h)=\mathbb{P}_n m_{f,h}$,
	$M_n(f,h)=P m_{f,h}$,
	$\mathbb{G}_n g=\sqrt{n}(\mathbb{P}_n-P)g$,
	and, for every $h\in\mathcal{H}_m$ and $\delta>0$, define
	$d_{n,h}(f,f')=\|f\circ h-f'\circ h\|_{P,2}$,
	$B_{n,h}(\delta)=\left\{f\in\mathcal{F}_n: d_{n,h}\left(f,f_{n,h}^*\right) < \delta\right\}$,
	and
	$\mathcal{M}_{n,h}(\delta)=\left\{m_{f,h}-m_{f_{n,h}^*,h}: f\in B_{n,h}(\delta)\right\}$.

	Suppose that $\ell:\mathbb{R}^T\times\mathcal{Y}\mapsto\mathbb{R}$ is Lipschitz continuous in the first argument with respect to $\|\cdot\|_2$ with constant $C_{\ell}$.
	Then,
	\begin{align*}
		&\sup_{h\in\mathcal{H}_m}\mathbb{E}_{P}\left[\sup_{f\in B_{n,h}(\delta)} \sqrt{n}\left|\left(\mathbb{M}_n-M_n\right)(f,h)-\left(\mathbb{M}_n-M_n\right)\left(f_{n,h}^*,h\right)\right|\right]\\
		&\quad \lesssim C_{\ell} M \left(J\left(\delta/M, \mathcal{F}_n | M, L_2\right)
		+\frac{M^2 J^2\left(\delta/M, \mathcal{F}_n | M, L_2\right)}{\delta^2 \sqrt{n}}\right).
	\end{align*}
\end{lemma}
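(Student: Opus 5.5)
The plan is to reduce the statement to a standard localized maximal inequality for a Lipschitz-transformed function class, applied conditionally on each fixed $h$. The $\sup_{h}$ sits outside the expectation, so it is enough to prove the bound for an arbitrary fixed $h\in\mathcal{H}_m$ with constants that do not depend on $h$. With $h$ fixed, the quantity inside is $\mathbb{E}\|\mathbb{G}_n\|_{\mathcal{M}_{n,h}(\delta)}$. This is the supremum of an empirical process indexed by the class $\mathcal{M}_{n,h}(\delta)=\{m_{f,h}-m_{f^*_{n,h},h}: f\in B_{n,h}(\delta)\}$. Pointwise measurability of $\mathcal{F}_n$ transfers to this class, so the supremum is measurable.

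\textbf{Step 1: envelope and variance.} By the Lipschitz property of $\ell$,
\[
|m_{f,h}(z,y)-m_{f^*_{n,h},h}(z,y)|\le C_\ell\|f(h(z))-f^*_{n,h}(h(z))\|_2\le 2C_\ell M .
\]
So $2C_\ell M$ is a constant envelope. The same inequality gives the $L^2(P)$ bound $\|m_{f,h}-m_{f^*_{n,h},h}\|_{P,2}\le C_\ell d_{n,h}(f,f^*_{n,h})<C_\ell\delta$ on $B_{n,h}(\delta)$.

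\textbf{Step 2: entropy transfer.} For any discrete $Q$ on $(z,y)$, the Lipschitz bound gives
\[
\|m_{f,h}-m_{f',h}\|_{Q,2}\le C_\ell\|f\circ h-f'\circ h\|_{Q,2}=C_\ell\|f-f'\|_{Q_h,2},
\]
where $Q_h$ is the pushforward of the $z$-marginal of $Q$ under $h$, as in \cref{lem:entropy_property_composite}. Since $Q_h$ is again discrete, it follows that
\[
N(\varepsilon\, 2C_\ell M,\mathcal{M}_{n,h}(\delta),L_2(Q))\le N(\varepsilon M,\mathcal{F}_n,L_2(Q_h))\le \sup_{Q'}N(\varepsilon M,\mathcal{F}_n,L_2(Q'))
\]
(halving radii if needed to keep centres inside the class). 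This gives $J(\eta,\mathcal{M}_{n,h}(\delta)\mid 2C_\ell M,L_2)\lesssim J(\eta,\mathcal{F}_n\mid M,L_2)$ uniformly in $h$. This is where $h$ disappears from the bound: because the uniform entropy integral takes a supremum over all discrete measures, composing with any $h$ cannot increase it. This is the point the remark after \cref{def:entropy} stresses.

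\textbf{Step 3: apply the maximal inequality.} I then invoke the maximal inequality with the uniform entropy integral and a bound on the $L^2(P)$ norm (van der Vaart and Wellner, Theorem 2.14.2 / Lemma 2.14.3 style; equivalently Chernozhukov et al.'s local maximal inequality). For a class $\mathcal{G}$ with envelope $G$ and $\sup_{g}Pg^2\le\sigma^2\le\|G\|^2_{P,2}$, it states
\[
\mathbb{E}\|\mathbb{G}_n\|_{\mathcal{G}}\lesssim J(r,\mathcal{G}\mid G)\,\|G\|_{P,2}+\frac{J^2(r,\mathcal{G}\mid G)\,\|G\|_{P,2}^2}{r^2\sqrt n}\cdot\frac{\|G\|_\infty}{\|G\|_{P,2}},\qquad r=\sigma/\|G\|_{P,2}.
\]
With $G=2C_\ell M$ and $\sigma=C_\ell\delta$, we get $r\asymp\delta/M$. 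Substituting the entropy comparison from Step 2, the first term becomes $C_\ell M\,J(\delta/M,\mathcal{F}_n\mid M)$. The second term becomes
\[
C_\ell M\,\frac{J^2\,M^2}{\delta^2\sqrt n},
\]
which is exactly the displayed bound.

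The main obstacle is bookkeeping rather than substance. First, the rescaling of the radius argument in $J$ (from $\delta/(2M)$ to $\delta/M$) must be handled using monotonicity and the fact that $J(c\eta)\le \max(1,c)J(\eta)$. Second, the maximal inequality must be checked in its bounded-envelope form, so that the $n^{-1/2}$ term carries $M^2/\delta^2$. Finally, uniformity of constants in $h$ is immediate from Step 2, which lets the supremum over $h$ pass through.
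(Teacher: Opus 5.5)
Your proposal is correct and follows the paper's proof essentially step for step. You fix $h$ and use the Lipschitz bound to dominate the covering numbers of $\mathcal{M}_{n,h}(\delta)$ by those of $\mathcal{F}_n$ under the discrete pushforward $Q_h$, so the supremum over discrete measures in the uniform entropy integral removes $h$. You then apply the local uniform-entropy maximal inequality of van der Vaart and Wellner (Theorem 2.14.2) with envelope $2C_\ell M$ and radius $\delta/(2M)$; the paper only differs in normalizing the class by $2C_\ell M$ first.

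One slip in your displayed inequality: the second term should be $J^2(r,\mathcal{G}\mid G)\,\|G\|_\infty/(r^2\sqrt{n})$, without the extra factor $\|G\|_{P,2}$. As written, the bound is not scale-equivariant and would give order $C_\ell^2M^4J^2/(\delta^2\sqrt{n})$. The corrected form yields exactly the $C_\ell M\cdot M^2J^2/(\delta^2\sqrt{n})$ you report.
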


\begin{lemma}[Localized Pointwise Measurable Class]
    \label{lem:pointwise_measurable}
    Let $T\geq1$ and let $\mathcal{F}$ be a pointwise measurable class of measurable functions mapping into $\mathbb{R}^T$ with measurable envelope function $F$ such that $\|f(x)\|_2\leq F(x)$ for every $f\in\mathcal{F}$ and $P F^2<\infty$.
    Then, the localized class $B(\delta)=\{f\in\mathcal{F}: \|f-f^*\|_{P,2} < \delta\}$ is also pointwise measurable for every $\delta>0$ and every measurable function $f^*$ mapping into $\mathbb{R}^T$ such that $\|f^*(x)\|_2\leq F(x)$.
\end{lemma}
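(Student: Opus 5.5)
The plan is to take the countable subset witnessing pointwise measurability of $\mathcal{F}$ and simply intersect it with the localized ball. Let $\mathcal{F}^0\subseteq\mathcal{F}$ be a countable subset such that every $f\in\mathcal{F}$ is the pointwise limit of some sequence $f_j\in\mathcal{F}^0$. I define
\begin{equation*}
    B^0(\delta):=\mathcal{F}^0\cap B(\delta),
\end{equation*}
which is countable and contained in $B(\delta)$. It then suffices to show that every $f\in B(\delta)$ is the pointwise limit of a sequence in $B^0(\delta)$.

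Fix $f\in B(\delta)$, so that $\|f-f^*\|_{P,2}<\delta$, and take $f_j\in\mathcal{F}^0$ with $f_j(x)\to f(x)$ for every $x$. The key step is to show $\|f_j-f^*\|_{P,2}\to\|f-f^*\|_{P,2}$ by dominated convergence. Pointwise, $\|f_j(x)-f^*(x)\|_2^2\to\|f(x)-f^*(x)\|_2^2$. Since each $f_j\in\mathcal{F}$ and $f^*$ share the envelope, we have the bound $\|f_j(x)-f^*(x)\|_2^2\leq 4F(x)^2$. This bound is integrable because $PF^2<\infty$, so the integrals converge. The sum over the $T$ coordinates in the definition of $\|\cdot\|_{P,2}$ is handled by working directly with the Euclidean norm.

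Because the inequality defining $B(\delta)$ is strict, there is some $J$ such that $\|f_j-f^*\|_{P,2}<\delta$ for all $j\geq J$. Hence $f_j\in B^0(\delta)$ for $j\geq J$. The tail sequence $(f_j)_{j\geq J}$ lies in $B^0(\delta)$ and still converges pointwise to $f$, which establishes pointwise measurability of $B(\delta)$.

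The only delicate point is ensuring that the approximating sequence eventually enters the ball rather than merely approaching its boundary. This is exactly where the strict inequality in the definition of $B(\delta)$ and the square-integrable envelope (needed for dominated convergence) are used. With a closed ball, a boundary function could fail to be approximated from inside, so I would emphasize that step.
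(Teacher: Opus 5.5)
Your proposal is correct and follows essentially the same route as the paper's proof: you take the countable dense subset $\mathcal{F}^0\cap B(\delta)$, apply dominated convergence with the integrable bound $4F^2$ to get $\|f_j-f^*\|_{P,2}\to\|f-f^*\|_{P,2}$, and use the strict inequality so that $f_j\in B(\delta)$ eventually.
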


\begin{lemma}[Covering Number for VC Function Classes, \citealp{van2023weak}, Theorem 2.6.7]
    \label{lem:vc_covering}
    For a VC-class of functions with measurable envelope function $F$ and $r \geq 1$, every probability measure $Q$ with $\|F\|_{Q, r}>0$ satisfies
    $$
    N\left(\varepsilon\|F\|_{Q, r}, \mathcal{F}, L_r(Q)\right) \lesssim \mathsf{V}(\mathcal{F})(16 e)^{\mathsf{V}(\mathcal{F})}\left(\frac{1}{\varepsilon}\right)^{r \mathsf{V}(\mathcal{F})},
    $$
    for $0<\varepsilon<1$.
\end{lemma}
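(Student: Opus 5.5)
The plan is to reduce the claim to the corresponding covering bound for VC classes of \emph{sets}, which is Theorem 2.6.4 of \cite{van2023weak}. That theorem states that for a VC class $\mathcal{C}$, every probability measure $\mu$, every $s\geq1$, and $0<\varepsilon<1$, one has $N(\varepsilon,\mathcal{C},L_s(\mu))\lesssim \mathsf{V}(\mathcal{C})(4e)^{\mathsf{V}(\mathcal{C})}(1/\varepsilon)^{s(\mathsf{V}(\mathcal{C})-1)}$. I will take it as given, since it is the genuinely combinatorial step (Haussler/Dudley packing via the Sauer lemma and a probabilistic extraction argument). Here $\mathsf{V}(\mathcal{F})$ is by definition the VC dimension of the class of subgraphs $\mathcal{C}_{\mathcal{F}}=\{C_f:f\in\mathcal{F}\}$, with $C_f=\{(x,t):t<f(x)\}$. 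It therefore suffices to transfer an $L_r(Q)$ distance between functions into an $L_1$ distance between their subgraphs under a suitable probability measure on $\mathcal{X}\times\mathbb{R}$.

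\emph{Step 1: pointwise bound.} For $f,g\in\mathcal{F}$ with $|f|,|g|\leq F$, note that $|f(x)-g(x)|=\int \bigl|\mathds{1}\{t<f(x)\}-\mathds{1}\{t<g(x)\}\bigr|\,dt$, and the integrand vanishes outside $[-F(x),F(x)]$. Combining this with $|f-g|^r\leq (2F)^{r-1}|f-g|$ gives
\begin{equation*}
\|f-g\|_{Q,r}^r\leq \int\!\!\int (2F(x))^{r-1}\mathds{1}\{|t|\leq F(x)\}\,\mathds{1}_{C_f\triangle C_g}(x,t)\,dt\,dQ(x).
\end{equation*}

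\emph{Step 2: change of measure.} Let $\mu$ be the measure with density $(2F(x))^{r-1}\mathds{1}\{|t|\leq F(x)\}$ with respect to $Q\otimes\mathrm{Leb}$. Its total mass is $2^r\|F\|_{Q,r}^r$, which is positive and finite. Normalizing to the probability measure $\bar\mu$ yields
\begin{equation*}
\|f-g\|_{Q,r}^r\leq 2^r\|F\|_{Q,r}^r\,\|\mathds{1}_{C_f}-\mathds{1}_{C_g}\|_{\bar\mu,1}.
\end{equation*}
Hence an $(\varepsilon/2)^r$-net of $\mathcal{C}_{\mathcal{F}}$ in $L_1(\bar\mu)$ induces an $\varepsilon\|F\|_{Q,r}$-net of $\mathcal{F}$ in $L_r(Q)$. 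The net is taken inside $\mathcal{F}$, using the standard factor-$2$ trick for internal nets if needed. This gives
\begin{equation*}
N(\varepsilon\|F\|_{Q,r},\mathcal{F},L_r(Q))\leq N\bigl((\varepsilon/2)^r,\mathcal{C}_{\mathcal{F}},L_1(\bar\mu)\bigr).
\end{equation*}

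\emph{Step 3: plug in the set bound.} Applying Theorem 2.6.4 with $s=1$ bounds the right-hand side by $\lesssim \mathsf{V}(\mathcal{F})(4e)^{\mathsf{V}(\mathcal{F})}(2/\varepsilon)^{r(\mathsf{V}(\mathcal{F})-1)}$. Since $2^{r(\mathsf{V}-1)}$ can be absorbed into the base (yielding $(16e)^{\mathsf{V}}$ in the case $r\leq 2$ relevant here; for general $r$ the constant adjusts, as in the textbook), and since $(1/\varepsilon)^{r(\mathsf{V}-1)}\leq(1/\varepsilon)^{r\mathsf{V}}$ for $\varepsilon<1$, the stated bound follows.

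The main obstacle is the set-class covering bound itself. If I had to prove it rather than cite it, I would follow Haussler's argument. Take a maximal $\varepsilon$-separated family $C_1,\ldots,C_N$ in $L_1(\bar\mu)$ and draw $k\asymp \varepsilon^{-1}\log N$ i.i.d.\ points from $\bar\mu$. With positive probability these points distinguish all pairs, so $N$ is at most the number of traces on $k$ points. The Sauer lemma bounds that number by $(ek/\mathsf{V})^{\mathsf{V}}$, and solving the resulting inequality for $N$ gives $N\lesssim(1/\varepsilon)^{\mathsf{V}}$ up to the stated constants.
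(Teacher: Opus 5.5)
Your proof is correct and is essentially the argument behind Theorem 2.6.7 of \cite{van2023weak}, which the paper simply cites. It runs through the subgraph/Fubini identity, a change to the probability measure with density proportional to $F(x)^{r-1}\mathds{1}\{|t|\le F(x)\}$ with respect to $Q\otimes\mathrm{Leb}$, and then the set-class bound of Theorem 2.6.4, giving the base $(2^r\cdot 4e)^{\mathsf{V}}$. That base is at most $(16e)^{\mathsf{V}}$ exactly when $r\le 2$, which is the only case the paper uses (in $J(\delta,\mathcal{F}\,|\,F,L_2)$).

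Two small corrections. First, the paper's $\mathsf{V}$ is the VC \emph{dimension}, not the VC index, so Theorem 2.6.4 must be quoted with exponent $s\mathsf{V}$ rather than $s(\mathsf{V}-1)$. The latter fails for half-lines under the uniform law on $[0,1]$, where $\mathsf{V}=1$ but $N(\varepsilon)\asymp 1/\varepsilon$. With the correct exponent, Step 3 lands on $(1/\varepsilon)^{r\mathsf{V}}$ directly rather than by relaxation. Second, your sampling-plus-Sauer sketch only yields $(\varepsilon^{-1}\log(1/\varepsilon))^{\mathsf{V}}$; the log-free bound needs Haussler's finer argument. This is harmless here because you cite Theorem 2.6.4 rather than rely on the sketch.
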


\begin{lemma}[Rate of Convergence] 
    \label{lem:rate_of_conv} 
    For each $n$, let $\mathcal{F}_n$ be a set of measurable functions, let $\mathcal{F}_n^*$ be a set of measurable functions, and let $\mathcal{H}_m$ be a set of measurable functions $h$.
    Assume that for every fixed $h\in\mathcal{H}_m$, there exist mappings
    $$
    h \mapsto f_{n, h} \in \mathcal{F}_n, \quad h \mapsto f_{n, h}^* \in \mathcal{F}_n^*.
    $$
    For each $n$, let $\mathbb{M}_n$ be a stochastic process and let $M_n$ be a deterministic function indexed by a set $(\mathcal{F}_n \cup\mathcal{F}_n^*)\times\mathcal{H}_m$.
    For every fixed $h\in \mathcal{H}_m$, let $f \mapsto d_{n,h}(f, f_{n,h}^*)$ be a deterministic function from $\mathcal{F}_n$ to $[0, \infty)$, and let $\underline{\delta}_n \geq 0$ and $\tau_n\in(0,1]$ be some sequences.
    We also assume that for every fixed $h\in\mathcal{H}_m$ and $\delta>0$, the set
    $\{f\in\mathcal{F}_n: d_{n,h}(f, f_{n,h}^*) < \delta\}$ is pointwise measurable.
	Let $\check{h} \in \mathcal{H}_m$ be some possibly random function estimated using a sample of size $m$ independent of $\mathbb{M}_n$, $M_n$, and $\hat{f}_{n,h}$ defined below.
    For every $\varepsilon>0$, let $\mathcal{H}_{m,\varepsilon}\subseteq\mathcal{H}_m$ be a subset of $\mathcal{H}_m$ such that $\limsup_{m\rightarrow\infty}P(\check{h}\notin\mathcal{H}_{m,\varepsilon})\leq\varepsilon$.
    Suppose that, for every $n$, every $\varepsilon>0$, and every positive $\delta\geq \tau_n^{-1}C_{\varepsilon,0}\underline{\delta}_n$ for some constant $C_{\varepsilon,0}>0$ depending on $\varepsilon$,
    \begin{equation}
        \sup_{h\in\mathcal{H}_{m,\varepsilon}}
        \sup_{f \in \mathcal{F}_n: \delta / 2<d_{n,h}\left(f, f_{n,h}^*\right) \leq \delta} M_n(f,h)-M_n\left(f_{n,h}^*,h\right)
        \lesssim_\varepsilon -\tau_n^2\delta^2,\label{eq:rate_of_conv_curvature}
    \end{equation}
     and
    \begin{equation}
        \sup_{h\in\mathcal{H}_{m,\varepsilon}}\mathbb{E} \left[\sup_{f \in \mathcal{F}_n: d_{n,h}\left(f, f_{n,h}^*\right) < \delta} \sqrt{n}\left|\left(\mathbb{M}_n-M_n\right)(f,h)-\left(\mathbb{M}_n-M_n\right)\left(f_{n,h}^*,h\right)\right|\right] 
        \lesssim_\varepsilon \phi_n(\delta)\label{eq:rate_of_conv_modulus}
    \end{equation}
    for some increasing functions $\phi_n:\left[\underline{\delta}_n, \infty\right) \rightarrow \mathbb{R}$ such that $\delta \mapsto \phi_n(\delta) / \delta^\xi$ is decreasing for some $\xi<2$. Let $\delta_n$ satisfy
    \begin{equation}
        \label{eq:rate_of_conv_delta_n}
        \phi_n\left(\delta_n/\tau_n\right) \lesssim_\varepsilon \sqrt{n} \delta_n^2, \quad \delta_n^2 \gtrsim_\varepsilon \sup_{h\in\mathcal{H}_{m,\varepsilon}}M_n\left(f_{n,h}^*,h\right)-M_n\left(f_{n,h},h\right), \quad \delta_n \gtrsim_\varepsilon \underline{\delta}_n.
    \end{equation}
    Let the sequence $\hat{f}_{n,h}$ take values in $\mathcal{F}_n$ for every fixed $h\in\mathcal{H}_m$.
    If 
    $\mathbb{M}_n\left(\hat{f}_{n,\check{h}},\check{h}\right) \geq \mathbb{M}_n\left(f_{n,\check{h}},\check{h}\right)-O_P\left(\delta_n^2\right)$ is satisfied, then $d_{n,\check{h}}\left(\hat{f}_{n,\check{h}}, f_{n,\check{h}}^*\right)=O_P\left(\delta_n/\tau_n\right)$.
\end{lemma}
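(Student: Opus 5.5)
This is a uniform-in-$h$ version of the standard peeling argument, e.g. van der Vaart and Wellner's Theorem 3.2.5, adapted to a random $h$ that is independent of the target sample. The plan has three stages.

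First, reduce to a fixed $h$. Fix $\varepsilon>0$ and work on the event $\{\check h\in\mathcal{H}_{m,\varepsilon}\}$, whose complement has limiting probability at most $\varepsilon$. Since $\check h$ is built from the source sample, which is independent of $\mathbb{M}_n$, $M_n$ and $\hat f_{n,h}$, we condition on $\check h=h$. Conditionally, the target-sample quantities have the same law as for a deterministic $h$. So it suffices to bound
\[
P\bigl(d_{n,h}(\hat f_{n,h},f^*_{n,h})>2^{M}\delta_n/\tau_n\bigr)
\]
uniformly over $h\in\mathcal{H}_{m,\varepsilon}$, plus the probability that the optimization slack exceeds $K\delta_n^2$. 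Integrating over the law of $\check h$ then gives the unconditional statement. Measurability of the suprema holds because the localized sets are assumed pointwise measurable.

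Second, run the peeling argument for fixed $h$. Define shells
\[
S_{j,h}=\{f: 2^{j-1}\delta_n/\tau_n<d_{n,h}(f,f^*_{n,h})\le 2^j\delta_n/\tau_n\},\qquad j> M,
\]
and choose $M$ large enough that $2^{M}\delta_n/\tau_n\ge \tau_n^{-1}C_{\varepsilon,0}\underline\delta_n$; this is possible because $\delta_n\gtrsim\underline\delta_n$. Suppose $\hat f\in S_{j,h}$ and the near-maximization condition holds with slack $K\delta_n^2$. Then
\[
\sup_{f\in S_{j,h}}\mathbb{M}_n(f,h)-\mathbb{M}_n(f^*_{n,h},h)\ \ge\ -K\delta_n^2-\bigl(M_n(f^*_{n,h},h)-M_n(f_{n,h},h)\bigr)\ \ge\ -C\delta_n^2,
\]
using the second condition in \eqref{eq:rate_of_conv_delta_n} to absorb the gap between $f_{n,h}$ and $f^*_{n,h}$. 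The curvature condition \eqref{eq:rate_of_conv_curvature} with $\delta=2^j\delta_n/\tau_n$ gives $M_n(f,h)-M_n(f^*_{n,h},h)\lesssim -2^{2j}\delta_n^2$ on the shell. For $j>M$ with $M$ large, $2^{2j}\delta_n^2$ dominates $C\delta_n^2$. Hence the centered empirical process must satisfy
\[
\sup_{f\in S_{j,h}}\bigl|(\mathbb{M}_n-M_n)(f,h)-(\mathbb{M}_n-M_n)(f^*_{n,h},h)\bigr|\gtrsim 2^{2j}\delta_n^2 .
\]

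Third, sum over shells. By Markov's inequality and the modulus bound \eqref{eq:rate_of_conv_modulus}, the probability of that event is at most
\[
\frac{\phi_n(2^j\delta_n/\tau_n)}{\sqrt n\,2^{2j}\delta_n^2}.
\]
Since $\phi_n(\delta)/\delta^\xi$ is decreasing, $\phi_n(2^j\delta_n/\tau_n)\le 2^{j\xi}\phi_n(\delta_n/\tau_n)$. The first condition in \eqref{eq:rate_of_conv_delta_n} then bounds each term by a constant times $2^{j(\xi-2)}$. The sum over $j>M$ is a geometric tail that tends to zero as $M\to\infty$, with constants depending only on $\varepsilon$ and uniform in $h\in\mathcal{H}_{m,\varepsilon}$. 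Letting $\varepsilon\downarrow0$ yields $O_P(\delta_n/\tau_n)$.

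The main obstacle is the bookkeeping of $\tau_n$. The curvature is stated as $-\tau_n^2\delta^2$ while the shells are scaled by $\delta_n/\tau_n$. Checking this, $\tau_n^2(2^j\delta_n/\tau_n)^2=2^{2j}\delta_n^2$, so the scaling is consistent. The second delicate point is ensuring that the conditioning on $\check h$ is legitimate, which uses independence together with a Fubini argument on outer probabilities.
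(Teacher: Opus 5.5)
Your plan uses the same peeling scheme as the paper, but the displayed inequality in your second stage does not hold as written. The near-maximization condition gives $\mathbb{M}_n(\hat f_{n,h},h)-\mathbb{M}_n(f^*_{n,h},h)\geq -K\delta_n^2+\bigl\{\mathbb{M}_n(f_{n,h},h)-\mathbb{M}_n(f^*_{n,h},h)\bigr\}$, and the bracketed term is random. Writing $\mathbb{G}_n=\sqrt{n}(\mathbb{M}_n-M_n)$, it equals $M_n(f_{n,h},h)-M_n(f^*_{n,h},h)+n^{-1/2}\bigl\{\mathbb{G}_n(f_{n,h},h)-\mathbb{G}_n(f^*_{n,h},h)\bigr\}$. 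You kept only the population part. The second condition in \cref{eq:rate_of_conv_delta_n} bounds that part by $C\delta_n^2$, but it says nothing about the fluctuation term. It also does not directly say that $f_{n,h}$ is close to $f^*_{n,h}$ in $d_{n,h}$, which is what you would need to control that fluctuation. So your reduction to the event $\sup_{f\in S_{j,h}}|\cdots|\gtrsim 2^{2j}\delta_n^2$ is unjustified whenever $f_{n,h}\neq f^*_{n,h}$, and the lemma is stated in exactly that generality.

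The paper closes this gap by centering at $f_{n,h}$. It bounds $\mathbb{E}\sup_{f\in S_{j,h}}|\mathbb{G}_n(f,h)-\mathbb{G}_n(f_{n,h},h)|$ by the sum of $\mathbb{E}\sup_{d_{n,h}(f,f^*_{n,h})\le 2^j\delta_n/\tau_n}|\mathbb{G}_n(f,h)-\mathbb{G}_n(f^*_{n,h},h)|$ and $\mathbb{E}|\mathbb{G}_n(f^*_{n,h},h)-\mathbb{G}_n(f_{n,h},h)|$. For the second term it first shows $\sup_{h\in\mathcal{H}_{m,\varepsilon}}d_{n,h}(f_{n,h},f^*_{n,h})\lesssim_\varepsilon \delta_n/\tau_n$. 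Either that distance is at most $\tau_n^{-1}C_{\varepsilon,0}\underline{\delta}_n\lesssim_\varepsilon\delta_n/\tau_n$, or you apply the curvature condition at radius $\delta=d_{n,h}(f_{n,h},f^*_{n,h})$ (legitimate because $f_{n,h}\in\mathcal{F}_n$). Combined with the second condition on $\delta_n$, this gives $\tau_n^2d_{n,h}(f_{n,h},f^*_{n,h})^2\lesssim_\varepsilon\delta_n^2$. The modulus bound and the monotonicity of $\phi_n(\delta)/\delta^\xi$ then yield $\mathbb{E}|\mathbb{G}_n(f^*_{n,h},h)-\mathbb{G}_n(f_{n,h},h)|\lesssim_\varepsilon\phi_n(\delta_n/\tau_n)\lesssim_\varepsilon\sqrt{n}\delta_n^2$. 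After Markov's inequality this contributes a summable $2^{-2j}$ per shell. Add this step, and take the radius in the modulus as $2^{j+1}\delta_n/\tau_n$ since the modulus is stated for open balls; your argument then goes through. In the paper's applications (e.g.\ \cref{thm:target_convergence}) one has $f_{n,h}=f^*_{n,h}$, and there your version is already complete.
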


\begin{remark}
    The pointwise-measurability assumptions are required to ensure the measurability of the supremum within the expectations. We can eliminate this requirement by using outer expectations and probabilities instead (see \citealp{van2023weak}, Sections 1.2 and 2.3).
\end{remark}

\begin{lemma}[Rate of Convergence for Regularized Estimator] 
    \label{lem:rate_of_conv_regu}
    For each $n$, let $\mathcal{F}_n$ be a set of measurable functions, let $\mathcal{F}_n^*$ be a set of measurable functions, and let $\mathcal{H}_m$ be a set of measurable functions $h$.
    Assume that for every fixed $h\in\mathcal{H}_m$, there exist mappings
    $$
    h \mapsto f_{n, h} \in \mathcal{F}_n, \quad h \mapsto f_{n, h}^* \in \mathcal{F}_n^*.
    $$
    For each $n$, let $\mathbb{M}_n$ be a stochastic process and let $M_n$ be a deterministic function indexed by a set $(\mathcal{F}_n \cup\mathcal{F}_n^*)\times\mathcal{H}_m$.
    For every fixed $h\in \mathcal{H}_m$, let $f \mapsto d_{n,h}(f, f_{n,h}^*)$ be a deterministic function from $\mathcal{F}_n$ to $[0, \infty)$, and let $\underline{\delta}_n \geq 0$, $\tau_n\in(0,1]$, and $\lambda_{n}>0$ be some sequences.
    Let $\mathcal{J}_n:\mathcal{F}_n\rightarrow[0,\infty)$ be some complexity measure of $f$.
    We also assume that for every fixed $h\in\mathcal{H}_m$ and $\delta>0$, the set
    $\{f\in\mathcal{F}_n: d_{n,h}(f, f_{n,h}^*) < \delta, \mathcal{J}_n(f)<\tau_n \delta / \lambda_{n}\}$ is pointwise measurable.
	Let $\check{h} \in \mathcal{H}_m$ be some possibly random function estimated using a sample of size $m$ that is independent of $\mathbb{M}_n$, $M_n$, $\hat{f}_{n,h}$, and $\hat{\lambda}_{n,h}$ defined below.
    For every $\varepsilon>0$, let $\mathcal{H}_{m,\varepsilon}\subseteq\mathcal{H}_m$ be a subset of $\mathcal{H}_m$ such that $\limsup_{m}P(\check{h}\notin\mathcal{H}_{m,\varepsilon})\leq\varepsilon$.
    Suppose that, for every $n$, every $\varepsilon>0$, and every positive $\delta\geq \tau_n^{-1}C_{\varepsilon,0}\underline{\delta}_n$ for some constant $C_{\varepsilon,0}>0$ depending on $\varepsilon$,
    \begin{equation}
        \sup_{h\in\mathcal{H}_{m,\varepsilon}} \sup_{f \in \mathcal{F}_n: \delta / 2<d_{n,h}\left(f, f_{n,h}^*\right) \leq \delta} M_n(f,h)-M_n\left(f_{n,h}^*,h\right) \lesssim_\varepsilon -\tau_n^2\delta^2,
        \label{eq:rate_of_conv_regu_curvature}
    \end{equation}
     and
    \begin{equation}
        \sup_{h\in\mathcal{H}_{m,\varepsilon}}\mathbb{E} \left[\sup_{\substack{f \in \mathcal{F}_n: d_{n,h}\left(f, f_{n,h}^*\right) < \delta\\ \mathcal{J}_n(f)<\tau_n \delta / \lambda_{n}}} \sqrt{n}\left|\left(\mathbb{M}_n-M_n\right)(f,h)-\left(\mathbb{M}_n-M_n\right)\left(f_{n,h}^*,h\right)\right|\right] 
        \lesssim_\varepsilon \phi_n(\delta)\label{eq:rate_of_conv_regu_modulus}
    \end{equation}
    for some increasing functions $\phi_n:\left[\underline{\delta}_n, \infty\right) \rightarrow \mathbb{R}$ such that $\delta \mapsto \phi_n(\delta) / \delta^\xi$ is decreasing for some $\xi<2$. 
    Let $\delta_n$ satisfy
    \begin{align}
        &\phi_n\left(\delta_n/\tau_n\right) \lesssim_\varepsilon \sqrt{n} \delta_n^2, \quad
        \delta_n^2 \gtrsim_\varepsilon \sup_{h\in\mathcal{H}_{m,\varepsilon}} M_n\left(f_{n,h}^*,h\right)-M_n\left(f_{n,h},h\right), \quad
        \delta_n \gtrsim_\varepsilon \underline{\delta}_n,\nonumber\\
		& \qquad \sup_{h\in\mathcal{H}_{m,\varepsilon}} \lambda_{n} \mathcal{J}_n\left(f_{n,h}\right) < \delta_n, \quad
        \sup_{h\in\mathcal{H}_{m,\varepsilon}} d_{n,h}\left(f_{n,h}, f_{n,h}^*\right)<\delta_n/\tau_n.\label{eq:rate_of_conv_regu_delta_n}
    \end{align}
    Let the sequence $\hat{f}_{n,h}$ take values in $\mathcal{F}_n$ for every fixed $h\in\mathcal{H}_m$ and $\hat{\lambda}_{n,h}$ be a sequence of nonnegative random variables for every fixed $h\in\mathcal{H}_m$.
    If
    \begin{equation}
        \label{eq:rate_of_conv_regu_approx_max}
        \mathbb{M}_n\left(\hat{f}_{n,\check{h}},\check{h}\right)-\hat{\lambda}_{n,\check{h}}^2 \mathcal{J}_n^2\left(\hat{f}_{n,\check{h}}\right) \geq \mathbb{M}_n\left(f_{n,\check{h}},\check{h}\right)-\hat{\lambda}_{n,\check{h}}^2 \mathcal{J}_n^2\left(f_{n,\check{h}}\right)-O_P\left(\delta_{n}^2\right),
    \end{equation}
	then, on the event $\{\hat{\lambda}_{n,\check{h}}\geq\lambda_{n}\}$, we have
    $d_{n,\check{h}}\left(\hat{f}_{n,\check{h}}, f_{n,\check{h}}^*\right) = O_P(1)\times\left(\delta_n/\tau_n + \hat{\lambda}_{n,\check{h}} \mathcal{J}_n\left(f_{n,\check{h}}\right)/\tau_n\right)$.
\end{lemma}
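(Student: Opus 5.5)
The plan is to adapt the peeling argument behind \cref{lem:rate_of_conv} (the van der Vaart--Wellner rate theorem) to the penalized criterion. The twist is to peel in a \emph{combined} distance that controls both $d_{n,h}$ and the complexity $\mathcal{J}_n$. This way the modulus bound \cref{eq:rate_of_conv_regu_modulus}, which is only assumed on the restricted class $\{d_{n,h}<\delta,\ \mathcal{J}_n<\tau_n\delta/\lambda_n\}$, is applicable on every shell.

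\textbf{Reduction to a fixed $h$.} Fix $\varepsilon>0$. Since $\limsup_m P(\check h\notin\mathcal{H}_{m,\varepsilon})\leq\varepsilon$, it suffices to work on $\{\check h\in\mathcal{H}_{m,\varepsilon}\}$. Because $\check h$ is independent of $(\mathbb{M}_n,M_n,\hat f_{n,h},\hat\lambda_{n,h})$, I will condition on $\check h=h$ and bound the resulting probability uniformly over $h\in\mathcal{H}_{m,\varepsilon}$. All assumed bounds are uniform in $h\in\mathcal{H}_{m,\varepsilon}$, so Fubini then gives the unconditional statement.

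\textbf{Setup for fixed $h$.} Write $f^*=f_{n,h}^*$, $\hat f=\hat f_{n,h}$, $\hat\lambda=\hat\lambda_{n,h}$, and let
\begin{equation*}
\eta:=\delta_n/\tau_n+\hat\lambda\mathcal{J}_n(f_{n,h})/\tau_n .
\end{equation*}
Define the combined distance
\begin{equation*}
D(f):=d_{n,h}(f,f^*)+\lambda_n\mathcal{J}_n(f)/\tau_n .
\end{equation*}
Work on the event $\{\hat\lambda\geq\lambda_n\}$ and on the event that the $O_P(\delta_n^2)$ slack in \cref{eq:rate_of_conv_regu_approx_max} is at most $K\delta_n^2$. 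For integers $j$ with $2^j\geq C$, consider the shells $S_j=\{f:2^{j-1}\eta<D(f)\leq2^j\eta\}$. If $\hat f\in S_j$, the basic inequality \cref{eq:rate_of_conv_regu_approx_max} rearranges to
\begin{align*}
0\leq{}&\sup_{f\in S_j}\Bigl\{\bigl[M_n(f,h)-M_n(f^*,h)\bigr]-\lambda_n^2\mathcal{J}_n^2(f)+n^{-1/2}\bigl|\mathbb{G}_n(m_f-m_{f^*})\bigr|\Bigr\}\\
&+\bigl[M_n(f^*,h)-M_n(f_{n,h},h)\bigr]+n^{-1/2}\bigl|\mathbb{G}_n(m_{f_{n,h}}-m_{f^*})\bigr|+\hat\lambda^2\mathcal{J}_n^2(f_{n,h})+K\delta_n^2 .
\end{align*}
Here I used $\hat\lambda^2\mathcal{J}_n^2(f)\geq\lambda_n^2\mathcal{J}_n^2(f)$.

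\textbf{Bounding the terms on a shell.} On $S_j$ one has $d_{n,h}<2^j\eta$ and $\mathcal{J}_n<\tau_n2^j\eta/\lambda_n$, so the modulus bound with $\delta=2^j\eta$ applies to the empirical-process term. For the deterministic part, either $d_{n,h}\geq2^{j-2}\eta$ or $\lambda_n\mathcal{J}_n/\tau_n\geq2^{j-2}\eta$. In the first case, summing the curvature condition \cref{eq:rate_of_conv_regu_curvature} over dyadic $d$-annuli gives $M_n(f)-M_n(f^*)\lesssim-\tau_n^24^j\eta^2$. This is valid because $2^{j}\eta\gtrsim\delta_n/\tau_n\gtrsim\underline\delta_n/\tau_n$, and curvature on all annuli down to $C\underline\delta_n/\tau_n$ makes $M_n(f)-M_n(f^*)\leq0$ below that level. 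In the second case, $-\lambda_n^2\mathcal{J}_n^2(f)\leq-\tau_n^24^{j-2}\eta^2$, and $M_n(f)-M_n(f^*)\leq0$. Either way the drift is $\lesssim-\tau_n^24^j\eta^2$.

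The remaining terms are handled as follows:
\begin{itemize}
\item The approximation term is at most $\delta_n^2\leq\tau_n^2\eta^2$.
\item The penalty at $f_{n,h}$ satisfies $\hat\lambda^2\mathcal{J}_n^2(f_{n,h})\leq\tau_n^2\eta^2$ by the choice of $\eta$.
\item The slack $K\delta_n^2$ is $\leq K\tau_n^2\eta^2$.
\item The empirical term at $f_{n,h}$ is controlled by the modulus at scale $\delta_n/\tau_n$. This uses $d_{n,h}(f_{n,h},f^*)<\delta_n/\tau_n$ and $\lambda_n\mathcal{J}_n(f_{n,h})<\delta_n$ from \cref{eq:rate_of_conv_regu_delta_n}. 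It is therefore $O_P(\delta_n^2)$ by $\phi_n(\delta_n/\tau_n)\lesssim\sqrt n\delta_n^2$.
\end{itemize}
For $j$ large these are all dominated by a fraction of $\tau_n^24^j\eta^2$. Hence $\hat f\in S_j$ forces $\sup_{S_j}n^{-1/2}|\mathbb{G}_n(\cdot)|\gtrsim\tau_n^24^j\eta^2$.

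\textbf{Summing the shells.} Markov's inequality and the modulus give
\begin{equation*}
P(\hat f\in S_j)\lesssim\frac{\phi_n(2^j\eta)}{\sqrt n\,\tau_n^24^j\eta^2}.
\end{equation*}
Using $\phi_n(c\delta)\leq c^\xi\phi_n(\delta)$ for $c\geq1$, $\eta\geq\delta_n/\tau_n$, and $\phi_n(\delta_n/\tau_n)\lesssim\sqrt n\delta_n^2$, this is $\lesssim2^{j(\xi-2)}$. The right-hand side is summable because $\xi<2$, so the tail $\sum_{2^j\geq C}$ can be made below $\varepsilon$ by taking $C$ large. This yields
\begin{equation*}
d_{n,h}(\hat f,f^*)\leq D(\hat f)\leq C\eta=C\bigl(\delta_n/\tau_n+\hat\lambda\mathcal{J}_n(f_{n,h})/\tau_n\bigr)
\end{equation*}
with probability at least $1-O(\varepsilon)$, uniformly in $h$, which is the claim.

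\textbf{Main obstacle.} The step I expect to be delicate is that $\eta$ is random, through $\hat\lambda$, while the modulus and Markov bounds need deterministic radii. I would handle this by peeling on the deterministic grid $2^j\delta_n/\tau_n$ instead. The random part $\hat\lambda\mathcal{J}_n(f_{n,h})$ would then be absorbed by noting that on shells with $2^j\delta_n/\tau_n<\hat\lambda\mathcal{J}_n(f_{n,h})/\tau_n$ the conclusion already holds trivially. The pointwise-measurability assumption on the restricted class is what makes the suprema in the Markov step measurable.
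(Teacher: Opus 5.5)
Your architecture matches the paper's, which follows Addendum 3.4.7 of van der Vaart and Wellner. You peel in a combined $d_{n,h}$/$\mathcal{J}_n$ criterion, use $\hat\lambda_{n,h}\ge\lambda_n$ so the penalty controls $\mathcal{J}_n$ at the deterministic scale required by \cref{eq:rate_of_conv_regu_modulus}, and sum $2^{j(\xi-2)}$. Putting the deterministic $\lambda_n$ into $D$, instead of the paper's random $\lambda=\hat\lambda$ in the shells, is a harmless variant.

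However, one step fails as written. In your second case you assert $M_n(f,h)-M_n(f^*_{n,h},h)\le 0$ with no lower bound on $d_{n,h}(f,f^*_{n,h})$. The curvature hypothesis \cref{eq:rate_of_conv_regu_curvature} says nothing about $f$ with $d_{n,h}(f,f^*_{n,h})\le\tau_n^{-1}C_{\varepsilon,0}\underline{\delta}_n/2$, and nothing else in the lemma controls the sign there. In the ridge application, for instance, $f^*_{n,h}=f^{\bullet}_{n,h}$ is not a maximizer of $M_n(\cdot,h)$.

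In effect you are proving the stronger claim $D(\hat f)=O_P(\eta)$, which also bounds $\lambda_n\mathcal{J}_n(\hat f)/\tau_n$, and that claim is false under the hypotheses. Take $f_{n,h}=f^*_{n,h}$ with $\mathcal{J}_n(f_{n,h})=0$, no noise, $\hat\lambda=\lambda_n$, and a competitor $f_1$ with $d_{n,h}(f_1,f^*_{n,h})=0$ and $M_n(f_1,h)-M_n(f^*_{n,h},h)>\lambda_n^2\mathcal{J}_n^2(f_1)\gg\delta_n^2$. This violates no assumption, and the penalized criterion selects $f_1$.

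The repair is to peel only on the event the lemma is about, $\{d_{n,h}(\hat f,f^*_{n,h})>2^J\eta\}$. There $d_{n,h}\ge 2^J\delta_n/\tau_n\ge\tau_n^{-1}C_{\varepsilon,0}\underline{\delta}_n$ for $J$ large, since $\delta_n\gtrsim_\varepsilon\underline{\delta}_n$. So curvature applies at $\delta=d_{n,h}(\hat f,f^*_{n,h})$ and both of your cases go through. This is why the paper intersects every shell event with $\{d_{n,h}(\hat f,f^*_{n,h})\ge\tau_n^{-1}C_{\varepsilon,0}\underline{\delta}_n\}$.

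The same restriction supplies the margin missing from your handling of the random $\eta$. Suppose you split the deterministic grid $2^j\delta_n/\tau_n$ at $2^j\delta_n<\hat\lambda\mathcal{J}_n(f_{n,h})$. The retained shells then only give $\hat\lambda^2\mathcal{J}_n^2(f_{n,h})\le 4^j\delta_n^2$. That exceeds the drift bound $\min\{c,1\}\,4^{j-2}\delta_n^2$ available on the $j$-th shell, where $c$ is the curvature constant, so it is not absorbed.

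Now work instead on $\{d_{n,h}(\hat f,f^*_{n,h})>2^J(\delta_n+\hat\lambda\mathcal{J}_n(f_{n,h}))/\tau_n\}$ with $\hat f$ in the $j$-th deterministic shell of $D$. Then $j>J$ and $\hat\lambda\mathcal{J}_n(f_{n,h})<2^{-J}\tau_n d_{n,h}\le 2^{j-J}\delta_n$. Hence the penalty at $f_{n,h}$ is at most $4^{2-J}\cdot 4^{j-2}\delta_n^2$, which is negligible against the drift once $J$ is large. The paper encodes exactly this through the shell condition $2^{2J}\lambda^2\mathcal{J}_n^2(f_{n,h})<\tau_n^2d_{n,h}^2+\lambda^2\mathcal{J}_n^2(f)$.

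With these two changes your argument is complete. The empirical term at $f_{n,h}$ can be controlled on a fixed high-probability event, as you do, or folded into the Markov step as the paper does.
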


The next lemma is useful when we want to derive the convergence rate of $\hat{f}_{n,h}$ around $f_{n,h}$ instead of $f_{n,h}^*$.

\begin{lemma}[Curvature Condition for Approximate Maximizer]
    \label{lem:curvature_approx}
    Let $d_{n,h}(\cdot,\cdot)$ be a norm on a set containing $\mathcal{F}_n\cup\mathcal{F}_n^*$ for every fixed $h\in\mathcal{H}_m$.
    Let $\tau_n\in(0,1]$ be some sequence.
    For every $\varepsilon>0$, let $\mathcal{H}_{m,\varepsilon}\subseteq\mathcal{H}_m$ be a subset of $\mathcal{H}_m$.
    We assume that there exists some sequence $\underline{\delta}_n \geq0 $ such that $\sup_{h\in\mathcal{H}_{m,\varepsilon}} d_{n,h}\left(f_{n,h}, f_{n,h}^*\right) \lesssim_\varepsilon \underline{\delta}_n$.
    Suppose that for every $n$, every $\varepsilon>0$, and every $\delta>0$,
    \begin{equation}
        \sup_{h\in\mathcal{H}_{m,\varepsilon}} \sup_{f \in \mathcal{F}_n: \delta / 2 < d_{n,h}\left(f, f_{n,h}^*\right) \leq \delta} M_n(f,h)-M_n\left(f_{n,h}^*,h\right) \lesssim_\varepsilon -\tau_n^2 \delta^2
    \end{equation}
    and
    \begin{equation}
        d_{n,h}\left(f_{n,h}, f_{n,h}^*\right)^2 \gtrsim_\varepsilon M_n\left(f_{n,h}^*,h\right)-M_n\left(f_{n,h},h\right)
    \end{equation}
    for every $h\in\mathcal{H}_{m,\varepsilon}$.
    Then, for every $n$, every $\varepsilon>0$, and every positive $\delta\geq \tau_n^{-1}C_{\varepsilon,0}\underline{\delta}_n$ for some constant $C_{\varepsilon,0}>0$ depending on $\varepsilon$,
    \begin{equation}
        \sup_{h\in\mathcal{H}_{m,\varepsilon}} \sup_{f \in \mathcal{F}_n: \delta / 2 < d_{n,h}\left(f, f_{n,h}\right) \leq \delta} M_n(f,h)-M_n\left(f_{n,h},h\right) \lesssim_\varepsilon -\tau_n^2 \delta^2.
    \end{equation}
\end{lemma}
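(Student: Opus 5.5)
The idea is to transfer the curvature bound from the center $f_{n,h}^*$ to the center $f_{n,h}$. Two ingredients make this possible: the triangle inequality for the norm $d_{n,h}$, and the fact that $f_{n,h}$ lies within $O(\underline{\delta}_n)$ of $f_{n,h}^*$. The threshold $\delta\geq \tau_n^{-1}C_{\varepsilon,0}\underline{\delta}_n$ makes this gap small relative to $\delta$, and also small relative to $\tau_n\delta$.

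Fix $\varepsilon>0$ and $h\in\mathcal{H}_{m,\varepsilon}$, and write $a_h:=d_{n,h}(f_{n,h},f_{n,h}^*)$. By assumption, $a_h\leq C_{\varepsilon,1}\underline{\delta}_n$ uniformly over $h\in\mathcal{H}_{m,\varepsilon}$. Hence, for $\delta\geq \tau_n^{-1}C_{\varepsilon,0}\underline{\delta}_n$,
\begin{equation*}
a_h\leq C_{\varepsilon,1}\tau_n\delta/C_{\varepsilon,0}\leq C_{\varepsilon,1}\delta/C_{\varepsilon,0},
\end{equation*}
where the last step uses $\tau_n\leq1$. Choosing $C_{\varepsilon,0}\geq 8C_{\varepsilon,1}$ gives $a_h\leq\delta/8$.

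Take any $f\in\mathcal{F}_n$ with $\delta/2<d_{n,h}(f,f_{n,h})\leq\delta$. Since $d_{n,h}$ is a norm, the triangle inequality gives
\begin{equation*}
3\delta/8< d_{n,h}(f,f_{n,h}^*)\leq 9\delta/8.
\end{equation*}
This interval is covered by two dyadic shells around $f_{n,h}^*$:
\begin{itemize}
\item $(\delta_1/2,\delta_1]$ with $\delta_1=9\delta/8$, which is $(9\delta/16,9\delta/8]$;
\item $(\delta_2/2,\delta_2]$ with $\delta_2=3\delta/4$, which is $(3\delta/8,3\delta/4]$.
\end{itemize}
The curvature hypothesis holds for every $\delta>0$, so it applies on whichever shell contains $d_{n,h}(f,f_{n,h}^*)$. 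Since $\delta_1,\delta_2\asymp\delta$, this yields
\begin{equation*}
M_n(f,h)-M_n(f_{n,h}^*,h)\leq -c_\varepsilon\tau_n^2\delta^2
\end{equation*}
for some $c_\varepsilon>0$, uniformly over $h\in\mathcal{H}_{m,\varepsilon}$.

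Next decompose
\begin{equation*}
M_n(f,h)-M_n(f_{n,h},h)=\{M_n(f,h)-M_n(f_{n,h}^*,h)\}+\{M_n(f_{n,h}^*,h)-M_n(f_{n,h},h)\}.
\end{equation*}
The second assumption bounds the second bracket by $C_{\varepsilon,2}a_h^2$. Using the bound on $a_h$,
\begin{equation*}
C_{\varepsilon,2}a_h^2\leq C_{\varepsilon,2}C_{\varepsilon,1}^2\tau_n^2\delta^2/C_{\varepsilon,0}^2.
\end{equation*}

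The one point that needs care is that this remainder must be of order $\tau_n^2\delta^2$, not merely $\delta^2$. Otherwise it could swamp the curvature term when $\tau_n\to0$. This is exactly why the threshold carries the factor $\tau_n^{-1}$. Enlarging $C_{\varepsilon,0}$ further so that $C_{\varepsilon,2}C_{\varepsilon,1}^2/C_{\varepsilon,0}^2\leq c_\varepsilon/2$, we obtain
\begin{equation*}
M_n(f,h)-M_n(f_{n,h},h)\leq -(c_\varepsilon/2)\tau_n^2\delta^2.
\end{equation*}
All constants depend only on $\varepsilon$, so taking the supremum over $f$ and $h\in\mathcal{H}_{m,\varepsilon}$ gives the claim.
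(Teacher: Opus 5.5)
Your argument is correct and is essentially the paper's proof. Both use the triangle inequality to place $f$ in a shell around $f_{n,h}^*$, apply the curvature hypothesis there, and absorb the remainder $M_n(f_{n,h}^*,h)-M_n(f_{n,h},h)\lesssim_\varepsilon d_{n,h}(f_{n,h},f_{n,h}^*)^2$ into the $-\tau_n^2\delta^2$ term, using the factor $\tau_n^{-1}$ in the threshold to make the gap $O(\tau_n\delta)$. The only cosmetic difference is that the paper applies the curvature hypothesis with shell radius equal to $d_{n,h}(f,f_{n,h}^*)$ itself, after showing that this is at least $d_{n,h}(f,f_{n,h})/2$, which avoids your two-shell covering and the upper bound $9\delta/8$.
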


\begin{lemma}[Inequality for Rademacher Complexity]
    \label{lem:contraction}
	Let $\xi_1,\ldots,\xi_n$ be i.i.d. Rademacher random variables, i.e., $P(\xi_i=1)=P(\xi_i=-1)=1/2$ for every $i=1,\ldots,n$, and suppose that they are independent of the data.
    Let $\mathcal{U}$ be a bounded subset of $\mathbb{R}^n$ and for every $i=1,\dots,n$, let $\phi_i:\mathbb{R}\rightarrow\mathbb{R}$ satisfy $\phi_i(0)=0$ and be Lipschitz continuous with constant $L_\phi$, i.e., $|\phi_i(u)-\phi_i(v)|\leq L_\phi |u-v|$ for every $u,v\in\mathbb{R}$.
    Then, for $u=(u_1,\ldots,u_n)'\in\mathcal{U}$, we have
    \begin{equation*}
        \mathbb{E}\left[\sup_{u\in\mathcal{U}} \left|\sum_{i=1}^n \xi_i \phi_i(u_i)\right|\right] \leq 2 L_\phi \cdot \mathbb{E}\left[\sup_{u\in\mathcal{U}} \left|\sum_{i=1}^n \xi_i u_i\right|\right].
    \end{equation*}
\end{lemma}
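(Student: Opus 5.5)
The plan is to reduce the absolute-value statement to the one-sided comparison inequality of \citet{ledoux1991probability} (Theorem 4.12), which I prove by a coordinate-by-coordinate conditioning argument. Since $\xi=(\xi_1,\ldots,\xi_n)$ takes only $2^n$ values, every expectation is a finite average of suprema of bounded sets. Boundedness of $\mathcal{U}$ and Lipschitz continuity make all suprema finite, so no measurability issue arises.

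\emph{Step 1 (one-sided contraction).} For any bounded $\mathcal{W}\subset\mathbb{R}^n$ I show
\begin{equation*}
\mathbb{E}\Bigl[\sup_{w\in\mathcal{W}}\sum_{i=1}^n\xi_i\phi_i(w_i)\Bigr]\leq L_\phi\,\mathbb{E}\Bigl[\sup_{w\in\mathcal{W}}\sum_{i=1}^n\xi_iw_i\Bigr].
\end{equation*}
I replace the coordinates one at a time. Fix $\xi_2,\ldots,\xi_n$ and write $A(w)$ for the remaining part of the sum. Averaging over $\xi_1$ gives
\begin{equation*}
\mathbb{E}_{\xi_1}\sup_w\{\xi_1\phi_1(w_1)+A(w)\}=\tfrac12\sup_{w,v\in\mathcal{W}}\{\phi_1(w_1)-\phi_1(v_1)+A(w)+A(v)\}.
\end{equation*}
By the Lipschitz property, $\phi_1(w_1)-\phi_1(v_1)\leq L_\phi|w_1-v_1|$. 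The expression $A(w)+A(v)$ is symmetric under swapping $w$ and $v$, so the absolute value can be dropped by choosing the favorable ordering. This yields $\tfrac12\sup_{w,v}\{L_\phi w_1-L_\phi v_1+A(w)+A(v)\}$, which equals $\mathbb{E}_{\xi_1}\sup_w\{\xi_1L_\phi w_1+A(w)\}$. Taking expectations over $\xi_2,\ldots,\xi_n$ and iterating over $i=2,\ldots,n$, with the already-replaced coordinates absorbed into $A$, gives the claim.

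\emph{Step 2 (absolute value).} Let $S(u)=\sum_i\xi_i\phi_i(u_i)$. I use $|x|=x_++(-x)_+$ and
\begin{equation*}
\sup_u|S(u)|\leq\sup_u S(u)_++\sup_u(-S(u))_+.
\end{equation*}
This is where $\phi_i(0)=0$ enters. Set $\mathcal{W}=\mathcal{U}\cup\{0\}$; the point $0$ contributes $S(0)=0$, so $\sup_{u\in\mathcal{U}}S(u)_+=\sup_{w\in\mathcal{W}}S(w)$. Step 1 applied to $\mathcal{W}$ bounds its expectation by $L_\phi\mathbb{E}\sup_{w\in\mathcal{W}}\sum_i\xi_iw_i$. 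The point $0$ again contributes $0$, which is at most $\sup_{u\in\mathcal{U}}|\sum_i\xi_iu_i|$. Hence this term is at most $L_\phi\mathbb{E}\sup_{u\in\mathcal{U}}|\sum_i\xi_iu_i|$. The term with $-S$ has the same distribution, since $\xi$ and $-\xi$ are equal in law, so it obeys the same bound. Summing the two terms gives the factor $2L_\phi$.

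\emph{Main obstacle.} The delicate point is the symmetrization in Step 1, namely removing $|w_1-v_1|$ via the $w\leftrightarrow v$ symmetry of the doubled supremum. I would write this step out carefully. The rest, including the iteration and the reduction in Step 2, is bookkeeping.
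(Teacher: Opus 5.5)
Your argument is correct, and it differs from the paper's mainly in being self-contained rather than resting on a citation. The paper disposes of $L_\phi=0$ separately and rescales to the contractions $\phi_i/L_\phi$, which still vanish at $0$. It then invokes Theorem 4.12 of \cite{ledoux1991probability}; with the convex function taken to be the identity, that theorem is already the absolute-value inequality with the constant $2$. The only argument the paper itself supplies is the rescaling.

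You instead rebuild that special case from scratch. Your Step 1 is the one-sided comparison: you condition on all but one sign and remove $|w_1-v_1|$ through the $w\leftrightarrow v$ symmetry of the doubled supremum. This step needs no centering of the $\phi_i$. Your Step 2 passes to absolute values via $\sup_u|S(u)|\leq\sup_u S(u)_+ +\sup_u(-S(u))_+$, adjoins the point $0$ to the index set, and uses that $\xi$ and $-\xi$ have the same law. Strictly, Theorem 4.12 is the two-sided statement itself, and the one-sided bound you prove is an ingredient of it. So your attribution in the opening sentence is slightly off, but nothing in your argument depends on it.

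The paper's route buys brevity. Yours buys an elementary proof that isolates exactly where $\phi_i(0)=0$ is needed (only in Step 2). It also handles $L_\phi=0$ without a separate case and uses only the linear case of the comparison principle.

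One small point: the lemma is applied in the proof of \cref{thm:ridge_target_convergence} with data-dependent $\phi_i$ and a data-dependent index set. There, your finite average over the $2^n$ sign patterns should be read conditionally on the data and then integrated. This is legitimate because the $\xi_i$ are assumed independent of the data.
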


\begin{lemma}[Matrix Norm Inequalities, \citealp{stewart1990matrix}, Theorem II.3.9]
    \label{lem:unitary_ineq}
    Let $\|\cdot\|$ be a unitarily invariant norm such as the Frobenius norm $\|\cdot\|_F$, the spectral norm $\|\cdot\|_{\mathrm{sp}}$, and the nuclear norm $\|\cdot\|_*$. For any matrices $A\in\mathbb{R}^{T\times K}$, $\tilde{A}\in\mathbb{R}^{K\times T'}$, we have
    \begin{equation*}
        \|A\tilde{A}\| \leq \|A\|_{\mathrm{sp}} \cdot \|\tilde{A}\| \quad \text{and} \quad \|A\tilde{A}\| \leq \|A\| \cdot \|\tilde{A}\|_{\mathrm{sp}}.
    \end{equation*}
\end{lemma}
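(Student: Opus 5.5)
The plan is to prove the two inequalities separately; each follows from basic properties of unitarily invariant norms. For $\|A\tilde{A}\|\le\|A\|_{\mathrm{sp}}\|\tilde{A}\|$, the first step is the singular value domination $\sigma_j(A\tilde{A})\le\sigma_1(A)\,\sigma_j(\tilde{A})$ for every $j$. This comes from the Courant--Fischer min--max characterization: for any unit vector $x$,
\[
\|A\tilde{A}x\|_2\le\|A\|_{\mathrm{sp}}\|\tilde{A}x\|_2 .
\]
Taking min--max over subspaces of the appropriate dimension gives the coordinatewise bound on singular values. We pad with zeros where the numbers of singular values differ, since $A\tilde{A}$ is $T\times T'$ while $\tilde{A}$ is $K\times T'$.

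The second step converts this singular value domination into a norm inequality. By von Neumann's characterization, every unitarily invariant norm has the form $\|M\|=\Phi(\sigma(M))$ for a symmetric gauge function $\Phi$. Symmetric gauge functions are monotone in the absolute values of their arguments and absolutely homogeneous. Hence
\[
\Phi\bigl(\sigma(A\tilde{A})\bigr)\le\Phi\bigl(\sigma_1(A)\sigma(\tilde{A})\bigr)=\|A\|_{\mathrm{sp}}\,\Phi\bigl(\sigma(\tilde{A})\bigr).
\]
For the norms named in the statement we can also check this directly. The Frobenius norm is the $\ell^2$ norm of singular values, and the nuclear norm is the $\ell^1$ norm. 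For the spectral norm the bound is just submultiplicativity of the operator norm.

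The second inequality follows by transposition. Singular values are invariant under transposition, so $\|M\|=\|M'\|$ for any unitarily invariant norm. Then $\|A\tilde{A}\|=\|\tilde{A}'A'\|\le\|\tilde{A}'\|_{\mathrm{sp}}\|A'\|=\|\tilde{A}\|_{\mathrm{sp}}\|A\|$ by the first part.

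The only delicate point is bookkeeping of dimensions in the singular value comparison, handled by zero padding. Otherwise the argument is standard, and it is the content of \citealp{stewart1990matrix}, Theorem II.3.9, so one may simply cite it.
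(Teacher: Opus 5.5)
Your proposal is correct, but it takes a different route from the paper. The paper gives no derivation. It cites Stewart and Sun's Theorem II.3.9 and adds references showing that the Frobenius, spectral, and nuclear norms are unitarily invariant (the nuclear case via Horn and Johnson). You instead reprove the cited theorem in three steps: the Courant--Fischer bound $\sigma_j(A\tilde{A})\le\|A\|_{\mathrm{sp}}\,\sigma_j(\tilde{A})$ over subspaces of the common domain $\mathbb{R}^{T'}$; monotonicity and homogeneity of the symmetric gauge function from von Neumann's characterization; and transposition for the second inequality. All three steps are sound. What your version buys is transparency about the mechanism and about dimensions. In particular, $\|\cdot\|$ is applied to $T\times T'$, $K\times T'$, $T\times K$ and transposed matrices. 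So ``unitarily invariant norm'' must really mean a family generated by a single symmetric gauge function that is unchanged by zero padding. The cited theorem builds in the same assumption. It holds trivially for the $\ell^1$, $\ell^2$, and $\ell^\infty$ singular-value norms you check directly, and those are the only cases the paper uses (Frobenius and spectral in \cref{lem:l2_ineq}, nuclear in \cref{lem:effective_dof} and the ridge proof). The paper's citation is shorter but leaves both the mechanism and this consistency requirement implicit.
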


\begin{lemma}
    \label{lem:rank_one}
    For every matrix $A\in\mathbb{R}^{T\times K}$ with rank at most one, we have $\|A\|_* = \|A\|_F = \|A\|_{\mathrm{sp}} = \sqrt{\operatorname{tr}(A'A)}$.
    If $A=uv'$ for some vectors $u$ and $v$, then $\|A\|_* = \|A\|_F = \|A\|_{\mathrm{sp}} = \|u\|_2 \|v\|_2$.
\end{lemma}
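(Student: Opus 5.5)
The statement to prove is \cref{lem:rank_one}. The plan is to reduce everything to the singular values of $A$.

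First, the trivial case. If $A=0$, every norm equals zero, and $\sqrt{\operatorname{tr}(A'A)}=0$, so all the claimed identities hold.

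Now suppose $\operatorname{rank}(A)=1$. Then $A$ has exactly one nonzero singular value, $\sigma_1(A)>0$, and all other singular values are $\sigma_t(A)=0$ for $t\geq2$. From the definitions in the Notation paragraph:
\begin{itemize}
\item the nuclear norm is $\|A\|_*=\sum_t\sigma_t(A)=\sigma_1(A)$;
\item the spectral norm is $\|A\|_{\mathrm{sp}}=\sigma_{\max}(A)=\sigma_1(A)$;
\item the Frobenius norm satisfies $\|A\|_F^2=\sum_{t,k}A_{t,k}^2=\operatorname{tr}(A'A)$.
\end{itemize}
Next, $\operatorname{tr}(A'A)$ equals the sum of the eigenvalues of $A'A$, which are the squared singular values. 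Hence $\operatorname{tr}(A'A)=\sigma_1(A)^2$, and so $\|A\|_F=\sqrt{\operatorname{tr}(A'A)}=\sigma_1(A)$. This gives the first chain of equalities.

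For the second claim, take $A=uv'$. Then
\[
A'A=vu'uv'=\|u\|_2^2\,vv', \qquad \operatorname{tr}(A'A)=\|u\|_2^2\operatorname{tr}(vv')=\|u\|_2^2\|v\|_2^2 .
\]
Such an $A$ has rank at most one, so the first part applies and all three norms equal $\|u\|_2\|v\|_2$. If $u=0$ or $v=0$, then $A=0$ and the identity still holds.

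No step poses a real obstacle. The only point needing care is that a rank-one matrix has exactly one nonzero singular value; this is immediate from the singular value decomposition, since the rank equals the number of nonzero singular values.
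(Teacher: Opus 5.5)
Your proof is correct and follows essentially the same route as the paper. Both arguments use the fact that a matrix of rank at most one has at most one nonzero singular value $\sigma$, so its nuclear, Frobenius, and spectral norms all equal $\sigma=\sqrt{\operatorname{tr}(A'A)}$. Your explicit computation $\operatorname{tr}(A'A)=\|u\|_2^2\|v\|_2^2$ for $A=uv'$ also covers the second claim directly; the paper leaves that step implicit after writing $A=\sigma uv'$ with unit vectors $u$ and $v$.
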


\begin{lemma}
    \label{lem:l2_ineq}
	For every function $a:\mathcal{V}\rightarrow\mathbb{R}^K$, every matrix $A\in\mathbb{R}^{T\times K}$, and every probability measure $P$ on $\mathcal{V}$, we have
    \begin{equation*}
        \|Aa\|_{P,2} \leq \|A\|_{\mathrm{sp}} \cdot \|a\|_{P,2}.
    \end{equation*}
\end{lemma}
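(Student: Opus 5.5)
The plan is to reduce the claim to the scalar-valued inequality $|a'x|\le\|a\|_2\|x\|_2$ applied pointwise, and then integrate. Throughout, $\|\cdot\|_{P,2}$ is the vector $L^2(P)$ norm from the Notation paragraph, so that $\|a\|_{P,2}^2=\int\|a(v)\|_2^2\,dP(v)$.

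First, fix $v\in\mathcal{V}$. By the definition of the spectral norm as the largest singular value, $\|Aa(v)\|_2\le\|A\|_{\mathrm{sp}}\|a(v)\|_2$. To see this concretely, write $A=U D V'$ as a singular value decomposition. Orthogonal invariance of $\|\cdot\|_2$ then gives
\[
\|Aa(v)\|_2^2=\|DV'a(v)\|_2^2\le\sigma_{\max}(A)^2\|V'a(v)\|_2^2\le\sigma_{\max}(A)^2\|a(v)\|_2^2 .
\]
Here the final step uses that $V$ has orthonormal columns. Alternatively, this step is exactly \cref{lem:unitary_ineq} with the Frobenius norm, applied to the $K\times1$ matrix $a(v)$, since the Frobenius norm of a column vector equals its $\ell^2$ norm.

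Second, square the pointwise inequality and integrate against $P$. Since $\|A\|_{\mathrm{sp}}$ is a constant, this gives
\[
\|Aa\|_{P,2}^2=\int\|Aa(v)\|_2^2\,dP(v)\le\|A\|_{\mathrm{sp}}^2\int\|a(v)\|_2^2\,dP(v)=\|A\|_{\mathrm{sp}}^2\|a\|_{P,2}^2 .
\]
Taking square roots yields the lemma. If $\|a\|_{P,2}=\infty$, the inequality holds trivially. Measurability of $v\mapsto Aa(v)$ is immediate because it is a linear map of a measurable function.

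I do not expect any real obstacle. The only point to check is that the $L^2(P)$ norm of a vector-valued function, as defined in the paper, equals the square root of the integral of the squared Euclidean norm. This follows by exchanging the finite sum over coordinates with the integral.
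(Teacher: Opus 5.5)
Your proof is correct and is essentially the paper's argument: bound $\|Aa(v)\|_2\le\|A\|_{\mathrm{sp}}\|a(v)\|_2$ pointwise (the paper does this via $\|Aa(v)\|_2=\|Aa(v)\|_F$ and \cref{lem:unitary_ineq}, which you also mention), then square, integrate against $P$, and take square roots. The only blemish is cosmetic: your opening sentence advertises a reduction to the Cauchy--Schwarz inequality $|a'x|\le\|a\|_2\|x\|_2$, but the argument actually rests on the spectral-norm (SVD) bound, so that sentence should be dropped.
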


\begin{lemma}
	\label{lem:spectral_inverse}
	For every matrix $A\in\mathbb{R}^{T\times K}$ with $\operatorname{rank}(A)\geq1$, the spectral norm of its Moore-Penrose inverse satisfies
	\begin{equation*}
		\|A^+\|_{\mathrm{sp}}
		=
		1/\sigma_{\min}^+(A).
	\end{equation*}
\end{lemma}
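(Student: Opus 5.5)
The proof goes through the compact singular value decomposition of $A$. Let $r=\operatorname{rank}(A)\geq1$, and write $A=U D V'$, where $U\in\mathbb{R}^{T\times r}$ and $V\in\mathbb{R}^{K\times r}$ have orthonormal columns ($U'U=V'V=I_r$) and $D=\operatorname{diag}(d_1,\ldots,d_r)$ with $d_1\geq\cdots\geq d_r>0$. By definition, these $d_j$ are exactly the nonzero singular values of $A$, so $\sigma_{\min}^+(A)=d_r$. The hypothesis $\operatorname{rank}(A)\geq1$ is used only to make this set nonempty, so that $\sigma_{\min}^+(A)$ is well defined.

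First, I identify the Moore--Penrose inverse explicitly as $A^+=VD^{-1}U'$. I check the four Penrose conditions using $U'U=V'V=I_r$:
\begin{itemize}
\item $AA^+A=UDV'VD^{-1}U'UDV'=UDV'=A$;
\item $A^+AA^+=VD^{-1}U'UDV'VD^{-1}U'=VD^{-1}U'=A^+$;
\item $AA^+=UU'$ is symmetric;
\item $A^+A=VV'$ is symmetric.
\end{itemize}
Uniqueness of the pseudo-inverse then gives $A^+=VD^{-1}U'$.

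Next, I read off the singular values of $A^+$. The expression $VD^{-1}U'$ is itself a compact SVD, with orthonormal factors $V$ and $U$ and positive diagonal $D^{-1}$. To avoid invoking uniqueness of the SVD, I compute
\begin{equation*}
(A^+)'A^+=UD^{-1}V'VD^{-1}U'=UD^{-2}U'.
\end{equation*}
Its eigenvalues are $d_1^{-2},\ldots,d_r^{-2}$, together with zeros on the orthogonal complement of the column space of $U$. Since $\|A^+\|_{\mathrm{sp}}^2=\mu_{\max}((A^+)'A^+)$, this gives
\begin{equation*}
\|A^+\|_{\mathrm{sp}}=\max_j d_j^{-1}=d_r^{-1}=1/\sigma_{\min}^+(A).
\end{equation*}

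The only step needing care is confirming that $VD^{-1}U'$ really is the Moore--Penrose inverse. The Penrose-conditions check above handles it. The rest is routine eigenvalue bookkeeping.
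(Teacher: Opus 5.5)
Your proof is correct and follows the same route as the paper: both write $A^+$ through the singular value decomposition of $A$ as $VD^{-1}U'$ (the paper cites Harville's Theorem 20.5.6 and uses the full SVD) and conclude that its largest singular value is the reciprocal of the smallest nonzero singular value of $A$. Your direct check of the four Penrose conditions and your eigenvalue computation for $(A^+)'A^+$ prove from scratch the two steps that the paper takes from the citation and from unitary invariance of the spectral norm.
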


\begin{lemma}[Bound on Effective Degrees of Freedom]
    \label{lem:effective_dof}
    Let $\mathcal{N}_A(\lambda)$ be the effective degrees of freedom defined as $\mathcal{N}_A(\lambda)=\operatorname{tr}\left(A(A+\lambda^2 I)^{-1}\right)$ for a positive semi-definite matrix $A\in\mathbb{R}^{K\times K}$ and $\lambda>0$.
    Let $\tilde{A}\in\mathbb{R}^{K\times K}$ be another positive semi-definite matrix.
    Then, 
    \begin{equation*}
        \left|\mathcal{N}_A(\lambda)-\mathcal{N}_{\tilde{A}}(\lambda)\right| \leq \|A - \tilde{A}\|_* / \lambda^2.
    \end{equation*}
\end{lemma}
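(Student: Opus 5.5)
The plan is to rewrite the effective degrees of freedom through the resolvent, so that the difference becomes a trace of a product of two resolvents and $A-\tilde{A}$, and then bound that trace by nuclear-norm duality.

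Write $R_A:=(A+\lambda^2 I_K)^{-1}$ and $R_{\tilde{A}}:=(\tilde{A}+\lambda^2 I_K)^{-1}$. Both exist because $A,\tilde{A}\succeq 0$ and $\lambda>0$. First I will use the identity
\begin{equation*}
A(A+\lambda^2 I_K)^{-1} = I_K-\lambda^2 R_A,
\end{equation*}
which follows from $A=(A+\lambda^2 I_K)-\lambda^2 I_K$. It gives
\begin{equation*}
\mathcal{N}_A(\lambda)-\mathcal{N}_{\tilde{A}}(\lambda)=\lambda^2\operatorname{tr}\left(R_{\tilde{A}}-R_A\right).
\end{equation*}
Next I will apply the second resolvent identity $R_{\tilde{A}}-R_A=R_{\tilde{A}}\,(A-\tilde{A})\,R_A$. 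This holds because $R_{\tilde{A}}\big[(A+\lambda^2 I_K)-(\tilde{A}+\lambda^2 I_K)\big]R_A=R_{\tilde{A}}-R_A$.

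The remaining step is to bound $|\operatorname{tr}(R_{\tilde{A}}(A-\tilde{A})R_A)|$. I will use the elementary fact $|\operatorname{tr}(M)|\leq\|M\|_*$, which holds because $|\operatorname{tr}(M)|=|\sum_j \mu_j(M)|$ and the sum of absolute eigenvalues is bounded by the sum of singular values. Applying \cref{lem:unitary_ineq} twice with the nuclear norm then gives
\begin{equation*}
\|R_{\tilde{A}}(A-\tilde{A})R_A\|_*\leq\|R_{\tilde{A}}\|_{\mathrm{sp}}\,\|A-\tilde{A}\|_*\,\|R_A\|_{\mathrm{sp}}.
\end{equation*}
Since $A\succeq0$, the eigenvalues of $A+\lambda^2 I_K$ are at least $\lambda^2$. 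Hence $\|R_A\|_{\mathrm{sp}}\leq\lambda^{-2}$, and likewise $\|R_{\tilde{A}}\|_{\mathrm{sp}}\leq\lambda^{-2}$.

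Combining these bounds yields
\begin{equation*}
\left|\mathcal{N}_A(\lambda)-\mathcal{N}_{\tilde{A}}(\lambda)\right|\leq\lambda^2\cdot\lambda^{-4}\,\|A-\tilde{A}\|_*=\|A-\tilde{A}\|_*/\lambda^2.
\end{equation*}
No step is a real obstacle. The only point needing care is the trace–nuclear-norm inequality for the non-symmetric product. To sidestep it, one can instead use the duality $|\operatorname{tr}(XY)|\leq\|X\|_{\mathrm{sp}}\|Y\|_*$ with $X=R_AR_{\tilde{A}}$ and $Y=A-\tilde{A}$, using cyclicity of the trace.
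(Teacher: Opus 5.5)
Your proposal is correct and follows the paper's proof step for step. The paper also writes the difference as $\lambda^2$ times the trace of a resolvent difference, applies $X^{-1}-Y^{-1}=Y^{-1}(Y-X)X^{-1}$, bounds the trace by the nuclear norm, uses \cref{lem:unitary_ineq} twice, and finishes with $\|(A+\lambda^2 I)^{-1}\|_{\mathrm{sp}}\leq 1/\lambda^2$. The only cosmetic difference is that the paper justifies $|\operatorname{tr}(M)|\leq\|M\|_*$ by the singular value decomposition and cyclicity of the trace, not by bounding eigenvalues by singular values.
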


\begin{lemma}[Transformation on Effective Degrees of Freedom]
    \label{lem:transformation_dof}
    Let $\mathcal{N}_A(\lambda)$ be the effective degrees of freedom defined as in \cref{lem:effective_dof}.
    Let $Q\in\mathbb{R}^{K\times K}$ be another matrix.
    Then, 
    \begin{equation*}
        \mathcal{N}_{QAQ'}(\lambda)\leq \max\{1,\|Q\|_{\mathrm{sp}}^2\} \mathcal{N}_A(\lambda).
    \end{equation*}
\end{lemma}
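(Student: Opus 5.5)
The plan is to work with the eigenvalue representation $\mathcal{N}_A(\lambda)=\sum_{j=1}^K\varphi(\mu_j(A))$, where $\varphi(x):=x/(x+\lambda^2)$ on $[0,\infty)$, as in \cref{eq:effective_degrees_of_freedom}. I would compare the eigenvalues of $QAQ'$ with those of $A$ index by index, and then use two elementary properties of $\varphi$. Set $c:=\max\{1,\|Q\|_{\mathrm{sp}}^2\}$.

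\emph{Step 1 (eigenvalue comparison).} Write $X:=QA^{1/2}$, using the positive semi-definite square root of $A$. Then $QAQ'=XX'$, and $X'X=A^{1/2}Q'QA^{1/2}$. The matrices $XX'$ and $X'X$ are both $K\times K$ and share the same nonzero eigenvalues with multiplicities. Hence $\mu_j(QAQ')=\mu_j(A^{1/2}Q'QA^{1/2})$ for every $j$. Since $Q'Q\preceq\|Q\|_{\mathrm{sp}}^2I_K$, congruence by $A^{1/2}$ gives
\[
A^{1/2}Q'QA^{1/2}\preceq\|Q\|_{\mathrm{sp}}^2A .
\]
By Weyl's monotonicity principle (Courant--Fischer), this yields $\mu_j(QAQ')\leq\|Q\|_{\mathrm{sp}}^2\mu_j(A)\leq c\,\mu_j(A)$ for all $j=1,\ldots,K$.

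\emph{Step 2 (properties of $\varphi$).} The function $\varphi$ is nondecreasing on $[0,\infty)$. Moreover, for $c\geq1$ and $x\geq0$,
\[
\varphi(cx)=\frac{cx}{cx+\lambda^2}\leq\frac{cx}{x+\lambda^2}=c\,\varphi(x).
\]

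\emph{Step 3 (combine).} Applying Steps 1 and 2 term by term gives
\[
\mathcal{N}_{QAQ'}(\lambda)=\sum_j\varphi(\mu_j(QAQ'))\leq\sum_j\varphi(c\,\mu_j(A))\leq c\sum_j\varphi(\mu_j(A))=c\,\mathcal{N}_A(\lambda).
\]
This is the claim.

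The only nonroutine point is Step 1, namely justifying the indexwise eigenvalue comparison. The $XX'$ versus $X'X$ identity handles the case where $Q$ is not symmetric or not invertible, and the zero eigenvalues pad both spectra consistently because both matrices are $K\times K$. The $\max\{1,\cdot\}$ factor enters only through Step 2, which needs $c\geq1$.
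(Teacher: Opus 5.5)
Your proof is correct and follows essentially the same route as the paper. Both arguments use the spectral identity between $QAQ'$ and $A^{1/2}Q'QA^{1/2}$, the Loewner bound $A^{1/2}Q'QA^{1/2}\preceq\|Q\|_{\mathrm{sp}}^2A$, and the scaling inequality $\varphi(cx)\leq c\,\varphi(x)$ for $c\geq1$. The only differences are cosmetic: the paper normalizes $Q$ first and passes the Loewner comparison through the order-reversing property of the inverse and the identity $\mathcal{N}_A(\lambda)=K-\lambda^2\operatorname{tr}((A+\lambda^2I)^{-1})$, whereas you pass it eigenvalue by eigenvalue via Weyl's monotonicity principle and then apply the monotonicity of $\varphi$.
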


\section{Proofs for Main Results}

\subsection{Proof for \cref{thm:transfer_learning_general}}
\begin{proof}
    By the triangle inequality, we have
    \begin{align}
        \left\|\hat{f}\circ \check{h}-a^*_{\ta}\right\|_{P_{\ta},2}
        &\leq \left\|\hat{f}\circ \check{h}-f_n^{\bullet}\circ \check{h}\right\|_{P_{\ta},2}
        +\left\|f_n^{\bullet}\circ \check{h}-f_n\circ h_m\right\|_{P_{\ta},2}
        + \left\|f_n\circ h_m - a_{\ta}^*\right\|_{P_{\ta},2}\nonumber\\
        &=: \text{(I)} + \text{(II)} + \text{(III)}.
        \label{eq:rate_decomposition}
    \end{align}
    We bound each term separately.

    By \cref{eq:target_convergence} in \cref{asm:convergence} (ii),
    \begin{equation*}
        \text{(I)}=O_{P}(r_{\ta,n}).
    \end{equation*}

    Next, we bound the second term (II). By \cref{eq:source_convergence} in \cref{asm:convergence} (i),
    \begin{equation*}
        \left\|\check{g}\circ \check{h}-g_m\circ h_m\right\|_{P_{\so},2}=O_{P_{\so}}(\delta_{\so,m}).
    \end{equation*}
    Since $\check{g}\in\mathcal{G}_m$, we have
    \begin{equation*}
		\min_{g\in\mathcal{G}_m}\left\|g\circ \check{h}-g_m\circ h_m\right\|_{P_{\so},2}=O_{P_{\so}}(\delta_{\so,m}).
    \end{equation*}
	Hence, for every $\varepsilon>0$, there exists a constant $M_{\varepsilon}<\infty$ such that
    \begin{equation*}
        \liminf_{m\rightarrow\infty}
        P_{\so}\left(
        \min_{g\in\mathcal{G}_m}\left\|g\circ \check{h}-g_m\circ h_m\right\|_{P_{\so},2}\leq M_{\varepsilon}\delta_{\so,m}
        \right)
        \geq
        1-\varepsilon.
    \end{equation*}
    Thus, we have
    \begin{equation*}
        \liminf_{m\rightarrow\infty}
        P_{\so}\left(
        \check{h}\in\mathcal{H}_{\so,m}(\rho_{\so,m})
        \right)
        \geq
		1-\varepsilon,
    \end{equation*}
	with $\rho_{\so,m}=M_{\varepsilon}\delta_{\so,m}$.
    Since the event $\{\check{h}\in\mathcal{H}_{\so,m}(\rho_{\so,m})\}$ depends only on the source sample, the above lower bound also holds under the joint law $P$. On this event, \cref{asm:transferability} and \cref{eq:diversity} imply that $\check{h}\in\mathcal{H}_{\ta,m}(\rho_{\so,m}/\nu_n)$, i.e.,
    \begin{equation*}
        \min_{f\in\mathcal{F}_n^{\sub}}\left\|f\circ \check{h}-f_n\circ h_m\right\|_{P_{\ta},2}
        \leq
        \rho_{\so,m}/\nu_n = M_{\varepsilon}\delta_{\so,m}/\nu_n.
    \end{equation*}
    Therefore,
    \begin{equation*}
        \min_{f\in\mathcal{F}_n^{\sub}}\left\|f\circ \check{h}-f_n\circ h_m\right\|_{P_{\ta},2}
        =
        O_{P_{\so}}(\delta_{\so,m}/\nu_n).
    \end{equation*}
    Hence, the second term satisfies
    \begin{equation*}
        \text{(II)}=O_{P}(\delta_{\so,m}/\nu_n).
    \end{equation*}

    By \cref{def:approx_embedding},
    \begin{equation*}
        \text{(III)}=O(\epsilon_{\ta,n}).
    \end{equation*}
    Combining the three bounds yields
    \begin{equation*}
        \left\|\hat{f}\circ \check{h}-a_{\ta}^*\right\|_{P_{\ta},2}
        =
        O_{P}\left(r_{\ta,n}+\delta_{\so,m}/\nu_n+\epsilon_{\ta,n}\right),
    \end{equation*}
    which proves the claim.
\end{proof}

\subsection{Proof for \cref{thm:conv_nec}}
\begin{proof}
	For every $h\in\mathcal{H}_{\so,m}(0)$, define
	\begin{equation*}
		A_n(h)
		:=
		\min_{f\in\mathcal{F}_n^{\sub}}
		\|f\circ h-f_n\circ h_m\|_{P_{\ta},2}.
	\end{equation*}
	First, consider the deterministic sequence $h_m^\dagger=h_m$. By assumption, $f_n\circ h_m = f_{n,h_m}^{\bullet}\circ h_m$ $P_{\ta}$-almost surely.
	The assumptions and the triangle inequality then imply
	\begin{align*}
		\epsilon_{\ta,n}
		&=
		\|f_n\circ h_m-a_{\ta}^*\|_{P_{\ta},2}\\
		&\leq
		\|f_n\circ h_m-\widehat f_{n,h_m}\circ h_m\|_{P_{\ta},2}
		+
		\|\widehat f_{n,h_m}\circ h_m-a_{\ta}^*\|_{P_{\ta},2}\\
		&=
		o_{P_{\ta}}(1).
	\end{align*}
	Since $\epsilon_{\ta,n}$ is deterministic, this implies
	$\epsilon_{\ta,n}=o(1)$.

	Now fix any deterministic sequence
	$h_m^\dagger\in\mathcal{H}_{\so,m}(0)$. By the triangle inequality,
	\begin{align*}
		A_n(h_m^\dagger)
		&=
		\|f_{n,h_m^\dagger}^{\bullet}\circ h_m^\dagger
		-f_n\circ h_m\|_{P_{\ta},2}\\
		&\leq
		\|f_{n,h_m^\dagger}^{\bullet}\circ h_m^\dagger
		-\widehat f_{n,h_m^\dagger}\circ h_m^\dagger\|_{P_{\ta},2}\\
		&\quad+
		\|\widehat f_{n,h_m^\dagger}\circ h_m^\dagger
		-a_{\ta}^*\|_{P_{\ta},2}
		+\epsilon_{\ta,n}\\
		&=
		o_{P_{\ta}}(1).
	\end{align*}
	Because $A_n(h_m^\dagger)$ is deterministic,
	$A_n(h_m^\dagger)=o(1)$ for every such deterministic sequence.

	We claim that $\sup_{h\in\mathcal{H}_{\so,m}(0)}A_n(h) = o(1)$.
	Otherwise, there would exist $\varepsilon>0$, a subsequence	$\{n_j\}$, and deterministic functions
	$h_{m_{n_j}}^\dagger\in\mathcal{H}_{\so,m_{n_j}}(0)$ such that $A_{n_j}(h_{m_{n_j}}^\dagger) \geq \varepsilon$ for every $j$. Setting $h_m^\dagger=h_m$ outside this subsequence would produce a deterministic sequence contradicting $A_n(h_m^\dagger)=o(1)$. Finally, define $\rho_{\ta,n} := \sup_{h\in\mathcal{H}_{\so,m}(0)}A_n(h)$.
	Then $\rho_{\ta,n}=o(1)$ and, by definition, $\mathcal{H}_{\so,m}(0) \subseteq \mathcal{H}_{\ta,m}(\rho_{\ta,n})$.
\end{proof}

\subsection{Proof for \cref{thm:suf_transfer}}
\begin{proof}
    Pick any $h\in\mathcal{H}_{\so,m}(\rho_{\so,m})$. 
    Then, by the triangle inequality and $\|f_n \circ h_m(Z) - \Gamma \circ g_m \circ h_m(Z)\|_{P_{\ta},2}\leq\varrho_n$, we have for every $f\in\mathcal{F}_n^{\sub}$,
    \begin{align*}
        & \left\|f \circ h - f_n \circ h_m\right\|_{P_{\ta},2}\\
        &\quad \leq \left\|f \circ h - \Gamma \circ g_m \circ h_m\right\|_{P_{\ta},2}
         + \left\|\Gamma \circ g_m \circ h_m - f_n \circ h_m\right\|_{P_{\ta},2}\\
        &\quad \leq \left\|f \circ h - \Gamma \circ g_m \circ h_m\right\|_{P_{\ta},2}
         + \varrho_n\\
        &\quad \leq \left\|f \circ h - \Gamma \circ g^{\bullet}_m \circ h\right\|_{P_{\ta},2}
         + \left\|\Gamma \circ g^{\bullet}_m \circ h - \Gamma \circ g_m \circ h_m\right\|_{P_{\ta},2}
		 + \varrho_n.
    \end{align*}
    
    Taking the minimum over $f\in\mathcal{F}_n^{\sub}$ on both sides, we have
    \begin{align*}
        & \min_{f\in\mathcal{F}_n^{\sub}} \left\|f \circ h - f_n \circ h_m\right\|_{P_{\ta},2}\\
        &\quad \leq \min_{f\in\mathcal{F}_n^{\sub}} \left\|f \circ h - \Gamma \circ g^{\bullet}_m \circ h\right\|_{P_{\ta},2}
         + \left\|\Gamma \circ g^{\bullet}_m \circ h - \Gamma \circ g_m \circ h_m\right\|_{P_{\ta},2}
         + \varrho_n\\
        &\quad \leq \left\|\Gamma \circ g^{\bullet}_m \circ h - \Gamma \circ g_m \circ h_m\right\|_{P_{\ta},2}
         + 2\varrho_n\\
        &\quad \leq L_{\Gamma} \left\|g^{\bullet}_m \circ h - g_m \circ h_m\right\|_{P_{\ta},2}
         + 2\varrho_n\\
        &\quad \leq L_{\Gamma} \rho_{\so,m}/\underline{w}_n + 2\varrho_n,
    \end{align*}
	where the second inequality follows from the existence of $f\in\mathcal{F}_n^{\sub}$ such that $\left\|f \circ h - \Gamma \circ g^{\bullet}_m \circ h\right\|_{P_{\ta},2} \leq\varrho_n$. The third inequality follows from the Lipschitz continuity of $\Gamma$, and the last inequality follows from \cref{asm:support} and the definition of $\mathcal{H}_{\so,m}(\rho_{\so,m})$.
\end{proof}

\subsection{Proof for \cref{cor:transferability_full_rank}}
\begin{proof}
    Define the function $\Gamma:\mathbb{R}^T\to\mathbb{R}$ by
    \begin{equation*}
        \Gamma(x)=f_n\left(B_{\so,m}^{+}(x-\alpha_{\so,m})\right).
    \end{equation*}
    Since $B_{\so,m}$ has full column rank, $B_{\so,m}^{+}B_{\so,m}=I_K$, and hence
    \begin{equation*}
        \Gamma\circ g_m\circ h_m
        =f_n\left(B_{\so,m}^{+}B_{\so,m}h_m\right)
        =f_n\circ h_m.
    \end{equation*}
    Moreover, $\Gamma$ has Lipschitz constant $L_{\Gamma} \leq L_{\mathcal{F}}\|B_{\so,m}^{+}\|_{\mathrm{sp}}$ since
	    \begin{align*}
	        \left|\Gamma(x_1)-\Gamma(x_2)\right|
            &\leq L_{\mathcal{F}}\left\|B_{\so,m}^{+}(x_1-x_2)\right\|_2 \\
	        &\leq L_{\mathcal{F}}\left\|B_{\so,m}^{+}\right\|_{\mathrm{sp}}\|x_1-x_2\|_2
	    \end{align*}
	    for every $x_1,x_2\in\mathbb{R}^{T}$, where the last inequality follows from \cref{lem:l2_ineq}.
    
    Fix $\rho_{\so,m}\geq0$ and $h\in\mathcal{H}_{\so,m}(\rho_{\so,m})$, and let $(\alpha_{\so,m}^{\bullet},B_{\so,m}^{\bullet})$ be the minimizer in the statement of the corollary. Define $g_m^{\bullet}(v)=\alpha_{\so,m}^{\bullet}+B_{\so,m}^{\bullet}v$. By the assumption, there exists $f\in\mathcal{F}_n^{\sub}$ such that
    \begin{align*}
        f\circ h
        &=f_n\left(B_{\so,m}^{+}
        (\alpha_{\so,m}^{\bullet}-\alpha_{\so,m}+B_{\so,m}^{\bullet}h)\right)\\
        &=\Gamma\circ g_m^{\bullet}\circ h,
    \end{align*}
    $P_{\ta}$-a.s. Therefore, by \cref{thm:suf_transfer}, $g_m$ is
    \begin{equation*}
        \left(\rho_{\so,m},
        \frac{L_{\mathcal{F}}\|B_{\so,m}^{+}\|_{\mathrm{sp}}}
        {\underline{w}_n}\rho_{\so,m}\right)
    \end{equation*}
    diverse over $f_n$ with respect to $h_m$. Finally, \cref{lem:spectral_inverse} gives $\|B_{\so,m}^{+}\|_{\mathrm{sp}}=1/\sigma_{\min}(B_{\so,m})$ because $B_{\so,m}$ has full column rank.
\end{proof}

\subsection{Proof for \cref{thm:nec_transfer_linear_pre}}
\begin{proof}
    Suppose that there is no such function $b$.
    This implies that $\beta_{\ta,n} \notin \lin\{B_{\so,m}'\}:=S$.
    Thus, there exists a unit vector $v\in S^{\perp}$, where $S^{\perp}$ is the orthogonal complement of $S$, such that $\beta_{\ta,n}' v \neq 0$.
	Define $P_v = I_{K} - v v'$ as the projection matrix onto the linear subspace orthogonal to $v$.
    Let $h(Z) = P_v h_m(Z) + c$, where $c\in\mathbb{R}^{K}$ is the vector such that $h\in\mathcal{H}_m$, and let $g(u) = \alpha_{\so,m} - B_{\so,m}c + B_{\so,m} u$.
    The head-class assumption in the theorem ensures that $g\in\mathcal{G}_m$.
    Then, we have
    \begin{align*}
        \|g \circ h - g_m \circ h_m\|_{P_{\so},2}^2
        &= \mathbb{E}_{\so}\left[\|\alpha_{\so,m} + B_{\so,m} (P_v h_m(Z) + c - c) - (\alpha_{\so,m} + B_{\so,m} h_m(Z))\|_2^2\right]\\
        &= \mathbb{E}_{\so}\left[\|B_{\so,m} (P_v h_m(Z) - h_m(Z))\|_2^2\right]\\
        &= \mathbb{E}_{\so}\left[\|B_{\so,m} v v' h_m(Z)\|_2^2\right]
        = 0,
    \end{align*}
    where the last equality follows from $v\in S^{\perp}$.
    Thus, $h\in\mathcal{H}_{\so,m}(0)$.
    On the other hand, every $f\in\mathcal{F}_n^{\sub}$ can be written as $f(u)=\alpha+\beta'u$ for some $\alpha\in\mathbb{R}$ and $\beta\in\mathbb{R}^{K}$. Hence,
    \begin{align*}
        & \min_{f\in\mathcal{F}_n^{\sub}}\|f \circ h - f_n \circ h_m\|_{P_{\ta},2}^2\\
        & \quad \geq \min_{\alpha\in\mathbb{R},\beta\in\mathbb{R}^{K}}\mathbb{E}_{\ta}\left[\left(
            \begin{pmatrix}
                \alpha + \beta' c - \alpha_{\ta,n}\\
                P_v \beta - \beta_{\ta,n}
            \end{pmatrix}'
            \begin{pmatrix}
                    1\\
                    h_m(Z)
            \end{pmatrix}\right)^2\right]\\
        & \quad \geq \mu_{\min}\left(\mathbb{E}_{\ta}\left[\begin{pmatrix}
                1\\
                h_m(Z)
            \end{pmatrix}
            \begin{pmatrix}
                1\\
                h_m(Z)
            \end{pmatrix}'\right]\right)
            \min_{\alpha\in\mathbb{R},\beta\in\mathbb{R}^{K}}\left\|\begin{pmatrix}
                \alpha + \beta' c - \alpha_{\ta,n}\\
                P_v \beta - \beta_{\ta,n}
            \end{pmatrix}\right\|_2^2\\
        & \quad \geq \mu_{\min}\left(\mathbb{E}_{\ta}\left[\begin{pmatrix}
                1\\
                h_m(Z)
            \end{pmatrix}
            \begin{pmatrix}
                1\\
                h_m(Z)
            \end{pmatrix}'\right]\right)(v' \beta_{\ta,n})^2
        > 0,
    \end{align*}
    where the first inequality follows because the target subclass contains only linear heads, and the second inequality follows from the property of the minimum eigenvalue $\mu_{\min}(\cdot)$, and the third inequality follows from
    \begin{align*}
        \min_{\alpha\in\mathbb{R},\beta\in\mathbb{R}^{K}}\left\|\begin{pmatrix}
                \alpha + \beta' c - \alpha_{\ta,n}\\
                P_v \beta - \beta_{\ta,n}
            \end{pmatrix}\right\|_2^2
        & \geq \min_{\beta\in\mathbb{R}^{K}}\|P_v \beta - \beta_{\ta,n}\|_2^2\\
        & = \min_{\beta\in\mathbb{R}^{K}}\|P_v (\beta - \beta_{\ta,n}) - v v' \beta_{\ta,n}\|_2^2\\
        & = \min_{\beta\in\mathbb{R}^{K}}\|P_v (\beta - \beta_{\ta,n})\|_2^2 + (v' \beta_{\ta,n})^2\\
        & \geq (v' \beta_{\ta,n})^2 > 0,
    \end{align*}
    and the positive definiteness of the second moment matrix of $(1,h_m(Z))$ under $P_{\ta}$.
    Thus, $h\notin\mathcal{H}_{\ta,m}(0)$.
\end{proof}

\subsection{Proof for \cref{thm:nec_transfer}}
\begin{proof}
	Suppose that there is no such function $\Gamma_0$.
	Thus, for every measurable function $\Gamma_0:\mathbb{R}^{K}\to\mathbb{R}$, we have $P_{\ta}\left(f_n \circ h_m(Z) \neq \Gamma_0 \circ h_0(Z)\right)>0$.
	On the other hand, by \cref{lem:transfer_var},
	\begin{align*}
		&\inf_{f\in\mathcal{F}_n^{\sub}}\|f \circ h_0 - f_n \circ h_m\|_{P_{\ta},2}^2\\
		&\quad = \mathbb{E}_{\ta}\left[\Var_{\ta}\left(f_n \circ h_m(Z) \mid h_0(Z)\right)\right]
		+ \inf_{f\in\mathcal{F}_n^{\sub}}\mathbb{E}_{\ta}\left[\left(\mathbb{E}_{\ta}\left[f_n \circ h_m(Z) \mid h_0(Z)\right]-f \circ h_0(Z)\right)^2\right]\\
		&\quad \geq \mathbb{E}_{\ta}\left[\Var_{\ta}\left(f_n \circ h_m(Z) \mid h_0(Z)\right)\right]\\
		&\quad = \mathbb{E}_{\ta}\left[(f_n \circ h_m(Z) - \mathbb{E}_{\ta}\left[f_n \circ h_m(Z) \mid h_0(Z)\right])^2\right].
	\end{align*}
	The last term is strictly positive since $P_{\ta}\left(f_n \circ h_m(Z) \neq \Gamma_0 \circ h_0(Z)\right)>0$ for every measurable function $\Gamma_0$.
	Thus, $h_0\notin\mathcal{H}_{\ta,m}(0)$, while we have $h_0\in\mathcal{H}_{\so,m}(0)$ by assumption, which contradicts the $(0,0)$-diversity of $g_m$ over $f_n$ with respect to $h_m$.
	This proves the first claim.
	For the second claim, suppose that the additional sigma-field inclusion condition holds.
	Since $h_0(Z)$ and $g_m\circ h_m(Z)$ take values in the standard Borel spaces $\mathbb{R}^{K}$ and $\mathbb{R}^{T}$, respectively, \cref{lem:doob_dynkin_completion}, applied with $U=g_m\circ h_m(Z)$, $V=h_0(Z)$, and $P=P_{\ta}$, implies that there exists a measurable function $\Gamma_1:\mathbb{R}^{T}\to\mathbb{R}^{K}$ such that $h_0(Z)=\Gamma_1\circ g_m\circ h_m(Z)$, $P_{\ta}$-a.s.
	The second claim then follows by defining $\Gamma := \Gamma_0 \circ \Gamma_1$.
\end{proof}

\subsection{Proof for \cref{thm:target_convergence}}
\begin{proof}
	By \cref{thm:transfer_learning_general}, it suffices to verify \cref{eq:target_convergence} with $r_{\ta,n}$ there replaced by $r_{\ta,n}+\delta_{\so,m}/\nu_n+\epsilon_{\ta,n}$, since adding the transfer and approximation terms again does not change the resulting stochastic order.
	For every $h\in\mathcal{H}_m$, write
	\begin{equation*}
		f_{n,h}^{\bullet}
		\in\argmin_{f\in\mathcal{F}_n^{\sub}}\|f\circ h-f_n\circ h_m\|_{P_{\ta},2},
	\end{equation*}
	and retain the shorthand $f_n^{\bullet}=f_{n,\check{h}}^{\bullet}$.
	By definition and the inclusion $\mathcal{F}_n^{\sub}\subseteq\mathcal{F}_n$, we have
	\begin{equation*}
		f_{n,h}^{\bullet}\in\mathcal{F}_n^{\sub}\subseteq\mathcal{F}_n.
	\end{equation*}
	Moreover, $\hat{f}\in\mathcal{F}_n$ by the statement of the theorem.
    By \cref{eq:source_convergence} in \cref{asm:convergence} (i), for every $\varepsilon>0$, there exists a constant $M_{\varepsilon}<\infty$ such that
    \begin{equation*}
        \limsup_{m\rightarrow\infty}
        P_{\so}\left(
        \check{h}\notin\mathcal{H}_{\so,m}(M_{\varepsilon}\delta_{\so,m})
        \right)
        \leq \varepsilon.
    \end{equation*}
    
	We will apply \cref{lem:rate_of_conv} with $\mathcal{H}_{m,\varepsilon}=\mathcal{H}_{\so,m}(M_{\varepsilon}\delta_{\so,m})$, $\mathcal{F}_n$ as given in the theorem, $f_{n,h}^* = f_{n,h} = f_{n,h}^{\bullet}$, $\hat{f}_{n,h} = \hat{f}$, $P = P_{\ta}$, $M_n(f,h) = - \mathbb{E}_{\ta}[\ell_{\ta}(f\circ h(Z),Y)]$, $\mathbb{M}_n(f,h) = -\frac{1}{n}\sum_{i=1}^{n} \ell_{\ta}(f\circ h(Z_i),Y_i)$, $d_{n,h}(f,f') = \|f\circ h - f'\circ h\|_{P_{\ta},2}$, and $\tau_n=1$, where the notation on the left-hand side is that of \cref{lem:rate_of_conv}.\footnote{
		To apply \cref{lem:curvature_approx} below, extend these definitions to the true target score $a_{\ta}^*$ by setting, for every $f\in\mathcal{F}_n$ and $h\in\mathcal{H}_m$, $M_n(a_{\ta}^*,h) = -\mathbb{E}_{\ta}[\ell_{\ta}(a_{\ta}^*(Z),Y)]$, $\mathbb{M}_n(a_{\ta}^*,h) = -\frac{1}{n}\sum_{i=1}^n \ell_{\ta}(a_{\ta}^*(Z_i),Y_i)$, $d_{n,h}(f,a_{\ta}^*) = d_{n,h}(a_{\ta}^*,f) = \|f\circ h-a_{\ta}^*\|_{P_{\ta},2}$, and $d_{n,h}(a_{\ta}^*,a_{\ta}^*) = 0$.
	}

	By \cref{lem:pointwise_measurable}, the pointwise measurability condition in \cref{lem:rate_of_conv} is satisfied.
	Note that $f_{n,h}^{\bullet}$ and $\hat{f}$ depend on $h$ through the minimization problem given $h$.

	Since $f_n^{\bullet}\circ \check{h}$ need not equal $a_{\ta}^*$, we cannot directly apply the curvature condition in \cref{asm:loss_ta} to $f_n^{\bullet}\circ \check{h}$ to verify \cref{eq:rate_of_conv_curvature}.
	For this purpose, we first apply \cref{lem:curvature_approx} with $\mathcal{H}_{m,\varepsilon}=\mathcal{H}_{\so,m}(M_{\varepsilon}\delta_{\so,m})$, $\mathcal{F}_n$ as given in the theorem, $f_{n,h} = f_{n,h}^{\bullet}$ and $f_{n,h}^* = a_{\ta}^*$.
    We can bound 
    \begin{align}
		& \sup_{h\in\mathcal{H}_{m,\varepsilon}}\|f_{n,h}^{\bullet}\circ h - a_{\ta}^*\|_{P_{\ta},2}\nonumber\\
		& \quad \leq \sup_{h\in\mathcal{H}_{\ta,m}(M_\varepsilon\delta_{\so,m}/\nu_n)} \|f_{n,h}^{\bullet}\circ h - f_n \circ h_m\|_{P_{\ta},2} + \|f_n \circ h_m - a_{\ta}^*\|_{P_{\ta},2}\nonumber\\
		& \quad \leq M_{\varepsilon}\delta_{\so,m}/\nu_n + \epsilon_{\ta,n},
        \label{eq:curvature_approx_conv_bound}
    \end{align}
	where the first inequality follows from the triangle inequality and the inclusion $\mathcal{H}_{m,\varepsilon}\subseteq\mathcal{H}_{\ta,m}(M_{\varepsilon}\delta_{\so,m}/\nu_n)$, which follows from the definition of $\mathcal{H}_{m,\varepsilon}$ and \cref{asm:transferability}. The second inequality follows from $f_{n,h}^{\bullet} \in \argmin_{f\in\mathcal{F}_n^{\sub}} \|f\circ h - f_n \circ h_m\|_{P_{\ta},2}$, the definition of $\mathcal{H}_{\ta,m}(\cdot)$, and \cref{def:approx_embedding}.
    If we set $\underline{\delta}_n = \delta_{\so,m}/\nu_n + \epsilon_{\ta,n}$, we obtain the conditions in \cref{lem:curvature_approx} with $\tau_n=1$ by the curvature condition in \cref{asm:loss_ta}.
	Thus, \cref{lem:curvature_approx} implies that there exists a constant $C_{\varepsilon,0}>0$ such that for every $\delta > C_{\varepsilon,0}(\delta_{\so,m}/\nu_n + \epsilon_{\ta,n})$, \cref{eq:rate_of_conv_curvature} in \cref{lem:rate_of_conv} holds with $f_{n,h} = f_{n,h}^* = f_{n,h}^{\bullet}$ and $\tau_n=1$.

    Next, we verify \cref{eq:rate_of_conv_modulus}.
    By \cref{lem:maximal_lipschitz} with $P=P_{\ta}$, $\ell=\ell_{\ta}$, $C_{\ell}=C_{\ell,\ta}$, and $M=M_F$,
    \begin{align*}
        &\sup_{h\in\mathcal{H}_m}\mathbb{E} \left[\sup_{f \in \mathcal{F}_n: d_{n,h}\left(f, f_{n,h}^*\right) < \delta} \sqrt{n}\left|\left(\mathbb{M}_n-M_n\right)(f,h)-\left(\mathbb{M}_n-M_n\right)\left(f_{n,h}^*,h\right)\right|\right] \\
        &\quad \lesssim C_{\ell,\ta}M_F \left(J\left(\delta/M_F,\mathcal{F}_n | M_F, L_2\right)
        + \frac{M_F^2 J^2\left(\delta/M_F,\mathcal{F}_n | M_F, L_2\right)}{\delta^2 \sqrt{n}}\right).
    \end{align*}
    By the VC dimension assumption and \cref{lem:vc_covering}, we have
    \begin{align*}
        J\left(\delta/M_F,\mathcal{F}_n | M_F, L_2\right)
        &\lesssim \int_0^{\delta/M_F} \sqrt{\mathsf{V}(\mathcal{F}_n) \log(1/\epsilon)} d\epsilon \\
        &\lesssim \delta M_F^{-1} \sqrt{\mathsf{V}(\mathcal{F}_n) \log(M_F/\delta)}.
    \end{align*}
    Thus,
    \begin{align*}
        &\sup_{h\in\mathcal{H}_m}\mathbb{E} \left[\sup_{f \in \mathcal{F}_n: d_{n,h}\left(f, f_{n,h}^*\right) < \delta} \sqrt{n}\left|\left(\mathbb{M}_n-M_n\right)(f,h)-\left(\mathbb{M}_n-M_n\right)\left(f_{n,h}^*,h\right)\right|\right] \\
        &\quad \lesssim C_{\ell,\ta}\left(\delta \sqrt{\mathsf{V}(\mathcal{F}_n) \log(M_F/\delta)}
        + \frac{M_F \mathsf{V}(\mathcal{F}_n) \log(M_F/\delta)}{\sqrt{n}}\right).
    \end{align*}
    
	Thus, by setting $\phi_n(\delta) = \delta \sqrt{\mathsf{V}(\mathcal{F}_n) \log(M_F/\delta)} + M_F \mathsf{V}(\mathcal{F}_n) \log(M_F/\delta)/\sqrt{n}$, we can verify \cref{eq:rate_of_conv_modulus}.\footnote{\label{footnote:phi}The function $\phi_n(\delta)$ is not increasing around $\delta=0$. However, we can replace $\phi_n(\delta)$ with an increasing function without affecting the convergence rate in \cref{lem:rate_of_conv}. See the footnote on page 433 of \cite{van2023weak} for details.}
    Finally, the condition $\phi_n(\delta) \lesssim \sqrt{n} \delta^2$ holds when $\delta \gtrsim \sqrt{(\mathsf{V}(\mathcal{F}_n) \log(n))/n}$.
    By \cref{lem:rate_of_conv}, we obtain the following convergence rate using the condition \cref{eq:rate_of_conv_delta_n}:
    \begin{equation}
        \left\|\hat{f}\circ \check{h} - f_n^{\bullet} \circ \check{h}\right\|_{P_{\ta},2}=O_P\left(\sqrt{\frac{\mathsf{V}(\mathcal{F}_n) \log(n)}{n}} +  \delta_{\so,m}/\nu_n + \epsilon_{\ta,n}\right).
    \end{equation}
	Thus, \cref{asm:convergence} (ii) holds with rate $r_{\ta,n}+\delta_{\so,m}/\nu_n+\epsilon_{\ta,n}$ rather than $r_{\ta,n}$ alone. Applying \cref{thm:transfer_learning_general} yields \cref{eq:target_convergence_rate}, since adding the transfer and approximation terms once more does not change the resulting stochastic order. This completes the proof.
\end{proof}

\subsection{Proof for \cref{thm:source_convergence}}
\begin{proof}
    As in the proof of \cref{thm:target_convergence}, we will apply \cref{lem:rate_of_conv}, but here the argument is unconditional on $\mathcal{I}$.
	Thus, we set $f_{m}^* = f_{m} = g_m \circ h_m$, $\hat{f}_{m} = \check{g} \circ \check{h}$, and $P = P_{\so}$. We define the objective functions for joint optimization over $g$ and $h$, rather than optimization over $f$ for fixed $h$.
   	For every $a,\widetilde a\in(\mathcal{G}_m\circ\mathcal{H}_m)\cup\{a_{\so}^*\}$,
	define $M_m(a) = -\mathbb{E}_{\so}[\ell_{\so}(a(Z),S)]$, 
	$\mathbb{M}_m(a) = -\frac{1}{m}\sum_{j=1}^m\ell_{\so}(a(Z_j),S_j)$, and 
	$d_m(a,\widetilde a) = \|a-\widetilde a\|_{P_{\so},2}$.
    
    The curvature condition \cref{eq:rate_of_conv_curvature} is verified as in the proof of \cref{thm:target_convergence}, by applying \cref{lem:curvature_approx} with $f_m=g_m\circ h_m$, $f_m^*=a_{\so}^*$, and $\underline{\delta}_m=\epsilon_{\so,m}$. That is, there exists a constant $C>0$ such that for every $\delta>\tau_m^{-1}C\epsilon_{\so,m}$, \cref{eq:rate_of_conv_curvature} holds.

    The constant function $M_G$ is an envelope for the composite function class $\mathcal{G}_m \circ \mathcal{H}_m$.
    For every discrete probability measure $Q$ on $\mathcal{Z}$ and $h\in\mathcal{H}_m$, let $Q_h$ denote the image of $Q$ under $h$.
    By the uniform Lipschitz condition in \cref{asm:loss_so} (iii), the uniform entropy integral satisfies
    \begin{align*}
        & J\left(\delta, \mathcal{G}_m \circ \mathcal{H}_m | M_G, L_2\right)\\
        & \quad \leq \int_0^\delta \sup _{Q \in \mathcal{P}_{f d}} \sqrt{1+\log N\left(\varepsilon M_G,\mathcal{G}_m \circ \mathcal{H}_m,\|\cdot\|_{Q, 2}\right)} d\varepsilon \\
        & \quad \leq 
        \int_0^\delta \sup _{Q \in \mathcal{P}_{f d}} 
        \scriptstyle{
            \sqrt{1+\sum_{k=1}^K\log N\left(\frac{\varepsilon M_G}{2L_{\mathcal{G},m}K}, \mathcal{H}_{m,k},\|\cdot\|_{Q, 2}\right)+\sup_{h\in\mathcal{H}_m}\sum_{t=1}^T\log N\left(\frac{\varepsilon M_G}{2T},\mathcal{G}_{m,t},\|\cdot\|_{Q_h, 2}\right)}
         } d\varepsilon\\
        & \quad \lesssim \int_0^\delta \sqrt{\sum_{k=1}^K \mathsf{V}(\mathcal{H}_{m,k})\log\left(\frac{eL_{\mathcal{G},m}K}{\varepsilon}\right) + \sum_{t=1}^T\mathsf{V}(\mathcal{G}_{m,t})\log\left(\frac{eT}{\varepsilon}\right)} d\varepsilon \\
        & \quad \lesssim \delta \sqrt{\sum_{k=1}^K \mathsf{V}(\mathcal{H}_{m,k})\log\left(\frac{L_{\mathcal{G},m}K}{\delta}\right) + \sum_{t=1}^T\mathsf{V}(\mathcal{G}_{m,t})\log\left(\frac{T}{\delta}\right)},
    \end{align*}
    where the first inequality follows from \cref{lem:entropy_property,lem:entropy_property_composite}. The second inequality follows from \cref{lem:vc_covering}, and the last inequality follows from calculations similar to those in the proof for \cref{thm:target_convergence}.
	Using the preceding entropy bound, we apply \cref{lem:maximal_lipschitz} with $\mathcal{F}_n=\mathcal{G}_m\circ\mathcal{H}_m$, $C_\ell=C_{\ell,\so}$, and $M=M_G$ to verify \cref{eq:rate_of_conv_modulus} with\footnote{See \cref{footnote:phi}.}
    \begin{align*}
        \phi_m(\delta)
        &=\delta\sqrt{\sum_{k=1}^K \mathsf{V}(\mathcal{H}_{m,k})\log\left(\frac{L_{\mathcal{G},m}K}{\delta}\right)+\sum_{t=1}^T\mathsf{V}(\mathcal{G}_{m,t})\log\left(\frac{T}{\delta}\right)}\\
        &\quad+\frac{\sum_{k=1}^K \mathsf{V}(\mathcal{H}_{m,k})\log\left(L_{\mathcal{G},m}K/\delta\right)+\sum_{t=1}^T\mathsf{V}(\mathcal{G}_{m,t})\log\left(T/\delta\right)}{\sqrt{m}}.
    \end{align*}
    Hence, the condition $\phi_m(\delta/\tau_m) \lesssim \sqrt{m} \delta^2$ holds when
     $$
     \delta  \gtrsim \sqrt{\frac{\sum_{k=1}^K \mathsf{V}(\mathcal{H}_{m,k})\log\left(L_{\mathcal{G},m}K m\right)+\sum_{t=1}^T\mathsf{V}(\mathcal{G}_{m,t})\log\left(T m\right)}{\tau_m^2 m}}=r_{\so,m}.
      $$
    By \cref{lem:rate_of_conv}, we obtain the following convergence rate using the condition \cref{eq:rate_of_conv_delta_n}:
    \begin{equation}
        \left\|\check{g}\circ \check{h} - g_m \circ h_m\right\|_{P_{\so},2}=O_{P_{\so}}\left(r_{\so,m}/\tau_m+\epsilon_{\so,m}/\tau_m\right).
    \end{equation}
    Since $\|g_m\circ h_m-a_{\so}^*\|_{P_{\so},2}\leq\epsilon_{\so,m}$ and $\tau_m\leq1$, the triangle inequality gives \cref{eq:source_convergence_rate}.
    This completes the proof.
\end{proof}

\subsection{Proof for \cref{lem:convergence_h}}
\begin{proof}
	By the definitions of $\check{h}^{\mathrm{proj}}$, $\check{q}$, and $\check{Q}$,
	\begin{align}
		\left\|\check{h}^{\mathrm{proj}}-(\check{q}+\check{Q}h_m)\right\|_{P_{\so},2}
		&=
		\left\|\check{B}_{\so}^+\left(\check{\alpha}_{\so}+\check{B}_{\so}\check{h}-(\alpha_{\so,m}+B_{\so,m}h_m)\right)\right\|_{P_{\so},2}\nonumber\\
		&\leq
		\|\check{B}_{\so}^+\|_{\mathrm{sp}}
		\left\|\check{\alpha}_{\so}+\check{B}_{\so}\check{h}-(\alpha_{\so,m}+B_{\so,m}h_m)\right\|_{P_{\so},2}.
		\label{eq:convergence_h_bound}
	\end{align}
	The inequality follows from \cref{lem:l2_ineq}. By \cref{asm:convergence} (i), the second factor on the right-hand side of \cref{eq:convergence_h_bound} is $O_{P_{\so}}(\delta_{\so,m})$. Combining this rate with $\|\check{B}_{\so}^{+}\|_{\mathrm{sp}}=O_{P_{\so}}(\underline{\sigma}_m^{-1})$ proves the stated result. Finally, on the event in the sufficient condition, \cref{lem:spectral_inverse} gives $\|\check{B}_{\so}^{+}\|_{\mathrm{sp}}\leq\underline{\sigma}_m^{-1}$, which verifies the condition in \cref{eq:convergence_h_condition}.
\end{proof}

\subsection{Proof for \cref{thm:ridge_target_convergence}}
\begin{proof}
	For each $h\in\mathcal{H}_m^{\mathrm{proj}}$, select a source minimizer
	\begin{equation*}
		(\alpha_{\so,m,h}^{\bullet},B_{\so,m,h}^{\bullet})
		\in
		\argmin_{(\alpha,B)\in\Theta_{\so,m}}
		\left\|\alpha+Bh-(\alpha_{\so,m}+B_{\so,m}h_m)\right\|_{P_{\so},2}.
	\end{equation*}
	For every $h\in\mathcal{H}_m^{\mathrm{proj}}$, define the target-head parameters through the selected source minimizer by
	\begin{align*}
		\alpha_{\ta,n,h}^{\bullet}
		&:=
		\alpha_{\ta,n}
		+
		\beta_{\ta,n}'B_{\so,m}^+
		(\alpha_{\so,m,h}^{\bullet}-\alpha_{\so,m}),\\
		\beta_{\ta,n,h}^{\bullet\prime}
		&:=
		\beta_{\ta,n}'B_{\so,m}^+B_{\so,m,h}^{\bullet},\\
		f_{n,h}^{\bullet}(v)
		&:=
		\alpha_{\ta,n,h}^{\bullet}
		+
		\beta_{\ta,n,h}^{\bullet\prime}v.
	\end{align*}
	By definition, $(\alpha_{\so,m,h}^{\bullet},B_{\so,m,h}^{\bullet})\in\Theta_{\so,m}$. Indeed, \cref{asm:ridge_target} implies, uniformly over $h\in\mathcal{H}_m^{\mathrm{proj}}$,
	\begin{align}
		|\alpha_{\ta,n,h}^{\bullet}|
		&\leq
		|\alpha_{\ta,n}|
		+\|b_n'B_{\so,m}B_{\so,m}^+\|_2
		\|\alpha_{\so,m,h}^{\bullet}-\alpha_{\so,m}\|_2
		\leq
		|\alpha_{\ta,n}|+2C_bC_{\alpha,\so}
		\leq
		C_{\alpha},\nonumber\\
		\|\beta_{\ta,n,h}^{\bullet}\|_2
		&\leq
		\|b_n'B_{\so,m}B_{\so,m}^+\|_2
		\|B_{\so,m,h}^{\bullet}\|_{\mathrm{sp}}
		\leq
		C_b\bar{\sigma}
		\leq
		C_{\beta}.
		\label{eq:transported_target_parameter_bounds}
	\end{align}
	Hence, $f_{n,h}^{\bullet}\in\mathcal{F}_n$ for every $h\in\mathcal{H}_m^{\mathrm{proj}}$.
	For each fixed $h\in\mathcal{H}_m^{\mathrm{proj}}$, also define the estimator family
	\begin{equation*}
		\hat{f}_{n,h}
		\in
		\argmin_{f\in\mathcal{F}_n}
		\left\{
			\frac{1}{n}\sum_{i=1}^n\ell_{\ta}(f\circ h(Z_i),Y_i)
			+
			\lambda_n^2\|\beta\|_2^2
		\right\},
	\end{equation*}
	such that $\hat{f}_{n,\check{h}^{\mathrm{proj}}}=\hat{f}_n^{\mathrm{ridge}}$, and let $(\hat{\alpha}_{\ta,n,h},\hat{\beta}_{\ta,n,h})$ denote the corresponding parameters.
	
	As in \cref{eq:rate_decomposition}, we decompose the estimation error of the target model into three terms.
	\begin{align*}
		&\|\hat{f}_n^{\mathrm{ridge}}\circ\check{h}^{\mathrm{proj}}-a_{\ta}^*\|_{P_{\ta},2}\\
		&\quad\leq
		\|\hat{f}_n^{\mathrm{ridge}}\circ\check{h}^{\mathrm{proj}}-f_{n,\check{h}^{\mathrm{proj}}}^{\bullet}\circ\check{h}^{\mathrm{proj}}\|_{P_{\ta},2}\\
		&\qquad+
		\|f_{n,\check{h}^{\mathrm{proj}}}^{\bullet}\circ\check{h}^{\mathrm{proj}}-f_n\circ h_m\|_{P_{\ta},2}
		+
		\|f_n\circ h_m-a_{\ta}^*\|_{P_{\ta},2}\\
		&\quad=: \text{(I)}+\text{(II)}+\text{(III)}.
	\end{align*}

	By \cref{def:approx_embedding}, $\text{(III)}=\epsilon_{\ta,n}$. To bound $\text{(II)}$, first observe that
	\begin{equation*}
		\beta_{\ta,n}'B_{\so,m}^+
		=
		b_n'B_{\so,m}B_{\so,m}^+,
		\qquad
		\|\beta_{\ta,n}'B_{\so,m}^+\|_2
		\leq
		C_b.
	\end{equation*}
	The same identity implies $\beta_{\ta,n}'B_{\so,m}^+B_{\so,m}=\beta_{\ta,n}'$. Hence, for every $h\in\mathcal{H}_m^{\mathrm{proj}}$,
	\begin{align*}
		&\|\alpha_{\ta,n,h}^{\bullet}+\beta_{\ta,n,h}^{\bullet\prime}h-(\alpha_{\ta,n}+\beta_{\ta,n}'h_m)\|_{P_{\ta},2}\\
		&\quad=
		\|\beta_{\ta,n}'B_{\so,m}^+(\alpha_{\so,m,h}^{\bullet}+B_{\so,m,h}^{\bullet}h-(\alpha_{\so,m}+B_{\so,m}h_m))\|_{P_{\ta},2}\\
		&\quad\leq
		\frac{C_b}{\underline{w}_n}
		\|\alpha_{\so,m,h}^{\bullet}+B_{\so,m,h}^{\bullet}h-(\alpha_{\so,m}+B_{\so,m}h_m)\|_{P_{\so},2},
	\end{align*}
	where the inequality follows from \cref{asm:support,lem:l2_ineq}. Moreover,
	\begin{equation*}
		\check{B}_{\so}\check{h}^{\mathrm{proj}}
		=
		\check{B}_{\so}\check{B}_{\so}^+\check{B}_{\so}\check{h}
		=
		\check{B}_{\so}\check{h}.
	\end{equation*}
	By \cref{asm:ridge_target} (i), $(\check{\alpha}_{\so},\check{B}_{\so})\in\Theta_{\so,m}$ on the event under consideration and is therefore feasible in the source minimization defining $(\alpha_{\so,m,\check{h}^{\mathrm{proj}}}^{\bullet},B_{\so,m,\check{h}^{\mathrm{proj}}}^{\bullet})$. Consequently, \cref{asm:convergence} (i) gives
	\begin{align}
		\|g_{\alpha_{\so,m,\check{h}^{\mathrm{proj}}}^{\bullet},B_{\so,m,\check{h}^{\mathrm{proj}}}^{\bullet}}\circ\check{h}^{\mathrm{proj}}-g_m\circ h_m\|_{P_{\so},2}
		&\leq
		\|\check{g}\circ\check{h}^{\mathrm{proj}}-g_m\circ h_m\|_{P_{\so},2}\nonumber\\
		&=
		\|\check{g}\circ\check{h}-g_m\circ h_m\|_{P_{\so},2}\nonumber\\
		&=
		O_{P_{\so}}(\delta_{\so,m}).
		\label{eq:source_convergence_ridge}
	\end{align}
	It follows that $\text{(II)}=O_{P_{\so}}(\delta_{\so,m}/\underline{w}_n)$.

	It remains to bound $\text{(I)}$ uniformly over a high-probability source-side set. For each $h\in\mathcal{H}_m^{\mathrm{proj}}$, select
	\begin{equation*}
		(q_h,Q_h)
		\in
		\argmin_{\substack{q\in\mathbb{R}^{K},\ Q\in\mathbb{R}^{K\times K}:\\
		\|Q\|_{\mathrm{sp}}\leq\bar{\sigma}/\underline{\sigma}}}
		\|h-(q+Qh_m)\|_{P_{\so},2}.
	\end{equation*}
	The source parameter restriction implies $\|\check{Q}\|_{\mathrm{sp}}\leq\bar{\sigma}/\underline{\sigma}$, so $(\check{q},\check{Q})$ is feasible. It also implies $\|\check{B}_{\so}^+\|_{\mathrm{sp}}\leq\underline{\sigma}^{-1}$. The minimizing property of $(q_{\check{h}^{\mathrm{proj}}},Q_{\check{h}^{\mathrm{proj}}})$ and \cref{lem:convergence_h}, applied with $\underline{\sigma}_m=\underline{\sigma}$, give
	\begin{align}
		\|\check{h}^{\mathrm{proj}}-q_{\check{h}^{\mathrm{proj}}}-Q_{\check{h}^{\mathrm{proj}}}h_m\|_{P_{\so},2}
		&\leq
		\|\check{h}^{\mathrm{proj}}-(\check{q}+\check{Q}h_m)\|_{P_{\so},2}\nonumber\\
		&=
		O_{P_{\so}}(\delta_{\so,m}/\underline{\sigma})\nonumber\\
		&=
		O_{P_{\so}}(\delta_{\so,m}),
		\label{eq:h_convergence_ridge}
	\end{align}
	where the last equality uses that $\underline{\sigma}$ is fixed.
	For every $\varepsilon>0$, \cref{eq:source_convergence_ridge,eq:h_convergence_ridge} therefore give a constant $M_{\varepsilon}<\infty$ such that
	\begin{equation*}
		\limsup_{m\rightarrow\infty}
		P_{\so}(\check{h}^{\mathrm{proj}}\notin\mathcal{H}_{m,\varepsilon})
		\leq
		\varepsilon,
	\end{equation*}
	where
	\begin{align}
		\mathcal{H}_{m,\varepsilon}
		:=
		\left\{h\in\mathcal{H}_m^{\mathrm{proj}}:
		\right.&
		\|g_{\alpha_{\so,m,h}^{\bullet},B_{\so,m,h}^{\bullet}}\circ h-g_m\circ h_m\|_{P_{\so},2}
		\leq
		M_{\varepsilon}\delta_{\so,m},\nonumber\\
		&\left.
		\|h-(q_h+Q_hh_m)\|_{P_{\so},2}
		\leq
		M_{\varepsilon}\delta_{\so,m}
		\right\}.
		\label{eq:H_m_epsilon_ridge}
	\end{align}
	For every $h\in\mathcal{H}_{m,\varepsilon}$, the selected source head belongs to $\mathcal{G}_m$. Therefore, the first restriction defining $\mathcal{H}_{m,\varepsilon}$ implies
	\begin{equation*}
		\min_{g\in\mathcal{G}_m}
		\|g\circ h-g_m\circ h_m\|_{P_{\so},2}
		\leq
		M_{\varepsilon}\delta_{\so,m}.
	\end{equation*}
	Consequently, $\mathcal{H}_{m,\varepsilon}\subseteq\mathcal{H}_{\so,m}^{\mathrm{proj}}(M_{\varepsilon}\delta_{\so,m})$.

	For each fixed $h$, the finite-dimensional parameterization of $\mathcal{F}_n$ is pointwise continuous, and every subset of $\Theta_n$ is separable. Hence, the localized classes required by \cref{lem:rate_of_conv_regu} are pointwise measurable.
	We apply \cref{lem:rate_of_conv_regu} with $\mathcal{H}_m=\mathcal{H}_m^{\mathrm{proj}}$, $\check{h}=\check{h}^{\mathrm{proj}}$, $\mathcal{F}_n^*=\mathcal{F}_n$, $f_{n,h}=f_{n,h}^*=f_{n,h}^{\bullet}$, $\tau_n=1$, the estimator family $\hat{f}_{n,h}$ defined above, $\mathcal{J}_n(f_{\alpha,\beta})=\|\beta\|_2$, and $\hat{\lambda}_{n,h}=\lambda_n$.\footnote{
		The event in \cref{asm:ridge_target} (i) has probability approaching one. For the formal application of \cref{lem:rate_of_conv_regu}, redefine $\check{h}^{\mathrm{proj}}$ on the complement of this event as $B_{\so,m}^+B_{\so,m}h_m\in\mathcal{H}_m^{\mathrm{proj}}$ and define the corresponding ridge estimator by the same criterion. These modified objects agree with the original objects with probability approaching one, so this extension does not affect the claimed stochastic order.
		}
	Set $M_n(f,h) := -\mathbb{E}_{\ta}[\ell_{\ta}(f\circ h(Z),Y)]$, $\mathbb{M}_n(f,h) := -\frac{1}{n}\sum_{i=1}^n\ell_{\ta}(f\circ h(Z_i),Y_i)$, and $d_{n,h}(f,f') := \|f\circ h-f'\circ h\|_{P_{\ta},2}$.
	We next verify curvature directly around $f_{n,h}^{\bullet}$. The transported-head bound above and \cref{def:approx_embedding} give
	\begin{equation*}
		\sup_{h\in\mathcal{H}_{m,\varepsilon}}
		\|f_{n,h}^{\bullet}\circ h-a_{\ta}^*\|_{P_{\ta},2}
		\leq
		C_bM_{\varepsilon}\frac{\delta_{\so,m}}{\underline{w}_n}+\epsilon_{\ta,n}.
	\end{equation*}
	Set
	\begin{equation*}
		\underline{\delta}_n
		:=
		\frac{\delta_{\so,m}}{\underline{w}_n}+\epsilon_{\ta,n}.
	\end{equation*}
	For $\delta\geq C_{\varepsilon,0}\underline{\delta}_n$ with a sufficiently large $C_{\varepsilon,0}$ and any $f\in\mathcal{F}_n$ satisfying
	\begin{equation*}
		\delta/2
		<
		\|f\circ h-f_{n,h}^{\bullet}\circ h\|_{P_{\ta},2}
		\leq
		\delta,
	\end{equation*}
	the reverse triangle inequality gives $\|f\circ h-a_{\ta}^*\|_{P_{\ta},2}\geq\delta/4$. Let $c_L,c_U>0$ denote uniform constants in the lower and upper excess-risk bounds in the first condition of \cref{asm:loss_ta}. Then,
	\begin{align*}
		M_n(f,h)-M_n(f_{n,h}^{\bullet},h)
		&\leq
		-c_L\|f\circ h-a_{\ta}^*\|_{P_{\ta},2}^2
		+c_U\|f_{n,h}^{\bullet}\circ h-a_{\ta}^*\|_{P_{\ta},2}^2\\
		&\lesssim_{\varepsilon}
		-\delta^2,
	\end{align*}
	where the last inequality follows by increasing $C_{\varepsilon,0}$ if necessary. This proves \cref{eq:rate_of_conv_regu_curvature}.

	The sample ridge estimator satisfies
	\begin{equation*}
		\mathbb{M}_n(\hat{f}_{n,h},h)-\lambda_n^2\mathcal{J}_n^2(\hat{f}_{n,h})
		\geq
		\mathbb{M}_n(f_{n,h}^{\bullet},h)-\lambda_n^2\mathcal{J}_n^2(f_{n,h}^{\bullet}),
	\end{equation*}
	so \cref{eq:rate_of_conv_regu_approx_max} holds with zero approximation error.

	We next verify the empirical-process modulus. For each $h$, let
	\begin{equation*}
		A_{h,\lambda,n}
		:=
		\Sigma_{h,n}+\lambda_n^2I_K.
	\end{equation*}
	Let $\xi_1,\ldots,\xi_n$ be i.i.d. Rademacher random variables independent of the target and source samples, and define
	\begin{equation*}
		\mathfrak{R}_n(a)
		:=
		\frac{1}{\sqrt{n}}\sum_{i=1}^n\xi_i a(Z_i)
	\end{equation*}
	for every measurable function $a$.
	For $f_{\alpha,\beta}\in\mathcal{F}_n$, write $\Delta\alpha:=\alpha-\alpha_{\ta,n,h}^{\bullet}$ and $\Delta\beta:=\beta-\beta_{\ta,n,h}^{\bullet}$. For each fixed $i$ and $h$, the loss-increment map evaluated between $f_{\alpha,\beta}\circ h(Z_i)$ and $f_{n,h}^{\bullet}\circ h(Z_i)$ vanishes at zero and is $C_{\ell,\ta}$-Lipschitz by the Lipschitz condition in \cref{asm:loss_ta}. Symmetrization and \cref{lem:contraction}, applied before enlarging the parameter set, yield
	\begin{align}
		&\sup_{h\in\mathcal{H}_{m,\varepsilon}}
		\mathbb{E}_{\ta}\left[
		\sup_{\substack{f\in\mathcal{F}_n:
		d_{n,h}(f,f_{n,h}^{\bullet})<\delta\\
		\mathcal{J}_n(f)<\delta/\lambda_n}}
		\sqrt{n}\left|(\mathbb{M}_n-M_n)(f,h)-(\mathbb{M}_n-M_n)(f_{n,h}^{\bullet},h)\right|
		\right]\nonumber\\
		&\quad\leq
		4C_{\ell,\ta}
		\sup_{h\in\mathcal{H}_{m,\varepsilon}}
		\mathbb{E}_{\xi,\ta}\left[
		\sup_{\substack{f_{\alpha,\beta}\in\mathcal{F}_n:
		\|\Delta\alpha+\Delta\beta'h\|_{P_{\ta},2}<\delta\\
		\lambda_n\|\beta\|_2<\delta}}
		|\mathfrak{R}_n(\Delta\alpha+\Delta\beta'h)|
		\right]\nonumber\\
		&\quad\leq
		4C_{\ell,\ta}
		\sup_{h\in\mathcal{H}_{m,\varepsilon}}
		\mathbb{E}_{\xi,\ta}\left[
		\sup_{\substack{(\Delta\alpha,\Delta\beta):
		\|\Delta\alpha+\Delta\beta'h\|_{P_{\ta},2}<\delta\\
		\lambda_n\|\Delta\beta+\beta_{\ta,n,h}^{\bullet}\|_2<\delta}}
		|\mathfrak{R}_n(\Delta\alpha+\Delta\beta'h)|
		\right].
		\label{eq:ridge_multiplier_modulus_bound}
	\end{align}
	The last inequality is the only step that enlarges the compact coefficient class. For every increment in the supremum,
	\begin{equation*}
		\mathfrak{R}_n(\Delta\alpha+\Delta\beta'h)
		=
		\mathbb{E}_{\ta}[\Delta\alpha+\Delta\beta'h]\mathfrak{R}_n(1)
		+
		\Delta\beta'\mathfrak{R}_n(h-\mathbb{E}_{\ta}[h]).
	\end{equation*}
	The first coefficient is bounded by $\delta$ by the Cauchy-Schwarz inequality, and $\mathbb{E}_{\xi,\ta}[|\mathfrak{R}_n(1)|]\leq1$. For the second term,
	\begin{equation*}
		|\Delta\beta'\mathfrak{R}_n(h-\mathbb{E}_{\ta}[h])|
		\leq
		\|A_{h,\lambda,n}^{1/2}\Delta\beta\|_2
		\|A_{h,\lambda,n}^{-1/2}\mathfrak{R}_n(h-\mathbb{E}_{\ta}[h])\|_2.
	\end{equation*}
	The variance identity gives
	\begin{equation*}
		\Delta\beta'\Sigma_{h,n}\Delta\beta
		=
		\operatorname{Var}_{\ta}(\Delta\beta'h(Z))
		\leq
		\|\Delta\alpha+\Delta\beta'h\|_{P_{\ta},2}^2
		<
		\delta^2.
	\end{equation*}
	In addition, \cref{eq:transported_target_parameter_bounds} implies
	\begin{equation*}
		\lambda_n\|\Delta\beta\|_2
		\leq
		\lambda_n\|\Delta\beta+\beta_{\ta,n,h}^{\bullet}\|_2
		+
		\lambda_n\|\beta_{\ta,n,h}^{\bullet}\|_2
		<
		\delta+C_{\beta}\lambda_n.
	\end{equation*}
	Consequently,
	\begin{equation*}
		\|A_{h,\lambda,n}^{1/2}\Delta\beta\|_2
		\lesssim
		\delta+\lambda_n.
	\end{equation*}
	Moreover, Jensen's inequality and the Rademacher second-moment identity imply
	\begin{align*}
		&\mathbb{E}_{\xi,\ta}\left[
		\|A_{h,\lambda,n}^{-1/2}\mathfrak{R}_n(h-\mathbb{E}_{\ta}[h])\|_2
		\right]\\
		&\quad\leq
		\operatorname{tr}\left(A_{h,\lambda,n}^{-1/2}\Sigma_{h,n}A_{h,\lambda,n}^{-1/2}\right)^{1/2}\\
		&\quad=
		\mathcal{N}_{h,n}(\lambda_n)^{1/2}.
	\end{align*}
	It remains to control this effective dimension uniformly over $h\in\mathcal{H}_{m,\varepsilon}$.

	Define $r_h:=h-q_h-Q_hh_m$ and $u_h:=r_h-\mathbb{E}_{\ta}[r_h]$. Conditional on the source sample, $q_h$ and $Q_h$ are constant under $P_{\ta}$, and
	\begin{align*}
		\Sigma_{h,n}
		&=
		Q_h\Sigma_{h_m,n}Q_h'+\Delta_{h,n},\\
		\Delta_{h,n}
		&=
		\mathbb{E}_{\ta}[u_hu_h']
		+Q_h\mathbb{E}_{\ta}[(h_m-\mathbb{E}_{\ta}[h_m])u_h']
		+\mathbb{E}_{\ta}[u_h(h_m-\mathbb{E}_{\ta}[h_m])']Q_h'.
	\end{align*}
	By \cref{asm:support,eq:H_m_epsilon_ridge},
	\begin{equation*}
		\sup_{h\in\mathcal{H}_{m,\varepsilon}}\|r_h\|_{P_{\ta},2}
		\leq
		M_{\varepsilon}\frac{\delta_{\so,m}}{\underline{w}_n}.
	\end{equation*}
	Because $\mathbb{E}_{\ta}\|u_h\|_2^2\leq\mathbb{E}_{\ta}\|r_h\|_2^2$, $\|Q_h\|_{\mathrm{sp}}\leq\bar{\sigma}/\underline{\sigma}$, and $\|h_m-\mathbb{E}_{\ta}[h_m]\|_{P_{\ta},2}=\operatorname{tr}(\Sigma_{h_m,n})^{1/2}$, the nuclear-norm bounds in \cref{lem:unitary_ineq,lem:rank_one} and the Cauchy-Schwarz inequality give
	\begin{equation*}
		\sup_{h\in\mathcal{H}_{m,\varepsilon}}\|\Delta_{h,n}\|_*
		\lesssim_{\varepsilon}
		\left(\frac{\delta_{\so,m}}{\underline{w}_n}\right)^2
		+
		\operatorname{tr}(\Sigma_{h_m,n})^{1/2}\frac{\delta_{\so,m}}{\underline{w}_n}.
	\end{equation*}
	Applying \cref{lem:effective_dof,lem:transformation_dof} and absorbing the constants depending on $\bar{\sigma},\underline{\sigma}$ now yields
	\begin{align*}
		\sup_{h\in\mathcal{H}_{m,\varepsilon}}\mathcal{N}_{h,n}(\lambda_n)
		&\lesssim_{\varepsilon}
		\mathcal{N}_{h_m,n}(\lambda_n)
		+
		\frac{(\delta_{\so,m}/\underline{w}_n)^2+\operatorname{tr}(\Sigma_{h_m,n})^{1/2}\delta_{\so,m}/\underline{w}_n}{\lambda_n^2},\\
		\sup_{h\in\mathcal{H}_{m,\varepsilon}}\mathcal{N}_{h,n}(\lambda_n)^{1/2}
		&\lesssim_{\varepsilon}
		\mathcal{N}_{h_m,n}(\lambda_n)^{1/2}
		+
		\frac{\delta_{\so,m}/\underline{w}_n+(\delta_{\so,m}/\underline{w}_n)^{1/2}\operatorname{tr}(\Sigma_{h_m,n})^{1/4}}{\lambda_n}.
	\end{align*}
	Combining these bounds with \cref{eq:ridge_multiplier_modulus_bound} verifies \cref{eq:rate_of_conv_regu_modulus} with
	\begin{equation*}
		\phi_n(\delta)
		:=
		(\delta+\lambda_n)
		\left(
		1
		+\mathcal{N}_{h_m,n}(\lambda_n)^{1/2}
		+\frac{\delta_{\so,m}/\underline{w}_n+(\delta_{\so,m}/\underline{w}_n)^{1/2}\operatorname{tr}(\Sigma_{h_m,n})^{1/4}}{\lambda_n}
		\right).
	\end{equation*}
	This function is increasing, and $\phi_n(\delta)/\delta$ is decreasing, so the modulus requirement in \cref{lem:rate_of_conv_regu} holds with $\xi=1<2$. Choose $\delta_n$ to be a sufficiently large constant multiple of
	\begin{equation*}
		\sqrt{\frac{1+\mathcal{N}_{h_m,n}(\lambda_n)}{n}}
		+\lambda_n
		+\frac{\delta_{\so,m}/\underline{w}_n+(\delta_{\so,m}/\underline{w}_n)^{1/2}\operatorname{tr}(\Sigma_{h_m,n})^{1/4}}{\sqrt{n}\lambda_n}
		+\frac{\delta_{\so,m}}{\underline{w}_n}
		+\epsilon_{\ta,n}.
	\end{equation*}
	Choose the multiplicative constant large enough that $C_{\beta}\lambda_n<\delta_n$. Then $\phi_n(\delta_n)\lesssim_{\varepsilon}\sqrt{n}\delta_n^2$, $\delta_n\gtrsim_{\varepsilon}\underline{\delta}_n$, and
	\begin{equation*}
		\sup_{h\in\mathcal{H}_{m,\varepsilon}}
		\lambda_n\mathcal{J}_n(f_{n,h}^{\bullet})
		\leq
		C_{\beta}\lambda_n
		<
		\delta_n.
	\end{equation*}
	Thus, the remaining requirements in \cref{eq:rate_of_conv_regu_delta_n} hold, and \cref{lem:rate_of_conv_regu} gives
	\begin{align*}
		\text{(I)}
		&=
		\|\hat{f}_{n,\check{h}^{\mathrm{proj}}}\circ\check{h}^{\mathrm{proj}}-f_{n,\check{h}^{\mathrm{proj}}}^{\bullet}\circ\check{h}^{\mathrm{proj}}\|_{P_{\ta},2}\\
		&=
		O_P\left(\delta_n+\lambda_n\mathcal{J}_n(f_{n,\check{h}^{\mathrm{proj}}}^{\bullet})\right)\\
		&=
		O_P(\delta_n).
	\end{align*}
	Combining this result with $\text{(II)}=O_P(\delta_{\so,m}/\underline{w}_n)$ and $\text{(III)}=\epsilon_{\ta,n}$ proves the theorem.
\end{proof}

\section{Proofs for Appendix}
\subsection{Proof for \cref{lem:transfer_var}}
\begin{proof}
    Since the loss function is the mean squared loss,
    \begin{align*}
        & \inf_{f\in\mathcal{F}_n^{\sub}}\mathbb{E}\left[\left(f\circ h(Z)-f_n \circ h_m(Z)\right)^2\right]\\
        &\quad = \inf_{f\in\mathcal{F}_n^{\sub}}\mathbb{E}\left[\left(\mathbb{E}\left[f_n \circ h_m(Z)\mid h(Z)\right]-f_n \circ h_m(Z)
        + f\circ h(Z) - \mathbb{E}\left[f_n \circ h_m(Z)\mid h(Z)\right]\right)^2\right]\\
        &\quad =\mathbb{E}\left[\operatorname{Var}\left(f_n \circ h_m(Z)\mid h(Z)\right)\right]
		+ \inf_{f\in\mathcal{F}_n^{\sub}}\mathbb{E}\left[\left(\mathbb{E}\left[f_n \circ h_m(Z)\mid h(Z)\right]-f\circ h(Z)\right)^2\right],
    \end{align*}
	where the last equality follows from the definition of the conditional variance and the law of iterated expectation.
\end{proof}

\subsection{Proof for \cref{lem:doob_dynkin_completion}}
\begin{proof}
	The sigma-field inclusion implies that each coordinate $V_j$ is measurable with respect to $\overline{\sigma(U)}^{P}$. By Proposition 2.12 of \cite{folland1999real}, there exists a $\sigma(U)$-measurable random variable $\widetilde{V}_j$ such that $\widetilde{V}_j = V_j$, $P$-a.s.
	For each $j=1,\ldots,q$, Theorem 4.2.8 of \cite{dudley2002real} yields a measurable function $\Gamma_j:\mathcal{U}\to\mathbb{R}$ such that $\widetilde{V}_j=\Gamma_j\circ U$. Hence, $\Gamma=(\Gamma_1,\ldots,\Gamma_q)':\mathcal{U}\to\mathbb{R}^{q}$ is measurable and satisfies $\widetilde{V}=\Gamma\circ U$.
	Since $V=\widetilde{V}$, $P$-a.s., the result follows.
\end{proof}

\subsection{Proof for \cref{lem:loss_property}}
\begin{proof}
    See \cite{farrell2021deep}, Lemma 8.
\end{proof}

\subsection{Proof for \cref{lem:loss_property_multilogit}}
\begin{proof}
	The assumed Euclidean-norm bounds imply that
	\begin{equation*}
		\sup_{z\in\mathcal{Z}}|f_t(z)|,
		\sup_{z\in\mathcal{Z}}|f_t^*(z)|
		\leq
		M,
		\qquad t=1,\ldots,T.
	\end{equation*}
	The constants $c_L$ and $c_U$ can therefore be derived as in Lemma 9 of \cite{farrell2021deep}.

	It remains to verify the Lipschitz constant. For $u\in\mathbb{R}^T$, define
	\begin{equation*}
		p_t(u)
		:=
		\frac{\exp(u_t)}{1+\sum_{j=1}^T\exp(u_j)},
		\qquad t=1,\ldots,T.
	\end{equation*}
	Write $p(u)=(p_1(u),\ldots,p_T(u))'$. For $y=1,\ldots,T$, let $e_y\in\mathbb{R}^T$ be the $y$-th standard basis vector, and let $e_{T+1}=\mathbf{0}_T$. Then,
	\begin{equation*}
		\nabla_1\ell(u,y)
		=
		p(u)-e_y.
	\end{equation*}
	For $y=1,\ldots,T$,
	\begin{align*}
		\|\nabla_1\ell(u,y)\|_2^2
		&=
		(1-p_y(u))^2+\sum_{j\neq y}p_j(u)^2\\
		&\leq
		(1-p_y(u))^2+\left(\sum_{j\neq y}p_j(u)\right)^2\\
		&\leq
		2(1-p_y(u))^2\\
		&\leq
		2,
	\end{align*}
	where the first inequality follows from $p_j(u)\geq0$ and the second inequality follows from $\sum_{j\neq y}p_j(u)\leq1-p_y(u)$.
	We also have $\|\nabla_1\ell(u,T+1)\|_2=\|p(u)\|_2\leq1$. The fundamental theorem of calculus and the Cauchy-Schwarz inequality now give
	\begin{align*}
		|\ell(f(z),y)-\ell(a(z),y)|
		&\leq
		\sup_{s\in[0,1]}\|\nabla_1\ell(a(z)+s(f(z)-a(z)),y)\|_2\|f(z)-a(z)\|_2\\
		&\leq
		\sqrt{2}\|f(z)-a(z)\|_2.
	\end{align*}
\end{proof}

\subsection{Proof for \cref{lem:entropy_property}}
\begin{proof}
    For each $k=1,\ldots,K$, let $\{h_{k,j}: j=1,\ldots,N_k\}$ be an $\varepsilon/K$-covering set of $\mathcal{H}_k$ with respect to the norm $L^2(Q)$, where $N_k=N\left(\varepsilon / K, \mathcal{H}_k, L^2(Q)\right)$. Then, for every $h\in\mathcal{H}$, there exists $j_k\in\{1,\ldots,N_k\}$ such that $\|h_k - h_{k,j_k}\|_{2,Q} < \varepsilon/K$ for each $k=1,\ldots,K$. Thus, we have
    $$
    \|h - (h_{1,j_1},\ldots,h_{K,j_K})\|_{2,Q} \leq \sum_{k=1}^K \|h_k - h_{k,j_k}\|_{2,Q} < K \cdot (\varepsilon/K) = \varepsilon.
    $$
    Therefore, the set $\{(h_{1,j_1},\ldots,h_{K,j_K}): j_k=1,\ldots,N_k, k=1,\ldots,K\}$ forms an $\varepsilon$-covering set of $\mathcal{H}$ with respect to the norm $\|\cdot\|_{2,Q}$, and its cardinality is $\prod_{k=1}^K N_k$. This completes the proof.
\end{proof}

\subsection{Proof for \cref{lem:entropy_property_composite}}
\begin{proof}
	Let $\{h_j:j=1,\ldots,N_H\}$ be an $\varepsilon/(2L_{\mathcal F})$-cover of $\mathcal H$. For each $j$, let $\{f_{j,k}:k=1,\ldots,N_j\}$ be an $\varepsilon/2$-cover of $\mathcal F$ under $L^2(Q_{h_j})$.
	For every $f\circ h\in\mathcal F\circ\mathcal H$, choose $h_j$ and $f_{j,k}$ from these covers. The triangle inequality and the Lipschitz property give
	\begin{equation*}
		\|f\circ h-f_{j,k}\circ h_j\|_{2,Q}
		\leq
		L_{\mathcal F}\|h-h_j\|_{2,Q}
		+\|f-f_{j,k}\|_{2,Q_{h_j}}
		\leq
		\varepsilon.
	\end{equation*}
	Thus, $\{f_{j,k}\circ h_j:j=1,\ldots,N_H,\ k=1,\ldots,N_j\}$ is an $\varepsilon$-cover of $\mathcal F\circ\mathcal H$. Therefore,
	\begin{equation*}
		N\left(\varepsilon,\mathcal F\circ\mathcal H,L^2(Q)\right)
		\leq
		\sum_{j=1}^{N_H}N_j
		\leq
		N\left(\frac{\varepsilon}{2L_{\mathcal F}},\mathcal H,L^2(Q)\right)
		\sup_{h'\in\mathcal H}
		N\left(\frac{\varepsilon}{2},\mathcal F,L^2(Q_{h'})\right).
	\end{equation*}
\end{proof}

\subsection{Proof for \cref{lem:maximal_lipschitz}}
\begin{proof}
	For every discrete probability measure $Q$ and every $h\in\mathcal{H}_m$, define
	\begin{equation*}
		d_{Q,h}(f,g)=\|f\circ h-g\circ h\|_{Q,2}.
	\end{equation*}
	For every $h\in\mathcal{H}_m$ and $f\in B_{n,h}(\delta)$,
	\begin{equation*}
		\sqrt{n}\left[\left(\mathbb{M}_n-M_n\right)(f,h)-\left(\mathbb{M}_n-M_n\right)\left(f_{n,h}^*,h\right)\right]=\mathbb{G}_n\left(m_{f,h}-m_{f_{n,h}^*,h}\right).
	\end{equation*}
	Hence,
	\begin{equation*}
		\sup_{f\in B_{n,h}(\delta)} \sqrt{n}\left|\left(\mathbb{M}_n-M_n\right)(f,h)-\left(\mathbb{M}_n-M_n\right)\left(f_{n,h}^*,h\right)\right|=\left\|\mathbb{G}_n\right\|_{\mathcal{M}_{n,h}(\delta)}.
	\end{equation*}
	Fix $h\in\mathcal{H}_m$ and let $Q$ be any discrete probability measure.
	Let $Q_h$ denote the image of $Q$ under $h$.
	For every $f,\tilde{f}\in B_{n,h}(\delta)$, the Lipschitz continuity of $\ell$ implies that, for every $(z,y)$,
	\begin{equation*}
		\left|m_{f,h}(z,y)-m_{\tilde{f},h}(z,y)\right|
		=\left|\ell\left(f\circ h(z),y\right)-\ell\left(\tilde{f}\circ h(z),y\right)\right|
		\leq C_{\ell}\left\|f\circ h(z)-\tilde{f}\circ h(z)\right\|_2.
	\end{equation*}
	Hence,
	\begin{align}
		\left\|\left(m_{f,h}-m_{f_{n,h}^*,h}\right)-\left(m_{\tilde{f},h}-m_{f_{n,h}^*,h}\right)\right\|_{Q,2}
		& =\left\|m_{f,h}-m_{\tilde{f},h}\right\|_{Q,2}\nonumber\\
		& \leq C_{\ell}\left\|f\circ h-\tilde{f}\circ h\right\|_{Q,2}\nonumber\\
		& =C_{\ell} d_{Q,h}(f,\tilde{f}).
        \label{eq:lipschitz_diff_bound}
	\end{align}
	Define the normalized class
	\begin{equation*}
		\widetilde{\mathcal{M}}_{n,h}(\delta)=\left\{\frac{\ell_{\mathrm{diff}}}{2C_{\ell}M}: \ell_{\mathrm{diff}}\in\mathcal{M}_{n,h}(\delta)\right\}.
	\end{equation*}
    We have
	\begin{align}
        N\left(\varepsilon/(2C_{\ell}M),\widetilde{\mathcal{M}}_{n,h}(\delta),L_2(Q)\right)
		& = N\left(\varepsilon,\mathcal{M}_{n,h}(\delta),L_2(Q)\right)\nonumber\\
        & \leq N\left(\varepsilon/C_{\ell},B_{n,h}(\delta),d_{Q,h}\right)\nonumber\\
        & \leq N\left(\varepsilon/C_{\ell},\mathcal{F}_n,d_{Q,h}\right),
        \label{eq:covering_bound_lipschitz}
	\end{align}
    where the first inequality follows from \cref{eq:lipschitz_diff_bound} and the second inequality follows from the fact that $B_{n,h}(\delta)\subseteq\mathcal{F}_n$.
	By the definition of $Q_h$,
	\begin{equation*}
		d_{Q,h}(f,g)
		=\|f\circ h-g\circ h\|_{Q,2}
		=\|f-g\|_{Q_h,2}.
	\end{equation*}
    Also, for every $\ell_{\mathrm{diff}}=m_{f,h}-m_{f_{n,h}^*,h}\in\mathcal{M}_{n,h}(\delta)$ with $f\in B_{n,h}(\delta)$,
	\begin{align*}
		|\ell_{\mathrm{diff}}(z,y)|\leq C_{\ell}\left\|f\circ h(z)-f_{n,h}^*\circ h(z)\right\|_2
        & \leq C_{\ell}\left\|f\circ h(z)\right\|_2+C_{\ell}\left\|f_{n,h}^*\circ h(z)\right\|_2\\
        & \leq 2C_{\ell}M.
	\end{align*}
	Hence, $\widetilde{\mathcal{M}}_{n,h}(\delta)$ has envelope function equal to $1$, and, for every $\tilde{\ell}_{\mathrm{diff}}=\left(m_{f,h}-m_{f_{n,h}^*,h}\right)/(2C_{\ell}M)\in\widetilde{\mathcal{M}}_{n,h}(\delta)$, the Lipschitz continuity of $\ell$ implies that
	\begin{align*}
		P\tilde{\ell}_{\mathrm{diff}}^2
		& =\frac{1}{4C_{\ell}^2M^2}P\left(m_{f,h}-m_{f_{n,h}^*,h}\right)^2\\
		& \leq \frac{1}{4M^2}\left\|f\circ h-f_{n,h}^*\circ h\right\|_{P,2}^2\\
		& < \frac{\delta^2}{4M^2}.
	\end{align*}
	Because $B_{n,h}(\delta)$ is pointwise measurable by \cref{lem:pointwise_measurable}, there exists a countable subset $B_{n,h}^0(\delta)\subseteq B_{n,h}(\delta)$ such that every $f\in B_{n,h}(\delta)$ is the pointwise limit of a sequence in $B_{n,h}^0(\delta)$. Since $h$ is fixed and $\ell(\cdot,y)$ is continuous on $\mathbb{R}^T$ for every $y$ by Lipschitz continuity, the class $\mathcal{M}_{n,h}(\delta)$ is pointwise measurable. Multiplication by $(2C_{\ell}M)^{-1}$ preserves pointwise measurability; thus, $\widetilde{\mathcal{M}}_{n,h}(\delta)$ is pointwise measurable. Because the map $x\mapsto x^2$ is continuous, the classes $\mathcal{M}_{n,h}^2(\delta)$ and $\widetilde{\mathcal{M}}_{n,h}^2(\delta)$ are also pointwise measurable. Therefore, these classes are $P$-measurable. Theorem 2.14.2 in \cite{van2023weak}, applied to $\widetilde{\mathcal{M}}_{n,h}(\delta)$ with local radius $\delta/(2 M)$, implies
	\begin{align*}
		\mathbb{E}_{P}\left\|\mathbb{G}_n\right\|_{\widetilde{\mathcal{M}}_{n,h}(\delta)}
		&\lesssim J\left(\delta/(2 M),\widetilde{\mathcal{M}}_{n,h}(\delta) | 1,L_2\right)\\
		&\quad
		+\frac{J^2\left(\delta/(2 M),\widetilde{\mathcal{M}}_{n,h}(\delta) | 1,L_2\right)}{(\delta/(2 M))^2 \sqrt{n}}.
	\end{align*}
	Next, for any discrete probability measure $Q'$ on the domain of $\mathcal{F}_n$, we have
	\begin{align*}
		&J\left(\delta/(2 M),\widetilde{\mathcal{M}}_{n,h}(\delta) | 1,L_2\right)\\
		&\quad =\sup_Q \int_0^{\delta/(2 M)}\sqrt{1+\log N\left(\varepsilon,\widetilde{\mathcal{M}}_{n,h}(\delta),L_2(Q)\right)} d \varepsilon\\
		&\quad \leq \sup_Q \int_0^{\delta/(2 M)}\sqrt{1+\log N\left(2 M\varepsilon,\mathcal{F}_n,d_{Q,h}\right)} d \varepsilon\\
		&\quad \leq \sup_{Q'} \int_0^{\delta/(2 M)}\sqrt{1+\log N\left(2\varepsilon M,\mathcal{F}_n,L_2(Q')\right)} d \varepsilon\\
		&\quad =\frac{1}{2}J\left(\delta/M,\mathcal{F}_n | F,L_2\right),
	\end{align*}
	where the first equality follows from the definition of the uniform entropy integral, the first inequality follows from \cref{eq:covering_bound_lipschitz},
	the second inequality uses the fact that every image measure $Q_h$ is a discrete probability measure on the domain of $\mathcal{F}_n$, and the last equality uses the change of variables $u=2 \varepsilon$.
	Moreover,
	\begin{equation*}
		\mathbb{E}_{P}\left\|\mathbb{G}_n\right\|_{\mathcal{M}_{n,h}(\delta)}=2C_{\ell}M\mathbb{E}_{P}\left\|\mathbb{G}_n\right\|_{\widetilde{\mathcal{M}}_{n,h}(\delta)}.
	\end{equation*}
	Combining these bounds yields
	\begin{align*}
		\mathbb{E}_{P}\left\|\mathbb{G}_n\right\|_{\mathcal{M}_{n,h}(\delta)}
		&\lesssim M C_\ell \left(J\left(\delta/M,\mathcal{F}_n | M,L_2\right)
		+\frac{M^2 J^2\left(\delta/M,\mathcal{F}_n | M,L_2\right)}{\delta^2 \sqrt{n}}\right).
	\end{align*}
	Taking the supremum over $h\in\mathcal{H}_m$ completes the proof.
\end{proof}

\subsection{Proof for \cref{lem:pointwise_measurable}}
\begin{proof}
	Let $\mathcal{F}^0 \subset \mathcal{F}$ be a countable subset such that for every $f\in\mathcal{F}$, there exists a sequence $\{f_j\}_{j=1}^\infty$ in $\mathcal{F}^0$ such that $\|f_j(x)-f(x)\|_2\rightarrow0$ for every $x$.
	Pick an arbitrary $f\in B(\delta)$. We have $P\|f-f^*\|_2^2<\delta^2$.
	The continuity of the squared Euclidean norm implies that $\|f_j(x)-f^*(x)\|_2^2\rightarrow\|f(x)-f^*(x)\|_2^2$ for every $x$. Moreover, since $\|f_j(x)\|_2\leq F(x)$ and $\|f^*(x)\|_2\leq F(x)$, we have $\|f_j(x)-f^*(x)\|_2^2\leq4F(x)^2$. Since $PF^2<\infty$, applying the dominated convergence theorem to the scalar functions $\|f_j(x)-f^*(x)\|_2^2$ yields
	\begin{equation*}
		P\|f_j-f^*\|_2^2
		\rightarrow
		P\|f-f^*\|_2^2
		<\delta^2.
	\end{equation*}
	Therefore, $f_j\in B(\delta)$ eventually. Hence, every $f\in B(\delta)$ is the pointwise limit of a sequence in the countable set $\mathcal{F}^0\cap B(\delta)$, which proves that $B(\delta)$ is pointwise measurable.
\end{proof}

\subsection{Proof for \cref{lem:vc_covering}}
\begin{proof}
    See Theorem 2.6.7 in \cite{van2023weak}.
\end{proof}

\subsection{Proof for \cref{lem:rate_of_conv}}
\begin{proof}
	The proof follows the structure of Lemmas 3.2.5 and 3.4.1 in \cite{van2023weak}, with additional care required for the pre-trained estimator $\check{h}$ and local curvature $\tau_n$.
    Suppose that $\mathbb{M}_n\left(\hat{f}_{n,\check{h}},\check{h}\right) \geq \mathbb{M}_n\left(f_{n,\check{h}},\check{h}\right)-R_n$ for some $R_n = O_P\left(\delta_n^2\right)$.
	For every $h\in\mathcal{H}_m$ and $j\in\mathbb{N}$, define the following sets, called ``shells'':
    $$
    S_{n,h,j}=\left\{f \in \mathcal{F}_n: 2^{j-1} \tau_n^{-1}\delta_n<d_{n,h}\left(f, f_{n,h}^*\right) \leq 2^j \tau_n^{-1}\delta_n\right\}.
    $$
    Because $d_{n,h}\left(\cdot, f_{n,h}^*\right) \geq 0$, it suffices to show that for every $\varepsilon>0$, there exists an integer $J$, possibly depending on $\varepsilon$ but not on $\check{h}$, $n$, and $m$, such that
    \begin{align*}
        &\limsup_{n\rightarrow\infty} P\left(d_{n,\check{h}}\left(\hat{f}_{n,\check{h}}, f_{n,\check{h}}^*\right) > 2^J \tau_n^{-1}\delta_n\right)\\
        &\quad \leq \limsup_{n\rightarrow\infty} P\left(\hat{f}_{n,\check{h}} \in \bigcup_{j>J}S_{n,\check{h},j} \right)\\
        &\quad \leq \limsup_{n\rightarrow\infty}\sum_{j>J} P\left(\hat{f}_{n,\check{h}} \in S_{n,\check{h},j}, R_n \leq 2^J\delta_n^2, \check{h}\in\mathcal{H}_{m,\varepsilon} \right)\\
        &\qquad
        +\limsup_{n\rightarrow\infty}P\left(R_n > 2^J\delta_n^2 \right)
        +\limsup_{m\rightarrow\infty}P\left(\check{h}\notin\mathcal{H}_{m,\varepsilon} \right)\\
        &\quad\leq 3\varepsilon.
    \end{align*}
    The second term on the right-hand side can be made arbitrarily small by choosing $J$ large enough since $R_n = O_{P}\left(\delta_n^2\right)$.
    The third term on the right-hand side is less than or equal to $\varepsilon$ by the assumption on $\mathcal{H}_{m,\varepsilon}$.
    We will show that the first term on the right-hand side can also be made arbitrarily small by choosing $J$ large enough.
    
    Take $h\in\mathcal{H}_{m,\varepsilon}$ and $f \in S_{n,h,j}$. By \cref{eq:rate_of_conv_curvature}, if $2^j\tau_n^{-1}\delta_n \geq \tau_n^{-1}C_{\varepsilon,0}\underline{\delta}_n$, or equivalently $2^j\delta_n \geq C_{\varepsilon,0}\underline{\delta}_n$, then
    $$
    M_n(f,h)-M_n\left(f_{n,h}^*,h\right) \lesssim_\varepsilon -2^{2 j} \delta_n^2.
    $$
    Now consider $M_n(f,h)-M_n\left(f_{n,h},h\right)$. Add and subtract $M_n\left(f_{n,h}^*,h\right)$ to get
    \begin{align*}
        & M_n(f,h)-M_n\left(f_{n,h},h\right)\\
        &\quad =M_n(f,h)-M_n\left(f_{n,h}^*,h\right)+M_n\left(f_{n,h}^*,h\right)-M_n\left(f_{n,h},h\right)\\
        &\quad \leq -2^{2 j} C_{\varepsilon,1}\delta_n^2+M_n\left(f_{n,h}^*,h\right)-M_n\left(f_{n,h},h\right).
    \end{align*}
	By \cref{eq:rate_of_conv_delta_n}, the second term is bounded above by a constant multiple of $\delta_n^2$. Thus,
    $$
	M_n(f,h)-M_n\left(f_{n,h},h\right) \leq -2^{2 j} C_{\varepsilon,1}\delta_n^2+C_{\varepsilon,2}\delta_n^2.
    $$
    Thus, for every $j>J$ with sufficiently large $J$, we can bound
    \begin{equation}
        \label{eq:rate_of_conv_1step}
        M_n(f,h)-M_n\left(f_{n,h},h\right) \lesssim_\varepsilon -2^{2 j} \delta_n^2 \quad \text { for every } h\in\mathcal{H}_{m,\varepsilon} \text{ and } f \in S_{n,h,j}.
    \end{equation}

    Define the empirical process $\mathbb{G}_n(f,h)=\sqrt{n}\left(\mathbb{M}_n-M_n\right)(f,h)$.
    Then, 
    \begin{align*}
        &\mathbb{G}_n\left(\hat{f}_{n,\check{h}},\check{h}\right)-\mathbb{G}_n\left(f_{n,\check{h}},\check{h}\right)\\
        &\quad =\sqrt{n}\left[\mathbb{M}_n\left(\hat{f}_{n,\check{h}},\check{h}\right)-\mathbb{M}_n\left(f_{n,\check{h}},\check{h}\right)\right]-\sqrt{n}\left[M_n\left(\hat{f}_{n,\check{h}},\check{h}\right)-M_n\left(f_{n,\check{h}},\check{h}\right)\right].
    \end{align*}
    The first term on the right-hand side is bounded below by the approximate maximization property of $\hat{f}_{n,h}$:
    $$
    \sqrt{n}\left[\mathbb{M}_n\left(\hat{f}_{n,\check{h}},\check{h}\right)-\mathbb{M}_n\left(f_{n,\check{h}},\check{h}\right)\right] \geq - \sqrt{n}R_n.
    $$
    Suppose that the event $\check{h} \in \mathcal{H}_{m,\varepsilon}$ and $\hat{f}_{n,\check{h}} \in S_{n,\check{h},j}$ occurs. The second term on the right-hand side is bounded below by \cref{eq:rate_of_conv_1step},
    $$
    -\sqrt{n}\left[M_n\left(\hat{f}_{n,\check{h}},\check{h}\right)-M_n\left(f_{n,\check{h}},\check{h}\right)\right] \gtrsim_\varepsilon \sqrt{n} 2^{2 j} \delta_n^2.
    $$
    Combining these two bounds, we have
    $$
	\mathbb{G}_n\left(\hat{f}_{n,\check{h}},\check{h}\right)-\mathbb{G}_n\left(f_{n,\check{h}},\check{h}\right) \geq \sqrt{n}(C_{\varepsilon,3} 2^{2 j} \delta_n^2 - R_n) \quad \text { if } \check{h}\in\mathcal{H}_{m,\varepsilon} \text{ and } \hat{f}_{n,\check{h}} \in S_{n,\check{h},j}.
    $$
    Thus, by the set inclusion, we have for every $j>J$ with sufficiently large $J$,
    \begin{align*}
        &P\left(\hat{f}_{n,\check{h}} \in S_{n,\check{h},j}, R_n \leq 2^J\delta_n^2, \check{h} \in \mathcal{H}_{m,\varepsilon}\right)\\
        &\quad \leq P^*\left(\sup_{f\in S_{n,\check{h},j}}\mathbb{G}_n\left(f,\check{h}\right)-\mathbb{G}_n\left(f_{n,\check{h}},\check{h}\right) \gtrsim_\varepsilon \sqrt{n} 2^{2 j} \delta_n^2, \check{h}\in\mathcal{H}_{m,\varepsilon} \right)\\
        &\quad \leq \sup_{h\in\mathcal{H}_{m,\varepsilon}} P\left(\sup_{f\in S_{n,h,j}}\mathbb{G}_n\left(f,h\right)-\mathbb{G}_n\left(f_{n,h},h\right) \gtrsim_\varepsilon \sqrt{n} 2^{2 j} \delta_n^2 \right),
    \end{align*}
	where the last inequality follows because $\check{h}$ is independent of $\mathbb{G}_n\left(f,h\right)-\mathbb{G}_n\left(f_{n,h},h\right)$.

	By Markov's inequality, the main term on the right-hand side is bounded above by a constant multiple of
    \begin{align}
        &\sup_{h\in\mathcal{H}_{m,\varepsilon}}\mathbb{E}\left[\sup _{f \in S_{n,h,j}}\left|\mathbb{G}_n(f,h)-\mathbb{G}_n\left(f_{n,h},h\right)\right|\right]/(\sqrt{n} 2^{2 j} \delta_n^2)\nonumber\\
        &\quad \lesssim_\varepsilon \sup_{h\in\mathcal{H}_{m,\varepsilon}}\mathbb{E}\left[\sup_{f\in\mathcal{F}_n:d_{n,h}\left(f, f_{n,h}^*\right) \leq 2^j \tau_n^{-1}\delta_n}\left|\mathbb{G}_n(f,h)-\mathbb{G}_n\left(f_{n,h}^*,h\right)\right|\right]/(\sqrt{n} 2^{2 j} \delta_n^2)\label{eq:rate_of_conv_decomp1}\\
        &\qquad + \sup_{h\in\mathcal{H}_{m,\varepsilon}}\mathbb{E}\left[\left|\mathbb{G}_n\left(f_{n,h}^*,h\right)-\mathbb{G}_n\left(f_{n,h},h\right)\right|\right]/(\sqrt{n} 2^{2 j} \delta_n^2)\label{eq:rate_of_conv_decomp2},
    \end{align}
    where the inequality follows from the triangle inequality and the definition of $S_{n,h,j}$.
    By the modulus bound in \cref{eq:rate_of_conv_modulus}, the first term \cref{eq:rate_of_conv_decomp1} is proportionally bounded above by
    $\phi_n\left(2^{j+1} \tau_n^{-1}\delta_n\right) /(\sqrt{n} 2^{2 j} \delta_n^2)$.
    To bound the second term \cref{eq:rate_of_conv_decomp2}, note that $d_{n,h}\left(f_{n,h}, f_{n,h}^*\right) \lesssim_\varepsilon \tau_n^{-1}\delta_n$ since either $d_{n,h}\left(f_{n,h}, f_{n,h}^*\right) \leq \tau_n^{-1}C_{\varepsilon,0} \underline{\delta}_n$ or $d_{n,h}\left(f_{n,h}, f_{n,h}^*\right) > \tau_n^{-1}C_{\varepsilon,0}\underline{\delta}_n$ holds.
    For the former case, we have $d_{n,h}\left(f_{n,h}, f_{n,h}^*\right) \lesssim_\varepsilon \tau_n^{-1}\delta_n$ directly from $\delta_n \gtrsim_\varepsilon \underline{\delta}_n$.
    For the latter case, apply \cref{eq:rate_of_conv_curvature} with $\delta=d_{n,h}\left(f_{n,h}, f_{n,h}^*\right)$ to get
    $$M_n\left(f_{n,h},h\right)-M_n\left(f_{n,h}^*,h\right) \lesssim_\varepsilon -\tau_n^2 d_{n,h}\left(f_{n,h}, f_{n,h}^*\right)^2.$$
	By \cref{eq:rate_of_conv_delta_n}, the left-hand side is bounded below by a constant multiple of $-\delta_n^2$. Thus, $-\delta_n^2 \lesssim_\varepsilon -\tau_n^2 d_{n,h}\left(f_{n,h}, f_{n,h}^*\right)^2$, which implies $d_{n,h}\left(f_{n,h}, f_{n,h}^*\right) \lesssim_\varepsilon \tau_n^{-1}\delta_n$.
    We can apply \cref{eq:rate_of_conv_modulus} to bound the second term \cref{eq:rate_of_conv_decomp2} proportionally by $\phi_n\left(\tau_n^{-1}\delta_n\right) /(\sqrt{n} 2^{2 j} \delta_n^2)$ since $\phi_n(\delta) / \delta^\xi$ is decreasing.\footnote{If we have $d_{n,h}\left(f_{n,h}, f_{n,h}^*\right) \leq C_{\varepsilon,4} \tau_n^{-1}\delta_n$ for some constant $C_{\varepsilon,4}\leq1$, the bound follows directly from \cref{eq:rate_of_conv_modulus}. If $C_{\varepsilon,4}>1$, apply the decreasing property and get $\phi_n\left(C_{\varepsilon,4}\tau_n^{-1}\delta_n\right)\leq C_{\varepsilon,4}^\xi \phi_n\left(\tau_n^{-1}\delta_n\right)$.}
    Combining these bounds, we have
    $$
    P\left(\hat{f}_{n,\check{h}} \in S_{n,\check{h},j}, R_n \leq 2^J\delta_n^2, \check{h} \in \mathcal{H}_{m,\varepsilon}\right) \lesssim_\varepsilon \frac{\phi_n\left(2^{j+1} \tau_n^{-1}\delta_n\right)}{\sqrt{n} 2^{2 j} \delta_n^2}.
    $$
    
	Using the properties of $\phi_n$ in \cref{eq:rate_of_conv_delta_n}, we have
    $$
    P\left(\hat{f}_{n,\check{h}} \in S_{n,\check{h},j}, R_n \leq 2^J\delta_n^2, \check{h} \in \mathcal{H}_{m,\varepsilon}\right) \lesssim_\varepsilon \frac{\phi_n\left(\tau_n^{-1}\delta_n\right)}{\sqrt{n} \delta_n^2} \cdot \frac{2^\xi}{2^{j(2-\xi)}}
    \lesssim_\varepsilon \frac{1}{2^{j(2-\xi)}},
    $$
    where the first inequality follows from the decreasing property of $\delta \mapsto \phi_n(\delta) / \delta^\xi$ and the second inequality follows from $\phi_n\left(\delta_n/\tau_n\right) \lesssim_\varepsilon \sqrt{n} \delta_n^2$.
    Recall that $\lesssim_\varepsilon$ means that the constant in the bound may depend on $\varepsilon$ but not on $\check{h}$, $n$, $m$, and $j$.
    Therefore, by summing over $j>J$, we can take $J$ not depending on $\check{h}$, $n$, and $m$ such that the right-hand side of
    $$
    \limsup_{n\rightarrow\infty}\sum_{j>J} P\left(\hat{f}_{n,\check{h}} \in S_{n,\check{h},j}, R_n \leq 2^J\delta_n^2, \check{h} \in \mathcal{H}_{m,\varepsilon}\right)\lesssim_\varepsilon \sum_{j>J} \frac{1}{2^{j(2-\xi)}}
    $$
    is arbitrarily small. This completes the proof.
\end{proof}

\subsection{Proof for \cref{lem:rate_of_conv_regu}}
\begin{proof}
	The proof follows the structure of Addendum 3.4.7 in \cite{van2023weak}, with additional care required for the pre-trained estimator $\check{h}$ and local curvature $\tau_n$.
    For every $h\in\mathcal{H}_m$ and $j\in\mathbb{N}$, define
    \begin{align*}
        S_{n,h,j}=\left\{(f,\lambda)\in \mathcal{F}_n \times [\lambda_n, \infty): \right.&2^{2j-2} \delta_n^2<\tau_n^2 d_{n,h}\left(f, f_{n,h}^*\right)^2 + \lambda^2 \mathcal{J}_n^2(f) \leq 2^{2j} \delta_n^2,\\
        &\left.2^{2J}\lambda^2 \mathcal{J}_n^2(f_{n,h}) < \tau_n^2 d_{n,h}\left(f, f_{n,h}^*\right)^2 + \lambda^2 \mathcal{J}_n^2(f)\right\}
    \end{align*}
    for some large enough integer $J$ to be specified later.
    We show that for every $\varepsilon>0$, there exists an integer $J$ such that
    \begin{align}
        & \limsup_{n\rightarrow\infty} P\left(\tau_n^2 d_{n,\check{h}}\left(\hat{f}_{n,\check{h}}, f_{n,\check{h}}^*\right)^2 + \hat{\lambda}_{n,\check{h}}^2 \mathcal{J}_n^2\left(\hat{f}_{n,\check{h}}\right) > 2^{2J}\delta_n^2\right.,\nonumber\\
        & \qquad\qquad\qquad
        \tau_n^2 d_{n,\check{h}}\left(\hat{f}_{n,\check{h}}, f_{n,\check{h}}^*\right)^2 + \hat{\lambda}_{n,\check{h}}^2 \mathcal{J}_n^2\left(\hat{f}_{n,\check{h}}\right) > 2^{2J} \hat{\lambda}_{n,\check{h}}^2 \mathcal{J}_n^2\left(f_{n,\check{h}}\right),\nonumber\\
        & \qquad\qquad\qquad\qquad\qquad\qquad\qquad\qquad\left.
            d_{n,\check{h}}\left(\hat{f}_{n,\check{h}}, f_{n,\check{h}}^*\right) \geq \tau_n^{-1}C_{\varepsilon,0}\underline{\delta}_n, \hat{\lambda}_{n,\check{h}}\geq\lambda_{n}\right)\nonumber\\
        & \quad \leq \limsup_{n\rightarrow\infty} P\left((\hat{f}_{n,\check{h}},\hat{\lambda}_{n,\check{h}}) \in \bigcup_{j>J} S_{n,\check{h},j}, d_{n,\check{h}}\left(\hat{f}_{n,\check{h}}, f_{n,\check{h}}^*\right) \geq \tau_n^{-1}C_{\varepsilon,0}\underline{\delta}_n, \hat{\lambda}_{n,\check{h}}\geq\lambda_{n} \right)\nonumber\\
        & \quad \leq\varepsilon.\label{eq:rate_of_conv_regu_goal}
    \end{align}
    For simplicity, we assume that $\mathbb{M}_n\left(\hat{f}_{n,\check{h}},\check{h}\right) - \mathbb{M}_n\left(f_{n,\check{h}},\check{h}\right) \geq \hat{\lambda}_{n,\check{h}}^2 \mathcal{J}_n^2\left(\hat{f}_{n,\check{h}}\right) - \hat{\lambda}_{n,\check{h}}^2 \mathcal{J}_n^2\left(f_{n,\check{h}}\right) -2^J\delta_n^2$ for some large enough integer $J$, which holds with arbitrarily high probability by $\mathbb{M}_n\left(\hat{f}_{n,\check{h}},\check{h}\right) - \mathbb{M}_n\left(f_{n,\check{h}},\check{h}\right) \geq \hat{\lambda}_{n,\check{h}}^2 \mathcal{J}_n^2\left(\hat{f}_{n,\check{h}}\right) - \hat{\lambda}_{n,\check{h}}^2 \mathcal{J}_n^2\left(f_{n,\check{h}}\right) -O_P\left(\delta_{n}^2\right)$, and that $\check{h}\in\mathcal{H}_{m,\varepsilon}$, which holds with probability at least $1-\varepsilon$ by the assumption on $\mathcal{H}_{m,\varepsilon}$.\footnote{
		We can take $R_n = 2^J\delta_n^2$ as in the proof of \cref{lem:rate_of_conv} and take $J$ large enough that $\limsup_{n\rightarrow\infty}P\left(R_n > 2^J\delta_n^2 \right)$ can be made arbitrarily small and the proportional bounds hold. The remaining probabilities in the proof can be evaluated after intersecting with the event $\{R_n \leq 2^J\delta_n^2,\check{h}\in\mathcal{H}_{m,\varepsilon}\}$.
    }

    Take $(f,\lambda) \in S_{n,h,j}$ with $d_{n,h}\left(f, f_{n,h}^*\right) \geq \tau_n^{-1}C_{\varepsilon,0}\underline{\delta}_n$.
    Apply \cref{eq:rate_of_conv_regu_curvature} with $\delta=d_{n,h}\left(f, f_{n,h}^*\right)$ to get
    $$
    M_n(f,h)-M_n\left(f_{n,h}^*,h\right) \lesssim_\varepsilon -\tau_n^2 d_{n,h}\left(f, f_{n,h}^*\right)^2.
    $$
    Then, for every $j>J$ with sufficiently large $J$, we can bound
    \begin{align}
        &M_n(f,h)-M_n\left(f_{n,h},h\right) - \lambda^2 \mathcal{J}_n^2(f) + \lambda^2 \mathcal{J}_n^2\left(f_{n,h}\right) \nonumber\\
        &\quad = \left(M_n(f,h)-M_n\left(f_{n,h}^*,h\right)\right)+\left(M_n\left(f_{n,h}^*,h\right)-M_n\left(f_{n,h},h\right)\right) - \lambda^2 \mathcal{J}_n^2(f) + \lambda^2 \mathcal{J}_n^2\left(f_{n,h}\right) \nonumber\\
        &\quad \leq -\tau_n^2 C_{\varepsilon,1}d_{n,h}\left(f, f_{n,h}^*\right)^2 +C_{\varepsilon,2}\delta_{n}^2 - \lambda^2 \mathcal{J}_n^2(f) + \lambda^2 \mathcal{J}_n^2\left(f_{n,h}\right) \nonumber\\
        &\quad \leq -(\tau_n^2 d_{n,h}\left(f, f_{n,h}^*\right)^2 + \lambda^2 \mathcal{J}_n^2(f))(\min\{C_{\varepsilon,1},1\} - 2^{-2J}) + C_{\varepsilon,2}\delta_n^2 \nonumber\\
        &\quad \leq -(2^{2j-2}(\min\{C_{\varepsilon,1},1\} - 2^{-2J}) - C_{\varepsilon,2})\delta_n^2\nonumber\\
        &\quad \lesssim_\varepsilon -2^{2j}\delta_n^2, \label{eq:rate_of_conv_regu_1step}
    \end{align}
    where the first equality follows from adding and subtracting $M_n\left(f_{n,h}^*,h\right)$, the first inequality follows from the above display and $M_n\left(f_{n,h}^*,h\right)-M_n(f_{n,h},h) \lesssim_\varepsilon \delta_n^2$ in \cref{eq:rate_of_conv_regu_delta_n}, the second inequality follows from $2^{2J}\lambda^2 \mathcal{J}_n^2(f_{n,h}) < \tau_n^2 d_{n,h}\left(f, f_{n,h}^*\right)^2 + \lambda^2 \mathcal{J}_n^2(f)$ in the definition of $S_{n,h,j}$, and the third inequality follows from $2^{2j-2} \delta_n^2 \leq \tau_n^2 d_{n,h}\left(f, f_{n,h}^*\right)^2 + \lambda^2 \mathcal{J}_n^2(f)$ in the definition of $S_{n,h,j}$.

    Define the empirical process $\mathbb{G}_n(f,h)=\sqrt{n}\left(\mathbb{M}_n-M_n\right)(f,h)$.
    Suppose that $(\hat{f}_{n,\check{h}},\hat{\lambda}_{n,\check{h}}) \in S_{n,\check{h},j}$ with $d_{n,\check{h}}\left(\hat{f}_{n,\check{h}}, f_{n,\check{h}}^*\right) \geq \tau_n^{-1}C_{\varepsilon,0} \underline{\delta}_n$ and $\hat{\lambda}_{n,\check{h}}\geq\lambda_{n}$.
    Then,
    \begin{align*}
        &\mathbb{G}_n\left(\hat{f}_{n,\check{h}},\check{h}\right)-\mathbb{G}_n\left(f_{n,\check{h}},\check{h}\right) \\
        &\quad =\sqrt{n}\left[\mathbb{M}_n\left(\hat{f}_{n,\check{h}},\check{h}\right)-\mathbb{M}_n\left(f_{n,\check{h}},\check{h}\right)\right]-\sqrt{n}\left[M_n\left(\hat{f}_{n,\check{h}},\check{h}\right)-M_n\left(f_{n,\check{h}},\check{h}\right)\right] \\
        &\quad \geq \sqrt{n} \left(\hat{\lambda}_{n,\check{h}}^2 \mathcal{J}_n^2\left(\hat{f}_{n,\check{h}}\right) - \hat{\lambda}_{n,\check{h}}^2 \mathcal{J}_n^2\left(f_{n,\check{h}}\right)\right) - \sqrt{n}2^J\delta_n^2 - \sqrt{n}\left[M_n\left(\hat{f}_{n,\check{h}},\check{h}\right)-M_n\left(f_{n,\check{h}},\check{h}\right)\right]\\
        &\quad \geq \sqrt{n} C_{\varepsilon,3}2^{2j}\delta_n^2 - \sqrt{n} 2^J\delta_n^2,
    \end{align*}
    where the first inequality follows from the approximate-maximization property of $(\hat{f}_{n,h},\hat{\lambda}_{n,h})$ in \cref{eq:rate_of_conv_regu_approx_max} and the second inequality follows from \cref{eq:rate_of_conv_regu_1step}.
    Here, for large enough $J$,
    \begin{align*}
        &P\left(\mathbb{G}_n\left(\hat{f}_{n,\check{h}},\check{h}\right)-\mathbb{G}_n\left(f_{n,\check{h}},\check{h}\right) \geq \sqrt{n} C_{\varepsilon,3}2^{2j}\delta_n^2 - \sqrt{n} 2^J\delta_n^2 \right) \\
        &\quad= P\left(\mathbb{G}_n\left(\hat{f}_{n,\check{h}},\check{h}\right)-\mathbb{G}_n\left(f_{n,\check{h}},\check{h}\right) \gtrsim_\varepsilon \sqrt{n} 2^{2j}\delta_n^2 \right).
    \end{align*}
    Also, $(\hat{f}_{n,\check{h}},\hat{\lambda}_{n,\check{h}}) \in S_{n,\check{h},j}$ with $\hat{\lambda}_{n,\check{h}}\geq\lambda_{n}$ implies that $d_{n,\check{h}}\left(\hat{f}_{n,\check{h}}, f_{n,\check{h}}^*\right)\leq 2^j \tau_n^{-1} \delta_n$ and $\mathcal{J}_n\left(\hat{f}_{n,\check{h}}\right) < 2^j\delta_n / \lambda_{n}$.
    Thus, by the set inclusion, we have
    \begin{align*}
        &P\left((\hat{f}_{n,\check{h}},\hat{\lambda}_{n,\check{h}}) \in S_{n,\check{h},j}, d_{n,\check{h}}\left(\hat{f}_{n,\check{h}}, f_{n,\check{h}}^*\right) \geq \tau_n^{-1}C_{\varepsilon,0} \underline{\delta}_n, \hat{\lambda}_{n,\check{h}}\geq\lambda_{n} \right) \\
        & \quad \leq P^*\left(\sup_{\substack{f \in \mathcal{F}_n: d_{n,\check{h}}\left(f, f_{n,\check{h}}^*\right)\leq 2^j \tau_n^{-1}\delta_n,\\ \mathcal{J}_n\left(f\right) < 2^j\delta_n / \lambda_{n}}}\left(\mathbb{G}_n\left(f,\check{h}\right)-\mathbb{G}_n\left(f_{n,\check{h}},\check{h}\right)\right) \gtrsim_\varepsilon \sqrt{n} 2^{2j}\delta_n^2 \right)\\
        &\quad \leq \sup_{h\in\mathcal{H}_{m,\varepsilon}} P\left(\sup_{\substack{f \in \mathcal{F}_n: d_{n,h}\left(f, f_{n,h}^*\right)\leq 2^j \tau_n^{-1}\delta_n,\\ \mathcal{J}_n\left(f\right) < 2^j\delta_n / \lambda_{n}}}\left(\mathbb{G}_n\left(f,h\right)-\mathbb{G}_n\left(f_{n,h},h\right)\right) \gtrsim_\varepsilon \sqrt{n} 2^{2j}\delta_n^2 \right).
    \end{align*}
    
	By Markov's inequality, the main term on the right-hand side is bounded above by some constant depending on $\varepsilon$ times
    \begin{align}
        &\sup_{h\in\mathcal{H}_{m,\varepsilon}}\mathbb{E}\left[\sup_{\substack{f \in \mathcal{F}_n: d_{n,h}\left(f, f_{n,h}^*\right)\leq 2^j \tau_n^{-1} \delta_n,\\ \mathcal{J}_n\left(f\right) < 2^j\delta_n / \lambda_{n}}}\left|\mathbb{G}_n\left(f,h\right)-\mathbb{G}_n\left(f_{n,h},h\right)\right|\right]/(\sqrt{n} 2^{2j}\delta_n^2)\nonumber\\
        &\quad \leq \sup_{h\in\mathcal{H}_{m,\varepsilon}}\mathbb{E}\left[\sup_{\substack{f \in \mathcal{F}_n: d_{n,h}\left(f, f_{n,h}^*\right)\leq 2^j \tau_n^{-1} \delta_n,\\ \mathcal{J}_n\left(f\right) < 2^j\delta_n / \lambda_{n}}}\left|\mathbb{G}_n\left(f,h\right)-\mathbb{G}_n\left(f_{n,h}^*,h\right)\right|\right]/(\sqrt{n} 2^{2j}\delta_n^2)\label{eq:rate_of_conv_regu_decomp1}\\    
        &\qquad + \sup_{h\in\mathcal{H}_{m,\varepsilon}}\mathbb{E}\left[\left|\mathbb{G}_n\left(f_{n,h}^*,h\right)-\mathbb{G}_n\left(f_{n,h},h\right)\right|\right]/(\sqrt{n} 2^{2j}\delta_n^2)\label{eq:rate_of_conv_regu_decomp2}.
    \end{align}
    By the modulus bound in \cref{eq:rate_of_conv_regu_modulus}, the first term \cref{eq:rate_of_conv_regu_decomp1} is proportionally bounded above by
    $\phi_n\left(2^{j+1} \tau_n^{-1} \delta_n\right) /(\sqrt{n} 2^{2j}\delta_n^2)$.
    To bound the second term \cref{eq:rate_of_conv_regu_decomp2}, note that $d_{n,h}\left(f_{n,h}, f_{n,h}^*\right) < \tau_n^{-1} \delta_n$ and $\mathcal{J}_n(f_{n,h})<\delta_n/\lambda_{n}$ by \cref{eq:rate_of_conv_regu_delta_n}.
    Apply \cref{eq:rate_of_conv_regu_modulus} to bound the second term \cref{eq:rate_of_conv_regu_decomp2} proportionally by $\phi_n\left(\tau_n^{-1}\delta_n\right) /(\sqrt{n} 2^{2j}\delta_n^2)$.
    Combining these bounds, we have
    $$
    P\left((\hat{f}_{n,\check{h}},\hat{\lambda}_{n,\check{h}}) \in S_{n,\check{h},j}, d_{n,\check{h}}\left(\hat{f}_{n,\check{h}}, f_{n,\check{h}}^*\right) \geq \tau_n^{-1}C_{\varepsilon,0} \underline{\delta}_n, \hat{\lambda}_{n,\check{h}}\geq\lambda_{n} \right)
    \lesssim_\varepsilon \frac{\phi_n\left(2^{j+1} \tau_n^{-1}\delta_n\right)}{\sqrt{n} 2^{2j}\delta_n^2}.
    $$

	The remainder of the proof proceeds as in the proof of \cref{lem:rate_of_conv}, using the properties of $\phi_n$ in \cref{eq:rate_of_conv_regu_delta_n} and taking $J$ large enough.
    We obtain \cref{eq:rate_of_conv_regu_goal}, which implies that for large enough $J$,
    \begin{align*}
        &\limsup_{n\rightarrow\infty} P\left(\tau_n d_{n,\check{h}}\left(\hat{f}_{n,\check{h}}, f_{n,\check{h}}^*\right) > 2^J \max\left\{\delta_n, \hat{\lambda}_{n,\check{h}} \mathcal{J}_n\left(f_{n,\check{h}}\right)\right\},
        \right.\\
        &\qquad\qquad\qquad\qquad\qquad\qquad\left.
            d_{n,\check{h}}\left(\hat{f}_{n,\check{h}}, f_{n,\check{h}}^*\right) \geq \tau_n^{-1}C_{\varepsilon,0} \underline{\delta}_n, \hat{\lambda}_{n,\check{h}}\geq\lambda_{n}\right)\leq \varepsilon
    \end{align*}
    since $\hat{\lambda}_{n,\check{h}}^2 \mathcal{J}_n^2\left(\hat{f}_{n,\check{h}}\right) \geq 0$.
    As $\delta_n \gtrsim_\varepsilon \underline{\delta}_n$, we can take $J$ large enough that $2^J\delta_n\geq C_{\varepsilon,0}\underline{\delta}_n$. Hence, the event in the next display implies $d_{n,\check{h}}\left(\hat{f}_{n,\check{h}}, f_{n,\check{h}}^*\right)>\tau_n^{-1}C_{\varepsilon,0}\underline{\delta}_n$, and we obtain
    \begin{equation*}
        \limsup_{n\rightarrow\infty} P\left(\tau_n d_{n,\check{h}}\left(\hat{f}_{n,\check{h}}, f_{n,\check{h}}^*\right)  > 2^J \max\left\{\delta_n, \hat{\lambda}_{n,\check{h}} \mathcal{J}_n\left(f_{n,\check{h}}\right)\right\}, \hat{\lambda}_{n,\check{h}}\geq\lambda_{n}\right)
        \leq \varepsilon.
    \end{equation*}
    Thus, on the event $\{\hat{\lambda}_{n,\check{h}}\geq\lambda_{n}\}$, we have
    $$\tau_n d_{n,\check{h}}\left(\hat{f}_{n,\check{h}}, f_{n,\check{h}}^*\right) = O_P(1)\left(\delta_n + \hat{\lambda}_{n,\check{h}} \mathcal{J}_n\left(f_{n,\check{h}}\right)\right).$$
    This completes the proof.
\end{proof}

\subsection{Proof for \cref{lem:curvature_approx}}
\begin{proof}
    Pick arbitrary $\varepsilon>0$.
    By the assumptions, there exist constants $C_{\varepsilon,1}>0$, $C_{\varepsilon,2}>0$, and $C_{\varepsilon,3}>0$ such that
    \begin{equation}
        \label{eq:approx_lem_epsilon}
        \sup_{h\in\mathcal{H}_{m,\varepsilon}} d_{n,h}\left(f_{n,h}, f_{n,h}^*\right) \leq C_{\varepsilon,1} \underline{\delta}_n,
    \end{equation}
    \begin{equation}
        \label{eq:approx_lem_curvature}
        \sup_{h\in\mathcal{H}_{m,\varepsilon}}\sup_{f \in \mathcal{F}_n:\delta/2 < d_{n,h}\left(f, f_{n,h}^*\right) \leq \delta} M_n(f,h)-M_n\left(f_{n,h}^*,h\right) \leq -C_{\varepsilon,2}\tau_n^2 \delta^2,
    \end{equation}
    and
    \begin{equation}
        \label{eq:approx_lem_delta_n}
        M_n\left(f_{n,h}^*,h\right) - M_n\left(f_{n,h},h\right) \leq C_{\varepsilon,3} d_{n,h}\left(f_{n,h}, f_{n,h}^*\right)^2
    \end{equation} 
    for every $h\in\mathcal{H}_{m,\varepsilon}$.
    Set
    \begin{equation*}
        C_{\varepsilon,0}
        :=
        C_{\varepsilon,1}
        \max\left\{4,\sqrt{32C_{\varepsilon,3}/C_{\varepsilon,2}}\right\},
    \end{equation*}
    which depends on $\varepsilon$ but not on $n$.
    Pick arbitrary $\delta \geq \tau_n^{-1}C_{\varepsilon,0}\underline{\delta}_n$, $h\in\mathcal{H}_{m,\varepsilon}$, and $f\in\mathcal{F}_n$ such that
    $\delta/2<d_{n,h}\left(f,f_{n,h}\right)\leq\delta$.
    By the choice of $C_{\varepsilon,0}$, \cref{eq:approx_lem_epsilon}, and $\tau_n\leq1$, we have
    \begin{align}
        d_{n,h}\left(f,f_{n,h}\right)
        &>\delta/2
        \geq 2d_{n,h}\left(f_{n,h},f_{n,h}^*\right),\label{eq:approx_lem_ineq_1}\\
        d_{n,h}\left(f,f_{n,h}\right)
        &>\delta/2
        \geq \sqrt{8C_{\varepsilon,3}/C_{\varepsilon,2}}\,\tau_n^{-1}d_{n,h}\left(f_{n,h},f_{n,h}^*\right).\label{eq:approx_lem_ineq_2}
    \end{align}
    The reverse triangle inequality and \cref{eq:approx_lem_ineq_1} imply
    \begin{equation}
        d_{n,h}\left(f,f_{n,h}^*\right)
        \geq d_{n,h}\left(f,f_{n,h}\right)-d_{n,h}\left(f_{n,h},f_{n,h}^*\right)
        \geq \frac{1}{2}d_{n,h}\left(f,f_{n,h}\right)>0,
        \label{eq:approx_lem_ineq_3}
    \end{equation}
    Adding and subtracting $M_n\left(f_{n,h}^*,h\right)$, applying \cref{eq:approx_lem_curvature} with shell radius $d_{n,h}\left(f,f_{n,h}^*\right)$, and using \cref{eq:approx_lem_delta_n}, we obtain
    \begin{align*}
        M_n(f,h)-M_n\left(f_{n,h},h\right)
        &\leq -C_{\varepsilon,2}\tau_n^2d_{n,h}\left(f,f_{n,h}^*\right)^2
        +C_{\varepsilon,3}d_{n,h}\left(f_{n,h},f_{n,h}^*\right)^2\\
        &\leq -(C_{\varepsilon,2}/4)\tau_n^2d_{n,h}\left(f,f_{n,h}\right)^2
        +C_{\varepsilon,3}d_{n,h}\left(f_{n,h},f_{n,h}^*\right)^2\\
        &\leq -(C_{\varepsilon,2}/8)\tau_n^2d_{n,h}\left(f,f_{n,h}\right)^2\\
        &\leq -(C_{\varepsilon,2}/32)\tau_n^2\delta^2,
    \end{align*}
    where the second inequality follows from \cref{eq:approx_lem_ineq_3}, the third inequality follows from \cref{eq:approx_lem_ineq_2}, and the last inequality follows from $d_{n,h}\left(f,f_{n,h}\right)\geq\delta/2$. Taking the supremum over the stated shell and over $h\in\mathcal{H}_{m,\varepsilon}$ proves the result.
\end{proof}

\subsection{Proof for \cref{lem:contraction}}
\begin{proof}
    If $L_\phi=0$, the Lipschitz condition and $\phi_i(0)=0$ imply that $\phi_i(u)=0$ for every $u\in\mathbb{R}$ and every $i$, so the conclusion follows immediately.
    Suppose henceforth that $L_\phi>0$.
    By the contraction inequality (\citealp{ledoux1991probability}, Theorem 4.12), for any contraction $\phi_{0,i}:\mathbb{R}\rightarrow\mathbb{R}$, we have
    \begin{equation*}
        \mathbb{E}\left[\sup_{u\in\mathcal{U}} \left|\sum_{i=1}^n \xi_i \phi_{0,i}(u_i)\right|\right] \leq 2 \cdot \mathbb{E}\left[\sup_{u\in\mathcal{U}} \left|\sum_{i=1}^n \xi_i u_i\right|\right].
    \end{equation*}
    For a Lipschitz function $\phi_i:\mathbb{R}\rightarrow\mathbb{R}$ with Lipschitz constant $L_\phi$, we can define $\phi_{0,i}(x)=\phi_i(x)/L_\phi$, which is a contraction. Then,
    \begin{align*}
        \mathbb{E}\left[\sup_{u\in\mathcal{U}} \left|\sum_{i=1}^n \xi_i \phi_i(u_i)\right|\right] 
        &= L_\phi \cdot \mathbb{E}\left[\sup_{u\in\mathcal{U}} \left|\sum_{i=1}^n \xi_i \phi_{0,i}(u_i)\right|\right] \\
        &\leq 2 L_\phi \cdot \mathbb{E}\left[\sup_{u\in\mathcal{U}} \left|\sum_{i=1}^n \xi_i u_i\right|\right].
    \end{align*}
    This completes the proof.
\end{proof}

\subsection{Proof for \cref{lem:unitary_ineq}}
\begin{proof}
    The inequalities directly follow from Theorem II.3.9 of \cite{stewart1990matrix}. 
    The Frobenius norm $\|\cdot\|_F$ and the spectral norm $\|\cdot\|_{\mathrm{sp}}$ are unitarily invariant (\citealp{stewart1990matrix}, p.74).
    The nuclear norm $\|\cdot\|_*$ is also unitarily invariant (\citealp{horn1994topics}, p.211, Problem 5).
\end{proof}

\subsection{Proof for \cref{lem:rank_one}}
\begin{proof}
    By the singular value decomposition, we can write $A = \sigma u v'$ for some $\sigma \geq 0$ and unit vectors $u$ and $v$.
	Then, $\|A\|_* = \|A\|_F = \|A\|_{\mathrm{sp}} = \sigma$ since $A$ has either one nonzero singular value $\sigma$ or no nonzero singular values.
    Also, $\sqrt{\operatorname{tr}(A'A)} = \sqrt{\operatorname{tr}(\sigma^2 v u' u v')} = \sqrt{\sigma^2 \operatorname{tr}(v v')} = \sigma$ since $u$ and $v$ are unit vectors.
\end{proof}

\subsection{Proof for \cref{lem:l2_ineq}}
\begin{proof}
    By the definition of the $L_2(P)$ norm,
    \begin{align*}
        \|Aa\|_{P,2}^2 = \int \|Aa(v)\|_2^2 \, dP(v) \leq \int \|A\|_{\mathrm{sp}}^2 \|a(v)\|_2^2 \, dP(v) = \|A\|_{\mathrm{sp}}^2 \|a\|_{P,2}^2,
    \end{align*}
    where the inequality follows from $\|Aa(v)\|_2 = \|Aa(v)\|_F$ and \cref{lem:unitary_ineq}.
	This gives the desired inequality after taking square roots of both sides.
\end{proof}

\subsection{Proof for \cref{lem:spectral_inverse}}
\begin{proof}
	By \cite{harville2008matrix}, Theorem 20.5.6, the Moore-Penrose inverse from the singular value decomposition $A=U\Sigma V'$ is given by $A^+=V\Sigma^+ U'$, where $\Sigma$ is the singular value matrix of $A$ and $\Sigma^+$ is the Moore-Penrose inverse of $\Sigma$ obtained by replacing each nonzero singular value $\sigma$ with $1/\sigma$ and leaving the zero singular values unchanged. Hence, we have $\|A^+\|_{\mathrm{sp}}=\|\Sigma^+\|_{\mathrm{sp}}=1/\sigma_{\min}^+(A)$ as the spectral norm is equal to the largest singular value.
\end{proof}

\subsection{Proof for \cref{lem:effective_dof}}
\begin{proof}
    By the definition of $\mathcal{N}_A(\lambda)$, we have
    \begin{align*}
        &\mathcal{N}_{\tilde{A}}(\lambda)-\mathcal{N}_A(\lambda)\\
        & \quad = \lambda^2 \operatorname{tr}\left((A+\lambda^2 I)^{-1}-(\tilde{A}+\lambda^2 I)^{-1}\right)\\
		& \quad = \lambda^2 \operatorname{tr}\left((\tilde{A}+\lambda^2 I)^{-1}(\tilde{A} - A)(A+\lambda^2 I)^{-1}\right),
    \end{align*}
    where the second equality follows from the matrix identity $X^{-1}-Y^{-1} = Y^{-1}(Y-X)X^{-1}$ for any invertible matrices $X$ and $Y$.
    
    We bound the absolute value of the above expression:
    \begin{align*}
		&\lambda^2 \left|\operatorname{tr}\left((\tilde{A}+\lambda^2 I)^{-1}(\tilde{A} - A)(A+\lambda^2 I)^{-1}\right)\right|\\
		& \quad \leq \lambda^2 \|(\tilde{A}+\lambda^2 I)^{-1}(\tilde{A} - A)(A+\lambda^2 I)^{-1}\|_*\\
		& \quad \leq \lambda^2 \|(\tilde{A}+\lambda^2 I)^{-1}\|_{\mathrm{sp}} \|\tilde{A} - A\|_* \|(A+\lambda^2 I)^{-1}\|_{\mathrm{sp}}\\
		& \quad \leq \|A - \tilde{A}\|_* / \lambda^2,
    \end{align*}
    where the first inequality follows from the singular value decomposition of $(\tilde{A}+\lambda^2 I)^{-1}(\tilde{A} - A)(A+\lambda^2 I)^{-1}$, the cyclic property of the trace, and the fact that the matrices of left and right singular vectors are unitary, the second inequality follows from applying \cref{lem:unitary_ineq} twice, and the last inequality follows from $\|(A+\lambda^2 I)^{-1}\|_{\mathrm{sp}} \leq 1/\lambda^2$ for any positive semi-definite matrix $A$.
\end{proof}

\subsection{Proof for \cref{lem:transformation_dof}}
\begin{proof}
    Define $Q^*=Q/\sqrt{\max\{1,\|Q\|_{\mathrm{sp}}^2\}}$. Then $\|Q^*\|_{\mathrm{sp}}\leq 1$ and
    \begin{equation*}
        QAQ'=\max\{1,\|Q\|_{\mathrm{sp}}^2\}Q^*AQ^{*'}.
    \end{equation*}
    Since $\|Q^*\|_{\mathrm{sp}}\leq 1$, we have $Q^{*\prime}Q^*\preceq I$.
     Therefore, for every $x\in\mathbb{R}^K$,
    \begin{equation*}
        x'A^{1/2}Q^{*\prime}Q^*A^{1/2}x
        \leq
        x'Ax.
    \end{equation*}
    Hence, $A^{1/2}Q^{*\prime}Q^*A^{1/2}\preceq A$ and by the order-reversing property of the matrix inverse for positive definite matrices,
    \begin{equation*}
        (A+\lambda^2 I)^{-1}\preceq (A^{1/2}Q^{*\prime}Q^*A^{1/2}+\lambda^2 I)^{-1}.
    \end{equation*}
    Since $Q^*AQ^{*\prime}$ and $A^{1/2}Q^{*\prime}Q^*A^{1/2}$ have the same nonzero eigenvalues, we have
    \begin{equation*}
        \mathcal{N}_{Q^*AQ^{*\prime}}(\lambda)
        =
        \mathcal{N}_{A^{1/2}Q^{*\prime}Q^*A^{1/2}}(\lambda).
    \end{equation*}
    Using the identity,
    \begin{equation*}
        \mathcal{N}_A(\lambda)
        =
        \operatorname{tr}\left(I - \lambda^2(A+\lambda^2 I)^{-1}\right)
        =
        K-\lambda^2\operatorname{tr}\left((A+\lambda^2 I)^{-1}\right)
    \end{equation*}
     we get
    \begin{equation*}
        \mathcal{N}_{Q^*AQ^{*\prime}}(\lambda)
        =
        \operatorname{tr}\left(I - \lambda^2(A^{1/2}Q^{*\prime}Q^*A^{1/2}+\lambda^2 I)^{-1}\right)
        \leq
        \operatorname{tr}\left(I - \lambda^2(A+\lambda^2 I)^{-1}\right)
        =
        \mathcal{N}_A(\lambda).
    \end{equation*}
    Finally, let $\mu_1,\ldots,\mu_K$ denote the eigenvalues of $Q^*AQ^{*\prime}$. Using the definition of $Q^*$, we obtain
    \begin{align*}
        \mathcal{N}_{QAQ'}(\lambda)
        &=
        \mathcal{N}_{\max\{1,\|Q\|_{\mathrm{sp}}^2\}Q^*AQ^{*\prime}}(\lambda)\\
        &=
        \sum_{j=1}^K
        \frac{\max\{1,\|Q\|_{\mathrm{sp}}^2\}\mu_j}{\max\{1,\|Q\|_{\mathrm{sp}}^2\}\mu_j+\lambda^2}\\
        &\leq
        \max\{1,\|Q\|_{\mathrm{sp}}^2\}\sum_{j=1}^K
        \frac{\mu_j}{\mu_j+\lambda^2}\\
        &=
        \max\{1,\|Q\|_{\mathrm{sp}}^2\}\mathcal{N}_{Q^*AQ^{*\prime}}(\lambda)\\
        &\leq
        \max\{1,\|Q\|_{\mathrm{sp}}^2\}\mathcal{N}_A(\lambda).
    \end{align*}
    This completes the proof.
\end{proof}

\section{Multimodal Transfer Learning}
\label{sec:multi-pretrain}
In applications, the economist may have access to several pre-trained models, each trained on a different modality. For example, a computer scientist may train a CNN on images and a transformer on text, after which the economist uses both learned embeddings as inputs to a target model. This section extends our transfer-learning theory to this multimodal setting.

Let $r=1,\ldots,R$ index the modalities, where $R$ is fixed as the sample sizes diverge. For each $r$, write $m_r=m_{r,n}$ and suppose that $m_{r,n}\to\infty$ as $n\to\infty$. The target observations satisfy
\begin{equation*}
	(Y_i,Z_{0,i},Z_{1,i},\ldots,Z_{R,i})\sim_{i.i.d.}P_{\ta},
	\qquad i=1,\ldots,n,
\end{equation*}
where $Z_{0,i}$ denotes the structured covariates and $Z_{r,i}\in\mathcal{Z}_r$ denotes the unstructured input for modality $r$. Write $Z_i=(Z_{0,i},Z_{1,i},\ldots,Z_{R,i})$. The structured covariates may be low- or high-dimensional. For each modality $r$, the computer scientist observes a source sample
\begin{equation*}
	(S_{r,j},Z_{r,j})\sim_{i.i.d.}P_{\so,r},
	\qquad j=1,\ldots,m_r,
\end{equation*}
where the modality-specific source samples and the target sample are mutually independent.
Let $P$ denote the joint distribution of these samples, and let $\mathbb{E}_{\so,r}$ denote expectation under $P_{\so,r}$.

Let $\ell_{\so,r}$ be the source loss for modality $r$, and let $\mathcal{G}_{r,m_r}$ and $\mathcal{H}_{r,m_r}$ be the corresponding classes of source heads and embedding functions. For each modality, define the set of population source-risk minimizers by
\begin{equation}
	\mathcal{R}_{r,m_r}
	=
	\argmin_{g_r\in\mathcal{G}_{r,m_r},h_r\in\mathcal{H}_{r,m_r}}
	\mathbb{E}_{\so,r}[\ell_{\so,r}(g_r\circ h_r(Z_r),S_r)].
\end{equation}
For each modality $r$, let
\begin{equation}
	(\check{g}_r,\check{h}_r)\in\argmin_{g_r\in\mathcal{G}_{r,m_r},h_r\in\mathcal{H}_{r,m_r}}
	\frac{1}{m_r}\sum_{j=1}^{m_r}\ell_{\so,r}(g_r\circ h_r(Z_{r,j}),S_{r,j}).
\end{equation}
Thus, $\check{h}_r$ is the pre-trained embedding estimated from the modality-$r$ source sample.

Define $\mathbf{m}=(m_1,\ldots,m_R)$ and $\mathcal{H}_{\mathbf{m}}=\prod_{r=1}^{R}\mathcal{H}_{r,m_r}$. For every multimodal embedding $h=(h_1,\ldots,h_R)\in\mathcal{H}_{\mathbf{m}}$, define
\begin{equation*}
	(f\circ h)(Z)
	:=
	f(Z_0,h_1(Z_1),\ldots,h_R(Z_R)).
\end{equation*}

After introducing the multimodal representation, we define the population target-risk minimizer correspondence. For every $h\in\mathcal{H}_{\mathbf{m}}$, let
\begin{equation}
	\mathcal{F}_n^{\sub}(h)
	=
	\argmin_{f\in\mathcal{F}_n^{\sub}}
	\mathbb{E}_{\ta}[\ell_{\ta}((f\circ h)(Z),Y)].
\end{equation}
Given the estimated multimodal embedding $\check{h}=(\check{h}_1,\ldots,\check{h}_R)$, the economist estimates
\begin{equation*}
	\hat{f}
	\in
	\argmin_{f\in\mathcal{F}_n}
	\frac{1}{n}\sum_{i=1}^{n}\ell_{\ta}((f\circ \check{h})(Z_i),Y_i).
\end{equation*}

\begin{definition}[Modality-Wise Diversity and Transferability]
	\label{def:multi-diversity}
	Fix population minimizers $(g_{r,m_r},h_{r,m_r})\in\mathcal{R}_{r,m_r}$ for $r=1,\ldots,R$ and $f_n\in\mathcal{F}_n^{\sub}(h_{\mathbf{m}})$, where $h_{\mathbf{m}}=(h_{1,m_1},\ldots,h_{R,m_R})$.
	For each modality $r$ and $\rho\geq0$, define the source-side approximate identified set by
	\begin{equation*}
		\mathcal{H}_{\so,r,m_r}(\rho)
		:=
		\left\{
		h_r\in\mathcal{H}_{r,m_r}:
		\min_{g_r\in\mathcal{G}_{r,m_r}}
		\|g_r\circ h_r-g_{r,m_r}\circ h_{r,m_r}\|_{P_{\so,r},2}
		\leq
		\rho
		\right\}.
	\end{equation*}
	For $h_{-r}=(h_1,\ldots,h_{r-1},h_{r+1},\ldots,h_R)\in\prod_{s\neq r}\mathcal{H}_{s,m_s}$, let $(h_r,h_{-r})$ denote the multimodal embedding whose $r$-th component is $h_r$. A jointly selected family of population target models is a collection $\{f_{n,h}:h\in\mathcal{H}_{\mathbf{m}}\}$ satisfying
	\begin{equation}
		f_{n,h}
		\in
		\mathcal{F}_n^{\sub}(h),
		\qquad
		h\in\mathcal{H}_{\mathbf{m}},
	\end{equation}
	and $f_{n,h_{\mathbf{m}}}=f_n$. Given such a family, define
	\begin{equation*}
		D_{r,n}(h_r\mid h_{-r})
		:=
		\|f_{n,(h_r,h_{-r})}\circ(h_r,h_{-r})
		-f_{n,(h_{r,m_r},h_{-r})}\circ(h_{r,m_r},h_{-r})\|_{P_{\ta},2}
	\end{equation*}
	and the target-side neighborhood induced by this family
	\begin{equation*}
		\mathcal{H}_{\ta,r,n}(\rho\mid h_{-r})
		:=
		\left\{
		h_r\in\mathcal{H}_{r,m_r}:
		D_{r,n}(h_r\mid h_{-r})
		\leq
		\rho
		\right\}.
	\end{equation*}
	Relative to this family, we say that the modality-$r$ source task $g_{r,m_r}$ is $(\rho_{\so,r,m_r},\rho_{\ta,r,n})$-diverse over $f_n$ with respect to $h_{r,m_r}$ if
	\begin{equation*}
		\mathcal{H}_{\so,r,m_r}(\rho_{\so,r,m_r})
		\subseteq
		\mathcal{H}_{\ta,r,n}(\rho_{\ta,r,n}\mid h_{-r})
	\end{equation*}
	holds for every $h_{-r}\in\prod_{s\neq r}\mathcal{H}_{s,m_s}$. Let $\nu_{n,r}>0$ for every $n$ and $r=1,\ldots,R$. Given a sequence of jointly selected families, we say that, relative to this sequence, the modality-$r$ source task $g_{r,m_r}$ is $\nu_{n,r}$-transferable to $f_n$ with respect to $h_{r,m_r}$ at rate $\delta_{\so,r,m_r}$ if, for every sequence $\rho_{\so,r,m_r}=O(\delta_{\so,r,m_r})$, the $(\rho_{\so,r,m_r},\rho_{\so,r,m_r}/\nu_{n,r})$-diversity condition holds for all sufficiently large $n$.
	We say that these modality-wise diversity conditions hold jointly if there exists a single jointly selected family relative to which they hold simultaneously for every $r=1,\ldots,R$ and every $h_{-r}\in\prod_{s\neq r}\mathcal{H}_{s,m_s}$. We say that modality-wise transferability holds jointly if there exists a single sequence of jointly selected families relative to which the preceding transferability condition holds simultaneously for every $r=1,\ldots,R$.
\end{definition}

This definition is the modality-wise analogue of \cref{def:diversity}. The joint condition requires a common selected population target model family at each $n$ such that inclusion in each modality-specific source-side approximate identified set implies inclusion in the corresponding target-side neighborhood for every configuration of the other embeddings. This common selection permits the modalities to be replaced sequentially by their estimated embeddings.

\begin{theorem}[Convergence Rate for Multimodal Transfer Learning]
	\label{thm:multi_transfer}
	Fix $(g_{r,m_r},h_{r,m_r})\in\mathcal{R}_{r,m_r}$ for $r=1,\ldots,R$ and $f_n\in\mathcal{F}_n^{\sub}(h_{\mathbf{m}})$. Define
	\begin{equation*}
		\|f_n\circ h_{\mathbf{m}}-a_{\ta}^*\|_{P_{\ta},2}
		=:
		\epsilon_{\ta,n}.
	\end{equation*}
	Given $\check{h}$, define the $L^2(P_{\ta})$-best approximation to $f_n\circ h_{\mathbf{m}}$ by
	\begin{equation*}
		f_{n,\check{h}}^{\bullet}
		\in
		\argmin_{f\in\mathcal{F}_n^{\sub}}
		\|f\circ\check{h}-f_n\circ h_{\mathbf{m}}\|_{P_{\ta},2}.
	\end{equation*}
	For each modality $r=1,\ldots,R$, suppose that $\nu_{n,r}>0$ for every $n$ and that
	\begin{equation*}
		\|\check{g}_r\circ\check{h}_r-g_{r,m_r}\circ h_{r,m_r}\|_{P_{\so,r},2}
		=
		O_{P_{\so,r}}(\delta_{\so,r,m_r}).
	\end{equation*}
	Suppose also that modality-wise transferability holds jointly in the sense of \cref{def:multi-diversity}; that is, there exists a common sequence of jointly selected population target model families relative to which $g_{r,m_r}$ is $\nu_{n,r}$-transferable to $f_n$ with respect to $h_{r,m_r}$ at rate $\delta_{\so,r,m_r}$ for every $r=1,\ldots,R$. Suppose further that
	\begin{equation*}
		\|\hat{f}\circ\check{h}-f_{n,\check{h}}^{\bullet}\circ\check{h}\|_{P_{\ta},2}
		=
		O_P(r_{\ta,n}).
	\end{equation*}
	Then,
	\begin{equation*}
		\|\hat{f}\circ\check{h}-a_{\ta}^*\|_{P_{\ta},2}
		=
		O_P\left(r_{\ta,n}+\sum_{r=1}^{R}\frac{\delta_{\so,r,m_r}}{\nu_{n,r}}+\epsilon_{\ta,n}\right).
	\end{equation*}
\end{theorem}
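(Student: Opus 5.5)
The plan is to mirror the proof of \cref{thm:transfer_learning_general}, replacing the single transferability step by a telescoping argument over modalities. By the triangle inequality,
\begin{align*}
\|\hat{f}\circ\check{h}-a_{\ta}^*\|_{P_{\ta},2}
&\leq \|\hat{f}\circ\check{h}-f_{n,\check{h}}^{\bullet}\circ\check{h}\|_{P_{\ta},2}
+\|f_{n,\check{h}}^{\bullet}\circ\check{h}-f_n\circ h_{\mathbf{m}}\|_{P_{\ta},2}
+\|f_n\circ h_{\mathbf{m}}-a_{\ta}^*\|_{P_{\ta},2}.
\end{align*}
The first term is $O_P(r_{\ta,n})$ by assumption, and the third equals $\epsilon_{\ta,n}$ by definition. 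Everything therefore reduces to bounding the middle term by $O_P(\sum_r\delta_{\so,r,m_r}/\nu_{n,r})$.

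For the middle term, I would use the minimizing property of $f_{n,\check{h}}^{\bullet}$. Let $\{f_{n,h}\}$ be the common jointly selected family from \cref{def:multi-diversity}. Since $f_{n,\check{h}}\in\mathcal{F}_n^{\sub}$, we have
\begin{equation*}
\|f_{n,\check{h}}^{\bullet}\circ\check{h}-f_n\circ h_{\mathbf{m}}\|_{P_{\ta},2}\leq\|f_{n,\check{h}}\circ\check{h}-f_{n,h_{\mathbf{m}}}\circ h_{\mathbf{m}}\|_{P_{\ta},2},
\end{equation*}
where I used $f_{n,h_{\mathbf{m}}}=f_n$. Next define hybrid embeddings $h^{(r)}=(\check{h}_1,\ldots,\check{h}_r,h_{r+1,m_{r+1}},\ldots,h_{R,m_R})$ for $r=0,\ldots,R$, so that $h^{(0)}=h_{\mathbf{m}}$ and $h^{(R)}=\check{h}$. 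Telescoping gives
\begin{equation*}
\|f_{n,\check{h}}\circ\check{h}-f_{n,h_{\mathbf{m}}}\circ h_{\mathbf{m}}\|_{P_{\ta},2}\leq\sum_{r=1}^R\|f_{n,h^{(r)}}\circ h^{(r)}-f_{n,h^{(r-1)}}\circ h^{(r-1)}\|_{P_{\ta},2}.
\end{equation*}
The $r$-th summand is exactly $D_{r,n}(\check{h}_r\mid h_{-r})$ with $h_{-r}=(\check{h}_1,\ldots,\check{h}_{r-1},h_{r+1,m_{r+1}},\ldots,h_{R,m_R})$, because $h^{(r-1)}=(h_{r,m_r},h_{-r})$ and $h^{(r)}=(\check{h}_r,h_{-r})$.

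For each $r$, I would argue as in the proof of \cref{thm:transfer_learning_general}. Since $\check{g}_r\in\mathcal{G}_{r,m_r}$, the source rate shows that for every $\varepsilon>0$ there is $M_{r,\varepsilon}$ with $\liminf P(\check{h}_r\in\mathcal{H}_{\so,r,m_r}(M_{r,\varepsilon}\delta_{\so,r,m_r}))\geq1-\varepsilon$. This event depends only on the modality-$r$ source sample, so the bound also holds under the joint law $P$. The diversity inclusion holds for \emph{every} $h_{-r}\in\prod_{s\neq r}\mathcal{H}_{s,m_s}$, and this includes the random $h_{-r}$ above. So on that event, joint transferability gives $D_{r,n}(\check{h}_r\mid h_{-r})\leq M_{r,\varepsilon}\delta_{\so,r,m_r}/\nu_{n,r}$. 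Intersecting the $R$ events, which is possible because $R$ is fixed, yields probability at least $1-R\varepsilon$ asymptotically. The middle term is then $O_P(\sum_r\delta_{\so,r,m_r}/\nu_{n,r})$, and combining the three bounds proves the claim.

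The main obstacle is the telescoping step. It needs the same selection map $h\mapsto f_{n,h}$ at every hybrid embedding, so that consecutive differences match $D_{r,n}$ exactly. It also needs the diversity condition to hold uniformly over $h_{-r}$, since the conditioning embeddings $h_{-r}$ are themselves random estimated objects. The joint formulation in \cref{def:multi-diversity} is designed to supply both requirements, so the work lies in checking that the definitions line up. One further measurability point needs care: we need $\check{h}_r\in\mathcal{H}_{r,m_r}$, which holds by construction as a source-risk minimizer.
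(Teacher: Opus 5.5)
Your proposal is correct and follows the paper's proof essentially step for step. It uses the same three-term triangle inequality and bounds the middle term through the best-approximation property of $f_{n,\check{h}}^{\bullet}$ and the jointly selected family; it then telescopes over the hybrid embeddings, identifying each increment as $D_{r,n}(\check{h}_r\mid h_{-r})$, and combines the per-modality source events by a union bound over the fixed $R$ modalities, using the fact that the diversity inclusion holds for every $h_{-r}$ to handle the random conditioning embeddings.
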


\begin{proof}
	Fix a common sequence of jointly selected population target model families witnessing simultaneous transferability for every modality.
	For $r=0,\ldots,R$, define the intermediate multimodal embeddings
	\begin{equation*}
		\widetilde{h}^{(r)}
		:=
		(\check{h}_1,\ldots,\check{h}_r,h_{r+1,m_{r+1}},\ldots,h_{R,m_R}).
	\end{equation*}
	Thus, $\widetilde{h}^{(0)}=h_{\mathbf{m}}$ and $\widetilde{h}^{(R)}=\check{h}$. The triangle inequality yields
	\begin{align*}
		\|\hat{f}\circ\check{h}-a_{\ta}^*\|_{P_{\ta},2}
		&\leq
		\|\hat{f}\circ\check{h}-f_{n,\check{h}}^{\bullet}\circ\check{h}\|_{P_{\ta},2}\\
		&\quad+
		\|f_{n,\check{h}}^{\bullet}\circ\check{h}-f_n\circ h_{\mathbf{m}}\|_{P_{\ta},2}\\
		&\quad+
		\|f_n\circ h_{\mathbf{m}}-a_{\ta}^*\|_{P_{\ta},2}.
	\end{align*}
	The first term on the right-hand side is $O_P(r_{\ta,n})$, and the last term equals $\epsilon_{\ta,n}$. By the definition of $f_{n,\check{h}}^{\bullet}$, the fact that $f_{n,\check{h}}\in\mathcal{F}_n^{\sub}(\check{h})\subseteq\mathcal{F}_n^{\sub}$, and the triangle inequality,
	\begin{align*}
		\|f_{n,\check{h}}^{\bullet}\circ\check{h}-f_n\circ h_{\mathbf{m}}\|_{P_{\ta},2}
		&\leq
		\|f_{n,\check{h}}\circ\check{h}-f_n\circ h_{\mathbf{m}}\|_{P_{\ta},2}\\
		&\leq
		\sum_{r=1}^{R}
		\|f_{n,\widetilde{h}^{(r)}}\circ\widetilde{h}^{(r)}
		-f_{n,\widetilde{h}^{(r-1)}}\circ\widetilde{h}^{(r-1)}\|_{P_{\ta},2},
	\end{align*}
	where the second inequality uses $f_{n,\widetilde{h}^{(0)}}=f_n$ and $f_{n,\widetilde{h}^{(R)}}=f_{n,\check{h}}$.

	For each modality $r$, $\check{g}_r\in\mathcal{G}_{r,m_r}$ and the source-rate condition imply
	\begin{align*}
		\min_{g_r\in\mathcal{G}_{r,m_r}}
		\|g_r\circ\check{h}_r-g_{r,m_r}\circ h_{r,m_r}\|_{P_{\so,r},2}
		&\leq
		\|\check{g}_r\circ\check{h}_r-g_{r,m_r}\circ h_{r,m_r}\|_{P_{\so,r},2}\\
		&=
		O_{P_{\so,r}}(\delta_{\so,r,m_r}).
	\end{align*}
	Each source-rate event depends only on the modality-$r$ source sample, so the same probability bound holds under the joint law $P$. Fix $\varepsilon>0$. For each $r$, the source-rate condition gives a constant $M_{\varepsilon,r}<\infty$ such that for all sufficiently large $n$ and corresponding $m_r$,
	\begin{equation*}
		P\left(
		\check{h}_r\notin\mathcal{H}_{\so,r,m_r}(M_{\varepsilon,r}\delta_{\so,r,m_r})
		\right)
		\leq
		\frac{\varepsilon}{R}.
	\end{equation*}
	Let $M_\varepsilon=\max_{1\leq r\leq R}M_{\varepsilon,r}$. Since $R$ is fixed, the union bound implies that the event
	\begin{equation*}
		\mathcal{E}_n
		:=
		\bigcap_{r=1}^{R}
		\left\{
		\check{h}_r\in\mathcal{H}_{\so,r,m_r}(M_\varepsilon\delta_{\so,r,m_r})
		\right\}
	\end{equation*}
	satisfies
	\begin{equation*}
		P(\mathcal{E}_n)
		\geq
		1-\sum_{r=1}^{R}\frac{\varepsilon}{R}
		=
		1-\varepsilon
	\end{equation*}
	for all sufficiently large $n$ and corresponding $m_r$.

	The embeddings $\widetilde{h}^{(r)}$ and $\widetilde{h}^{(r-1)}$ differ only in their $r$-th components. Therefore, on $\mathcal{E}_n$, modality-wise transferability applies with the remaining components fixed at $(\check{h}_1,\ldots,\check{h}_{r-1},h_{r+1,m_{r+1}},\ldots,h_{R,m_R})$ and yields
	\begin{align*}
		D_{r,n}\left(\check{h}_r\mid
		\left(\check{h}_1,\ldots,\check{h}_{r-1},h_{r+1,m_{r+1}},\ldots,h_{R,m_R}\right)\right)
		&=
		\|f_{n,\widetilde{h}^{(r)}}\circ\widetilde{h}^{(r)}
		-f_{n,\widetilde{h}^{(r-1)}}\circ\widetilde{h}^{(r-1)}\|_{P_{\ta},2}\\
		&\leq
		\frac{M_\varepsilon\delta_{\so,r,m_r}}{\nu_{n,r}}.
	\end{align*}
	It follows that
	\begin{equation*}
		\sum_{r=1}^{R}
		\|f_{n,\widetilde{h}^{(r)}}\circ\widetilde{h}^{(r)}
		-f_{n,\widetilde{h}^{(r-1)}}\circ\widetilde{h}^{(r-1)}\|_{P_{\ta},2}
		=
		O_P\left(\sum_{r=1}^{R}\frac{\delta_{\so,r,m_r}}{\nu_{n,r}}\right).
	\end{equation*}
	Combining these bounds proves the result.
\end{proof}

The population target models $f_{n,h}$ used in modality-wise transferability are target-risk minimizers, whereas $f_{n,\check{h}}^{\bullet}$ in the target-stage condition is the $L^2(P_{\ta})$-best approximation to $f_n\circ h_{\mathbf{m}}$. The best-approximation property links these objects in the proof.

The theorem shows that, under modality-wise transferability, the representation error accumulates additively across modalities. In the image-text special case with $R=2$, the bound becomes
\begin{align*}
	\|\hat{f}\circ\check{h}-a_{\ta}^*\|_{P_{\ta},2}
	&=
	O_P\left(
	r_{\ta,n}
	+\frac{\delta_{\so,\mathrm{image},m_{\mathrm{image}}}}{\nu_{n,\mathrm{image}}}
	+\frac{\delta_{\so,\mathrm{text},m_{\mathrm{text}}}}{\nu_{n,\mathrm{text}}}
	+\epsilon_{\ta,n}
	\right).
\end{align*}
Thus, the convergence rate of the target estimator depends on the sum of the source rates for each modality, adjusted by the corresponding transferability coefficients.

\section{Simulation}
\label{sec:sim_details}

This section reports the Monte Carlo design for the partially linear model in \cref{ex:partially_linear}.
The goal is to illustrate how the downstream DML estimator behaves when the target nuisance functions are aligned with the source-task representation and when they contain a component that is not identified from the source task.

\subsection{Design}

In each replication, the target sample satisfies
\begin{align*}
	Y_i &= D_i \theta_0 + h^*(Z_i^{\ta})' \beta_0 + U_i, \\
	D_i &= h^*(Z_i^{\ta})' \delta_0 + V_i,
\end{align*}
where $\theta_0=1$, $Z_i^{\ta}\sim N(0,I_{50})$, $U_i\sim N(0,\sigma_U^2)$, and $V_i\sim N(0,\sigma_V^2)$ are mutually independent.
The target nuisance functions are therefore
\begin{align*}
	\alpha_0(z) &= \mathbb{E}[D_i \mid Z_i^{\ta}=z] = h^*(z)' \delta_0, \\
	\gamma_0(z) &= \mathbb{E}[Y_i-\theta_0 D_i \mid Z_i^{\ta}=z] = h^*(z)' \beta_0.
\end{align*}
The common representation map $h^*:\mathbb{R}^{50}\to\mathbb{R}^{10}$ is a fixed random ReLU network:
\begin{align*}
	x^{(0)}(z) &= z, \\
	x^{(\ell)}(z) &= \operatorname{ReLU}(x^{(\ell-1)}(z) W_{\ell} + b_{\ell}), \quad \ell=1,\ldots,5, \\
	h^*(z) &= x^{(5)}(z),
\end{align*}
with layer dimensions $(d_0,d_1,d_2,d_3,d_4,d_5)=(50,64,64,64,64,10)$, $W_{\ell}\in\mathbb{R}^{d_{\ell-1}\times d_{\ell}}$, and $b_{\ell}\in\mathbb{R}^{d_{\ell}}$.
Each entry of $W_{\ell}$ is drawn from $N(0,1/d_{\ell-1})$ and each entry of $b_{\ell}$ is drawn from $N(0,1)$ once per replication and then held fixed.

The source sample is generated from
\begin{align*}
	Z_j^{\so} &\sim N(0,I_{50}), \\
	S_j &= h^*(Z_j^{\so})' B + \varepsilon_j^{\so}, \\
	\varepsilon_j^{\so} &\sim N(0,\sigma_{\so}^2 I_T),
\end{align*}
where $T=100$ and $B=UA$ with $U\in\mathbb{R}^{10\times 5}$ having orthonormal columns and $A\in\mathbb{R}^{5\times 100}$ Gaussian.
Hence, the source task only identifies the five-dimensional subspace $\operatorname{span}(U)\subset\mathbb{R}^{10}$.

To compare transferable and non-transferable designs, let $a_{\beta},a_{\delta}\in\mathbb{R}^{5}$ be Gaussian vectors and let $q\in\mathbb{R}^{10}$ satisfy $q\perp \operatorname{span}(U)$ and $\|q\|_2=1$.
The in-span design sets
\begin{equation*}
	\beta_{\mathrm{in}} = \frac{Ua_{\beta}}{\|Ua_{\beta}\|_2}, 
    \quad
	\delta_{\mathrm{in}} = \frac{Ua_{\delta}}{\|Ua_{\delta}\|_2},
\end{equation*}
whereas the out-of-span design sets
\begin{equation*}
	\beta_{\mathrm{out}} = 2 q,
    \quad
	\delta_{\mathrm{out}} = 2 q.
\end{equation*}

\subsection{Estimation}

For each replication, the source stage estimates a five-layer ReLU representation with the same architecture as the true $h^*$ using the mean squared error on the source sample.
The reported run uses $R=2000$ replications, target sample size $n=500$, source sample size $m=2000$, $T=100$ source outputs, $\sigma_{\so}=\sigma_U=1$, $\sigma_V=0.5$, five-fold cross-fitting, $20$ source-training epochs, batch size $128$, and learning rate $10^{-3}$.

Given the learned embedding $\check{h}$, each cross-fitting split estimates linear projections of $Y_i$, $D_i$, and $Y_i-\theta_0 D_i$ on $\check{h}(Z_i)$:
\begin{align*}
	\hat{\ell}^{(-k)}(z) &= \hat{a}_{\ell}^{(-k)} + \hat{b}_{\ell}^{(-k)\prime} \check{h}(z), \\
	\hat{\alpha}^{(-k)}(z) &= \hat{a}_{\alpha}^{(-k)} + \hat{b}_{\alpha}^{(-k)\prime} \check{h}(z), \\
	\hat{\gamma}^{(-k)}(z) &= \hat{a}_{\gamma}^{(-k)} + \hat{b}_{\gamma}^{(-k)\prime} \check{h}(z).
\end{align*}
The partially linear DML estimator is then
\begin{align*}
	\tilde{Y}_i &= Y_i - \hat{\ell}^{(-k(i))}(Z_i), \\
	\tilde{D}_i &= D_i - \hat{\alpha}^{(-k(i))}(Z_i), \\
	\hat{\theta} &= \frac{\sum_{i=1}^{n}\tilde{D}_i \tilde{Y}_i}{\sum_{i=1}^{n}\tilde{D}_i^2}.
\end{align*}
The estimator $\hat{\gamma}^{(-k)}$ is not used in the final ratio; it is recorded only to measure the prediction error for the nuisance function $\gamma_0$.

\subsection{Monte Carlo Results}

\cref{tab:sim_plm_theta} shows that the in-span design is nearly centered at the truth, while the out-of-span design produces a sizable upward shift in the distribution of $\hat{\theta}$.
The mean of $\hat{\theta}$ increases from $1.005$ in the in-span design to $1.197$ in the out-of-span design; thus, the difference in Monte Carlo means is approximately $0.191$.
This shift is accompanied by much larger nuisance prediction errors in the out-of-span design, as shown in \cref{tab:sim_plm_pe}.
In particular, the mean squared prediction error for $\gamma_0$ increases from $0.110$ to $10.579$, and the corresponding error for $\alpha_0$ increases from $0.044$ to $2.677$.
These patterns are consistent with the theory. When the target nuisance functions contain a component outside the source-identified span, transfer learning does not recover that direction accurately enough for the downstream DML step.

\begin{table}[htbp]
	\centering
	\begin{tabular}{lrrr}
		\toprule
		Case & Mean & SD & Median \\
		\midrule
		In-span & 1.005 & 0.212  & 1.004 \\
		Out-of-span & 1.197 & 0.228 & 1.190\\
		\bottomrule
	\end{tabular}
	\caption{Empirical distribution of $\hat{\theta}$ across 2000 replications.}
	\label{tab:sim_plm_theta}
\end{table}

\begin{table}[htbp]
	\centering
	\begin{tabular}{lrrrr}
		\toprule
		Case & Mean PE for $\gamma_0$ & SD PE for $\gamma_0$ & Mean PE for $\alpha_0$ & SD PE for $\alpha_0$ \\
		\midrule
		In-span & 0.110 & 0.887 & 0.044 & 0.372 \\
		Out-of-span & 10.579 & 464.783 & 2.677 & 115.744 \\
		\bottomrule
	\end{tabular}
	\caption{Cross-fitted nuisance prediction errors from the saved Monte Carlo run.}
	\label{tab:sim_plm_pe}
\end{table}

\end{document}